\documentclass{lmcs}
\pdfoutput=1

\usepackage{lastpage}
\lmcsdoi{15}{1}{16}
\lmcsheading{}{\pageref{LastPage}}{}{}%
{Jan.~14,~2018}{Feb.~22,~2019}{}

\usepackage{graphicx}
\usepackage{amssymb}
\usepackage{qtree}
\usepackage{extarrows}
\usepackage[subrefformat=parens]{subcaption}
\usepackage{stmaryrd,xspace,wrapfig}
\usepackage{xpatch}
\usepackage{mathtools}
\usepackage{verbatim}
\usepackage{mathrsfs}

\theoremstyle{plain}
\newtheorem{thmNN}{Theorem}

\setlist{  
  listparindent=\parindent,
}

\allowdisplaybreaks[1]

\newcommand\mline[1]{\begin{minipage}{0.55\textwidth}\ \\[-.5ex]#1\\[-.5ex]\end{minipage}}
\newcommand\obsolete[1]{}
\newcommand\Ta{\mathtt{a}}
\newcommand\Tb{\mathtt{b}}
\newcommand\Tc{\mathtt{c}}

\newcommand\numofvars{\mathtt{\#vars}}
\newcommand\SC{strongly-connected} 
\newcommand\COL{\mathbin{:}}
\newcommand\Hhole[2]{\hhole_{#1}^{#2}}

\newcommand\Ttwice[1]{\tau^{(#1)}}
\newcommand\swap[3]{[#2\leftrightarrow #3]#1}
\newcommand\affine{affine }
\newcommand\anaffine{an affine }
\newcommand\linear{linear }
\newcommand\ctyping{context type}
\newcommand\mvlinctr{S} 
\newcommand{\T}{\mathtt{o}}
\newcommand{\ar}{\mathtt{iar}}
\newcommand{\ord}{\mathtt{ord}}
\newcommand{\tar}[3]{\mathtt{iar}(#1 \p #2:#3)}
\newcommand{\tord}[3]{\mathtt{ord}(#1 \p #2:#3)}
\newcommand{\fv}{\mathbf{FV}}
 
\newcommand{\ft}{{\rightarrow}} 
\newcommand{\fa}{\rightarrow} 
\newcommand{\tends}{\rightarrow}
\newcommand{\uv}{{\ast}} 
\newcommand{\V}{\mathbf{V}} 
\newcommand{\hole}{[\,]}

\newcommand\esym{\emptyset}
\newcommand{\grm}{{\mathcal G}}
\newcommand{\lgrm}[3]{{\mathcal G}{(#1,#2,#3)}}
\newcommand{\egrm}[3]{{\mathcal G}^{\esym}{(#1,#2,#3)}}

\newcommand{\ralph}{\Sigma}
\newcommand{\lralph}[3]{\Sigma{(#1,#2,#3)}}

\newcommand{\nont}{{\mathcal N}}
\newcommand{\lnont}[3]{{\mathcal N}{(#1,#2,#3)}}
\newcommand{\enont}[3]{{\mathcal N}^{\esym}{(#1,#2,#3)}}

\newcommand{\rules}{{\mathcal R}}
\newcommand{\lrules}[3]{{\mathcal R}{(#1,#2,#3)}}
\newcommand{\erules}[3]{{\mathcal R}^{\esym}{(#1,#2,#3)}}

\newcommand{\rew}{\longrightarrow}

\newcommand{\var}{\overline{x}}
\newcommand{\defeq}{\triangleq}
\newcommand{\order}{\mathrm{O}}
\renewcommand{\(}{$}
\renewcommand{\)}{$}

\newcommand{\terms}[3]{\Lambda(#1,#2,#3)}
\newcommand{\termsg}[5]{\Lambda(\typing{#1}{#2},#3,#4,#5)}
\newcommand{\termsn}[3]{\Lambda_n(#1,#2,#3)}
\newcommand{\termsN}[4]{\Lambda_{#1}(#2,#3,#4)}

\newcommand{\termsng}[5]{\Lambda_n(\typing{#1}{#2},#3,#4,#5)}

\newcommand{\ctxN}[2]{{\mathcal S}_{#1}(#2)}
\newcommand{\ctxn}[1]{\ctxN{n}{#1}}
\newcommand{\ctx}[1]{{\mathcal S}(#1)}
\newcommand{\ctxinf}[1]{{\mathcal S}^{\mathrm{inf}}(#1)}

\newcommand{\octx}[1]{{\mathcal U}(#1)}
\newcommand{\octxnn}[2]{{\mathcal U}_{#1}(#2)}
\newcommand{\octxn}[1]{\octxnn{n}{#1}}
\newcommand{\types}[2]{\mathrm{Types}(#1,#2)}
\newcommand{\tr}{T}
\newcommand{\TR}[1]{{\mathcal T}(#1)}
\newcommand{\TRn}[2]{{\mathcal T}_{#1}(#2)}

\newcommand{\typing}[2]{\langle #1; #2 \rangle}

\newcommand{\NT}[2]{N_{\typing{#1}{#2}}}
\newcommand{\N}[1]{N_{#1}}

\newcommand{\inhav}[1]{{\mathcal L}\!\left(#1\right)}
\newcommand{\inhavn}[1]{{\mathcal L}_n\!\left(#1\right)}
\newcommand{\inhavnn}[2]{{\mathcal L}_{#1}\!\left(#2\right)}

\newcommand{\pinhav}[1]{\inhav{#1}}
\newcommand{\pinhavn}[1]{\inhavn{#1}}
\newcommand{\pinhavnn}[2]{\inhavnn{#1}{#2}}

\newcommand{\Exp}[2]{{\mathbf{exp}_{#1}(#2)}}

\newcommand{\nat}{\mathbb{N}}
\newcommand{\real}{\mathbb{R}}

\newcommand{\ie}{i.e.,\;}

\newcommand{\cf}{{\it cf.\;}}
\newcommand{\Sec}[1]{Section~\ref{sec:#1}}

\newcommand{\D}{\delta} 
\newcommand{\I}{\iota}  
\newcommand{\K}{\xi}    
\newcommand{\cards}{\#}
\newcommand{\card}[1]{\cards\!\left(#1\right)}
\newcommand{\floor}[1]{\lfloor #1 \rfloor}
\newcommand{\ceil}[1]{\lceil #1 \rceil}
\newcommand{\expl}[2]{\mathit{Expl}_{#1}^{#2}}
\newcommand{\ctwo}{\overline{2}}

\newcommand{\ctwok}[1]{\ctwo_{#1}}
\newcommand{\subc}{\preceq}

\newcommand{\subw}{\sqsubseteq}
\newcommand{\fold}[3]{#1\mathop{{\uparrow}^{#2}}#3}

\newcommand{\tec}{\pi}
\newcommand{\ee}{\emptyset}

\newcommand{\len}[1]{\card{#1}}

\newcommand{\reason}[1]{\text{($\because$ #1)}}

\newcommand{\image}[1]{\mathrm{Im}\left(#1\right)}
\newcommand{\dom}{\mathop{\mathrm{Dom}}}

\newcommand{\pct}{C}
\newcommand{\pU}{U}

\newcommand\closeddecomp{\Phi_m^{\bullet}}

\newcommand\closeddecompp[1]{\Phi_m^{\bullet,#1}} 
\newcommand\closeddecompm[1]{\Phi_{#1}^{\bullet}}

\newcommand{\closedgdecomp}{\widetilde{\Phi}_{m}^{\bullet,\grm}}
\newcommand{\closedgdecompm}[1]{\widetilde{\Phi}_{#1}^{\bullet,\grm}}

\newcommand{\gctx}{\widetilde{E}}

\newcommand{\nontinfp}[1]{{#1}^{\mathrm{inf}}}
\newcommand{\nontinf}{\nontinfp{\nont}}
\newcommand{\css}{\mathcal{U}}
\newcommand{\cszon}[2][n]{\css_{\le #1}({#2})}

\newcommand{\uframe}{\mathcal{E}_n^m}
\newcommand{\tframe}{\widetilde{\mathcal{E}}_n^m}
\newcommand{\tframeNM}[2]{\widetilde{\mathcal{E}}_{#1}^{#2}}
\newcommand{\useq}{\mathcal{P}_E^m}

\newcommand{\tseq}{\widetilde{\mathcal{P}}_{\gctx}^m}

\newcommand{\ucomp}[1][\hhole_k^n]{\mathcal{U}^m_{#1}}
\newcommand{\tcomp}[1][\hhole_{\kappa}^n]{\widetilde{\mathcal{U}}^m_{#1}(\grm)}
\newcommand{\tcompCM}[2]{\widetilde{\mathcal{U}}^{#2}_{#1}(\grm)}
\newcommand{\ucomph}[1]{\mathcal{U}^m_{#1}}

\newcommand\set[1]{\{#1\}}
\newcommand\p{\vdash} 
\newcommand\td{\Delta} 

\newcommand\aeq{\sim_{\alpha}}
\newcommand\hn[1]{\mathtt{hn}(#1)}

\newcommand\dt{\mathit{Dup}}
\newcommand\idt{\mathit{Id}}

\newcommand\rn{\rho}
\newcommand\rnp[5]{\rn^{(#1,#2,#3)}_{\typing{#4}{#5}}}

\newcommand{\emb}{e}
\newcommand\embp[3]{\emb^{(#1,#2,#3)}}

\newcommand\tecp[5]{\tec^{(#1,#2,#3)}_{\typing{#4}{#5}}}

\newcommand\ssss[1]{\mathchoice{#1}{#1}{#1}{{\scriptstyle #1}}}

\newcommand\fr[2]{\frac{\ssss{#1}}{\ssss{#2}}}

\newcommand\sh[2]{#1 {\centerdot} #2}

\newcommand\shn{\mathtt{shn}}
\newcommand\decomp{\Phi_m} 
\newcommand\decompm[1]{\Phi_{#1}}

\newcommand\NIL{\epsilon}

\newcommand\hhole{\llbracket\,\rrbracket}
\newcommand\Hfill[1]{\llbracket #1 \rrbracket}
\newcommand\Ar{{\Rightarrow}}
\newcommand\cbf[2]{#1 \COL #2}

\newcommand\z{}
\newcommand\f{\kappa}

\newcommand\ctr{C}

\newcommand\prj[2]{#1 {\centerdot} #2}

\newcommand\mr{r}

\newcommand\cn{p}

\newcommand\cs[1]{\mathcal{C}({#1})}

\newcommand\nset{\mathcal{I}}

\newcommand\ds[2]{\mathrm{def}^{#1}_{#2}}
\DeclareMathOperator*\limd{lim def}

\newcommand\zps[1]{Z_{#1}} 
 
\newif\iffull
\fulltrue

\newif\ifdraft
\draftfalse

\ifdraft
\newcommand\asd[1]{{\footnotesize \color[RGB]{105,10,130}[#1 -asd]}}
\newcommand\nk[1]{{\footnotesize \color{blue}{[#1 -nk]}}}
\newcommand\tk[1]{{\footnotesize \color[RGB]{10,120,170}{[#1 -takeshi]}}}
\newcommand\ryoma[1]{{\footnotesize \color{DarkGreen}{[#1 -ryoma]}}}
\newcommand\garbage[1]{{\footnotesize \color[RGB]{90,2,2} [Garbage: #1]}} 
\newcommand{\new}[1]{{\color{red}#1}}
\else
\newcommand\asd[1]{}
\newcommand\nk[1]{}
\newcommand\tk[1]{}
\newcommand\ryoma[1]{}
\newcommand\garbage[1]{}
\newcommand{\new}[1]{#1}
\fi

\begin{document}
\title[Almost Every Simply Typed $\lambda$-Term Has a Long
$\beta$-Reduction Sequence]{Almost Every Simply Typed $\lambda$-Term\\ Has a Long
  $\beta$-Reduction Sequence
}

\author[K.~Asada]{Kazuyuki Asada\rsuper{a,*}}
\thanks{%
  \lsuper{*}The current affiliation is Tohoku University, Japan, and the mail address is \texttt{asada@riec.tohoku.ac.jp}}%
\author[N.~Kobayashi]{Naoki Kobayashi\rsuper{a}}
\address{\lsuper{b}The University of Tokyo, Japan}
\email{asada@kb.is.s.u-tokyo.ac.jp}
\email{koba@is.s.u-tokyo.ac.jp}
\email{tsukada@kb.is.s.u-tokyo.ac.jp}

\author[R.~Sin'ya]{Ryoma Sin'ya\rsuper{b}}
\address{\lsuper{b}Akita University, Japan}
\email{ryoma@math.akita-u.ac.jp}

\author[T.~Tsukada]{Takeshi Tsukada\rsuper{a}}

\begin{abstract}
It is well known that the length of a $\beta$-reduction sequence of
a simply typed $\lambda$-term of order $k$ can be huge;
it is as large as $k$-fold exponential in the size of the
$\lambda$-term
in the \emph{worst} case.
We consider the following relevant question about \emph{quantitative}
properties, instead of the worst case: \emph{how many} simply typed
$\lambda$-terms
have very long reduction sequences?
We provide a partial answer to this question,
by showing that asymptotically almost every simply typed
$\lambda$-term of order $k$ has
a reduction sequence as long as $(k-1)$-fold exponential in the term
size,
under the assumption that the arity of functions and
the number of variables that may occur in every subterm are
bounded above by a constant.
To prove it, we have extended the infinite monkey theorem for words
to a parameterized one for regular tree languages, which may be of independent interest.
The work has been motivated by quantitative analysis of the complexity
of higher-order model checking.

 \end{abstract}
\maketitle

\section{Introduction}\label{sec:introduction}
It is well known that a $\beta$-reduction sequence of
a simply typed $\lambda$-term can be extremely long.
Beckmann~\cite{Beckmann01} showed that, for any $k \geq 0$,
\[
 \max\{ \beta(t) \mid t \text{ is a simply typed } \lambda\text{-term of order } k \text{ and size }
 n \} = \Exp{k}{\Theta(n)}
\]
where $\beta(t)$ is the maximum length of the $\beta$-reduction sequences
of the term $t$, and $\Exp{k}{x}$ is defined by: $\Exp{0}{x} \defeq x$ and $\Exp{k+1}{x}
\defeq 2^{\Exp{k}{x}}$.
Indeed, the following order-$k$ term~\cite{Beckmann01}:
\newcommand\twice{\mathit{Twice}}
\newcommand\Twice[1]{\ctwok{#1}}
\[
(\Twice{k})^n \Twice{k-1} \cdots \Twice{2} (\lambda x^{\T}.a\,x\,x)
{((\lambda x^{\T}. x)c)},
\]
where $\Twice{j}$ is the twice function $\lambda f^{\Ttwice{j-1}}.\lambda x^{\Ttwice{j-2}}.f(f\,x)$ 
(with $\Ttwice{j}$ being the order-$j$ type defined by: $\Ttwice0=\T$ and $\Ttwice{j}=\Ttwice{j-1}\to\Ttwice{j-1}$),
has a $\beta$-reduction sequence of length $\Exp{k}{\Omega(n)}$.

Although the worst-case length of the longest $\beta$-reduction sequence is well known as above,
much is not known about the \emph{average-case} length
 of the longest $\beta$-reduction sequence:
\emph{how often} does one encounter a term having a very long $\beta$-reduction sequence?
In other words, suppose we pick a simply-typed $\lambda$-term $t$ of order $k$ and 
size $n$ \emph{randomly}; then
what is the probability that $t$ has a $\beta$-reduction sequence longer than a certain bound, like
$\Exp{k}{cn}$ (where $c$ is some constant)? One may expect that, although there exists
a term (such as the one above) whose reduction sequence is as long as $\Exp{k}{\Omega(n)}$,
such a term is rarely encountered.

In the present paper, we provide a partial answer to the above question, by showing that
almost every simply typed $\lambda$-term of order $k$ has a \(\beta\)-reduction
sequence as long as $(k-1)$-fold exponential in the term size, under
a certain assumption. More precisely, we shall
show:
\[
 \lim_{n \rightarrow \infty} \frac{\card{\{ [t]_\alpha \in
 \termsn{k}{\I}{\K}
 \mid \beta(t) \geq \Exp{k-1}{n^p}
 \}}}{\card{\termsn{k}{\I}{\K}}} = 1
\]
for some constant \(p>0\), where
$\termsn{k}{\I}{\K}$ is the set of ($\alpha$-equivalence
classes $[-]_\alpha$ of) simply-typed $\lambda$-terms
such that the term size is $n$, the order is up to $k$, the (internal)
arity is up to $\I \geq k$ and the number of variable names is up to
$\K$ (see the next section for the precise definition).

Related problems have been studied in the context of 
the quantitative analysis of 
\emph{untyped} $\lambda$-terms~\cite{SN,grygiel2013counting,Bendkowski}.
For example, David et al.~\cite{SN} have shown that
almost all untyped $\lambda$-terms are strongly normalizing, whereas the result
is opposite in the corresponding combinatory logic.
A more sophisticated analysis is, however, required in our case,
for considering only well-typed terms, and also for reasoning about
the \emph{length} of a reduction sequence instead of a qualitative property like strong normalization.

To prove our main result above, we have extended the infinite monkey theorem 
(a.k.a. ``Borges's theorem''~\cite[p.61, Note I.35]{Flajolet}) to a parameterized version
for regular tree languages. The infinite monkey theorem states that for any word \(w\),
a sufficiently long word almost surely contains \(w\) as a subword (see Section~\ref{sec:mainresults} for
a more precise statement). Our extended theorem, roughly speaking, states that, 
for any regular tree grammar \(\mathcal{G}\) 
that satisfies a certain condition and any \emph{family} \((U_m)_{m \in \nat}\) of \emph{trees}
(or tree contexts) generated by \(\mathcal{G}\) \new{such that $|U_m| = \order(m)$},
a sufficiently large tree \(T\) generated by \(\mathcal{G}\) almost
surely contains \(U_{\ceil{p\log{|T|}}}\) as a subtree
(where \(p\) is some positive constant).
Our main result is then obtained by preparing a regular tree grammar for simply-typed \(\lambda\)-terms,
and using as \(U_m\) a term having a very long \(\beta\)-reduction sequence, like 
\((\Twice{k})^m \Twice{k-1} \cdots \Twice{2} (\lambda x^{\T}.a\,x\,x)
\new{((\lambda x^{\T}. x) c)}\) given above.
The extended infinite monkey theorem mentioned above may be of independent interest and applicable to
other problems.

Our work is a part of our long-term project on the quantitative
analysis of the complexity of higher-order model checking~\cite{Knapik02FOSSACS,Ong06LICS}. The
higher-order model checking asks
whether the (possibly infinite) tree generated by a ground-type term
of the $\lambda$Y-calculus (or, a higher-order recursion scheme)
satisfies a given regular property, and it is known that the problem is
$k$-EXPTIME complete for order-$k$ terms~\cite{Ong06LICS}. Despite the huge worst-case
complexity, practical model checkers~\cite{Kobayashi13JACM,Kobayashi13horsat,Ramsay14POPL} 
have been built, which run fast for many typical inputs, and have successfully been
applied to automated verification of functional programs~\cite{Kobayashi09POPL,KSU11PLDI,Ong11POPL,SUK13PEPM}. 
The project aims to provide a theoretical justification for it, by studying \emph{how many}
inputs actually suffer from the worst-case complexity.
Since the problem appears to be hard due to recursion, 
as an intermediate step towards the goal, we aimed to analyze the variant of the problem
considered by Terui~\cite{Terui12RTA}: given a term of the simply-typed $\lambda$-calculus (without recursion)
of type Bool, decide whether it evaluates to true or false (where Booleans are Church-encoded; see \cite{Terui12RTA}
for the precise definition). Terui has shown that even for the problem, the complexity is $k$-EXPTIME complete
for order-$(2k+2)$ terms. If, contrary to the result of the present paper, 
the upper-bound of the lengths of $\beta$-reduction sequences \emph{were} small for almost every term,
then we could have concluded that the decision problem above is easily solvable for most of the inputs.
The result in the present paper does not necessarily provide a negative answer to the question above,
because one need not necessarily apply $\beta$-reductions to solve Terui's decision problem.

The present work may also shed some light on other problems on typed $\lambda$-calculi
with exponential or higher worst-case complexity. For example, 
despite DEXPTIME-completeness of ML typability~\cite{Mairson90POPL,Kfoury90},
it is often said that the exponential behavior is rarely seen \emph{in practice}. That is,
however, based on only empirical studies.
A variation of our technique may be used to provide a theoretical justification (or possibly \emph{un}justification). 

A preliminary version of this article appeared in Proceedings of FoSSaCS 2017~\cite{SAKT17FOSSACS}. 
Compared with the conference version, we have strengthened the main result (from ``almost every \(\lambda\)-term 
of size \(n\) has \(\beta\)-reduction sequence as long as \(\exp_{k-2}(n)\)'' to 
``... as long as \(\exp_{k-1}(n^p)\)''), and added proofs. We have also generalized the argument in
the conference version to formalize the parameterized infinite monkey theorem for regular tree languages.

The rest of this paper is organized as follows.
\Sec{mainresults} states our main result formally.
\Sec{proofimf} proves the extended infinite monkey theorem, and
\Sec{proofbeta} 
proves the main result.
\Sec{relatedwork} discusses related work, and 
\Sec{conclusion} concludes this article.

 \section{Main Results}\label{sec:mainresults}

This section states the main results of this article: the result on the quantitative analysis of
the length of \(\beta\)-reduction sequences of simply-typed \(\lambda\)-terms (Section~\ref{sec:mainresult})
and the parameterized infinite monkey theorem for
regular tree grammars (Section~\ref{sec:RegTreeGram}).
The latter is used in the proof of the former in Section~\ref{sec:proofbeta}.
In the course of giving the main results, we also introduce various notations used
in later sections; the notations are summarized in Appendix~\ref{sec:symbols}.
We assume some familiarity with the simply-typed \(\lambda\)-calculus and
regular tree grammars and omit to define some standard concepts (such as \(\beta\)-reduction); 
readers who are not familiar with them
may wish to consult \cite{TypedLambda} about the simply-typed \(\lambda\)-calculus
and \cite{Saoudi92,tata2007} about regular tree grammars.

For a set $A$, we denote by $\card{A}$ the cardinality of $A$; by abuse of notation, we write \( \card{A} = \infty \) to mean that \( A \) is infinite.
For a sequence \(s\), we also denote by $\len{s}$ the length of $s$.
For a sequence \(s=a_1 a_2 \cdots a_{\len{s}}\) and \(i\le\len{s}\),
we write \(\prj{s}{i}\) for the \(i\)-th element \(a_i\). We write \(s_1\cdot s_2\) or just \(s_1s_2\)
for the concatenation of two sequences \(s_1\) and \(s_2\).
We write \(\NIL\) for the empty sequence.
We use \(\biguplus\) to denote the union of disjoint sets.
For a set $I$ and a family of sets $(A_i)_{i \in I}$, we define 
\(\coprod_{i \in I} A_i \defeq \biguplus_{i \in I} (\{i\} \times A_i) = \set{(i,a) \mid i\in I, a\in A_i}\).
For a map \(f\), we denote the domain and image of \(f\) by \(\dom(f)\) and \(\image{f}\), respectively.
We denote by \(\log{x}\) the binary logarithm \(\log_{2}{x}\).
\nk{Omitted ``...and by \(\ln x\) the natural logarithm \(\log_{e}{x}\).'' since
it was used only in the proof of convergence, which has now been omitted.}

\subsection{The Main Result on Quantitative Analysis of the Length of \(\beta\)-Reductions}
\label{sec:mainresult}

The set of (\emph{simple}) \emph{types}, ranged over by \(\tau,\tau_1,\tau_2,\ldots\), 
is defined by the grammar:
\[\tau ::= \T \mid \tau_1 \rightarrow \tau_2.\]
Note that the set of types can also be generated by the following grammar:
\[\tau ::= \tau_1 \to\cdots\to \tau_k\to \T\]
where \(k \geq 0\); we sometimes use the latter grammar for inductive definitions of functions on types.
We also use \(\sigma\) as a metavariable for types.
\begin{rem}
  We have only a single base type \(\T\) above.  The main result
  (Theorem~\ref{mainthm})  would not change even if there are
  multiple base types.
\end{rem}

Let $V$ be a countably infinite set, which is ranged over by \(x, y, z
\, (\text{and } x', x_1, \text{etc.})\).
The set of \emph{$\lambda$-terms} (or \emph{terms}), ranged over by $t,t_1,t_2,\ldots$, is defined by:
\begin{eqnarray*}
 t ::= x \mid \lambda \var^\tau\!. t \mid t_1 \, t_2\qquad\qquad
 \var ::= x \mid \uv.
\end{eqnarray*}
We call elements of \(V \cup \set{\uv}\) \emph{variables}, and use meta-variables
\(\var, \overline{y}, \overline{z} \, (\text{and } \var', \var_1, \text{etc.})\) for them.
We sometimes omit type annotations and just write $\lambda \var.t$ for $\lambda
\var^\tau\!. t$.
We call 
the special variable $\uv$ an \emph{unused variable}, which may be
bound by \(\lambda\), but must not occur in the body.
In our quantitative analysis below, we will count
the number of variable names occurring in a term, except \(\uv\).
For example, the term \(\lambda x.\lambda y.\lambda z.x\) in the standard syntax can be represented by \(\lambda x.\lambda \uv.\lambda \uv.x\),
and the number of used variables in the latter is counted as \(1\).

Terms of our syntax can be translated to usual $\lambda$-terms by regarding
elements in \(V\cup\set{\uv}\) as usual variables.
Through this identification we define the notions of free variables, closed terms, and
\(\alpha\)-equivalence \(\aeq\).
The $\alpha$-equivalence class of a term $t$ is written as $[t]_\alpha$.
In this article, we distinguish between a term and its $\alpha$-equivalence class, and 
we always use $[-]_\alpha$ explicitly.
For a term \(t\), we write \(\fv(t)\) for the set of all the free variables of \(t\).

For a term $t$,
we define the set $\V(t)$ of variables (except $\uv$) in $t$ by:
\[
\V(x) \defeq \{x\} \quad\V(\lambda x^\tau\!.t) \defeq \{x\} \cup  \V(t)
\quad
  \V(\lambda \uv^\tau\!.t) \defeq \V(t) \quad \V(t_1 t_2) \defeq \V(t_1) \cup \V(t_2).
\]
Note that neither \(\V(t)\) nor even \(\card{\V(t)}\) is preserved by \(\alpha\)-equivalence.
For example, 
\(t= \lambda x. (\lambda y. y)(\lambda z. x)\)
and
\(t'=\lambda x. (\lambda x. x)(\lambda \uv. x)\)
are \(\alpha\)-equivalent, but \(\card{\V(t)}=3\) and \(\card{\V(t')}=1\).
We write \(\numofvars{(t)}\) for
\(\min_{t' \in [t]_\alpha}\!\! \card{\V(t')}\), i.e.,
the minimum number of variables required
to represent an \(\alpha\)-equivalent term.
For example, for \(t=\lambda x. (\lambda y. y)(\lambda z. x)\) above,
\(\numofvars{(t)}=1\),
  because \(t\sim_\alpha t'\) and
\(\card{\V(t')}=1\).

A \emph{type environment} $\Gamma$ is a finite set of type bindings of the form
 $x: \tau$ such that if $(x:\tau) ,(x:\tau') \in \Gamma$ then \(\tau = \tau'\);
sometimes we regard an environment also as a function.
When we write \(\Gamma_1\cup\Gamma_2\), we implicitly require that
$(x:\tau) \in\Gamma_1$ and $(x:\tau') \in \Gamma_2$ imply \(\tau = \tau'\),
so that \(\Gamma_1\cup\Gamma_2\) is well formed.
Note that \((\uv:\tau)\) cannot belong to a type environment;
we do not need any type assumption for \(\uv\) since it does not
 occur in terms.

The type judgment relation \(\Gamma \p t : \tau\) is inductively defined by the following typing rules.
\\[1pt]
\[
\begin{gathered}
\frac{}{x:\tau \p x : \tau}
\qquad
\frac{
\Gamma_1 \p t_1 : \sigma \ft \tau
\qquad
\Gamma_2 \p t_2 : \sigma
}{\Gamma_1 \cup \Gamma_2 \p t_1 t_2 : \tau}
\\[10pt]
\frac{
\Gamma' \p t:\tau
\qquad \Gamma'=\Gamma\text{ or }\Gamma' = \Gamma\cup\set{\var:\sigma}
\qquad \var \notin \dom(\Gamma)
}{\Gamma \p \lambda \var^{\sigma}\!.t:\sigma \ft \tau}
\end{gathered}
\]
\\[5pt]
The type judgment relation is equivalent to the usual one for the simply-typed \(\lambda\)-calculus, except that a
type environment for a term may contain only variables that occur free in the term 
(i.e., if \(\Gamma \p t :\tau\) then \(\dom(\Gamma)=\fv(t)\)).
Note that if \(\Gamma \p t :\tau\) is derivable, then the derivation is unique.
\nk{I have omitted the terminology ``a \emph{\(\typing{\Gamma}{\tau}\)-term}''.}
Below we consider only well-typed $\lambda$-terms, i.e., those that are inhabitants of some typings.

\begin{defi}[Size, Order and Internal Arity of a Term]
The \emph{size} of a term $t$, written $|t|$, is defined by:
\begin{eqnarray*}
  |x| \defeq 1 \;\;\;\;\;\;  |\lambda \var^\tau \!. t| \defeq |t| + 1 \;\;\;\;\;\;
   |t_1 t_2| \defeq |t_1| + |t_2| + 1.
\end{eqnarray*}
The \emph{order} and \emph{internal arity} of a type $\tau$, written
 $\ord(\tau)$ and $\ar(\tau)$ respectively, are defined by:
\[
 \begin{aligned}
&\ord(\tau_1 \rightarrow \cdots \rightarrow \tau_n \rightarrow \T) 
\defeq \max(\{0\} \cup \{ \ord(\tau_i) + 1 \mid 1 \le i \le n \}) 
\\
&\ar(\tau_1 \rightarrow \cdots \rightarrow \tau_n \rightarrow \T) 
\defeq \max(\{n\} \cup \{ \ar(\tau_i) \mid 1 \le i \le n \}) 
 \end{aligned}
\]
where \(n \ge 0\). We denote by $\types{\D}{\I}$ the set of types $\{ \tau \mid \ord(\tau) \leq \D,
 \ar(\tau) \leq \I \}$. 
For a judgment 
\(\Gamma \p t: \tau\),
we define
 the \emph{order} and \emph{internal arity} of \(\Gamma \p t: \tau\), written
 $\tord{\Gamma}{t}{\tau}$ and $\tar{\Gamma}{t}{\tau}$ respectively, by:
\[
\begin{aligned} &
\tord{\Gamma}{t}{\tau} \defeq \max\set{ \ord(\tau') \mid (\Gamma' \p t' : \tau') \text{ occurs in } \td }
\\&
\tar{\Gamma}{t}{\tau} \defeq \max\set{ \ar(\tau') \mid (\Gamma' \p t' : \tau') \text{ occurs in } \td }
\end{aligned}
\]
where $\td$ is the (unique) derivation tree for $\Gamma \p t: \tau$.
\end{defi}
Note that the notions of size, order, internal arity, and \(\beta(t)\)
(the maximum length of \(\beta\)-reduction sequences of \(t\), as defined in Section~\ref{sec:introduction})
are well-defined with respect to
\(\alpha\)-equivalence.

\begin{exa}
Recall the term \(\Twice{j} \defeq \lambda f^{\Ttwice{j-1}}.\lambda x^{\Ttwice{j-2}}.f(f\,x) \) (with $\Ttwice{j}$ being the order-$j$ type defined by: $\Ttwice0=\T$ and $\Ttwice{j}=\Ttwice{j-1}\to\Ttwice{j-1}$) in Section~\ref{sec:introduction}.
For \(t=\Twice{3}\,\Twice{2}\,(\lambda x^{\T}.x)\,y\),
we have \(\tord{y\COL\T}{t}{\T} = \tar{y\COL\T}{t}{\T}=3\).
Note that the derivation for \(y\COL\T\p t:\T\) contains the type judgment \(\emptyset\p \Twice{3}:
\Ttwice{3}\), where \(\Ttwice{3} = ((\T\to\T)\to(\T\to\T))\to(\T\to\T)\to\T\to\T\), and 
\(\ord(\Ttwice{3})=\ar(\Ttwice{3})=3\).
\end{exa}

We now define the sets of ($\alpha$-equivalence classes of) terms
with bounds on the order, the internal arity, and the number of variables.
\begin{defi}[Terms with Bounds on Types and Variables]
\label{df:set-of-terms}
Let $\D,\I,\K \geq 0$ and $n \geq 1$ be integers.
For each $\Gamma$ and $\tau$, 
\(\termsg{\Gamma}{\tau}{\D}{\I}{\K}\) and 
\(\termsng{\Gamma}{\tau}{\D}{\I}{\K}\) are defined by:
\[
\begin{aligned}
 \termsg{\Gamma}{\tau}{\D}{\I}{\K} &\defeq \{ [t]_\alpha \mid
  \Gamma \p t: \tau,
  \tord{\Gamma}{t}{\tau} \leq \D, \tar{\Gamma}{t}{\tau} \leq \I,
  \numofvars{(t)}\le \K
  \}
\\
 \termsng{\Gamma}{\tau}{\D}{\I}{\K} &\defeq
 \{ [t]_\alpha \in \termsg{\Gamma}{\tau}{\D}{\I}{\K} \mid |t| = n \}.
\end{aligned}
\]
We also define:
\begin{equation*}
 \terms{\D}{\I}{\K} \defeq \, \biguplus_{\mathclap{\tau \in
 \types{\D}{\I}}}\, \termsg{\ee}{\tau}{\D}{\I}{\K} \qquad
 \termsn{\D}{\I}{\K}  \defeq \{ [t]_\alpha \in \terms{\D}{\I}{\K} \mid
 |t| = n \}.
\end{equation*}
\end{defi}
\noindent
Intuitively,  \(\termsg{\Gamma}{\tau}{\D}{\I}{\K}\)
is the set of (the equivalence classes of)
terms of type \(\tau\) under \(\Gamma\), within
given bounds \(\D\), \(\I\), and \(\K\) on the order, (internal) arity, and the number of variables respectively;
and  \(\termsng{\Gamma}{\tau}{\D}{\I}{\K}\) is the subset of 
\(\termsg{\Gamma}{\tau}{\D}{\I}{\K}\) consisting of terms of size \(n\).
The set \( \terms{\D}{\I}{\K}\)  consists of (the equivalence classes of)
all the closed (well-typed) terms, and
\(\termsn{\D}{\I}{\K}\) consists of those of size \(n\).
For example,
\(t=[\lambda x^\T.\lambda y^\T.\lambda z^\T.x]_\alpha\) belongs to 
\( \termsN{4}{1}{3}{1}\); note that
\(|t|=4\) and 
\(\numofvars{\new{(}t\new{)}}
= \numofvars{(\lambda x^\T.\lambda \uv.\lambda \uv.x)}=1\).

We are now ready to state our main result, which is proved in Section~\ref{sec:proofbeta}.
\begin{thm}\label{mainthm}
For $\D, \I, \K \geq 2$ and $k = \min\{\D,
 \I\}$, there exists a real number $p > 0$ such that
\[
  \lim_{n \rightarrow \infty} \frac{\card{\{ [t]_\alpha \in
 \termsn{\D}{\I}{\K}
 \mid \beta(t) \geq \Exp{k-1}{n^p}
 \}}}{\card{\termsn{\D}{\I}{\K}}} = 1.
\]
\end{thm}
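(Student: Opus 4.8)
The plan is to combine two independent ingredients: (i) a regular tree grammar $\grm$ whose language coincides (via a size-preserving, or at least size-linearly-related, encoding) with $\termsn{\D}{\I}{\K}$, so that the uniform distribution over size-$n$ derivation trees of $\grm$ matches the uniform distribution over $\termsn{\D}{\I}{\K}$; and (ii) the parameterized infinite monkey theorem for regular tree grammars (Section~\ref{sec:RegTreeGram}), applied to the family of contexts built from Beckmann's long-reducing terms. First I would construct, for the fixed bounds $\D,\I,\K$, a regular tree grammar $\grm$ with finitely many nonterminals indexed essentially by pairs $\typing{\Gamma}{\tau}$ with $\tau\in\types{\D}{\I}$ and $\Gamma$ ranging over the (finitely many, up to renaming, thanks to the $\K$ bound and the $\uv$ trick) type environments, whose productions mirror the three typing rules. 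The key design points are that the $\uv$-variable convention makes $\numofvars{(t)}$ into a quantity computable along a derivation, so the bound $\numofvars{(t)}\le\K$ can be enforced by the nonterminal structure, and that the grammar must be shown to satisfy whatever side-condition (strong connectivity / irreducibility of the relevant subgrammar, a ``branching'' or superlinear-growth condition) the parameterized monkey theorem demands — this is where most of the combinatorial bookkeeping lives.

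Next I would fix the family $(U_m)_{m\in\nat}$ of tree contexts: take $U_m$ to encode the context $(\Twice{k})^m\,\Twice{k-1}\cdots\Twice{2}\,(\lambda x^{\T}.a\,x\,x)\,((\lambda x^{\T}.x)\,c)$ with a hole in place of, say, the outermost applicative position (or simply take $U_m$ to be the closed term itself plugged as a subterm), chosen so that $|U_m| = \order(m)$ and so that $U_m$ is generated by $\grm$ at the appropriate nonterminal. The crucial arithmetic fact, already recorded in the introduction, is that any term containing such a $U_m$ as a subterm has a $\beta$-reduction sequence of length at least $\Exp{k}{\Omega(m)}$; more precisely, since $\beta(-)$ is monotone under taking superterms (reductions inside a subterm lift to reductions of the whole term), $\beta(t)\ge \beta(U_m)\ge \Exp{k}{c m}$ for some $c>0$ whenever $U_m$ occurs in $t$. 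Then the parameterized monkey theorem gives a constant $q>0$ such that, as $n\to\infty$, the fraction of size-$n$ trees of $\grm$ containing $U_{\ceil{q\log n}}$ as a subtree tends to $1$. Chaining the two bounds yields $\beta(t)\ge \Exp{k}{c\ceil{q\log n}} \ge \Exp{k}{c q\log n} = \Exp{k-1}{2^{cq\log n}} = \Exp{k-1}{n^{cq}}$ for asymptotically almost every $t$, so $p = cq$ works.

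The main obstacle, I expect, is step (i): verifying that the grammar $\grm$ for $\termsn{\D}{\I}{\K}$ genuinely satisfies the hypotheses of the parameterized infinite monkey theorem, and that the encoding is faithful enough that ``almost every derivation tree of $\grm$ of size $n$'' really does translate to ``almost every $\alpha$-equivalence class in $\termsn{\D}{\I}{\K}$'' under the uniform measure. Two subtleties arise here: making sure the counting is by $\alpha$-equivalence classes (so the encoding must pick canonical representatives, which the $\uv$-and-$\numofvars$ machinery is designed for), and handling the disjoint union over $\tau\in\types{\D}{\I}$ in the definition of $\termsn{\D}{\I}{\K}$ — one either works with a grammar having several start symbols and checks that the dominant type component still carries the monkey-theorem conclusion, or shows the conclusion holds type-component-wise and then aggregates. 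There is also a mild wrinkle that the monkey theorem wants $U_m$ generated by $\grm$ at a nonterminal that actually appears ``cofinally'' in large random trees; I would choose the hole position in $U_m$ (and, if needed, wrap $U_m$ in a small fixed context) so that its root nonterminal is the generic ``expression of base type under an environment'' nonterminal, which is exactly the one that proliferates in large terms. Modulo these faithfulness checks, the rest is the routine arithmetic displayed above.
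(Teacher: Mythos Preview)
Your proposal is correct and follows essentially the same route as the paper: construct a regular tree grammar with nonterminals $\NT{\Gamma}{\tau}$, verify unambiguity and strong connectivity (plus aperiodicity, which the paper also establishes so as to replace $\limd$ by $\lim$), take the explosive terms $\expl{m}{k}$ as the family $(T_m)_m$ of trees, invoke Corollary~\ref{cor:imtTree}, and chain $\Exp{k}{\ceil{p\log n}}\ge \Exp{k-1}{n^{p}}$. The two ``subtleties'' you flag---canonical representatives for $\alpha$-classes and the disjoint union over $\tau\in\types{\D}{\I}$---are handled in the paper exactly as you anticipate (a deterministic choice of bound-variable name in the $\lambda$-rule, and Corollary~\ref{cor:imt}/\ref{cor:imtTree} taking a \emph{set} $\nset$ of start nonterminals).
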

\noindent
As we will see later (in Section~\ref{sec:proofbeta}), the
denominator \(\card{\termsn{\D}{\I}{\K}}\) is nonzero if \(n\) is sufficiently
large.
The theorem above says that if the order, internal arity, and the number of used variables are
bounded independently of term size, most of the simply-typed \(\lambda\)-terms of size \(n\) have
a very long \(\beta\)-reduction sequence, which is as long as 
$(k-1)$-fold exponential in \(n\).
\begin{rem}
Recall that, when \(k=\D\), the worst-case length of \(\beta\)-reduction sequence is \(k\)-fold
 exponential~\cite{Beckmann01}.
We do not know whether \(k-1\) can be replaced with \(k\) in the theorem above.

Note that in the above theorem, the order \(\delta\), the internal arity \(\I\) and the number \(\K\) of
variables are bounded above by a constant, \emph{independently of the term size} \(n\). Our proof of the theorem
(given in Section~\ref{sec:proofbeta}) makes use of this assumption to model the set of simply-typed \(\lambda\)-terms 
as a regular tree language.
It is debatable whether our assumption is reasonable. A slight change of the assumption
may change the result, as is the case for strong normalization of untyped \(\lambda\)-terms~\cite{SN,Bendkowski}.
When \(\lambda\)-terms are viewed as models of functional programs, 
our rationale behind the assumption is as follows. 
The assumption that the size of types (hence also the order and the internal arity) is fixed
is sometimes assumed in the context of type-based program analysis~\cite{Heintze97}. The assumption on the number
of variables comes from the observation that a large program usually consists of a large number of 
\emph{small} functions, and that the number of variables is bounded by the size of each function. 
\end{rem}
 
\subsection{Regular Tree Grammars and Parameterized Infinite Monkey Theorem}
\label{sec:RegTreeGram}

To prove Theorem~\ref{mainthm} above, 
we extend the well-known \emph{infinite monkey theorem} (a.k.a. ``Borges's
theorem''~\cite[p.61, Note I.35]{Flajolet}) to a parameterized version for regular tree grammars,
and apply it to the regular tree grammar that generates
the set of ($\alpha$-equivalence classes of) simply-typed
\(\lambda\)-terms. Since the extended infinite monkey theorem may be of independent interest,
we state it (as Theorem~\ref{thm:imt}) in this section as one of the main results. The theorem is proved in Section~\ref{sec:proofimf}.
We first recall some basic definitions for regular tree grammars
in Sections~\ref{sec:context} and \ref{sec:gram}, and then state
the parameterized infinite monkey theorem in Section~\ref{sec:pimf}.
\subsubsection{Trees and Tree Contexts}
  \label{sec:context}
A \emph{ranked alphabet} $\ralph$ is a map from a finite set of \emph{terminal} symbols
to the set of natural numbers. We use the metavariable \(a\) for a terminal,
and often write \(\Ta,\Tb,\Tc,\ldots\) for concrete terminal symbols.
For a terminal $a \in \dom(\ralph)$, we call $\ralph(a)$ the \emph{rank}
of $a$.
A \emph{$\ralph$-tree} is a tree constructed from terminals in $\ralph$
according to their ranks:
$a(\tr_1, \ldots, \tr_{\ralph(a)})$ is a $\ralph$-tree if
$\tr_i$ is a $\ralph$-tree for each $i \in \set{1,\ldots, \ralph(a)}$.
Note that \(\ralph(a)\) may be \(0\): $a(\,)$ is a $\ralph$-tree if \(\ralph(a)=0\).
We often write just \(a\) for \(a(\,)\).
For example, if \(\ralph=\set{\Ta\mapsto 2, \Tb\mapsto 1, \Tc\mapsto 0}\),
then \(\Ta(\Tb(\Tc), \Tc)\) is a tree; see (the lefthand side of) Figure~\ref{fig:tree} for
a graphical illustration of the tree.
We use the meta-variable $\tr$ for trees.
The \emph{size} of $\tr$,
written $|\tr|$, is the number of occurrences of terminals in $\tr$.
We denote the set of all $\ralph$-trees by $\TR{\ralph}$,
and the set of all $\ralph$-trees of size \(n\) by $\TRn{n}{\ralph}$.

\begin{figure}
  \begin{minipage}[t]{.4\textwidth}
  \(
  \Tree[ [ [].{$\Tc$} ].{$\Tb$} [ ].{$\Tc$} ].{$\Ta$}
  \)
  \end{minipage}
  \begin{minipage}[t]{.4\textwidth}
  \(
  \Tree[  [ [].{$\hole$}  [].{$\Tc$} ].{$\Ta$} [ [].{$\hole$} ].{$\Tb$} ].{$\Ta$}
  \)
  \end{minipage}
\caption{A tree \(\Ta(\Tb(\Tc), \Tc)\) (left) and a context 
\(\Ta(\Ta(\hole, \Tc), \Tb(\hole))\) (right).}
\label{fig:tree}
\end{figure}
Before introducing grammars, we define tree contexts,
which
play an important role in our formalization and proof of the (parameterized) infinite
monkey theorem for regular tree languages.
The set of \emph{contexts} over a ranked alphabet $\ralph$, ranged over by
$\pct$, is defined by:
\[
 \pct ::= \hole \mid a(\pct_1, \ldots, \pct_{\ralph(a)}).
\]
In other words, a context is a tree where the alphabet is extended 
with the special nullary symbol \(\hole\). For example, \(\Ta(\Ta(\hole, \Tc), \Tb(\hole))\)
is a context over \(\set{\Ta\mapsto 2, \Tb\mapsto 1, \Tc\mapsto 0}\);
see (the \new{righthand} side of) Figure~\ref{fig:tree} for
a graphical illustration.
We write \(\cs{\ralph}\) for the set of contexts over $\ralph$.
For a context \(\pct\),
we write \(\hn{\pct}\) for the number of occurrences of \(\hole\) in $\pct$.
We call \(\pct\) a \emph{\(k\)-context}
if \(\hn{\pct}=k\),
and an \emph{\affine{context}} if \(\hn{\pct} \le 1\).
A 1-context is also called a \emph{\linear{context}}.
We use the metavariable \(S\) for \linear{contexts} and \(U\) for \affine{contexts}.
We write \(\hole_i\) for the
\(i\)-th hole occurrence (in the left-to-right order) of \(\pct\).

For contexts \(C\), \(C_1, \dots, C_{\hn{C}}\),
we write 
\( C[C_1, \dots, C_{\hn{C}}] \) 
for the context obtained by
replacing each \(\hole_i\) in $C$ with \(C_i\). For example, if
\(C=\Ta(\hole, \Ta(\hole,\hole))\),
\(C_1 = \Tb(\hole)\), \(C_2 = \Tc\), and \(C_3 = \hole\), then
\(C[C_1,C_2,C_3] = \Ta(\Tb(\hole), \Ta(\Tc,\hole))\);
see Figure~\ref{fig:cfill} for a graphical illustration.
Also, for contexts \(C\), \(C'\) and \(i \in \set{1,\dots,\hn{C}}\),
we write \(C[C']_i\) for the context obtained by
replacing \(\hole_i\) in $C$ with \(C'\). For example, for \(C\) and \(C_1\) above,
\(C[C_1]_2 = \Ta(\hole, \Ta(\Tb(\hole), \hole))\).

For contexts \(\ctr\) and \(\ctr'\),
we say that \(\ctr\) is a \emph{subcontext of} \(\ctr'\) and write $\ctr \subc \ctr'$ if
there exist \(\ctr_0\), \(\ctr_1,\ldots,\ctr_{\hn{\ctr}}\), and \(1 \le i \le \hn{\ctr_0}\) such that \(\ctr' =
\ctr_0[\ctr[\ctr_1, \dots ,\ctr_{\hn{\ctr}}]]_i\). For example,
\(\ctr=\Ta(\Tb(\hole),\hole)\)
is a subcontext of \(\ctr'=\Ta(\hole, \Ta(\Tb(\hole), \Tc))\), because
\(\ctr'=\ctr_0[\ctr[\hole,\Tc]]_2\) for \(\ctr_0=\Ta(\hole,\hole)\); see also
Figure~\ref{fig:subcontext} for a graphical illustration.
The subcontext relation \(\subc\) may be regarded as a generalization
of the subword relation.
 In fact, a word \(w = a_1\cdots a_k\) can be viewed as a 
\linear{context} \(w^\sharp = a_1(\cdots (a_k \hole)\cdots)\); and
if \(w\) is a subword of \(w'\), i.e., if
\(w'=w_1ww_2\), then \(w'^\sharp = w_1^\sharp[w^\sharp[w_2^\sharp]]\).
\begin{figure}
  \begin{minipage}[t]{.2\textwidth}
  \(C:\)\\
  \(
  \Tree[ [].{$\hole$}  [ [].{$\hole$}  [].{$\hole$} ].{$\Ta$} ].{$\Ta$}
  \)
  \end{minipage}
  \begin{minipage}[t]{.2\textwidth}
  \(C_1:\)\\
  \(
  \Tree[ [].{$\hole$}  ].{$\Tb$}
  \)
  \end{minipage}
  \begin{minipage}[t]{.2\textwidth}
  \(C[C_1,C_2,C_3]:\)\\
  \(
  \Tree[ [ [].{$\hole$} ].{$\Tb$}  [ [].{$\Tc$}  [].{$\hole$} ].{$\Ta$} ].{$\Ta$}
  \)
  \end{minipage}
  \caption{Context filling (where \(C_2=c\) and \(C_3=\hole\)).}
  \label{fig:cfill}
\end{figure}
\begin{figure}
  \includegraphics[scale=0.5]{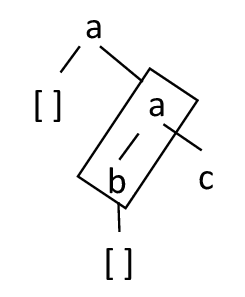}
  \caption{A graphical illustration of a subcontext. The part surrounded by
    the rectangle is the subcontext \(\Ta(\Tb(\hole),\hole)\).}
  \label{fig:subcontext}
\end{figure}
The \emph{size} of a context \(U\), written \(|U|\), is defined by: 
\(|\hole| \defeq 0\) and \(|a(\ctr_1,\dots,\ctr_{\ralph(a)})| \defeq 1 + |\ctr_1|+\dots+|\ctr_{\ralph(a)}|\).
Note that \(\hole\) and 0-ary terminal \(a\) have different sizes: \(|\hole|=0\) but \(|a|=1\).
For a \(0\)-context \(\ctr\), \(|\ctr|\) coincides with the size of \(\ctr\) as a tree.

\subsubsection{Tree Grammars}
\label{sec:gram}
A \emph{regular tree grammar}~\cite{Saoudi92,tata2007} (\emph{grammar} for short) is a triple $\grm = (\ralph,
\nont, \rules)$ where 
(i) $\ralph$ is a ranked alphabet; 
(ii) $\nont$ is a finite set of symbols called \emph{nonterminals};
(iii) $\rules$ is a finite set of \emph{rewriting rules} of the form 
$N \rew C[N_1, \dots, N_{\hn{C}}]$ where 
$C \in \cs{\ralph}$ and $N, N_1, \dots, N_{\hn{C}} \in \nont$.
We use the metavariable \(N\) for nonterminals.
We write \(\ralph\cup\nont\) for
the ranked alphabet \(\ralph\cup\set{N \mapsto 0 \mid N\in \nont}\) and often regard the right-hand-side of a rule as a \( (\ralph\cup\nont) \)-tree.

The rewriting relation \(\rew_\grm\) on \(\TR{\ralph \cup \nont}\) is inductively defined by the following rules:
\\
\begin{align*}
\frac{(N \rew C[N_1, \dots, N_{\hn{C}}]) \in \rules}
{N \rew_\grm C[N_1, \dots, N_{\hn{C}}]}
\qquad
\frac{T_i \rew_\grm T'_i}{a(T_1,\dots,T_k)
\rew_\grm a(T_1,\dots,T_{i-1},T'_i,T_{i+1},\dots,T_k)}
\end{align*}
\\[4pt]
We write \( \rew^*_{\grm} \) for the reflexive and transitive closure of \( \rew_{\grm} \).
For a tree grammar $\grm = (\ralph, \nont, \rules)$ and a nonterminal
$N \in \nont$, \emph{the language $\inhav{\grm,N}$ of} $N$ is defined by
$\inhav{\grm,N} \defeq \{ \tr \in \TR{\ralph} \mid N \rew^*_\grm \tr \}$.
We also define
$\inhavn{\grm,N} \defeq \{ \tr \in \inhav{\grm,N} \mid |\tr| = n\}$.
If \(\grm\) is clear from the context, we often omit \(\grm\) and 
just write \(T \rew^* T'\), \(\inhav{N}\), and \(\inhavn{N}\) for
\(T \rew^*_\grm T'\), \(\inhav{\grm,N}\), and  \(\inhavn{\grm,N}\), respectively.

We are interested in not only trees, but also contexts generated by a grammar.
Let \(\grm = (\ralph, \nont, \rules)\) be a grammar.
The set of \emph{\ctyping}s of \(\grm\), ranged over by $\f$,
is defined by:
\[
\f ::=  N_1\cdots N_k\Ar N
\]
where \(k \in \nat\) and $N, N_1, \ldots, N_k \in \nont$. Intuitively, \(N_1\cdots N_k\Ar N\)
denotes the type of \(k\)-contexts \(\pct\) such that 
\(N\rew^* \pct[N_1,\ldots,N_k]\).
Based on this intuition, we define
the sets \(\pinhav{\grm,\f}\)
and \(\pinhavn{\grm,\f}\) of contexts by:
\begin{align*}
  \pinhav{\grm,N_1 \cdots N_k \Ar N}
    &\defeq \set{ \pct \in \cs{\ralph} \mid \hn{\pct} = k, N \rew^{*}_{\grm}
        \pct[N_1,\ldots,N_k]
  }
\\
\pinhavn{\grm,\f} &\defeq \set{\pct \in \pinhav{\grm,\f} \mid |\pct| = n}.
\end{align*}
We also write \(C\COL\f\) when \(C\in\pinhav{\grm,\f}\) (assuming that \(\grm\) is clear from the context).
Note that \(\pinhav{\grm,\Ar N} = \inhav{\grm,N}\), and
we identify a \ctyping{} \(\Ar N\) with the nonterminal \(N\).
We call an element in \(\pinhav{\grm,\f}\) a \emph{\(\f\)-context},
and also a \emph{\(\f\)-tree} if it is a tree.
It is clear that,
if \(\ctr \in \pinhav{\grm,N_1 \cdots N_k \Ar N}\)
and \(\ctr_i \in \pinhav{\grm,N^i_1 \cdots N^i_{\ell_i} \Ar N_i}\) (\(i=1,\dots,k\)), then 
\( 
\ctr[\ctr_1,\dots,\ctr_{k}]
\in \pinhav{\grm, N^1_1 \cdots N^1_{\ell_1} \cdots N^k_1 \cdots N^k_{\ell_k} \Ar N}\).
We also define:
\begin{align*}
 \inhav{\grm} &\defeq \, \bigcup_{\mathclap{N \in \nont}}\, \pinhav{\grm,N}
&
 \inhavn{\grm} &\defeq \{ T \in \inhav{\grm} \mid |T| = n\}
\\
 \ctx{\grm} &\defeq \, \bigcup_{\mathclap{N,N' \in \nont}}\, \pinhav{\grm,N\Ar N'}
&
 \ctxn{\grm} &\defeq \{ S \in \ctx{\grm} \mid |S| = n\}
\\
\octx{\grm} &\defeq \inhav{\grm} \cup \ctx{\grm} 
&
\octxn{\grm} &\defeq \set{U \in \octx{\grm} \mid |U| = n}.
\end{align*}

\begin{exa}\label{ex:grammar}
Consider the grammar \(\grm_0 = (\set{\Ta\mapsto 2,\Tb\mapsto 1,\Tc\mapsto 0},\set{A_0,B_0},\rules_0)\), where \(\rules_0\)
consists of:
\[
A_0 \rew \Ta(B_0, B_0) \qquad A_0 \rew \Tc \qquad B_0 \rew \Tb(A_0).
\]
Then, \(\Ta(\hole,\hole)\in \pinhavnn{1}{\grm_0,B_0\,B_0 \Ar A_0}\) and
 \(\Ta(\Tb(\hole),\Tb(\Tc))\in \pinhavnn{4}{\grm_0, A_0 \Ar A_0}\).
\end{exa}

This article focuses on grammars that additionally satisfy two properties called 
\emph{strong connectivity} and \emph{unambiguity}.
We first define strong connectivity.

\begin{defi}[Strong Connectivity~\cite{Flajolet}\footnote{\label{foot:StrConApr}
In~\cite{Flajolet}, strong connectivity is defined for \emph{context-free specifications}. Our
definition is a straightforward adaptation of the definition for regular
tree grammars.
}]
Let $\grm = (\ralph, \nont, \rules)$ be a regular tree grammar.
We say that \(N'\) is \emph{reachable} from \(N\) if 
\(\pinhav{\grm, N'\Ar N}\) is non-empty, i.e., if
there exists
a \linear{context} \( \mvlinctr\in \cs{\ralph} \) such that \( N \rew^* \mvlinctr[N'] \).
We say that $\grm$ is \emph{strongly connected} if for any pair $N, N' \in \nont$ of nonterminals,
\(N'\) is reachable from \(N\).
\end{defi}
\begin{rem}
  There is another reasonable, but slightly weaker condition of reachability: 
\( N' \) is \emph{reachable} from \( N \) if there exists 
a \((\ralph \cup \nont)\)-tree \(\tr\) such that \(N \rew^* \tr\) and $N'$ occurs in \(\tr\).
  The two notions coincide if \( \inhav{\grm, N} \neq \emptyset \) for every \( N \in \nont \).
  Furthermore this condition can be easily fulfilled by simply removing all the nonterminals \( N \) 
with \( \inhav{\grm, N} = \emptyset \) and the rules containing those nonterminals.
\end{rem}

\begin{exa}\label{ex:sc}
The grammar $\grm_0$ in Example~\ref{ex:grammar} is strongly connected,
 since $\Tb(\hole) \in \inhav{\grm_0, A_0 \Ar B_0}$ and $\Ta(\hole, \Tb(\Tc)) \in
 \inhav{\grm_0, B_0 \Ar A_0}$.

 Next, consider the grammar $\grm_1 = (\set{ \Ta \mapsto 2, \Tb \mapsto 1, \Tc
 \mapsto 0}, \set{A_1, B_1}, \rules_1)$, where $\rules_1$ consists of:
 \[
 A_1 \rew \Ta(\Tc, A_1) \qquad A_1 \rew \Tb(B_1) \qquad B_1 \rew \Tb(B_1) \qquad B_1 \rew \Tc.
 \]
$\grm_1$ is not strongly connected, since $\inhav{\grm_1, A_1 \Ar B_1}
 = \emptyset$.
\end{exa}

To define the unambiguity of a grammar, we define (the standard notion of) leftmost rewriting.
The \emph{leftmost rewriting relation} \( \rew_{\grm,\ell} \) is the restriction of the rewriting relation that allows only the leftmost occurrence of nonterminals
to be rewritten.
Given \( T \in \TR{\ralph} \) and \( N \in \nont \),
a \emph{leftmost rewriting sequence of \( T \) from \( N \)} is a sequence \( T_1, \dots, T_n \) of \( (\ralph \cup \nont) \)-trees such that
\[
N = T_1 \rew_{\grm,\ell} T_2 \rew_{\grm,\ell} \dots \rew_{\grm,\ell} T_n = T.
\]
It is easy to see that \( T \in \inhav{\grm, N} \) if and only if there exists a leftmost rewriting sequence from \( N \) to \( T \).
\begin{defi}[Unambiguity]\label{def:unambiguity}
  A grammar \( \grm = (\ralph, \nont, \rules) \) is \emph{unambiguous} if, for each \( N \in \nont \) and \( T \in \TR{\ralph} \),
 there exists 
at most one leftmost rewriting sequence from \( N \) to \( T \).
\end{defi}

\subsubsection{Parameterized Infinite Monkey Theorem for Tree Grammars}
\label{sec:pimf}
We call a partial function \(f\) from \(\nat\) to \(\real\) with infinite definitional domain 
(i.e., \(\dom(f)=\set{n\in\nat \mid f(n)\mbox{ is defined}}\) is infinite)
a \emph{partial sequence}, and write \(f_n\) for \(f(n)\).
For a partial sequence \(f = (f_n)_{n\in \nat}\), we define
\(\ds{f}{(-)}\) as the bijective monotonic function from \(\nat\) to \(\dom(f)\), 
and then we define:
\[
\limd_{n \rightarrow \infty} f_n \defeq \lim_{n \rightarrow \infty} f_{\ds{f}{n}}.
\]
All the uses of \(\limd\) in this article are of the form:
\(\limd_{n\rightarrow \infty}\fr{b_n}{a_n}\)
where \((a_n)_{n \in \nat}\) and \((b_n)_{n \in \nat}\) are (non-partial) sequences (hence, \(\fr{b_n}{a_n}\) is 
undefined only when \(a_n=0\)).
The following property on \(\limd\)
is straightforward, which we leave for the reader to check.
\begin{lem}
\label{lem:convSum}
If \(\limd\limits_{n \rightarrow \infty} \fr{b^{(i)}_n}{a^{(i)}_n} = 1\) for \(i=1,\dots,k\), 
and if \(0 \le b^{(i)}_n \le a^{(i)}_n\) for \(i=1,\dots,k\) and for \(n \in \nat\),
then \(\limd\limits_{n \rightarrow \infty} \fr{\sum_i b^{(i)}_n}{\sum_i a^{(i)}_n} = 1\).
\end{lem}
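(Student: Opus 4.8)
The plan is to reduce the statement about $\limd$ to a routine $\varepsilon$-argument, using the nonnegativity hypothesis to pin down the domains of the partial sequences involved. First I would record the elementary consequences of $0 \le b^{(i)}_n \le a^{(i)}_n$: every $a^{(i)}_n$ and every $b^{(i)}_n$ is nonnegative, so $a^{(i)}_n = 0$ exactly when the ratio $\fr{b^{(i)}_n}{a^{(i)}_n}$ is undefined, and in that case $b^{(i)}_n = 0$ as well. Writing $D_i \defeq \dom\bigl((\fr{b^{(i)}_n}{a^{(i)}_n})_{n}\bigr) = \set{n \mid a^{(i)}_n > 0}$, each $D_i$ is infinite by assumption. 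Since the $a^{(i)}_n$ are nonnegative, $\sum_i a^{(i)}_n = 0$ iff $a^{(i)}_n = 0$ for all $i$, i.e.\ iff $n \notin \bigcup_i D_i$; hence the partial sequence $g \defeq \bigl(\fr{\sum_i b^{(i)}_n}{\sum_i a^{(i)}_n}\bigr)_{n}$ has domain $D \defeq \bigcup_i D_i \supseteq D_1$, which is infinite, so $\limd_{n \rightarrow \infty} g_n$ is well defined and equals the ordinary limit of $g$ along the monotone enumeration $\ds{g}{(-)}$ of $D$.

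Next I would note the trivial upper bound: for every $n \in D$ we have $\sum_i b^{(i)}_n \le \sum_i a^{(i)}_n$ with $\sum_i a^{(i)}_n > 0$, so $g_n \le 1$. It therefore suffices to show that the limit inferior of $g$ along $D$ is at least $1$, i.e.\ to fix $\varepsilon > 0$ and exhibit a threshold beyond which $g_n \ge 1 - \varepsilon$.

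For that step I would use the hypothesis for each $i$: since $\limd_{n \rightarrow \infty}\fr{b^{(i)}_n}{a^{(i)}_n} = 1$, there is $N_i$ such that for all $n \in D_i$ with $n \ge N_i$ one has $\fr{b^{(i)}_n}{a^{(i)}_n} \ge 1 - \varepsilon$, equivalently $a^{(i)}_n - b^{(i)}_n \le \varepsilon\, a^{(i)}_n$. Set $N \defeq \max_i N_i$. For $n \in D$ with $n \ge N$, split the index set: if $n \in D_i$ the displayed inequality holds, and if $n \notin D_i$ then $a^{(i)}_n = b^{(i)}_n = 0$, so $a^{(i)}_n - b^{(i)}_n \le \varepsilon\, a^{(i)}_n$ holds trivially. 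Summing over $i$ gives $\sum_i a^{(i)}_n - \sum_i b^{(i)}_n \le \varepsilon \sum_i a^{(i)}_n$, and dividing by the positive quantity $\sum_i a^{(i)}_n$ yields $g_n \ge 1 - \varepsilon$. Since $\ds{g}{m} \to \infty$ as $m \to \infty$, this gives $g_{\ds{g}{m}} \ge 1 - \varepsilon$ for all large $m$, which together with $g \le 1$ proves $\limd_{n \rightarrow \infty} g_n = 1$. There is no genuine obstacle here; the only point requiring a little care — and the only place the nonnegativity hypothesis is actually used — is the bookkeeping of the domains of the partial sequences, namely verifying $\dom(g) = \bigcup_i D_i$ so that the limit in the conclusion is taken along a set containing each $D_i$.
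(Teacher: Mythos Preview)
Your proof is correct. The paper does not give a proof of this lemma at all; it simply states that the property ``is straightforward, which we leave for the reader to check.'' Your argument is exactly the routine verification the reader would carry out, with the only nontrivial point---the bookkeeping of the domains of the partial sequences via the nonnegativity hypothesis---handled correctly.
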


Now we are ready to state the second main result of this article.
\newcommand\thmstatementIMTcap{Parameterized Infinite Monkey Theorem for Regular Tree Languages}
\newcommand\thmstatementIMT{
Let $\grm = (\ralph, \nont, \rules)$ be an unambiguous and
\SC{} 
regular tree grammar
such that \(\card{\inhav{\grm}}=\infty\),
and \((S_n)_{n \in \nat}\) be a family of \linear{contexts} in \(\ctx{\grm}\)
such that $|S_n| = \order(n)$.
Then there exists a real constant \(\cn > 0\) such that
for any \(N \in \nont\) the following equation holds:
\[
 \limd_{n \rightarrow \infty} \frac{\card{\{ \tr \in \inhavn{\grm,N} \mid
 S_{\ceil{\cn \log n}} \subc \tr
 \}}}{\card{\inhavn{\grm,N}}} = 1.
\]
}
\begin{thm}[\thmstatementIMTcap]
\label{thm:imt}
\thmstatementIMT
\end{thm}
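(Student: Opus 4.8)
The plan is to prove Theorem~\ref{thm:imt} by the generating-function / singularity-analysis method that underlies the classical infinite monkey theorem, adapted to the tree setting. First I would set up, for each nonterminal $N$ and each \ctyping{} $\f$, the ordinary generating functions $f_N(z) = \sum_n \card{\inhavn{\grm,N}} z^n$ and $f_{\f}(z) = \sum_n \card{\pinhavn{\grm,\f}} z^n$. Because $\grm$ is unambiguous, these count each tree (resp.\ context) exactly once, so the grammar rules translate into a \emph{polynomial} system of equations among the $f_N$'s; strong connectivity makes this system \emph{irreducible}. By the Drmota--Lalley--Woods theorem (the tree analogue of Perron--Frobenius for positive irreducible systems), all the $f_N$ share a common dominant singularity $\rho > 0$, each has a square-root singularity there, and hence $\card{\inhavn{\grm,N}} \sim c_N \rho^{-n} n^{-3/2}$ for constants $c_N > 0$ (using $\card{\inhav{\grm}} = \infty$ to rule out the degenerate polynomial case). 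Strong connectivity also gives, for each pair $N,N'$, that $f_{N' \Ar N}(\rho) < \infty$, i.e.\ the ``expected number of occurrences of an $N'$-hole context inside a random $N$-tree'' is governed by a convergent series; this is the structural input that replaces the elementary counting in Borges's theorem.

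Next I would reduce ``$S_{\ceil{\cn\log n}} \subc \tr$'' to a non-occurrence estimate: let $q_n$ be the probability that a uniformly random tree in $\inhavn{\grm,N}$ does \emph{not} contain $S_m$ as a subcontext, with $m = \ceil{\cn\log n}$. The goal is $q_n \to 0$. The standard trick is a \emph{renewal / first-occurrence decomposition}: fix a ``spine'' of linear contexts in $\grm$ realizing strong connectivity, and observe that along any sufficiently long root-to-leaf path in a random tree, independent ``trials'' to embed $S_m$ occur. More precisely, one shows there is a constant $\lambda \in (0,1)$ and a polynomial bound such that a random $N$-tree of size $n$ contains, with probability $\to 1$, at least $\Omega(n / |S_m|)$ disjoint ``slots'' each of which is an independent copy of a fixed sub-distribution, and each slot hosts $S_m$ with probability at least $\lambda^{|S_m|} \ge \lambda^{O(m)} = n^{-O(\cn)}$ (here $|S_m| = \order(m) = \order(\log n)$ is exactly what makes this polynomially small, not exponentially small). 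Choosing $\cn$ small enough that $n^{-O(\cn)} \cdot \Omega(n/\log n) \to \infty$, a Borel--Cantelli / second-moment argument gives $q_n \le \exp(-n^{1-O(\cn)}/\mathrm{polylog}) \to 0$. The $\limd$ rather than $\lim$ is needed only because $\inhavn{\grm,N}$ may be empty on a sublattice of $n$'s (periodicity of the grammar); on its infinite support the asymptotics above are uniform, so Lemma~\ref{lem:convSum}-style bookkeeping handles it.

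The main obstacle is making the ``independent slots along a path'' argument rigorous in the tree setting, where subtrees hanging off a spine are \emph{not} independent of the spine's length under the uniform measure on size-$n$ trees — conditioning on total size $n$ couples everything. The clean way around this is to work with the \emph{Boltzmann distribution} at the critical parameter $\rho$: under the critical Boltzmann sampler the subtrees attached along a spine \emph{are} genuinely independent, the probability of embedding a fixed context $S$ at a given node is a fixed positive constant $\beta_S > 0$ depending only on $\grm$, and one then transfers the tail estimate back to the fixed-size model via the local limit law $\card{\inhavn{\grm,N}} \sim c_N\rho^{-n}n^{-3/2}$ and the fact that a critical Boltzmann tree conditioned to have size $n$ is exactly the uniform size-$n$ tree. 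So concretely the steps are: (1) build the GF system, invoke Drmota--Lalley--Woods for the square-root singularity and $c_N\rho^{-n}n^{-3/2}$ asymptotics; (2) set up the critical Boltzmann sampler for $\grm$ and identify a positive embedding probability for each linear context, noting it degrades only as $\beta^{|S_m|} = n^{-O(\cn)}$; (3) extract $\Omega(n/\log n)$ asymptotically independent embedding trials from a random tree (e.g.\ along the leftmost infinite spine, or via the fringe-subtree decomposition); (4) Borel--Cantelli to get non-occurrence probability $\to 0$ under Boltzmann, then transfer to the uniform size-$n$ model; (5) choose $\cn$ to balance the exponents and handle periodicity with $\limd$. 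I expect step (3)–(4) — quantifying enough disjoint independent trials and controlling the transfer — to be where the real work lies; steps (1) and (2) are essentially citations plus routine translation of the grammar.
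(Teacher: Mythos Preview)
Your route is genuinely different from the paper's. The paper uses \emph{no} generating functions, no Drmota--Lalley--Woods, and no Boltzmann sampler; instead it builds a purely combinatorial bijection. Concretely, it defines a deterministic decomposition function $\closeddecomp$ that carves any tree $T$ of size $n$ into a ``second-order context'' $\gctx$ (a template with typed holes) together with a sequence $P = U_1\cdots U_k$ of affine subcontexts, each of size in $[m, rm]$, with $\gctx\Hfill{P} = T$ and $k \ge n/(2rm)$. The key point is that this yields an \emph{exact} product decomposition $\inhavn{\grm,N} \cong \coprod_{\gctx} \prod_{i=1}^{\shn(\gctx)} \tcomp[\sh{\gctx}{i}]$, so the ``independent trials'' you seek are literal factors of a Cartesian product, with no size-conditioning coupling and no transfer from a Boltzmann model. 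Strong connectivity (via a periodicity lemma for $\inhavn{\grm,N\Ar N'}$) ensures each factor set $\tcomp[\sh{\gctx}{i}]$ contains an element with $S_m$ as a subcontext; a crude bound $\card{\tcomp} \le \gamma^{rm}$ then gives $Z_n \le (1 - \gamma^{-rm})^{n/(2rm)}$, exactly as in the word case. This buys elementarity and uniformity: no case splits, no singularity analysis, and the bijection makes the ``independence'' exact rather than asymptotic.

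Your analytic approach is not wrong in spirit, but two points are real gaps. First, Drmota--Lalley--Woods and the $c_N \rho^{-n} n^{-3/2}$ asymptotics require the polynomial system to be \emph{nonlinear}; the theorem as stated also covers purely linear grammars (the word case of Remark~\ref{rem:imf} is an instance), where the generating function is rational and the singularity structure is different, so you would need a case distinction you have not made. Second, your step~(3) --- extracting $\Omega(n/\log n)$ asymptotically independent embedding slots --- is precisely the crux, and neither suggestion names a workable mechanism: there is no ``leftmost infinite spine'' in a finite tree, and ``fringe-subtree decomposition'' does not by itself give disjoint slots each admitting an embedding of a \emph{linear context} $S_m$ of size $\Theta(\log n)$. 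The paper's decomposition function $\closeddecomp$ is exactly a device that manufactures these slots bijectively; if you insist on the Boltzmann route, you still need a structural lemma of comparable strength, and transferring an exponentially small non-occurrence probability from the Boltzmann model to the fixed-size model is not an immediate consequence of the local limit theorem.
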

Intuitively, the equation above says that if \(n\) is large enough,
almost all the trees of size \(n\) generated from \(N\) contain
\(S_{\ceil{\cn \log n}}\) as a subcontext.
\begin{rem}
  \label{rem:imf}
The standard infinite monkey theorem says that for any finite word
 \(w=a_1\cdots a_k\) over an alphabet $A$, the probability that a randomly chosen word of length \(n\) contains \(w\) as a subword converges to \(1\), i.e.,
\[\lim_{n\to\infty}\frac{\card{\set{w'\in A^n \mid \exists w_0,w_1\in
      A^*.w'=w_0 w w_1 
}}}{\card{A^n}} = 1.\]
Here, $A^*$ and $A^n$ denote the set of all words over $A$ and the set
 of all words of length $n$ over $A$, respectively.
This may be viewed as a special case of our parameterized infinite monkey theorem above,
where (i) the components of the grammar are given by 
\(\ralph = \set{a\mapsto 1 \mid a\in A}\cup \set{e {\, \mapsto 0}}\), \(\nont = \set{N}\), and 
\(\rules = \set{N\rew a(N)\mid a\in A}\cup \set{N\rew e}\), and (ii) 
\(S_n= a_1(\cdots a_k(\hole)\cdots)\) independently of \(n\).
\end{rem}

Note that in the theorem above, we have used \(\limd\) rather than \(\lim\).
To avoid the use of \(\limd\), we need an additional condition on \(\grm\), called
aperiodicity.
\begin{defi}[Aperiodicity~\cite{Flajolet}]\label{def:aperiodicity}
Let \(\grm\) be a grammar.
For a nonterminal \(N\), \(\grm\) is called \(N\)-\emph{aperiodic} if 
there exists \(n_0\) such that 
\(\card{\inhavn{N}} > 0\) for any \(n\ge n_0\).
Further, \(\grm\) is called \emph{aperiodic} if 
\(\grm\) is \(N\)-aperiodic for any nonterminal \(N\).
\end{defi}
\noindent
In Theorem~\ref{thm:imt},
if we further assume that \(\grm\) is \(N\)-aperiodic,
then \(\limd\limits_{n \rightarrow \infty}\) in the statement can be replaced with
 \(\lim\limits_{n \rightarrow \infty}\).

In the rest of this section, we reformulate the theorem above in the form more convenient for
proving Theorem~\ref{mainthm}. 
In Definition~\ref{df:set-of-terms},
\(\termsn{\D}{\I}{\K}\) was defined
as the disjoint union of \((\termsng{\ee}{\tau}{\D}{\I}{\K})_{\tau \in \types{\D}{\I}}\).
To prove Theorem~\ref{mainthm}, we will construct a grammar \(\grm\) so that, for each \(\tau\),
\(\termsng{\ee}{\tau}{\D}{\I}{\K} \cong \inhav{\grm,N}\) holds for some nonterminal \(N\).
Thus, the following style of statement is more convenient, which is obtained as an immediate corollary of
Theorem~\ref{thm:imt}
and Lemma~\ref{lem:convSum}.
\begin{cor}
\label{cor:imt}
Let $\grm$ be an unambiguous and
\SC{} regular tree grammar such that \(\card{\inhav{\grm}}=\infty\),
and \((S_n)_{n \in \nat}\) be a family of \linear{contexts} in \(\ctx{\grm}\)
such that $|S_n| = \order(n)$.
Then there exists a real constant \(\cn > 0\) such that
for any non-empty set \(\nset\) of nonterminals of \(\grm\) the following equation holds:
\[
 \limd_{n \rightarrow \infty} \frac{\card{\{ (N,\tr) \in \coprod_{N \in \nset} \inhavn{\grm,N} \mid
 S_{\ceil{\cn \log n}} \subc \tr
 \}}}{\card{\coprod_{N \in \nset} \inhavn{\grm,N}}} = 1.
\]
\end{cor}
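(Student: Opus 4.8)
The plan is to obtain Corollary~\ref{cor:imt} as a direct consequence of Theorem~\ref{thm:imt} and Lemma~\ref{lem:convSum}, exploiting that the constant produced by Theorem~\ref{thm:imt} does not depend on the nonterminal, and that $\nset$, being a subset of the finite set $\nont$ of nonterminals of $\grm$, is finite.

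First I would apply Theorem~\ref{thm:imt} to the grammar $\grm$ (which is unambiguous, \SC{}, and satisfies $\card{\inhav{\grm}}=\infty$) and to the family $(S_n)_{n\in\nat}$ of \linear{contexts} with $|S_n| = \order(n)$, obtaining a single real constant $\cn > 0$ such that, for every nonterminal $N$ — and hence in particular for every $N \in \nset$ —
\[
 \limd_{n \rightarrow \infty} \frac{\card{\{ \tr \in \inhavn{\grm,N} \mid
 S_{\ceil{\cn \log n}} \subc \tr\}}}{\card{\inhavn{\grm,N}}} = 1 .
\]
Next I would unfold the definition of $\coprod$ to record the two bookkeeping identities
\[
 \card{\coprod_{N \in \nset} \inhavn{\grm,N}} = \sum_{N \in \nset}\card{\inhavn{\grm,N}} ,
\]
\[
 \card{\{(N,\tr)\in \coprod_{N \in \nset}\inhavn{\grm,N} \mid S_{\ceil{\cn\log n}}\subc \tr\}}
 = \sum_{N \in \nset}\card{\{\tr\in\inhavn{\grm,N}\mid S_{\ceil{\cn\log n}}\subc\tr\}} ,
\]
which hold simply because membership of $(N,\tr)$ in the coproduct means $\tr\in\inhavn{\grm,N}$, and the side condition $S_{\ceil{\cn\log n}}\subc\tr$ refers only to the second component.

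Then, enumerating $\nset$ as $\{N_1,\dots,N_k\}$ and setting $a^{(i)}_n \defeq \card{\inhavn{\grm,N_i}}$ and $b^{(i)}_n \defeq \card{\{\tr\in\inhavn{\grm,N_i}\mid S_{\ceil{\cn\log n}}\subc\tr\}}$, one has $0 \le b^{(i)}_n \le a^{(i)}_n$ for all $n$ (the numerator counts a subset of what the denominator counts) and $\limd_{n\rightarrow\infty} b^{(i)}_n/a^{(i)}_n = 1$ for each $i$ by the first display. Lemma~\ref{lem:convSum} then gives $\limd_{n\rightarrow\infty}(\sum_i b^{(i)}_n)/(\sum_i a^{(i)}_n) = 1$, which, rewritten through the two identities above, is exactly the equation in the statement (with the same $\cn$ fixed at the start).

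I do not anticipate a genuine obstacle here: the corollary is essentially combinatorial bookkeeping layered on top of Theorem~\ref{thm:imt}. The only two points worth an explicit remark are (i) that one and the same $\cn$ serves all $N \in \nset$ simultaneously, which is immediate from the quantifier order in Theorem~\ref{thm:imt} (``there exists $\cn$ \dots such that for any $N$''); and (ii) that the possibly different definitional domains of the partial sequences $n\mapsto b^{(i)}_n/a^{(i)}_n$ (whose union is the definitional domain of $n\mapsto (\sum_i b^{(i)}_n)/(\sum_i a^{(i)}_n)$) are absorbed inside Lemma~\ref{lem:convSum}, so no difficulty arises from using $\limd$ rather than $\lim$.
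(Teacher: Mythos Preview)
Your proposal is correct and follows exactly the approach the paper indicates: the corollary is stated there as ``an immediate corollary of Theorem~\ref{thm:imt} and Lemma~\ref{lem:convSum}'', with no further argument given. Your two explicit remarks about the uniform choice of $\cn$ and the handling of definitional domains via $\limd$ are precisely the points one needs to check, and both are fine.
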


We can also replace 
a family of \linear{contexts} \((S_n)_n\) above
with that of trees \((T_n)_n\). We need only this corollary in the proof of Theorem~\ref{mainthm}.

\begin{cor}
\label{cor:imtTree}
Let $\grm$ be an unambiguous 
and \SC{} regular tree grammar
such that \(\card{\inhav{\grm}}=\infty\) and
there exists \(C \in \cup_{\f} \inhav{\grm,\f}\) with \(\hn{C}\ge2\).
Also, let \((T_n)_{n \in \nat}\) be a family of trees in 
\(\inhav{\grm}\)
such that $|T_n| = \order(n)$.
Then there exists a real constant \(\cn > 0\) such that
for any non-empty set \(\nset\) of nonterminals of \(\grm\) the following equation holds:
\[
 \limd_{n \rightarrow \infty} \frac{\card{\{ (N,\tr) \in \coprod_{N \in \nset} \inhavn{\grm,N} \mid
 T_{\ceil{\cn \log n}} \subc \tr
 \}}}{\card{\coprod_{N \in \nset} \inhavn{\grm,N}}} = 1.
\]
\end{cor}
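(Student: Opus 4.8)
The plan is to deduce the statement from Corollary~\ref{cor:imt} by turning each tree \(T_m\) into a \linear{context} \(S_m\) that is generated by \(\grm\), has size \(\order(m)\), and satisfies \(T_m \subc S_m\). Since \(\subc\) is transitive (which is straightforward from its definition), \(S_{\ceil{\cn \log n}} \subc \tr\) then implies \(T_{\ceil{\cn \log n}} \subc \tr\), so applying Corollary~\ref{cor:imt} to the family \((S_m)_m\) and enlarging the counted set will give the claim for \((T_m)_m\).

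First I would fix a context \(C \in \pinhav{\grm, M_1 \cdots M_k \Ar M}\) witnessing the hypothesis, so that \(k = \hn{C} \ge 2\), and, using \(\card{\inhav{\grm}} = \infty\) (hence \(\inhav{\grm, N_*} \neq \emptyset\) for some nonterminal \(N_*\)) together with strong connectivity, I would choose for each \(j \in \set{3, \dots, k}\) a closed tree \(t_j \in \inhav{\grm}\) with \(M_j \rew^{*}_{\grm} t_j\). Next, for every nonterminal \(N_0 \in \nont\), strong connectivity gives a \linear{context} \(R_{N_0}\) with \(M_1 \rew^{*}_{\grm} R_{N_0}[N_0]\); plugging \(R_{N_0}[N_0]\) into the first hole of \(C\), the trees \(t_3, \dots, t_k\) into the holes \(3, \dots, k\), and leaving the second hole untouched, one obtains \(M \rew^{*}_{\grm} D_{N_0}[N_0, M_2]\) for a \(2\)-context \(D_{N_0}\) over \(\ralph\); that is, \(D_{N_0} \in \pinhav{\grm, N_0\, M_2 \Ar M}\). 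Since \(\nont\) is finite, \(B \defeq \max_{N_0 \in \nont}\, |D_{N_0}|\) is a constant. Now, for each \(m\), I would choose a nonterminal \(N_m\) with \(T_m \in \inhav{\grm, N_m}\) and set \(S_m \defeq D_{N_m}[T_m, \hole]\): this \(S_m\) is a \linear{context}; from \(M \rew^{*}_{\grm} D_{N_m}[N_m, M_2] \rew^{*}_{\grm} D_{N_m}[T_m, M_2] = S_m[M_2]\) we get \(S_m \in \pinhav{\grm, M_2 \Ar M} \subseteq \ctx{\grm}\); we have \(T_m \subc S_m\) because \(T_m\) occupies the first hole of \(D_{N_m}\); and \(|S_m| = |D_{N_m}| + |T_m| \le B + \order(m) = \order(m)\).

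Finally, I would apply Corollary~\ref{cor:imt} to the family \((S_m)_{m \in \nat}\) of \linear{contexts} in \(\ctx{\grm}\), obtaining a constant \(\cn > 0\) such that for every non-empty set \(\nset\) of nonterminals,
\[
\limd_{n \to \infty} \frac{\card{\{\, (N, \tr) \in \coprod_{N \in \nset} \inhavn{\grm, N} \mid S_{\ceil{\cn \log n}} \subc \tr \,\}}}{\card{\coprod_{N \in \nset} \inhavn{\grm, N}}} = 1 .
\]
For every \(n\) at which this denominator is positive, transitivity of \(\subc\) and \(T_{\ceil{\cn \log n}} \subc S_{\ceil{\cn \log n}}\) give \(\{ (N, \tr) \mid S_{\ceil{\cn \log n}} \subc \tr \} \subseteq \{ (N, \tr) \mid T_{\ceil{\cn \log n}} \subc \tr \}\) within \(\coprod_{N \in \nset} \inhavn{\grm, N}\), so the ratio for \((T_m)\) lies between the ratio for \((S_m)\) and \(1\); as the two partial sequences have the same definitional domain, squeezing yields the corollary with the same \(\cn\). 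I expect the main obstacle to be the construction of the \(2\)-contexts \(D_{N_0}\): this is exactly where the hypothesis \(\hn{C} \ge 2\) is essential, since it provides the spare hole that survives the substitution and makes \(S_m\) genuinely a context, while strong connectivity is what routes \(N_0\) into a hole and closes the unwanted ones and \(\card{\inhav{\grm}} = \infty\) supplies the closing trees; once \(S_m\) is available, the reduction to Corollary~\ref{cor:imt} is routine.
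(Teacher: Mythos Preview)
Your proposal is correct and follows essentially the same approach as the paper's proof: both reduce to Corollary~\ref{cor:imt} by constructing, from the given \(\ge\!2\)-hole context \(C\), a family of \linear{contexts} \(S_m = C[\text{(route to }T_m),\hole,\text{closing trees}]\in\ctx{\grm}\) with \(T_m\subc S_m\) and \(|S_m|=\order(m)\). The only cosmetic difference is that you precompute the auxiliary \(2\)-contexts \(D_{N_0}\) uniformly over all nonterminals and take \(B=\max_{N_0}|D_{N_0}|\), whereas the paper picks the routing context \(S'_n\) ad~hoc for each \(n\) and notes its size is bounded because \(\nont\) is finite; the resulting \(S_m\) and the final squeezing argument are the same.
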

\begin{proof}
Let \(\grm=(\ralph,\nont,\rules)\).
Let \(C \in \inhav{\grm,\f}\) be the context in the assumption and let \(\f\) be of the form \(N'_1\cdots
N'_k\Ar N'\), i.e., \(N'\rew^* C[N'_1,\ldots,N'_k]\), where \(k = \hn{C} \ge 2\).
By Corollary~\ref{cor:imt}, it suffices to construct a family of
\linear{contexts} \((S_n)_n\) such that (i) \(S_n\in\ctx{\grm}\),
(ii) \(T_n\subc S_n\), and (iii) $|S_n| = \order(n)$.
 For each \(n\), there exists \(N_n\) such that
\(N_n\rew^* T_n\).
Since \(\card{\inhav{\grm}}=\infty\) and by the strong connectivity, 
for each \(i \le k\) there exists \(T'_i\) such that
\(N'_i \rew^* T'_i\).
For each \(n\), by the strong connectivity, there exists
\(S'_n \) such that \(N_1' \rew^* S'_n[N_n]\),
and since \(\nont\) is finite, we can choose \(S'_n\) so that the size is bounded above by a constant (that is independent of \(n\)). Let \(S_n\) be \(C[S'_n[T_n],\hole,T'_3,\dots,T'_k]\).
Then (i) \(S_n\in \inhav{\grm,N'_2\Ar \new{N'}}\subseteq \ctx{\grm}\) because \(N'\rew^*
C[N'_1,\ldots,N'_k] \rew^* C[S'_n[T_n],N'_2,T'_3,\dots,T'_k]\),
(ii) \(T_n\subc S_n\), and (iii) \(|S_n|=\order(n)\), as required.
\end{proof}

\begin{rem}
\label{rem:avoidLimDef}
In each of Corollaries~\ref{cor:imt} and~\ref{cor:imtTree},
if we further assume that \(\grm\) is \(N\)-aperiodic for any \(N \in \nset\),
then \(\limd\limits_{n \rightarrow \infty}\) in the statement can be replaced with
\(\lim\limits_{n \rightarrow \infty}\).
\end{rem}

\begin{rem}
In Theorem~\ref{mainthm} (and similarly in Corollaries~\ref{cor:imt} and~\ref{cor:imtTree}),
one might be interested in the following form of probability:
\[
 \limd_{n \rightarrow \infty} \frac{\card{\uplus_{m \le n}\{ \tr \in \inhavnn{m}{\grm,N} \mid
 S_{\ceil{\cn \log m}} \subc \tr
 \}}}{\card{\uplus_{m \le n}\inhavnn{m}{\grm,N}}} = 1,
\]
which discusses trees of size \emph{at most} \(n\) rather than \emph{exactly} \(n\).
Under the assumption of aperiodicity,
the above equation follows from Theorem~\ref{mainthm},
by the following \emph{Stolz-Ces\`aro theorem}~\cite[Theorem~1.22]{Marian}:
Let $(a_n)_{n \in \nat}$ and $(b_n)_{n \in \nat}$ be a sequence of real numbers and
assume $b_n > 0$ and \(\sum_{n=0}^{\infty} b_n = \infty\). Then
for any \(c \in [-\infty, +\infty]\),
\[
\lim_{n \rightarrow \infty} \frac{a_n}{b_n} = c
\qquad\text{implies}\qquad
\lim_{n \rightarrow \infty} \fr{\sum_{m\le n}a_m}{\sum_{m\le n}b_m} = c.
\]
(We also remark that the inverse implication also holds if there exists \(\gamma \in \real \setminus \{1\}\) such that
\(\lim_{n \rightarrow \infty} \fr{\sum_{m\le n}b_m}{\sum_{m\le n+1}b_m} = \gamma\)~\cite[Theorem~1.23]{Marian}.)
\end{rem}

 \section{Proof of the Parameterized Infinite Monkey Theorem for Regular Tree Languages (Theorem~\ref{thm:imt})}\label{sec:proofimf}

Here we prove Theorem~\ref{thm:imt}.
In Section~\ref{sec:IMTforWords}, we first prove
a parameterized version\footnote{Although the parameterization is a simple extension, we are not aware of
literature that explicitly states this parameterized version.} of infinite monkey
theorem for \emph{words}, and explain how the proof for the word case can be extended to deal with
regular tree languages. The structure of the rest of the section is explained at the end of Section~\ref{sec:IMTforWords}.

\subsection{Proof for Word Case and Outline of this Section}\label{sec:IMTforWords}
Let $A$ be an alphabet, \ie a finite non-empty set of symbols.
For a word $w = a_1\cdots a_n$ over \(A\),
We write $|w|$ for \(n\) and call it the \emph{size} (or length) of $w$.
As usual, we denote by $A^n$ the set of all words of size $n$ over $A$,
and by $A^*$ the set of all finite words over $A$: $A^* = \bigcup_{n \geq 0} A^n$. 
For two words $w, w' \in A^*$, 
we say $w'$ is a \emph{subword of $w$} and write $w' \subw w$ if $w = w_1 w' w_2$ for
some words $w_1, w_2 \in A^*$.
The infinite monkey theorem
states that, for any word $w \in A^*$,
 the probability that a randomly chosen word of size $n$ contains $w$ as
 a subword tends to one if $n$ tends to infinity (recall Remark~\ref{rem:imf}).
 The following theorem is 
a parameterized version, where \(w\) may depend on \(n\).
It may also be viewed as a special case of Theorem~\ref{thm:imt}, where
\(\ralph = \set{a\mapsto 1 \mid a\in A}\cup \set{e}\), \(\nont = \set{N}\), 
\(\rules = \set{N\rew a(N)\mid a\in A}\cup \set{N\rew e}\), and 
\(S_n= a_1(\cdots a_k(\hole)\cdots)\) for each \(w_n = a_1\cdots a_k\).
\begin{prop}[Parameterized Infinite Monkey Theorem for Words]\label{prop:monkey}
Let $A$ be an alphabet and $(w_n)_n$ be a family of words over $A$ such
 that $|w_n|= \order(n)$.
Then, there exists a real constant \(p>0\) such that we have:
\[
 \lim_{n \rightarrow \infty} \fr{\card{\{ w \in A^n \mid w_{\ceil{p \log n}} \subw w\}}}{\card{A^n}} = 1.
\]
\end{prop}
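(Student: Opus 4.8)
The plan is to estimate the fraction of words of length $n$ that *avoid* $w_{\ceil{p\log n}}$ as a subword, and show it tends to $0$. Write $k_n = \ceil{p\log n}$ and $\ell_n = |w_{k_n}|$; by the hypothesis $|w_m| = \order(m)$ there is a constant $d$ with $\ell_n \le d\,k_n = \order(\log n)$. The standard trick is to chop a word of length $n$ into $\lfloor n/\ell_n\rfloor$ consecutive disjoint blocks of length $\ell_n$ (ignoring the tail). If $w$ avoids $w_{k_n}$ as a subword, then in particular none of these blocks equals $w_{k_n}$ exactly. Since $|A| \ge 1$ — and the statement is only interesting when $|A| \ge 2$, with the $|A|=1$ case being trivial since then $A^n$ has one element and $w_{k_n}$ is a subword of it once $n$ is large — the probability that a single uniformly random block of length $\ell_n$ equals $w_{k_n}$ is exactly $|A|^{-\ell_n}$, and the blocks are independent. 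Hence
\[
\frac{\card{\{w \in A^n \mid w_{k_n} \not\subw w\}}}{\card{A^n}} \le \bigl(1 - |A|^{-\ell_n}\bigr)^{\lfloor n/\ell_n\rfloor} \le \exp\!\left(-\,|A|^{-\ell_n}\,\lfloor n/\ell_n\rfloor\right).
\]

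It remains to choose $p$ so that the exponent tends to $-\infty$. Using $\ell_n \le d k_n \le d(p\log n + 1)$, we get $|A|^{-\ell_n} \ge |A|^{-d(p\log n + 1)} = c_1\, n^{-dp\log|A|}$ for a constant $c_1 > 0$. Thus the exponent is at most $-c_1\, n^{-dp\log|A|}\cdot \lfloor n/\ell_n\rfloor$, and since $\ell_n = \order(\log n)$, this is $-\Omega\!\left(n^{\,1 - dp\log|A|}/\log n\right)$. Choosing $p$ small enough that $dp\log|A| < 1$ — e.g. $p = \frac{1}{2d\log|A|}$ — makes the exponent $\to -\infty$, so the bound tends to $0$, which gives the claim. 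One should also note $k_n \ge 1$ and $\ell_n \ge 1$ for large $n$ so the blocking is well-defined (if some $w_m$ is the empty word the statement is trivial).

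The only mild subtlety — and the step I'd expect to need the most care — is bookkeeping the interaction between the two roles of $n$: the length of the ambient word and the index $k_n = \ceil{p\log n}$ feeding the family $(w_m)_m$. One must keep $p$ uniform (a single constant works for the whole family, because the $\order(\cdot)$ bound on $|w_m|$ supplies a uniform $d$), and one must ensure the chosen $p$ simultaneously (a) keeps $dp\log|A| < 1$ so the polynomial factor $n^{1 - dp\log|A|}$ dominates the $\log n$ in the denominator, and (b) does not depend on which specific family was handed to us beyond its $\order$-constant. Everything else is the classical independent-blocks computation; the parameterization changes nothing essential except that the "pattern" now grows logarithmically, which is exactly slow enough for the geometric decay to still win. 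This also previews the tree case: the blocks will be replaced by disjoint copies of a fixed context slot into which $S_{k_n}$ can be plugged, and strong connectivity plus unambiguity will play the role that independence of blocks plays here.
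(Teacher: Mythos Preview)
Your proof is correct and follows essentially the same approach as the paper: decompose a word of length $n$ into $\lfloor n/\ell_n\rfloor$ disjoint blocks of length $\ell_n = |w_{\ceil{p\log n}}|$, bound the avoidance probability by $(1-|A|^{-\ell_n})^{\lfloor n/\ell_n\rfloor}$, and choose $p$ small enough that $\ell_n \le q\log_{|A|} n$ for some $q<1$. The only cosmetic difference is that you pass through the inequality $1-x \le e^{-x}$ to analyze the exponent directly, whereas the paper works with $(1-1/n^q)^{n/\log n}$ and appeals to the fact that this tends to $0$ when $q<1$; both arrive at the same threshold condition $dp\log|A|<1$.
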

\begin{proof}\upshape
Let $\zps{n}$ be $1 - \card{\{ w \in A^n \mid w_{\ceil{p \log n}} \subw w \}}/\card{A^n}$,
i.e., the probability that a word of size $n$ does \emph{not} contain $w_{\ceil{p \log n}}$.
By the assumption \(|w_n|=O(n)\),
there exists \(b>0\) such that \(|w_n| \le bn\) for sufficiently large \(n\).
Let \(0<q<1\) be an arbitrary real number, and we define
\(p \defeq \frac{q}{2b\log{\card{A}}}\).
We write $s(n)$ for \(|w_{\ceil{p \log n}}|\) 
and $c(n)$ for 
$\floor{n/s(n)}$.
Let \(n \in \nat\).
Given a word $w = a_1\cdots a_n\in A^n$, let us decompose it to subwords of length $s(n)$ as follows.
\[
w = \underbrace{a_1 \cdots a_{s(n)}}_{\text{first subword}}
 \cdots
 \underbrace{a_{(c(n)-1) s(n)+1} \cdots a_{c(n)s(n)}}_{c(n)\text{-th subword}} 
 \underbrace{a_{c(n) s(n)+1} \cdots a_{n}}_{\text{remainder}}.
\]
Then, 
\begin{align*}
0 \le \zps{n} &\leq \mbox{``the probability that none of the $i$-th subword is $w_{\ceil{p \log n}}$''}
\\ &= 
\fr{\card{(A^{s(n)} \setminus \{w_{\ceil{p \log n}}\})^{c(n)}A^{n-c(n)s(n)}}}{\card{A^{n}}}
\\ &= 
\left(\fr{\card{A^{s(n)} \setminus \{w_{\ceil{p \log n}}\}}}{\card{A^{s(n)}}}
		  \right)^{c(n)} =
\left(\fr{\card{A^{s(n)}} - 1}{\card{A^{s(n)}}}
		  \right)^{c(n)} =
 \left(1-\fr{1}{\card{A}^{s(n)}}\right)^{c(n)}
\end{align*}
and we can show \(\lim_{n \rightarrow \infty} \left(1-\fr{1}{\card{A}^{s(n)}}\right)^{c(n)} = 0\)
as follows. 

For sufficiently large \(n\),
we have \(s(n) \le b\ceil{p \log n} \le
2bp\log n = q\fr{\log n}{\log {\card{A}}} = q\log_{\card{A}}n\),
and hence
\[
\left(1-\fr{1}{\card{A}^{s(n)}}\right)^{c(n)} 
\le \left(1-\fr{1}{n^q}\right)^{c(n)}
\le \left(1-\fr{1}{n^q}\right)^{\fr{n}{s(n)}-1}
\le \left(1-\fr{1}{n^q}\right)^{\fr{\log \card{A}}{q}\fr{n}{\log n}-1}.
\]
Also we have \(\lim_{n \rightarrow \infty}  \left(1-\fr{1}{n^{q}}\right)^{\fr{n}{\log n}} = 0\) (which we leave for the reader to check);
therefore \(\lim_{n \rightarrow \infty} \zps{n} = 0\).
\end{proof}

The key observations in the above proof were:
\begin{itemize}
  
\item[(W1)] Each word \(w=a_1\cdots a_n\) can be decomposed to
  \[(w_{\mathtt{rem}}', w'_1,\ldots,w'_{\new{c}(n)})\in 
  A^{n-c(n)s(n)}\times \Big(\prod_{i=1}^{c(n)} A^{s(n)}\Big) \] 
where \(w_{\mathtt{rem}}'=a_{c(n) s(n)+1} \cdots a_{n}\) and \(w'_i = a_{(i-1)s(n)+1}\cdots a_{is(n)}\).

\item[(W2)]
The word decomposition above induces the following
  decomposition of the set \(A^n\) of words of length \(n\): 
\begin{align*}
A^n \ &\cong\ A^{n-c(n)s(n)}\times \Big(\prod_{i=1}^{c(n)} A^{s(n)}\Big) 
\\
 \ &\cong\
\coprod_{w \in A^{n-c(n)s(n)}} \prod_{i=1}^{c(n)} A^{s(n)}
\end{align*}
Here, \(\cong\) denotes the existence of a bijection.
\item[(W3)] Choose \(s(n)\) and \(c(n)\) so that
(i) \(A^{s(n)}\) contains at least one element that contains \(w_n\) as a subword
and (ii) \(c(n)\) is sufficiently large. By condition (i), the probability that
an element of \(A^n\) does not contain \(w\) as a subword can be bounded above
the probability that none of \(w'_i\, (i\in\set{1,\ldots,c(n)})\) contains \(w_n\) as a subword,
i.e., \((1-\frac{1}{\card{A^{s(n)}}})^{c(n)}\), which converges to \(0\) by condition (ii).
\end{itemize}
The proof of Theorem~\ref{thm:imt} in the rest of this section is based on similar observations:
\begin{itemize}
\item[(T1)] Each tree \(T\) of size \(n\) can be decomposed to
  \[(E, U_1,\ldots,U_{c_E}),\]
  where \(U_1,\ldots,U_{c_E}\) are (affine) subcontexts, 
  \(E\), called a \emph{second-order context} (which will be formally defined later),
  is the ``remainder'' of \(T\) obtained by extracting \(U_1,\ldots,U_{c_E}\), and
  \(c_E\) is a number that depends on \(E\).
  For example,  the tree on the lefthand side of Figure~\ref{fig:partitioning}
  can be decomposed to the second-order context and affine contexts
  shown on the righthand side. By substituting each affine context for
  \(\Hhole{}{}\) in the preorder traversal order, we recover the tree on the lefthand side.
  This decomposition of a tree may be regarded as a generalization of the word decomposition
  above (by viewing a word as an affine context), where the part \(E\)
corresponds to the remainder
\(a_{c(n) s(n)+1} \cdots a_{n}\) of the word decomposition.
\item[(T2)] The tree decomposition above induces
the decomposition of
    the set \(\inhavn{\grm,N}\) in the following form:
\begin{align}
\inhavn{\grm,N} \cong \coprod_{E\in \mathcal{E}} \prod_{j=1}^{c_E} \mathcal{U}_{E,j}
\label{eq:bijection}
\end{align}
where
\(\mathcal{E}\) is a set of second-order contexts and
\(\mathcal{U}_{E,j}\) is a set of \affine{contexts}.
(At this point, the reader need not be concerned about
the exact definitions of \(\mathcal{E}\) and
\(\mathcal{U}_{E,j}\), which will be given later.)
\item[(T3)] \new{Design} the above decomposition so that
  (i) each \(\mathcal{U}_{E,j}\) contains at least one element that
  contains \(S_{\ceil{\cn \log n}}\) as a subcontext,
  and (ii) \(c_E\)  is sufficiently large.
By condition (i),
the probability that an element of \(\inhavn{\grm,N}\) does not contain \(S_{\ceil{\cn \log n}}\)
as a subcontext is bounded above by 
the probability that none of \(U_i\,(i\in\set{1,\ldots,c_E})\)  
contain \(S_{\ceil{\cn \log n}}\) as a subcontext,
which is further bounded above by
\(\displaystyle{\max_{E\in \mathcal{E}}\Big(\prod_{j=1}^{c_E}(1-\frac{1}{\card{\mathcal{U}_{E,j}}})\Big)}\).
The bound can be proved to converge to \(0\) by using condition (ii).
\end{itemize}

\begin{figure}[tbp]
\centering\includegraphics[scale=0.4,bb=0 0 842 391]{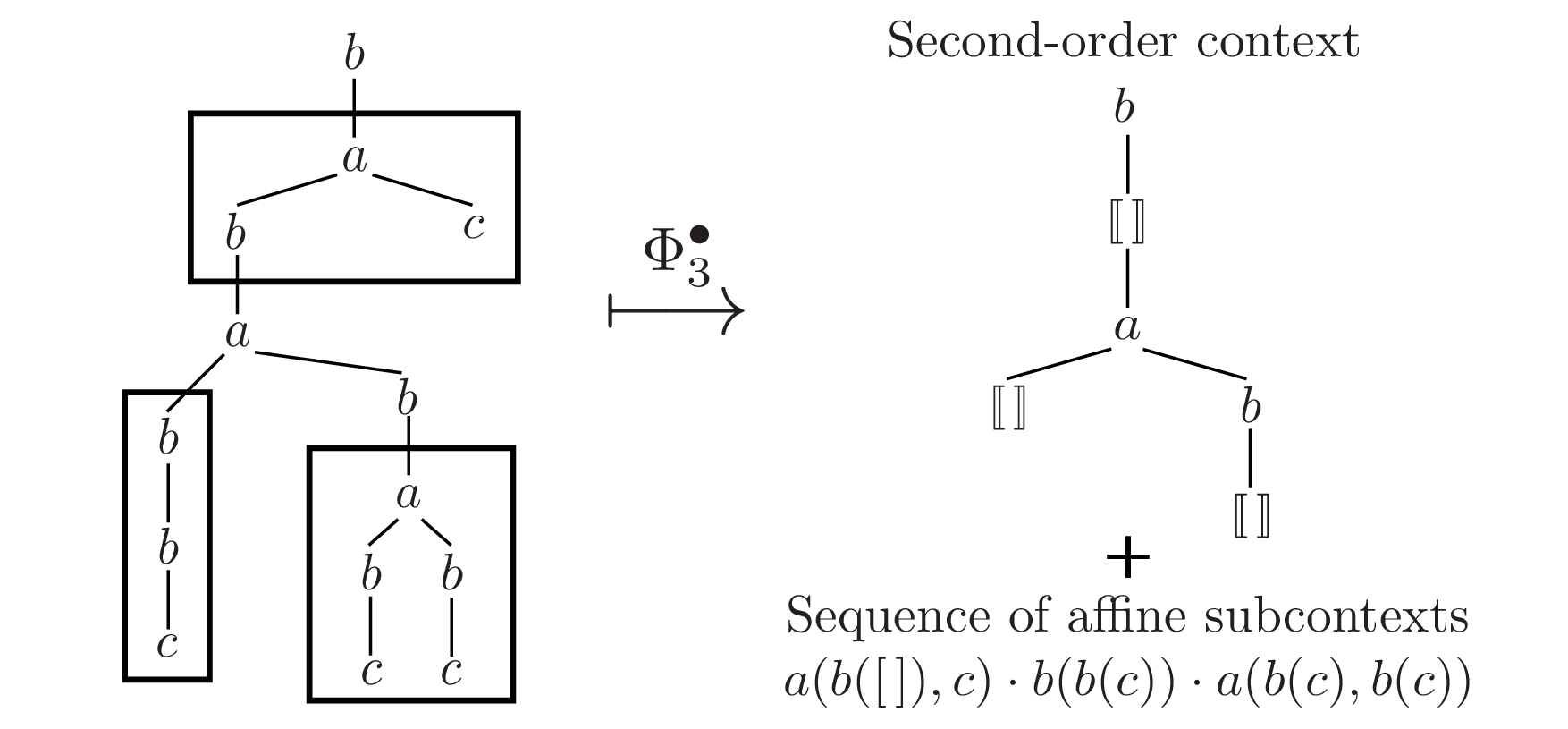}
\caption{An example of tree decomposition. The parts
 surrounded by rectangles
 on the lefthand side show the extracted affine contexts, and the remaining part of
 the tree is
 the second-order tree context.}
\label{fig:partitioning}
\end{figure}

The rest of the section is organized as follows.
Before we jump to the decomposition of \(\inhavn{\grm,N}\), we
first present 
a decomposition of \(\TRn{n}{\ralph}\), i.e., the set of \(\ralph\)-trees of size \(n\)
in Section~\ref{sec:decomposition}. The decomposition of
\(\TRn{n}{\ralph}\) may be a special case of that of \(\inhavn{\grm,N}\), where
\(\grm\) generates all the trees in \(\TRn{n}{\ralph}\).
For a technical convenience in extending the decomposition of
\(\TRn{n}{\ralph}\) to that of \(\inhavn{\grm,N}\), 
we normalize grammars to \emph{canonical form}
in Section~\ref{sec:canonical-grammar}. We then give the decomposition of
\(\inhavn{\grm,N}\) given in Equation (\ref{eq:bijection}) above,
and prove that it satisfies the required properties
 in Sections~\ref{sec:rtl-decomposition} and 
\ref{sec:adjusting}.
Finally, we prove Theorem~\ref{thm:imt} in Section~\ref{sec:proofimfmain}.

\subsection{Grammar-Independent Tree Decomposition}
\label{sec:decomposition}
In this subsection, we will define a decomposition
function $\closeddecomp$ (where $m$ is a parameter) that decomposes a tree $T$
into (i) a (sufficiently long) sequence $P=U_1\cdots U_k$ 
consisting of \affine{subcontexts} of size no less than \(m\), 
and (ii) a ``second-order'' context \(E\) (defined 
shortly in Section~\ref{sec:second-order-context}), which is the remainder of extracting \(P\) from \(T\).
Recall Figure~\ref{fig:partitioning}, which illustrates how a tree is decomposed by
$\closeddecompm{3}$. Here, the symbol $\hhole$ in the second-order context on the
right-hand side represents the original position of each subcontext.
By filling the \(i\)-th occurrence (counted in the depth-first, left-to-right pre-order) 
of \(\hhole\) with
the \(i\)-th \affine{context}, we can recover the original tree on the left hand side.

As formally stated later (in Corollary~\ref{cor:gindependent-tree-decomposition}),
the decomposition function \(\closeddecomp\) provides a witness for a bijection of the form
\[
\TRn{n}{\ralph} \cong \coprod_{E\in \mathcal{E}} \prod_{j=1}^{c_E} \mathcal{U}_{E,j},
\]
a special case of Equation \eqref{eq:bijection} mentioned in Section~\ref{sec:IMTforWords}.

\subsubsection{Second-Order Contexts}
\label{sec:second-order-context}
We first define the notion of second-order contexts and operations on them.

The set of \emph{second-order contexts} over $\ralph$,
ranged over by $E$, is defined by:
\[
E ::= \Hhole{k}{n}[E_1, \ldots, E_k] \;\mid\; a(E_1, \ldots,
E_{\ralph(a)}) \ \ (a \in \dom(\ralph)).
\]  
Intuitively, the second-order context is an expression having holes of
the form \(\Hhole{k}{n}\) (called \emph{second-order holes}), which should be filled with a \(k\)-context of size \( n \). 
By filling all the second-order holes, we obtain a \(\ralph\)-tree.
Note that
\(k\) may be \(0\). In the technical development below,
we only consider second-order holes \(\Hhole{k}{n}\) such that
\(k\) is  \(0\) or \(1\).
We write \(\shn(E)\) for the number of the second-order holes in \(E\).
Note that $\ralph$-trees can be regarded as second-order contexts \(E\) such that $\shn(E) = 0$, and vice versa.
For \(i \le \shn(E)\), we write \(\sh{E}{i}\) for the \(i\)-th second-order hole
(counted in the depth-first, left-to-right pre-order).
We define the \emph{size} \(|E|\) by:
\(|\hhole^n_{k}[E_1,\dots,E_k] | \defeq n + |E_1|+\cdots+|E_k|\)
and
\(|a(E_1,\dots,E_{\ralph(a)})| \defeq |E_1|+\cdots+|E_{\ralph(a)}| + 1\).
Note that \(|E|\) includes the size of contexts to fill the second-order holes in \(E\).
\begin{exa}
\label{ex:decomp}
The second-order context on the right hand side of Figure~\ref{fig:partitioning}
is expressed as \(E = \Tb(\Hhole{1}{3}[\Ta(\Hhole{0}{3}, \Tb(\Hhole{0}{5}))])\), where \(\shn(E)=3\),
\(|E|=14\), \(\sh{E}{1}=\Hhole{1}{3}\), \(\sh{E}{2}=\Hhole{0}{3}\), and \(\sh{E}{3}=\Hhole{0}{5}\).
\end{exa}

Next we define the substitution operation on second-order contexts.
For a context \(\pct\) and a second-order hole \(\hhole^n_k\),
we write \(\cbf{\pct}{\hhole^n_k}\) if \(\pct\) is a \(k\)-context of size \( n \).
Given \(E\) and \(\pct\) such that \(\shn(E) \ge 1\) and
\(\cbf{\pct}{\sh{E}{1}}\), we write $E\Hfill{\pct}$ for the second-order context obtained by replacing the leftmost second-order hole of $E$ (\ie $\sh{E}{1}$)
with $\pct$ (and by interpreting the syntactical bracket \([-]\) as the substitution operation).
Formally, it is defined by induction on $E$ as follows:
\[
\begin{array}{lcl}
\left(\hhole^n_k[E_1, \ldots, E_k]\right)
\Hfill{\pct} &\defeq&
\pct[E_1, \ldots, E_k]
\\
 (a(E_1, \ldots, E_{\ralph(a)})) \Hfill{\pct} &\defeq&
 a(E_1, \ldots, E_i \Hfill{\pct}, \ldots, E_{\ralph(a)})\\&&\hfill
 \text{ where } i = \min\{ j \mid \shn(E_j) \geq 1, 1 \leq j \leq \ralph(a) \}.
\end{array}
\]
In the first clause, $C[E_1,\ldots,E_k]$ is the second-order context obtained by
replacing $\hole_i$ in $C$ with $E_i$ for each $i \leq k$.
Note that 
we have \(|E\Hfill{\pct}| = |E|\) whenever \(E\Hfill{\pct}\) is well-defined,
i.e.,~if \(\cbf{\pct}{\sh{E}{1}}\) (cf. Lemma~\ref{lem:second-substitution-preserves-size} below).

We extend the substitution operation for a sequence of contexts.
We use metavariable \(P\) for sequences of contexts.
For $E$ and $P = \pct_1 \pct_2 \cdots \pct_{\shn(E)}$, we
write $\cbf{P}{E}$ if $\cbf{\pct_i}{\sh{E}{i}}$ for each $i \leq \shn(E)$.
Given \(E\) and a sequence of contexts \(P=\pct_1 \pct_2\cdots
\pct_\ell\) such that 
\(\ell \le \shn(E)\) and \(\cbf{\pct_i}{\sh{E}{i}}\) for each \(i \le \ell\), 
we define \(E\Hfill{P}\) by induction on \(P\):\tk{Is the partial application of length \( \ge 2 \) used anywhere?  I think it's not.}
\begin{align*}
&E\Hfill{\epsilon} \defeq E
&
&E\Hfill{\pct \cdot P} \defeq (E\Hfill{\pct})\Hfill{P}
\end{align*}
Note that \(\shn(E\Hfill{\pct}) = \shn(E) - 1\), so if \(\cbf{P}{E}\)
then \(E\Hfill{P}\) is a tree.

\begin{exa}
\label{ex:decomp-cont}
Recall the second-order context \(E = \Tb(\Hhole{1}{3}[\Ta(\Hhole{0}{3}, \Tb(\Hhole{0}{5}))])\) in
Figure~\ref{fig:partitioning} and Example~\ref{ex:decomp}.
Let \(P\) be the sequence of \affine{contexts} given in Figure~\ref{fig:partitioning}:
\[
\Ta(\Tb(\hole),\Tc)\cdot \Tb(\Tb(\Tc)) \cdot \Ta(\Tb(\Tc), \Tb(\Tc)).
\]
Then 
\[E\Hfill{\prj{P}{1}} = \Tb(\prj{P}{1}[\Ta(\Hhole03, \Tb(\Hhole05))]) = \Tb(\Ta(\Tb(\Ta(\Hhole{0}{3}, \Tb(\Hhole{0}{5}))), \Tc)),\]
and
\begin{align*}
E\Hfill{P} &= ((E\Hfill{\prj{P}{1}}])\Hfill{\prj{P}{2}})\Hfill{\prj{P}{3}}\\
&= \Tb(\Ta(\Tb(\Ta(\prj{P}{2}, \Tb(\prj{P}{3}))), \Tc)) = 
\Tb(\Ta(\Tb(\Ta(\Tb(\Tb(\Tc)), \Tb(\Ta(\Tb(\Tc), \Tb(\Tc))))), \Tc)),
\end{align*}
which is the tree shown on
the left hand side of Figure~\ref{fig:partitioning}.
\end{exa}

Thanks to the size annotation, the substitution operation preserves the size of a second-order context.
\begin{lem}\label{lem:second-substitution-preserves-size}
  Let \( E \) be a second-order context and \( P \) be a sequence of contexts such that \(P\COL E\).
  Then \( |E| = |E\Hfill{P}| \).
\end{lem}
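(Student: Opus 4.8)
The plan is to prove Lemma~\ref{lem:second-substitution-preserves-size} by a straightforward structural induction, reducing it to the single-context case, which in turn follows from the definition of \(|{\cdot}|\) on second-order contexts together with the fact that filling a hole \(\sh{E}{1} = \hhole^n_k\) with a context \(C\) of size exactly \(n\) neither changes the total count nor disturbs the bookkeeping.

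First I would isolate the key sublemma: if \(\shn(E)\ge1\) and \(\cbf{\pct}{\sh{E}{1}}\) (so \(\pct\) is a \(k\)-context of size \(n\), where \(\sh{E}{1}=\hhole^n_k\)), then \(|E\Hfill{\pct}| = |E|\). This is proved by induction on the structure of \(E\), following the two clauses defining \(E\Hfill{\pct}\). In the base-like case \(E = \hhole^n_k[E_1,\dots,E_k]\), we have \(E\Hfill{\pct} = \pct[E_1,\dots,E_k]\), and since \(|\pct| = n\) and substituting a second-order context \(E_i\) into the \(i\)-th hole of \(\pct\) adds \(|E_i|\) to the size (a routine induction on \(\pct\), or simply the additive definition of \(|{\cdot}|\) for contexts extended to second-order fillings), we get \(|\pct[E_1,\dots,E_k]| = n + |E_1| + \cdots + |E_k| = |\hhole^n_k[E_1,\dots,E_k]| = |E|\). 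In the case \(E = a(E_1,\dots,E_{\ralph(a)})\), the substitution descends into \(E_i\) for \(i = \min\{j \mid \shn(E_j)\ge1\}\), and \(\sh{E}{1}\) is precisely \(\sh{E_i}{1}\), so the induction hypothesis gives \(|E_i\Hfill{\pct}| = |E_i|\), and hence \(|E\Hfill{\pct}| = 1 + \sum_j |E_j|\) is unchanged. (One should note in passing that \(\shn(E\Hfill{\pct}) = \shn(E) - 1\), which is needed to make the iterated statement type-check.)

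Then I would lift this to sequences by induction on \(P\). For \(P = \epsilon\) the claim is trivial since \(E\Hfill{\epsilon} = E\). For \(P = \pct\cdot P'\) with \(\cbf{P}{E}\), we have \(\cbf{\pct}{\sh{E}{1}}\), so the sublemma gives \(|E\Hfill{\pct}| = |E|\); moreover the remaining holes of \(E\Hfill{\pct}\) are exactly \(\sh{E}{2},\dots,\sh{E}{\shn(E)}\) in order, so \(\cbf{P'}{E\Hfill{\pct}}\) holds, and the induction hypothesis applied to \(E\Hfill{\pct}\) and \(P'\) yields \(|E\Hfill{P}| = |(E\Hfill{\pct})\Hfill{P'}| = |E\Hfill{\pct}| = |E|\).

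There is essentially no hard part here — the size annotation \(n\) on the second-order holes was introduced precisely so that this lemma holds definitionally. The only mildly delicate point is making explicit the auxiliary fact that \(|\pct[E_1,\dots,E_k]| = |\pct| + |E_1| + \cdots + |E_k|\) when \(\pct\) is a \(k\)-context and the \(E_i\) are second-order contexts, and confirming that the enumeration \(\sh{E}{1},\sh{E}{2},\dots\) behaves correctly under one substitution step (i.e.\ that after filling the leftmost hole, the old \((i{+}1)\)-th hole becomes the new \(i\)-th hole); both are immediate from the pre-order traversal convention and the additive definitions of size, and can be dispatched in one line each.
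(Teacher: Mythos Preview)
Your proof is correct and essentially the same as the paper's. The paper also proceeds by structural induction on \(E\) and invokes the same auxiliary fact \(|C[E_1,\dots,E_k]| = |C| + |E_1| + \cdots + |E_k|\); the only minor organizational difference is that the paper does a single induction on \(E\) handling the full sequence \(P\) at once (splitting \(P\) along the structure of \(E\)), whereas you factor it into the single-substitution sublemma followed by an induction on \(P\).
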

\begin{proof}
The proof is given by a straightforward induction on \( E \). In the case \(E=\Hhole{k}{n}[E_1, \ldots, E_k]\),
we use the fact \(|C[E_1,\ldots,E_k]| = |C|+|E_1|+\cdots +|E_k|\) (which can be shown by induction on \(C\)).
\end{proof}

\subsubsection{Grammar-Independent Decomposition Function}
\label{sec:gindependent-decomposition}
Now we define the decomposition function \(\closeddecomp\).
Let \(\ralph\) be a ranked alphabet and \(m \ge 1\).
We define \(\mr_{\ralph} \defeq \max(\image{\ralph})\), which is also written as \(\mr\) for short.
We always assume that \(\mr \ge 1\), 
since it holds whenever \(\card{\inhav{\grm}}=\infty\),
which is an assumption of Theorem~\ref{thm:imt}. 
We shall define the decomposition function \(\closeddecompp{\ralph}\)
(we omit $\ralph$ and just write \(\closeddecomp\) for \(\closeddecompp{\ralph}\)below)
so that $\closeddecomp(T) = (E,P)$ where 
(i) \(E\) is the second-order context, and (ii)
\(P\) is a sequence of \affine{contexts},  (iii) \(E\Hfill{P} = T\),
and (iv) \(m \le |\prj{P}{i}| \le \mr(m-1)+1\) for each \(i\in \set{1,\ldots,\card{P}}\).

The function \(\closeddecomp\) is defined as follows, using an auxiliary decomposition function
\(\decomp\) given below.
\[
\closeddecomp(T) \defeq (U[E], P)\mbox{ where }(U,E,P)=\decomp(T).
\]
\noindent
The auxiliary decomposition function (just called ``decomposition function'' below) \(\decomp\)
traverses a given tree \(T\) in a bottom-up manner, 
extracts a sequence \(P\) of subcontexts,
and returns it along with a \linear{context} \(U\) and a second-order context \(E\); 
\(U\) and \(E\) together represent
the ``remainder'' of extracting \(P\) from \(T\).
During the bottom-up traversal, the \(U\)-component for each subtree \(T'\) represents a context
containing the root of \(T'\); 
whether it is extracted as (a part of) a subcontext or becomes a part of the remainder will be decided later based on the surrounding context. 
The \(E\)-component will stay as a part of the remainder during
the decomposition of the whole tree (unless the current subtree \(T'\) is too small, i.e., \(|T'|<m\)).

We define \(\decomp\) by:
\begin{itemize}
\item If \(|T|<m\), then \(\decomp(T) \defeq(\hole, T, \NIL)\).
\item If \(|T|\geq m\), $T = a(T_1, \ldots, T_{\ralph(a)})$, and
\(\decomp(T_i) = (\pU_i, E_i, P_i) \ (\mbox{for each }i \le \ralph(a))\), then:
\begin{equation}
\label{eq:defOfDecomp}
\decomp(T) \defeq 
\left\{
\begin{aligned}
& (\hole,\; a\bigl(\pU_1[E_1], \dots, \pU_{\ralph(a)}[E_{\ralph(a)}] \bigr),\; P_1 \cdots P_{\ralph(a)})\\
& \quad \mbox{if there exist $i,j$ such that \(1\leq i<j \le \ralph(a)\) 
and \(|T_i|,|T_j|\geq m\)}\\
& (\hole,\; \hhole^n_1 [E_i],\; a(T_1, \ldots, U_i, \ldots, T_{\ralph(a)}) \cdot P_i)\\
&  \quad \mbox{if $|T_j|<m$ for every \(j\neq i\), $|T_i| \geq m$, and}\\
&  \quad \mbox{\phantom{if} $ n \defeq |a(T_1, \ldots, U_i, \ldots,
 T_{\ralph(a)})| \geq m$}\\ 
& (a(T_1, \ldots, U_i, \ldots, T_{\ralph(a)}),\; E_i,\; P_i)\\
&  \quad \mbox{if $|T_j|<m$ for every \(j\neq i\), $|T_i| \geq m$, and}\\
&  \quad \mbox{\phantom{if} $|a(T_1, \ldots, U_i, \ldots,
 T_{\ralph(a)})| < m$} \\%
& (\hole,\; \hhole^{n}_0,\; T)\\
&  \quad \mbox{if $|T_i| < m$ for every $i \le \ralph(a)$, and \( n \defeq |T|\) }
\end{aligned}
		\right.
\end{equation}
\end{itemize}
As defined above, the decomposition is carried out by case analysis on the size of a given tree.
If \(T\) is not large enough (i.e., \(|T|<m\)), then \(\decomp(T)\) returns
an empty sequence of contexts, while keeping \(T\) in the \(E\)-component.
If \(|T|\geq m\), then \(\decomp(T)\) returns a non-empty sequence of contexts, by case analysis on the
sizes of \(T\)'s subtrees. If there are more than one subtree whose size is no less than \(m\) (the first case above), 
then \(\decomp(T)\) concatenates the sequences of contexts extracted from the subtrees, and returns the remainder
as the second-order context. If only one of the subtrees, say \(T_i\), is large enough (the second and third cases), then 
it basically returns the sequence \(P_i\) extracted from \(T_i\); however, if the remaining part
\(a(T_1, \ldots, U_i, \ldots,
 T_{\ralph(a)})\) is also large enough, then it is added to the sequence (the second case).
If none of the subtrees is large enough (but \(T\) is large enough), then \(T\) is returned as the \(P\)-component
 (the last case).

\begin{exa}
  \label{ex:decomposition-function}
Recall Figure~\ref{fig:partitioning}. Let \(T_0\) be the tree on the left hand side.
For some of the subtrees of \(T_0\),
\(\decompm{3}\) can be calculated as follows.
\[
\begin{array}{l}
\decompm{3}(\Tb(\Tc))=(\hole, \Tb(\Tc), \epsilon)\\
\decompm{3}(\Tb(\Tb(\Tc))) = (\hole, \Hhole{0}{3}, \Tb(\Tb(\Tc)))\hfill \mbox{ (by the last case of~\eqref{eq:defOfDecomp})}\\
\decompm{3}(\Ta(\Tb(\Tc),\Tb(\Tc))) = 
(\hole, \Hhole{0}{5}, \Ta(\Tb(\Tc), \Tb(\Tc))) \qquad \hfill\mbox{ (by the last case of~\eqref{eq:defOfDecomp})}\\
\decompm{3}(\Ta(\Tb(\Tb(\Tc)), \Tb(\cdots))) =
  (\hole, \Ta(\Hhole{0}{3}, \Tb(\Hhole{0}{5})),  \Tb(\Tb(\Tc))\cdot \Ta(\Tb(\Tc), \Tb(\Tc)))\\
\hfill \mbox{ (by the first case of~\eqref{eq:defOfDecomp})}\\
\decompm{3}(T_0) = (\Tb(\hole), \Hhole{1}{3}[\Ta(\Hhole{0}{3}, \Tb(\Hhole05))], 
\Ta(\Tb(\hole),\Tc)\cdot \Tb(\Tb(\Tc)) \cdot \Ta(\Tb(\Tc), \Tb(\Tc)))\\
\hfill \mbox{ (by the third case of~\eqref{eq:defOfDecomp})}\\
\end{array}
\]
From \(\decompm{3}(T_0)\) above, we obtain:
\[\closeddecompm{3}(T_0) = \left(\Tb(\Hhole13[\Ta(\Hhole03, \Tb(\Hhole05))]),\ \Ta(\Tb(\hole),\Tc)\cdot \Tb(\Tb(\Tc)) \cdot \Ta(\Tb(\Tc), \Tb(\Tc))\right).\]
\end{exa}

\subsubsection{Properties of the Decomposition Function}
\label{sec:gindependent-decomposition-properties}
We summarize important properties of \(\decomp\) in this subsection.

We say that \anaffine{context} $\pU$ is \emph{good for $m$}
if
\(|U| \ge m\) and 
$\pU$ is of the form \(a(\pU_1,\dots,\pU_{\ralph(a)})\)
where \(|\pU_i| < m\) for each \(i \le \ralph(a)\).
In other words, \(U\) is good if \(U\) is of an appropriate size:
it is large enough (i.e. \(|\pU|\ge m\)), and not too large (i.e. the size
of any proper subterm is less than \(m\)).
For example, \(\Ta(\Tb(\hole),\Tb(\Tc))\) is good for \(3\), but 
neither \(\Tb(\Tb(\hole))\) nor \(\Ta(\Tb(\hole),\Tb(\Tb(\Tc)))\) is.

The following are basic properties of the auxiliary decomposition function
\(\decomp\).
The property (\ref{charphi:split}) says that the original tree can be
recovered by composing the elements obtained by the decomposition,
and the property 
(\ref{charphi:componentsize})
ensures that \(\closeddecomp(T)\) extracts only good contexts from \(T\).

\begin{lem}\label{lemma:charphi}
  \tk{to do: check which property is used}
Let T be a tree. If $\decomp(T) = (U,E,P)$, then:
\begin{enumerate}
\item \label{charphi:split}
\(\cbf{P}{E}\),\, \(\hn{U}=1\), and \((U[E])\Hfill{P}=T\).
\item \label{charphi:garbagesize}
\(|\pU|<m\).
\item \label{charphi:componentsize}
For each \(i \in \set{1,\dots,\len{P}}\),
\(\prj{P}{i}\) is good for \(m\).
\item \label{charphi:size-of-e}\(|T| = |U[E]|\).
\end{enumerate}
\end{lem}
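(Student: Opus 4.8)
The plan is to prove all four items simultaneously by induction on the structure of $T$, following exactly the case split in the definition~\eqref{eq:defOfDecomp} of $\decomp$. The base case is $|T| < m$, where $\decomp(T) = (\hole, T, \NIL)$; here $\cbf{\NIL}{T}$ holds vacuously (since $\shn(T)=0$, as a tree), $\hn{\hole}=1$, $(\hole[T])\Hfill{\NIL} = T$, so~(\ref{charphi:split}) holds; $|\hole| = 0 < m$ gives~(\ref{charphi:garbagesize}); item~(\ref{charphi:componentsize}) is vacuous since $P=\NIL$; and $|T| = |\hole[T]|$ gives~(\ref{charphi:size-of-e}). For the inductive step, $|T| \ge m$ and $T = a(T_1,\dots,T_{\ralph(a)})$ with $\decomp(T_i) = (U_i, E_i, P_i)$, and we may invoke the induction hypothesis on each $T_i$.

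First I would handle the recovery property~(\ref{charphi:split}) in each of the four subcases. In the first subcase (two large subtrees), we have $\decomp(T) = (\hole, a(U_1[E_1],\dots,U_{\ralph(a)}[E_{\ralph(a)}]), P_1\cdots P_{\ralph(a)})$; using the induction hypothesis $(U_i[E_i])\Hfill{P_i} = T_i$ and the fact that substitution $\Hfill{-}$ on a second-order context fills second-order holes left-to-right in preorder, one checks that filling $a(U_1[E_1],\dots)$ with the concatenation $P_1\cdots P_{\ralph(a)}$ reproduces $a(T_1,\dots,T_{\ralph(a)}) = T$; that $\cbf{P_1\cdots P_{\ralph(a)}}{a(U_1[E_1],\dots)}$ follows from $\cbf{P_i}{E_i}$ for each $i$; and $\hn{\hole}=1$ is immediate. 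In the second and third subcases (exactly one large subtree $T_i$), the crucial point is that the other subtrees $T_j$ ($j\neq i$) are small, so $\decomp(T_j) = (\hole, T_j, \NIL)$ and they contribute no holes; the new context $a(T_1,\dots,U_i,\dots,T_{\ralph(a)})$ either becomes a fresh second-order hole $\Hhole{1}{n}$ prepended to $P_i$ (second subcase, when $n \ge m$) or becomes the returned $U$-component (third subcase, when $n < m$); in both cases, using the induction hypothesis on $T_i$ together with $\hn{U_i}=1$, one verifies the identity $(U[E])\Hfill{P} = T$ by a direct computation, and $\hn{U}=1$ holds since $U$ is either $\hole$ or $a(T_1,\dots,U_i,\dots,T_{\ralph(a)})$ with $\hn{U_i}=1$ and the $T_j$ closed. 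The last subcase (all subtrees small, $T$ returned as a single context) is immediate: $\decomp(T) = (\hole, \Hhole{n}{0}, T)$ with $n = |T|$, and filling gives back $T$.

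Next I would dispatch~(\ref{charphi:garbagesize}): in the first, second, and fourth subcases $U = \hole$ so $|U| = 0 < m$; in the third subcase $U = a(T_1,\dots,U_i,\dots,T_{\ralph(a)})$ and the guard of that very subcase asserts $|a(T_1,\dots,U_i,\dots,T_{\ralph(a)})| < m$, which is exactly $|U| < m$. For~(\ref{charphi:componentsize}), I argue that every context appearing in the output sequence $P$ is good for $m$: new contexts are introduced only in the fourth subcase (where $P = T$ itself, and $|T| \ge m$ while every proper subterm $T_i$ satisfies $|T_i| < m$ by the subcase guard, so $T$ is good), and in the second subcase the freshly prepended context is $a(T_1,\dots,U_i,\dots,T_{\ralph(a)})$, which has size $n \ge m$ by the guard, while each of its immediate subterms is some $T_j$ ($j\neq i$, hence size $< m$) or $U_i$ (size $< m$ by the induction hypothesis~(\ref{charphi:garbagesize}) applied to $T_i$) — so it is good; all other elements of $P$ come from the $P_i$ by induction and are already good. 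Finally~(\ref{charphi:size-of-e}), $|T| = |U[E]|$, follows by combining~(\ref{charphi:split}) with Lemma~\ref{lem:second-substitution-preserves-size}: $|U[E]| = |(U[E])\Hfill{P}| = |T|$, since $\cbf{P}{U[E]}$.

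The main obstacle I anticipate is purely bookkeeping: carefully tracking, in the one-large-subtree cases, how the hole of $U_i$ and the second-order holes of $E_i$ interleave under the preorder traversal when we form $a(T_1,\dots,U_i,\dots,T_{\ralph(a)})$ or $\Hhole{1}{n}[E_i]$, and confirming that $\Hfill{-}$ substitution composes correctly with the ordinary context-plugging $U_i[E_i]$. Making the identity $(U[E])\Hfill{P} = T$ precise likely requires an auxiliary observation that for a context $C$ with $\hn{C}=1$ and a second-order context $E$, one has $(C[E])\Hfill{P} = C[E\Hfill{P}]$ whenever $\cbf{P}{E}$ — a small lemma provable by induction on $C$, analogous to the remark already made after the definition of $E\Hfill{\pct}$. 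Everything else is routine case-checking against the definition.
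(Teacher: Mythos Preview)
Your proposal is correct and follows essentially the same approach as the paper: items (\ref{charphi:split})--(\ref{charphi:componentsize}) are proved by a simultaneous induction on $T$ following the case split in the definition of $\decomp$, and item (\ref{charphi:size-of-e}) is then derived from (\ref{charphi:split}) together with Lemma~\ref{lem:second-substitution-preserves-size}. The paper simply calls the induction ``straightforward'' without spelling out the cases, so your write-up is just a more detailed version of the same argument.
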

\begin{proof}
(\ref{charphi:split})--(\ref{charphi:componentsize}) follow by straightforward simultaneous induction on $|T|$.
(\ref{charphi:size-of-e}) follows from (\ref{charphi:split}) and
Lemma~\ref{lem:second-substitution-preserves-size}.
\end{proof}

The following lemma ensures that \(\shn(E)\) is sufficiently large whenever \(\closeddecomp(T)=(E,P)\).
Recall condition (ii) of (T3) in Section~\ref{sec:IMTforWords}; \(\shn(E)\) corresponds to \(c_E\).
 \begin{lem}
  \label{lemma:partition}
For any tree $T$ and $m$ such that $1 \leq m \leq |T|$, 
if 
$\decomp(T) = (U,E,P)$, then
 \[
 \len{P} \geq \frac{|T|}{2\mr m}.
 \]
 \end{lem}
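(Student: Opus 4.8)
The plan is to induct on $|T|$, tracking the quantity $\len{P}$ against $|T|/(2\mr m)$ through the four cases of the definition of $\decomp$ in~\eqref{eq:defOfDecomp}. The base case ($|T| < m$) is vacuous because the hypothesis $m \le |T|$ fails, so we only need $|T| \ge m$. Write $T = a(T_1,\dots,T_{\ralph(a)})$ and $\decomp(T_i) = (\pU_i,E_i,P_i)$ for each $i$. The induction hypothesis applies to each $T_i$ with $|T_i| \ge m$, giving $\len{P_i} \ge |T_i|/(2\mr m)$.

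First I would handle the first case (at least two subtrees $T_i,T_j$ of size $\ge m$): here $P = P_1\cdots P_{\ralph(a)}$, so $\len{P} = \sum_i \len{P_i} \ge \sum_{i:\,|T_i|\ge m} |T_i|/(2\mr m)$. Since $\sum_i |T_i| = |T| - 1$ and each $|T_i| < |T|$, and the subtrees of size $< m$ contribute at most $(\mr-1)(m-1)$ to this sum in total — actually it is cleaner to bound $\sum_{i:\,|T_i|\ge m}|T_i| \ge |T| - 1 - (\mr-1)(m-1)$, and then check $|T| - 1 - (\mr-1)(m-1) \ge |T|/2$ whenever $|T|$ is large relative to $\mr m$; for smaller $|T|$ one argues directly that $\len P \ge 1 \ge |T|/(2\mr m)$ — but in fact the slickest route is simply: when two subtrees have size $\ge m$, then $|T| = 1 + \sum_i|T_i| \le 1 + \mr\cdot\max_i|T_i|$, hmm, that does not immediately help. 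The genuinely clean argument is: $\len P = \sum_i\len{P_i}$ and $\sum_i|T_i| = |T|-1$; bound the "small" subtrees ($|T_i|<m$) crudely and the "large" ones by IH, but to make the arithmetic uniform, note that whenever $|T_i| < m$ we also have $|T_i| < m \le 2\mr m\cdot 1$, so it suffices to get one extracted context per roughly $2\mr m$ nodes; I would set up the invariant as $\len P \ge (|T| - r)/(2\mr m)$ for a suitable remainder term $r$ bounded by the "leftover" carried in $U$ (which by Lemma~\ref{lemma:charphi}(\ref{charphi:garbagesize}) has size $< m$), plus the single second-order hole of size possibly up to $\mr(m-1)+1$. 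This motivates strengthening the statement slightly before inducting.

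Therefore the real plan is: prove the stronger claim $2\mr m\cdot\len P \ge |T| - |U[E]\text{'s ``atomic'' part}|$ — more precisely, I expect to prove $2\mr m \cdot \len{P} + (\text{something of size }O(\mr m)) \ge |T|$ and then observe the error term is absorbed. Concretely I would show by induction that $\len P \ge (|T| - f(U,E))/(2\mr m)$ where $f$ accounts for the at-most-$(m-1)$ nodes of $U$ plus the top-level second-order hole's budget; in cases 2 and 3 the hole $\Hhole{1}{n}$ has $n \le \mr(m-1)+1$ by Lemma~\ref{lemma:charphi}(\ref{charphi:componentsize})/goodness, and $P$ gains either all of $P_i$ (case 3) or $P_i$ plus one good context of size $\le \mr(m-1)+1$ (case 2), while in case 4 a single good context of size $|T|$ is extracted, and goodness forces $|T| \le \mr(m-1)+1 \le 2\mr m$, so $\len P = 1 \ge |T|/(2\mr m)$. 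Cases 2, 3, 4 are then routine; the constant $2\mr m$ is exactly what makes case 2's "$P_i$ plus one context of size $\le \mr(m-1)+1 < \mr m$" close up, since the newly covered region has size $\le |T_i|$-complement $< 2\mr m$ minus what's already accounted.

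The main obstacle I anticipate is getting the bookkeeping tight enough in the first case without an auxiliary strengthened invariant: naively summing the IH over large subtrees loses the $\le(\mr-1)(m-1)$ mass sitting in the small subtrees, and that loss, though $O(\mr m)$, is not obviously dominated by $|T|/(2\mr m)$ for all $|T|\ge m$. The fix is the standard one — either strengthen the induction hypothesis to carry an explicit additive slack term bounded by $\mr m$, or split into "$|T|$ small" (handled by $\len P\ge1$) versus "$|T|$ large" (handled by the clean asymptotic inequality) — and I would go with the strengthened invariant since it also streamlines cases 2–3. Once that invariant is in place, each of the four cases is a short inequality chase using $\sum_i|T_i| = |T|-1$, Lemma~\ref{lemma:charphi}(\ref{charphi:garbagesize}) ($|U|<m$), and the goodness bound $|{\cdot}| \le \mr(m-1)+1$ on extracted contexts.
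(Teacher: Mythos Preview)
Your approach is essentially the paper's: it strengthens the induction hypothesis to the exact invariant $|U| + 2\mr m\,\len{P} - \mr m \ge |T|$ (tracking only $|U|$, not anything about $E$), and then derives $2\mr m\,\len{P} \ge |T| + \mr m - |U| > |T|$ from Lemma~\ref{lemma:charphi}(\ref{charphi:garbagesize}). Once that precise invariant is fixed, the four cases close by the inequality chases you describe; in particular the $-\mr m$ term is what makes the branching case~(1) sum cleanly over the $s \ge 2$ large subtrees.
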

\begin{proof}
Recall that \(\mr \defeq \max(\image{\ralph})\) and we assume \(\mr \ge 1\).
We show below that
\[
\begin{aligned}
|U| + 2\mr m\len{P}- \mr m \ge |T|
\end{aligned}
\]
by induction on \(T\).
Then it follows that
\[
 2\mr m\len{P} \ge |T|+ \mr m - |U|>|T|+ \mr m-m \geq |T|
\]
(where the second inequality uses Lemma~\ref{lemma:charphi}(\ref{charphi:garbagesize})),
which implies \(\len{P} > \fr{|T|}{2\mr m}\) as required.

Since \(|T| \ge m\), $\decomp(T)$ is computed by~Equation~\eqref{eq:defOfDecomp},
on which we perform a case analysis.
Let $T = a(T_1, \dots, T_{\ralph(a)})$ and \(\decomp(T_i) = (\pU_i, E_i, P_i) \ (i \le \ralph(a))\).
\begin{itemize}
 \item The first case: 
In this case, we have
\[
U = \hole \qquad P = P_{i_1} P_{i_2} \cdots P_{i_s}
\]
where $\{i_1, \dots, i_s\} = \{i \le\ralph(a) \mid |T_i| \geq m \}$ and $s \geq 2$,
since if \(|T_i|<m\) then \(P_i = \NIL\).
Note that we have \(\mr \ge 2\) in this case.
Then,
\begin{align*}&
|U|+2\mr m\len{P}-\mr m
\\ =\ &
2\mr m \left(\textstyle\sum_{j \le s} \len{P_{i_j}}\right) -\mr m
\\ \ge\ &
\left(\textstyle\sum_{j \le s}
(|T_{i_j}| + \mr m - |U_{i_j}|)\right) - \mr m 
\qquad(\because \text{by induction hypothesis for \(T_{i_j}\)})
\\ \ge\ &
\left(\textstyle\sum_{j \le s} (|T_{i_j}| + \mr m - (m-1))\right)   -\mr m
\qquad(\because |U_{i_j}|\le m-1 \mbox{ by Lemma~\ref{lemma:charphi}(\ref{charphi:garbagesize})})
\\ =\ &
\left(\textstyle\sum_{j \le s} |T_{i_j}|\right) + s \mr m - s m + s -\mr m
\\ \ge\ &
\left(\textstyle\sum_{j \le s} |T_{i_j}|\right) + 2 \mr m - s m + s -\mr m -r +1
\qquad(\because \text{\(s \ge 2\) and \(\mr \ge 
  1\)})
\\ =\ &
\left(\textstyle\sum_{j \le s} |T_{i_j}|\right) + (r-s)(m-1)+1
\\ \ge\ &
\left(\textstyle\sum_{j \le s} |T_{i_j}|\right) + (\ralph(a)-s)(m-1)+1
\qquad(\because r \ge \ralph(a))
\\ \ge\ & 
\left(\textstyle\sum_{j \le s} |T_{i_j}|\right) + (\textstyle\sum_{i\in \set{1,\ldots,\ralph(a)}\setminus\set{i_1,\ldots,i_s}}|T_i|)+1\\ &
\qquad(\because m-1 \ge |T_i| \mbox{ for }i\in \set{1,\ldots,\ralph(a)}\setminus\set{i_1,\ldots,i_s})
\\ =\ & 
|T|
\end{align*}
   as required.
 \item The second case: 
In this case, we have
\[
U = \hole
\qquad
P = a(T_1, \ldots, U_i, \ldots, T_{\ralph(a)}) P_i
\]
and \(|T_{j}| \ge m\) if and only if \(j=i\).
Also we have \(r \ge 1\).
Then,
\begin{align*}&
|U|+2\mr m\len{P}-\mr m
\\ =\ &
2\mr m(1+\len{P_i}) - \mr m
\\ =\ &
\mr m + 2\mr m \len{P_i}
\\ \ge\ &
\mr m + (|T_i| +\mr m -|U_i|)
\qquad(\because \text{by induction hypothesis for \(T_{i}\)})
\\ \ge\ &
|T_i| +\mr m -(m-1)
\qquad(\because \mr m \ge 0, |U_i| \le m-1\mbox{ by Lemma~\ref{lemma:charphi}(\ref{charphi:garbagesize})})
\\ \ge\ &
|T_i|+\mr m -\mr -m + 2
\qquad(\because \mr \ge 1)
\\ =\ &
|T_i|+(\mr-1)(m-1)+1
\\ \ge\ &
|T_i|+(\ralph(a)-1)(m-1)+1 \qquad(\because r \ge \ralph(a))
\\ \ge\ &
|T_i|+(\textstyle\sum_{j\in \set{1,\ldots,\ralph(a)}\setminus\set{i}}|T_j|)+1 
\\ =\ &
|T|
\end{align*}
as required.

 \item The third case of~\eqref{eq:defOfDecomp}:
In this case, we have
\[
U = a(T_1, \ldots, U_i, \ldots, T_{\ralph(a)})
\qquad
P = P_i
\]
and \(|T_{j}| \ge m\) if and only if \(j=i\).
Then,
\begin{align*}&
|U|+2\mr m\len{P}-\mr m
\\ =\ &
1+(\textstyle\sum_{j\neq i}|T_j|)+|U_i|+2\mr m \len{P_i} -\mr m
\\ \ge\ &
1+(\textstyle\sum_{j\neq i}|T_j|)+|T_i|
\qquad(\because \text{by induction hypothesis for \(T_{i}\)})
\\ =\ &
|T|
\end{align*}
as required.

 \item The fourth case of~\eqref{eq:defOfDecomp}:
In this case, we have
\[
U = \hole
\qquad
P = T.
\]
Then
\begin{align*}&
|U|+2\mr m\len{P}-\mr m
\\ =\ &
\mr m
\\ \ge\ &
\mr(m-1)+1
\qquad(\because r \ge 1) 
\\ \ge\ &
\ralph(a)(m-1)+1
\qquad(\because r \ge \ralph(a))
\\ \ge\ &
(\textstyle\sum_{j\in \set{1,\ldots,\ralph(a)}}|T_j|)+1
\\ =\ &
|T|
\end{align*}
as required.
\qedhere
\end{itemize}
\end{proof}

\subsubsection{Decomposition of \(\TRn{n}{\ralph}\)}
\label{sec:decomp-trees}
This subsection shows that 
the decomposition function \(\closeddecomp\) above provides a witness for a bijection of the form
\[
\TRn{n}{\ralph} \cong \coprod_{E\in \mathcal{E}} \prod_{j=1}^{c_{E}} \mathcal{U}_{E,\,j}.
\]
\asd{Here I removed a garbage: ``where \(\ell_i\) is sufficiently large for each \(i\).''}

We prepare some definitions to precisely state the bijection.
We define the set \(\uframe\) of second-order contexts and the set \(\ucomp\) of affine contexts
by:
\begin{align*}
  \uframe &\defeq \{ E \mid (E, P) = \closeddecomp(T) \text{ for some } T \in \TRn{n}{\ralph} \text{ and } P \}\\
  \ucomp &\defeq \set{ U \mid U : \hhole_k^n, U \text{ is good for } m }.
\end{align*}
Intuitively, \(\uframe\) is the set of second-order contexts obtained by decomposing a tree of
size \(n\), and \(\ucomp\) is the set of good contexts that match the second-order context 
\(\hhole_k^n\).

The bijection is then stated as the following lemma.
\begin{lem}
\label{cor:gindependent-tree-decomposition}
\begin{align}
  \TRn{n}{\ralph}\;\cong
  \coprod_{E \in \uframe} \prod_{i=1}^{\shn(E)} \ucomp[\sh{E}{i}].
\label{eq:bijection-for-trees}
  \end{align}
\end{lem}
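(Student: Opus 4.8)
The plan is to exhibit the decomposition map $\closeddecomp$, restricted to $\TRn{n}{\ralph}$, as the witness for \eqref{eq:bijection-for-trees}, with the reconstruction operation $(E,P)\mapsto E\Hfill{P}$ as its two-sided inverse. First I would check that $\closeddecomp$ really maps into the right-hand side. Given $T\in\TRn{n}{\ralph}$, let $(U,E_0,P)=\decomp(T)$ and $E\defeq U[E_0]$, so that $\closeddecomp(T)=(E,P)$ with $E\in\uframe$ by definition of $\uframe$. By Lemma~\ref{lemma:charphi}(\ref{charphi:split}) we have $P\COL E_0$, hence $P\COL E$ (the linear context $U$ carries no second-order holes, so the second-order holes of $E$ are exactly those of $E_0$ in the same pre-order); in particular $\len{P}=\shn(E)$ and $\prj{P}{i}$ fills $\sh{E}{i}$ for every $i$. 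By Lemma~\ref{lemma:charphi}(\ref{charphi:componentsize}), moreover, each $\prj{P}{i}$ is good for $m$, so $\prj{P}{i}\in\ucomp[\sh{E}{i}]$. Hence $\big(E,\;\prj{P}{1}\cdots\prj{P}{\len{P}}\big)$ is a legitimate element of the right-hand side of \eqref{eq:bijection-for-trees}, identifying a sequence of the right length and component types with the corresponding tuple of the product.

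Next I would set up the reconstruction map $\Psi(E,P)\defeq E\Hfill{P}$ and verify that it lands in $\TRn{n}{\ralph}$. If $E\in\uframe$, say $E=U[E_0]$ with $\decomp(T')=(U,E_0,P_0)$ for some $T'\in\TRn{n}{\ralph}$, then $|E|=|U[E_0]|=|T'|=n$ by Lemma~\ref{lemma:charphi}(\ref{charphi:size-of-e}); and if $\prj{P}{i}\in\ucomp[\sh{E}{i}]$ for all $i$ then $P\COL E$ (each component has exactly the size and arity prescribed by its second-order hole), so $E\Hfill{P}$ is an honest $\ralph$-tree, and $|E\Hfill{P}|=|E|=n$ by Lemma~\ref{lem:second-substitution-preserves-size}. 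One of the two composites is then immediate: if $\closeddecomp(T)=(E,P)$, \ie $(U,E_0,P)=\decomp(T)$ and $E=U[E_0]$, then $\Psi(E,P)=E\Hfill{P}=(U[E_0])\Hfill{P}=T$ by Lemma~\ref{lemma:charphi}(\ref{charphi:split}). In particular $\closeddecomp$ is injective on $\TRn{n}{\ralph}$.

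The remaining --- and principal --- task is the other composite, $\closeddecomp(\Psi(E,P))=(E,P)$ for every $E\in\uframe$ and every $P$ with $\prj{P}{i}\in\ucomp[\sh{E}{i}]$: re-decomposing a reconstructed tree must recover the original data. The obstacle is that the four-way case split in \eqref{eq:defOfDecomp} is governed by comparing subtree sizes against $m$, so I must show that refilling the second-order holes of $E$ with \emph{arbitrary} good contexts never changes which branch is taken at any node. I would prove this by strengthening to a statement about $\decomp$ and inducting on the tree being decomposed: for $(U,E_0,P_0)=\decomp(T)$ and any $P$ with $P\COL E_0$ whose components are good for $m$, one has $\decomp\big((U[E_0])\Hfill{P}\big)=(U,E_0,P)$, from which the identity for $\closeddecomp$ follows by unfolding its definition. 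The facts driving the induction are: (i) every hole-free subtree of $E_0$ has size $<m$ (an easy induction in the style of Lemma~\ref{lemma:charphi}), and such subtrees are untouched by the substitution; (ii) every context produced by $\decomp$ is good for $m$, hence of size $\ge m$, and the annotation $\Hhole{k}{n}$ records exactly this size, so the tests $|T_i|\ge m$ resolve as they originally did; and (iii) the $\ge m$ versus $<m$ comparison on the merged context $a(T_1,\ldots,U_i,\ldots)$ that separates the second and third cases of \eqref{eq:defOfDecomp} is itself encoded in $E_0$ --- a unary second-order hole $\Hhole{1}{n}$ sits at a node precisely when that comparison yielded $\ge m$. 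Assembling these, the recursion on $(U[E_0])\Hfill{P}$ retraces the original case choices and rebuilds $(U,E_0,P)$. I expect point (iii) --- keeping the second/third-case boundary consistent under substitution, together with the bookkeeping for subtrees that straddle $E_0$-material and plugged-in material --- to be the fiddliest part.
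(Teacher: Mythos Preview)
Your proposal is correct and follows essentially the same route as the paper: the paper factors the bijection through an intermediate Coproduct Lemma and a Product Lemma, but the technical heart is precisely the replacement statement you propose --- if $(U,E_0,P_0)=\decomp(T)$ then $\decomp\big((U[E_0])\Hfill{P'}\big)=(U,E_0,P')$ for every $P'$ whose components are good for $m$ --- proved by induction on $|T|$ with case analysis on \eqref{eq:defOfDecomp}. The paper isolates two small helper lemmas (that $|E_0|\ge m$ whenever $|T|\ge m$, and that $\decomp$ commutes with prepending a linear context of size $<m$) which handle exactly the second-versus-third-case bookkeeping you flag as ``fiddliest.''
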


The rest of this subsection is devoted to a proof of the lemma above;
readers may wish to skip the rest of this subsection upon the first reading.

\begin{lem}
  \label{lem:size-of-E-in-uframe}
  If \(E\in \uframe\), then \(|E|=n\).
\end{lem}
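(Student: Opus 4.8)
The plan is to simply unfold the two layers of definitions and then invoke the size-preservation property already established for the auxiliary decomposition function. Suppose \(E \in \uframe\). By definition of \(\uframe\), there exist a tree \(T \in \TRn{n}{\ralph}\) and a sequence of contexts \(P\) with \((E,P) = \closeddecomp(T)\). By the definition of \(\closeddecomp\), we have \(\closeddecomp(T) = (U[E_0],\, P_0)\) where \((U, E_0, P_0) = \decomp(T)\); comparing the two, \(E = U[E_0]\) (and \(P = P_0\), which is irrelevant here).

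Next I would apply Lemma~\ref{lemma:charphi}(\ref{charphi:size-of-e}) to \(\decomp(T) = (U, E_0, P_0)\), which yields \(|T| = |U[E_0]|\). Combining this with \(E = U[E_0]\) and the hypothesis \(|T| = n\) gives \(|E| = |U[E_0]| = |T| = n\), as required. That completes the argument.

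There is essentially no real obstacle here: the only thing to check is that the size \(|U[E_0]|\) appearing in the statement of the lemma is the same notion of second-order-context size used in the definition of \(\uctx\)/\(\uframe\), but this is precisely what Lemma~\ref{lemma:charphi}(\ref{charphi:size-of-e}) is about (it in turn relies on \(|C[E_1,\dots,E_k]| = |C| + |E_1| + \cdots + |E_k|\) and on Lemma~\ref{lem:second-substitution-preserves-size}), so no extra work is needed. The lemma is really just a bookkeeping consequence of the fact that the decomposition never loses or creates size.
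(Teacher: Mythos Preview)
Your proof is correct and follows essentially the same approach as the paper: unfold the definition of \(\uframe\) and \(\closeddecomp\), then invoke Lemma~\ref{lemma:charphi}(\ref{charphi:size-of-e}). If anything, your version is slightly more careful in explicitly unfolding \(\closeddecomp(T) = (U[E_0],P_0)\) before applying the lemma (which is stated for \(\decomp\), not \(\closeddecomp\)), whereas the paper leaves that step implicit.
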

\begin{proof}
  Suppose \(E\in \uframe\). Then
  \(  (E, P) = \closeddecomp(T) \) { for some } \(T \in \TRn{n}{\ralph}\) and \(P\).
  By Lemma~
\ref{lemma:charphi}\eqref{charphi:size-of-e},
  \(|E|
=|T|=n\).
  \end{proof}

For a second-order context \(E\), 
and \(m \ge 1\), 
we define the set \(\useq\) of sequences of affine contexts:
\begin{align*}
  \useq &\defeq \set{ P \mid (E, P) = \closeddecomp(T) \text{ for some } T \in \TR{\ralph} }.
\end{align*}
The set \(\useq\) consists of sequences \(P\) of affine contexts that
match \(E\) and are obtained by the decomposition function \(\closeddecomp\).
In the rest of this subsection,
we prove the bijection in
Lemma~\ref{cor:gindependent-tree-decomposition}
in two steps.
We first show \(\TRn{n}{\ralph}\cong \coprod_{E\in \uframe} \useq\) (Lemma~\ref{lemma:union-sig},
called ``coproduct lemma''), and then show 
\(\useq = \prod_{j=1}^{\shn(E)} \ucomph{\sh{E}{j}}\) (Lemma~\ref{lemma:union-sig}, called ``product lemma'').

\begin{lem}[Coproduct Lemma (for Grammar-Independent Decomposition)]
  \label{lemma:union-sig}
  For any
  $n \geq 1$ and \(m \ge 1\), 
  there exists a bijection
  \begin{flalign*}
\TRn{n}{\ralph}\cong \coprod_{E\in \uframe} \useq
  \end{flalign*}
  that maps each  element \((E,P)\) of the set
  \(\coprod_{E\in \uframe} \useq\) to \(E\Hfill{P}\in \TRn{n}{\ralph}\).
\end{lem}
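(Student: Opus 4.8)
The plan is to show the stated map is a bijection by exhibiting its inverse explicitly, namely the decomposition function $\closeddecomp$ itself (restricted to $\TRn{n}{\ralph}$). Write $\Psi(E,P) \defeq E\Hfill{P}$ for $(E,P)$ in the coproduct. All the substantive facts are already available: Lemma~\ref{lemma:charphi}(\ref{charphi:split}) (the original tree is recovered by filling), Lemma~\ref{lem:second-substitution-preserves-size} (filling preserves size), and Lemma~\ref{lem:size-of-E-in-uframe} ($|E| = n$ for $E \in \uframe$). So the argument is essentially a matter of unwinding the definitions of $\uframe$ and $\useq$ and checking the two round-trips; there is no new induction beyond what Lemma~\ref{lemma:charphi} already delivers.

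First I would check that $\closeddecomp$ maps $\TRn{n}{\ralph}$ into $\coprod_{E\in\uframe}\useq$: for $T\in\TRn{n}{\ralph}$ with $\closeddecomp(T)=(E,P)$, we have $E\in\uframe$ by the definition of $\uframe$ and $P\in\useq$ by the definition of $\useq$. Conversely, I would check that $\Psi$ is well defined into $\TRn{n}{\ralph}$: given $(E,P)$ with $E\in\uframe$ and $P\in\useq$, membership $P\in\useq$ provides a tree $T'\in\TR{\ralph}$ with $\closeddecomp(T')=(E,P)$; unfolding $\closeddecomp$, this means $(U,E_0,P)=\decomp(T')$ with $E = U[E_0]$, whence Lemma~\ref{lemma:charphi}(\ref{charphi:split}) gives $\cbf{P}{E_0}$, $\hn{U}=1$, and $(U[E_0])\Hfill{P}=T'$, i.e.\ $E\Hfill{P}=T'$. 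In particular $E\Hfill{P}$ is a genuine tree (no second-order holes remain), and $|E\Hfill{P}|=|E|=n$ by Lemmas~\ref{lem:second-substitution-preserves-size} and \ref{lem:size-of-E-in-uframe}, so $\Psi(E,P)=T'\in\TRn{n}{\ralph}$.

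For the two round-trips: given $T\in\TRn{n}{\ralph}$ with $\closeddecomp(T)=(E,P)$, the same instance of Lemma~\ref{lemma:charphi}(\ref{charphi:split}) yields $\Psi(E,P)=E\Hfill{P}=T$, so $\Psi\circ\closeddecomp=\mathrm{id}$ on $\TRn{n}{\ralph}$; and given $(E,P)$ in the coproduct, the computation above shows $\Psi(E,P)=T'$ with $\closeddecomp(T')=(E,P)$, so $\closeddecomp(\Psi(E,P))=(E,P)$. Hence $\Psi$ is the desired bijection, acting by $(E,P)\mapsto E\Hfill{P}$. I do not anticipate a real obstacle here; the only point requiring care is that $\uframe$ is defined via size-$n$ trees whereas $\useq$ quantifies over all of $\TR{\ralph}$, and one must invoke $|E\Hfill{P}|=|E|=n$ to see that every pair in the coproduct nonetheless arises from a size-$n$ tree, so that this mismatch is harmless.
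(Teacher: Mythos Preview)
Your proposal is correct and follows essentially the same approach as the paper: both define the forward map $(E,P)\mapsto E\Hfill{P}$ and the backward map $T\mapsto\closeddecomp(T)$, verify well-definedness via Lemma~\ref{lemma:charphi}(\ref{charphi:split}) together with Lemmas~\ref{lem:second-substitution-preserves-size} and~\ref{lem:size-of-E-in-uframe}, and check the two round-trips in the same way. Your closing remark about the apparent size mismatch between the definitions of $\uframe$ and $\useq$ is exactly the point the paper handles by invoking $|E\Hfill{P}|=|E|=n$.
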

\begin{proof}
  We define a function
  \[
  f:
  \coprod_{E \in \uframe} \useq
  \longrightarrow
  \TRn{n}{\ralph}
  \]
  by \(f(E,P) \defeq E\Hfill{P}\),
  and a function
  \[
  g:
  \TRn{n}{\ralph}
  \longrightarrow
  \coprod_{E\in\uframe} \useq
  \]
  by \(g(T) \defeq \closeddecomp(T) \).
  
  Let us check that these are functions into the codomains:
  \begin{itemize}
  \item \( f(E,P) \in \TRn{n}{\ralph} \):
    Since \(P \in \useq\), there exists \( T \in \TR{\ralph} \) such that 
\( (E,P) = \closeddecomp(T) \).
 By Lemma~\ref{lemma:charphi}(\ref{charphi:split}),
    we have \(f(E,P) = E\Hfill{P}=T \in \TR{\ralph}\). 
    By the condition \(E\in\uframe\) and by
    Lemmas~\ref{lem:second-substitution-preserves-size} and \ref{lem:size-of-E-in-uframe},
\(|E\Hfill{P}|=|E|=n\). Thus, \(f(E,P)\in\TRn{n}{\ralph}\) as required.
  \item \( g(T) \in \coprod_{E \in \uframe} \useq \):
    Obvious from the definitions of \( \uframe \) and \( \useq \).
  \end{itemize}
  
  We have \(f(g(T)) = T\) by Lemmas~\ref{lemma:charphi}(\ref{charphi:split}). 
  Let \( (E,P) \in \coprod_{E \in \uframe} \useq \).
  By definition, there exists \( T \in \TR{\ralph} \) such that \( (E,P) = \closeddecomp(T)\).
 By using Lemmas~\ref{lem:second-substitution-preserves-size} and \ref{lem:size-of-E-in-uframe} again, 
we have \(|T|=|E\Hfill{P}|=|E| = n\). Thus, \((E,P)=g(T)\).
  Then
  \[
  g(f(E,P))
  =g(f(g(T))
  =g(T)
  =(E,P).
  \tag*{\qedhere}
  \]
\end{proof}

It remains to show the product lemma:
\(\useq = \prod_{j=1}^{\shn(E)} \ucomph{\sh{E}{j}}\).
To this end, we prove a few more properties about the auxiliary decomposition function \(\decomp\).

\begin{lem}\label{lem:size-of-E}
If \(|T|\geq m\) and \(\decomp(T)=(U,E,P)\), then \(|E|\geq m\).
\end{lem}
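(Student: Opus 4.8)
The plan is to proceed by induction on $|T|$, following exactly the case analysis on the structure of $\decomp(T)$ given by Equation~\eqref{eq:defOfDecomp}. Since $|T| \ge m$, the value $\decomp(T) = (U,E,P)$ is computed by one of the four clauses of~\eqref{eq:defOfDecomp}, according to $T = a(T_1,\dots,T_{\ralph(a)})$ and $\decomp(T_i) = (U_i, E_i, P_i)$. In the first clause, $E = a(U_1[E_1],\dots,U_{\ralph(a)}[E_{\ralph(a)}])$, so $|E| \ge 1 \ge $ nothing useful directly; instead I would observe that by Lemma~\ref{lemma:charphi}\eqref{charphi:size-of-e} applied to each $T_i$ (or more directly, since $|E| = |a(U_1[E_1],\dots)| = 1 + \sum_i |U_i[E_i]| = 1 + \sum_i |T_i| = |T| \ge m$ using $|U_i[E_i]| = |T_i|$), we get $|E| = |T| \ge m$. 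In the second clause, $E = \hhole^n_1[E_i]$ with $n = |a(T_1,\dots,U_i,\dots,T_{\ralph(a)})| \ge m$, and $|E| = n + |E_i| \ge n \ge m$. In the fourth clause, $E = \hhole^n_1[\,]$ (a $0$-ary second-order hole, or written $\Hhole{0}{n}$) with $n = |T| \ge m$, so $|E| = n \ge m$.

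The one genuinely inductive case is the third clause: $U = a(T_1,\dots,U_i,\dots,T_{\ralph(a)})$, $E = E_i$, $P = P_i$, under the hypothesis $|T_i| \ge m$ (and $|a(T_1,\dots,U_i,\dots,T_{\ralph(a)})| < m$, though this side condition is not needed here). Since $|T_i| \ge m$ and $|T_i| < |T|$, the induction hypothesis applies to $T_i$ and gives $|E_i| \ge m$; as $E = E_i$, we are done.

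The only subtlety — and the main thing to get right — is the first clause, where $E$ need not literally contain any $E_j$ as a ``tail'' but is built as $a(U_1[E_1],\dots,U_{\ralph(a)}[E_{\ralph(a)}])$. Here I would not use the induction hypothesis on the $E_j$ (indeed some $T_j$ with $|T_j| < m$ contribute $E_j = T_j$ which may be small); instead I would use that $|U_j[E_j]| = |T_j|$, which follows from Lemma~\ref{lemma:charphi}\eqref{charphi:split} together with Lemma~\ref{lem:second-substitution-preserves-size} (or from a direct sub-induction, as is already invoked for~\eqref{charphi:size-of-e}), so that $|E| = 1 + \sum_{j=1}^{\ralph(a)} |T_j| = |T| \ge m$. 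Thus in every case $|E| \ge m$, completing the induction. No step is technically hard; the care needed is just in recognizing that the first case is handled by a size bookkeeping argument rather than by the induction hypothesis.
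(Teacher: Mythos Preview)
Your proposal is correct and is exactly the ``straightforward induction on $T$'' that the paper leaves to the reader; you have correctly identified that only the third clause needs the induction hypothesis, while the other three clauses follow from direct size computations (using Lemma~\ref{lemma:charphi}\eqref{charphi:size-of-e} for the first clause). One tiny slip: in the fourth clause you write $\hhole^n_1[\,]$ but immediately clarify you mean $\Hhole{0}{n}$; the latter is what is intended.
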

\begin{proof}
Straightforward induction on \(T\).
\end{proof}
The following lemma states that, given $\decomp(U[T]) = (U,E,P)$, $P$ is
determined only by $T$ ($U$ does not matter); this is because the decomposition is
performed in a bottom-up manner.

\begin{lem}
  \label{lemma:ehole}
  For $m \geq 1$, \(E\), \(P\), \(T\), and a \linear{context} \(U\) with {\(|T|\ge m\)} and \( |U|<m \),
  we have \( \decomp(T) = (\hole, E, P) \) if and only if \( \decomp(U[T]) = (U, E, P) \).
\end{lem}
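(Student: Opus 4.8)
The plan is to prove both directions by a single induction on the structure of $T$, exploiting the fact that $\decomp$ computes $P$ and $E$ by a bottom-up traversal that only inspects the \emph{sizes} of subtrees, and that prepending a small linear context $U$ (with $|U|<m$) to $T$ changes only the size of the root subtree, not the sizes of any proper subtrees of $T$. Concretely, I would first observe that since $|T|\ge m$ and $|U|<m$, the tree $U[T]$ has $T$ as a subtree sitting below a strictly smaller linear context, so when we run $\decomp(U[T])$ the recursive calls bottom-up reach exactly the call $\decomp(T)=(\hole,E,P)$ (using the hypothesis that the first component is already $\hole$), and from that point the remaining context $U$ is "too small to matter'' at each step.

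For the inductive argument itself, I would case-split on the shape of $U$. If $U=\hole$, both sides are literally the same statement, so there is nothing to prove. If $U = a(T_1,\ldots,U',\ldots,T_{\ralph(a)})$ where $U'$ is a linear context with $|U'|<m$ and the $T_j$ (for $j\neq i$) are trees — wait, more carefully: $U$ is a \emph{linear} context, so $U = a(V_1,\ldots,V_{\ralph(a)})$ where exactly one $V_i$ is a linear context $U'$ with $|U'|\le|U|<m$ and the others are genuine trees $V_j$ with $|V_j|\le|U|<m$. Then $U[T] = a(V_1,\ldots,U'[T],\ldots,V_{\ralph(a)})$. Since $|U'[T]|\ge|T|\ge m$ while $|V_j|<m$ for all $j\neq i$ and $|U'|<m$, the computation of $\decomp(U[T])$ falls into either the second or third case of~\eqref{eq:defOfDecomp} (only the $i$-th subtree is large). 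By the induction hypothesis applied to $T$ with the smaller linear context $U'$, we have $\decomp(U'[T]) = (U',E,P)$; in particular $|E|\ge m$ by Lemma~\ref{lem:size-of-E} (the $E$-component of the large subtree has size $\ge m$), so $\hhole^n_1[E]$ is well-formed. Now I need to check which of the second/third cases applies: the size of $a(V_1,\ldots,U',\ldots,V_{\ralph(a)})$ equals $|U|<m$, so we are in the \emph{third} case, which returns $(a(V_1,\ldots,U',\ldots,V_{\ralph(a)}),\ E,\ P) = (U,E,P)$, exactly as desired. The converse direction is obtained by reading the same case analysis backwards: if $\decomp(U[T])=(U,E,P)$ with $|U|<m$, then necessarily the third case of~\eqref{eq:defOfDecomp} was used at the root (the first case would force $U=\hole$ contradicting the shape unless $U=\hole$, the second case would give a first component $\hole\neq U$, and the fourth requires all subtrees small, impossible since $|U'[T]|\ge m$), hence $\decomp(U'[T])=(U',E,P)$, and the induction hypothesis gives $\decomp(T)=(\hole,E,P)$.

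The main obstacle I anticipate is the bookkeeping in the case split: one must be careful that $U$ being a \emph{linear} context forces exactly one child of the root to be the recursively-relevant subtree and that \emph{all} the other children, as well as the residual context $U'$, have size $<m$, so that precisely the third case of~\eqref{eq:defOfDecomp} fires — no other case is consistent with the hypothesis $|U|<m$ on the returned first component. Once that is pinned down, both implications follow uniformly, and Lemma~\ref{lem:size-of-E} supplies the side condition $|E|\ge m$ needed for well-formedness of the intermediate expressions. There is no real analytic content here; it is a structural induction matching the definition of $\decomp$ clause by clause.
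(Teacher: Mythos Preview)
Your argument is essentially the same as the paper's: induction with base case $U=\hole$ and inductive step $U=a(V_1,\ldots,V_{\ralph(a)})$, checking that the third case of~\eqref{eq:defOfDecomp} is the one that fires because $|U|<m$ forces all the $V_j$ and the inner context $U'$ to have size $<m$. Two small remarks: despite what you say in the opening line, your induction is on $U$ (or $|U|$), not on $T$---your inductive hypothesis is invoked for the strictly smaller context $U'$ with the same $T$, exactly as in the paper; and the appeal to Lemma~\ref{lem:size-of-E} is unnecessary here, since the third case simply passes $E$ through without forming $\hhole^n_1[E]$.
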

\begin{proof}
  The proof proceeds by induction on \(|U|\).
  If \( U = \hole \) the claim trivially holds.
  If \(U \neq \hole\), \(U\) is of the form \(a(T_1,\dots,U_i,\dots,T_{\ralph(a)})\).
  Since \(|U| < m\), we have \( |U_i| < m \) and \(|T_j|<m\) for every \(j\neq i\).

  Assume that \( \decomp(T) = (\hole, E, P) \).
  By the induction hypothesis, we have \( \decomp(U_i[T]) = (U_i, E, P) \).
  Since \( |U| = |a(T_1, \dots, T_{i-1}, U_i, T_{i+1}, \dots, T_{\ralph(a)})| < m \), we should apply the third case of Equation~\eqref{eq:defOfDecomp} to compute \( \decomp(U[T]) \).
  Hence, we have \[ \decomp(U[T]) = (a(T_1, \dots, T_{i-1}, U_i, T_{i+1}, \dots, T_{\ralph(a)}),E,P) = (U, E, P). \]

  Conversely, assume that
\[
\decomp(U[T]) = (U, E, P) = (a(T_1, \dots, T_{i-1}, U_i, T_{i+1}, \dots, T_{\ralph(a)}), E, P) .
\]
  Let \( (U'_i, E'_i, P'_i) = \decomp(U_i[T]) \) and \( (U'_j, E'_j, P'_j) = \decomp(T_j) \) for each \( j \neq i \).
  Then the final step in the computation of \( \decomp(U[T]) \) must be the third case; otherwise \( U = \hole \), a contradiction.
  By the position of the unique hole in \( U \), it must be the case that \( U = a(T_1, \dots, T_{i-1}, U'_i, T_{i+1}, \dots, T_{\ralph(a)}) \), \( E = E'_i \) and \( P = P'_i \).
  So \( (U_i, E, P) = \decomp(U_i[T]) \).
  By the induction hypothesis, \( \decomp(T) = (\hole, E, P) \).
\end{proof}

The following is the key lemma for the product lemma, which says that
if \((U,E,P)=\decomp(T)\), the decomposition is actually independent of the \(P\)-part.

\begin{lem}\label{lem:decomposition-replace}
If \( (U, E, P) = \decomp(T) \), then 
 \( \decomp(U[E]\Hfill{P'}) = (U, E, P') \) for any \(P'\in \prod_{i=1}^{\shn(E)} \ucomp[\sh{E}{i}]\).
\end{lem}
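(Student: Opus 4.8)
The plan is to prove the statement by induction on $|T|$, mirroring the case analysis in the definition of $\decomp$ (Equation~\eqref{eq:defOfDecomp}). The crucial structural fact is that the decomposition $\decomp$ is performed bottom-up, so the shape of $U$ and $E$ — and hence the positions of the second-order holes $\sh{E}{i}$ — are determined by the sizes of subtrees alone, not by the specific contents of the extracted contexts in $P$. Replacing $P$ by another matching sequence $P'$ of good contexts of the correct sizes should therefore not change which case of~\eqref{eq:defOfDecomp} applies at each node, nor the resulting $U$ and $E$.

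First I would observe that $U[E]\Hfill{P'}$ is a well-defined tree: by Lemma~\ref{lemma:charphi}(\ref{charphi:split}) we have $\cbf{P}{E}$, and $P'$ was chosen with $P' \in \prod_{i=1}^{\shn(E)} \ucomp[\sh{E}{i}]$, so $\cbf{P'}{E}$ as well, which means $\cbf{P'}{U[E]}$ since $U$ contributes no second-order holes. For the induction, note that when $|T| < m$ we have $\decomp(T) = (\hole, T, \NIL)$, so $E = T$, $P = \NIL = P'$, and $U[E]\Hfill{P'} = T$, making the claim trivial. When $|T| \ge m$, I would write $T = a(T_1,\dots,T_{\ralph(a)})$, set $\decomp(T_i) = (U_i, E_i, P_i)$, and split $P'$ into subsequences $P'_i$ according to the partition of the second-order holes of $E$ induced by the recursive structure — each hole of $E$ comes from a unique subtree $E_i$ (or is the fresh $\hhole_1^n$ in the second case, or the whole $\hhole_\cdot^n$ in the last case). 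In the first three cases, the key point is that $|T_j| < m$ (resp. $\ge m$) is a property of the $T_j$'s that is unchanged when we rebuild from $U_i[E_i]\Hfill{P'_i}$ in place of $T_i$, because by Lemma~\ref{lemma:charphi}(\ref{charphi:size-of-e}) and Lemma~\ref{lem:second-substitution-preserves-size}, $|U_i[E_i]\Hfill{P'_i}| = |U_i[E_i]| = |T_i|$ — sizes are preserved. So the same case of~\eqref{eq:defOfDecomp} fires, and by the induction hypothesis applied to each $T_i$ (using $\decomp(U_i[E_i]\Hfill{P'_i}) = (U_i, E_i, P'_i)$) we recover exactly $(U, E, P')$ after reassembling. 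The last case ($|T_i| < m$ for all $i$) is handled directly: there $U = \hole$, $E = \hhole_{\cdot}^n$ with $n = |T|$, $P = T$, and $U[E]\Hfill{P'} = P' $ where $\prj{(P')}{1}$ is the single good context of size $n$; since $|P'| = |T| = n \ge m$ and $P'$ consists of a good context, $\decomp$ of it again lands in the last case and returns $(\hole, \hhole_\cdot^n, P')$.

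There is one subtlety I would flag as the main obstacle: in the second case, the extra context $a(T_1,\dots,U_i,\dots,T_{\ralph(a)})$ that gets prepended to $P_i$ is itself part of $T$, not of $T_i$, so I must argue that when reconstructing $U[E]\Hfill{P'}$ the corresponding entry of $P'$ (call it $\prj{(P')}{1}$, matching the fresh hole $\hhole_1^n$ at the root of $E$) is a good $1$-context of size $n$ sitting at exactly the root position, so that the root-level case analysis for $\decomp(U[E]\Hfill{P'})$ still detects "exactly one large subtree $T_i$" plus "the remaining part has size $n \ge m$". This requires knowing that substituting $\prj{(P')}{1}$ does not create any new subtree of size $\ge m$ off the main spine — which holds precisely because $\prj{(P')}{1}$ is good for $m$, i.e. all its proper subterms have size $< m$ (and the $T_j$ with $j \ne i$ already have size $< m$). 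Carefully chasing where each hole of $E$ lives, and checking that the "good for $m$" condition on each $\prj{(P')}{k}$ is exactly what is needed to keep every intermediate size comparison in~\eqref{eq:defOfDecomp} on the same side of the threshold, is the technical heart; the rest is bookkeeping with Lemmas~\ref{lemma:ehole}, \ref{lem:size-of-E}, and \ref{lem:second-substitution-preserves-size}.
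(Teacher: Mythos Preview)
Your proposal is correct and mirrors the paper's proof: induction on $|T|$ with case analysis on~\eqref{eq:defOfDecomp}, using size preservation (Lemma~\ref{lem:second-substitution-preserves-size}) and the goodness of each $\prj{(P')}{k}$ to keep every branch on the same side of the $m$-threshold, with Lemma~\ref{lemma:ehole} handling the second case. The one point to sharpen is that in the second case the replacement $\prj{(P')}{1}$ may have a different root symbol $a'$ and hole position $j'$ than $a(T_1,\dots,U_i,\dots,T_{\ralph(a)})$, so the induction hypothesis is applied not to $T_i$ but to the tree $U'_{j'}[E_j\Hfill{P_j}]$ (strictly smaller than $T$), obtained via the Lemma~\ref{lemma:ehole} swap you already list among your tools.
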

\begin{proof}
The proof proceeds by induction on \(|T| (= |U[E]|)\). If \(|T|< m\), then \(E=T\), hence \(\shn(E)=0\). Thus, 
\(P=P'=\epsilon\), which implies 
 \( \decomp(U[E]\Hfill{P'}) = \decomp(T)=(U,E,P)=(U,E,P')\).

For the case \(|T|\ge m\), 
we proceed by the case analysis on which rule of
Equation~\eqref{eq:defOfDecomp} was
used to compute \( (U, E, P) = \decomp(T) \).
  Assume that \( T = a(T_1, \dots, T_{\ralph(a)}) \) and \( (U_i, E_i, P_{i}) = \decomp(T_i) \) for each \( i = 1, \dots, \ralph(a) \).
  By Lemma~\ref{lemma:charphi}, we have \( \cbf{P_{i}}{E_i} \) for each \( i \).
  \begin{itemize}
  \item
    The first case of Equation~\eqref{eq:defOfDecomp}:
    We have \( |T_j|, |T_{j'}|\ge m\)
    for some \( 1 \le j < j' \le \ralph(a) \), and:
    \[
    U = \hole
    \qquad
    E = a(U_1[E_1], \dots, U_{\ralph(a)}[E_{\ralph(a)}])
    \qquad
    P = P_{1} \cdots P_{\ralph(a)}.
    \]
Since \(P, P':E\), we can split
 \( P'=P'_{1} \cdots P'_{\ralph(a)} \) so that \( \cbf{P'_{i}}{E_i} \) for each \( i \).
    By the induction hypothesis,
    \[
    \decomp(U_i[E_i]\Hfill{P'_{i}}) = (U_i, E_i, P'_{i})
    \qquad\mbox{(for each \( i = 1, \dots, \ralph(a) \)).}
    \]
    Since \(|U_j[E_j]\Hfill{P'_{j}}|= |U_{j}[E_{j}]| =|T_j|\ge m \) and
    \(|U_{j'}[E_{j'}]\Hfill{P'_{j'}}|=  |U_{j'}[E_{j'}]|=|T_{j'}| \ge m \) for some \( 1 \le j < j' \le \ralph(a) \), we have
    \begin{align*}
      \decomp(U[E\Hfill{P'}])
      &= \decomp(a(U_1[E_1\Hfill{P'_1}], \dots, U_{\ralph(a)}[E_{\ralph(a)}\Hfill{P'_{\ralph(a)}}])) \\
      &= (\hole, a(U_1[E_1], \dots, U_{\ralph(a)}[E_{\ralph(a)}]), P'_{1} \cdots P'_{\ralph(a)})\\
      &= (U, E, P')
    \end{align*}
as required.
  \item
    The second case of Equation~\eqref{eq:defOfDecomp}:
    We have \( |T_j| \ge m \) for a unique \( j \in \{ 1, \dots, \ralph(a) \} \) and:
    \[
  n \defeq |U_0| \ge m \qquad
    U = \hole
    \qquad
    E = \hhole^n_1[E_{j}]
    \qquad
    P = U_0 \cdot P_j
    \]
    for \( U_0 \defeq a(T_1, \dots, T_{j-1}, U_{j}, T_{j+1}, \dots, T_{\ralph(a)}) \).
 Because \(|T_j|\geq m\),  we have \(|E_j|\ge m\) by Lemma~\ref{lem:size-of-E}.
    Now we have \(|E_j\Hfill{P_j}|=|E_j|\ge m\), \( |U_j|<m \), and \( \decomp(U_j[E_j\Hfill{P_j}]) =
       \decomp(T_j) = (U_j, E_j, P_j) \); hence
by Lemma~\ref{lemma:ehole}, we have \( \decomp(E_j\Hfill{P_j}) = (\hole, E_j, P_j) \).
    Let \( P' = U_0' \cdot P'' \).
    By the assumption, \( \cbf{U_0'}{\hhole^n_1} \) and thus \( U_0' \) is a \( 1 \)-context.
    Also, \( U_0' \) is good for \( m \); thus \( |U_0'| \ge m \)
    and \(U'_0\) is of the form \( a'(T_1', \dots, T'_{j'-1}, U'_{j'}, T'_{j+1}, \dots, T'_{\ralph(a')}) \)
    where \( |U'_{j'}| < m \) and \( |T'_i| < m \) for every \( i \neq j' \).
    Now we have
    \begin{enumerate}
    \item \( \decomp(U'_{j'}[E_j]\Hfill{P_j}) = (U'_{j'}, E_j, P_j) \) by Lemma~\ref{lemma:ehole} and \( \decomp(E_j\Hfill{P_j}) = (\hole, E_j, P_j) \);
    \item \(P''\in \prod_{i=1}^{\shn(E_j)} \ucomp[\sh{E_j}{i}]\) since 
\(P'=U_0'\cdot P''\in \prod_{i=1}^{\shn(E)} \ucomp[\sh{E}{i}] \) where \(E=\hhole^n_1[E_{j}]\);
    \item \( |U'_{j'}[E_j]| \;<\; |U_0'[E_j]| \;=\; n + |E_j| \;=\; |E| \;=\; |U[E]| \).
    \end{enumerate}
  Therefore,  we can apply the induction hypothesis to \( U'_{j'}[E_j] \), resulting in
    \[
    \decomp(U'_{j'}[E_j]\Hfill{P''}) = (U'_{j'}, E_j, P'').
    \]
    We have \( |U'_{j'}[E_j]|\ge |E_j| \ge m \) and \( |T'_i| < m \) for every \( i \neq j' \).
    Furthermore
    \[
    |a'(T'_1, \dots, T'_{j'-1}, U'_{j'}, T'_{j'+1}, \dots, T'_{\ralph(a')})| \;=\; |U_0'| \;=\; n \;\ge\; m.
    \]
    Hence we have
    \begin{align*}
      \decomp(U[E]\Hfill{P'})
      &=       
\decomp(U_0'[E_j]\Hfill{P''}) \\
      &= \decomp(a'(T'_1, \dots, T'_{j'-1}, U'_{j'}, T'_{j'+1}, \dots, T'_{\ralph(a')})[E_j]\Hfill{P''}) \\
      &= \decomp(a'(T'_1, \dots, T'_{j'-1}, U'_{j'}[E_j]\Hfill{P''}, T'_{j'+1}, \dots, T'_{\ralph(a')})) \\
      &= (\hole, \hhole^n_1[E_j], U'_0 \cdot P'')\\
      &= (U,E,P')\\
    \end{align*}
  \item
    The third case of Equation~\eqref{eq:defOfDecomp}:
    We have \( |T_j| = |U_j[E_{j}]| \ge m \) for a unique \( j \in \{ 1, \dots, \ralph(a) \} \) (and thus \( |T_i| < m \) for every \( i \neq j \)) and: 
    \[
    U = a(T_1, \dots, T_{j-1}, U_{j}, T_{j+1}, \dots, T_{\ralph(a)})
    \qquad
    |U|<m
    \qquad
    E = E_j
    \qquad
    P = P_j.
    \]
    By the induction hypothesis,
    \[
    \decomp(U_j[E_j]\Hfill{P'}) = (U_j, E_j, P').
    \]
    Since \( |T_i| < m \) for every \( i \neq j \) and \( |a(T_1, \dots, T_{j-1}, U_{j}, T_{j+1}, \dots, T_{\ralph(a)})| < m \), we have
    \begin{align*}
      \decomp(U[E]\Hfill{P'})
      &= \decomp(a(T_1, \dots, T_{j-1}, U_j[E_j]\Hfill{P'}, T_{j+1}, \dots, T_{\ralph(a)})) \\
      &= (a(T_1, \dots, T_{j-1}, U_{j}, T_{j+1}, \dots, T_{\ralph(a)}), E_j, P')\\
      &= (U, E, P')\\
    \end{align*}
  \item
    The fourth case of Equation~\eqref{eq:defOfDecomp}:
    Let \( n \) be \( |T| \); we have \( n \ge m \) and:
    \[
    U = \hole
    \qquad
    E = \hhole^n_0
    \qquad
    P = T.
    \]
    Since \( P' : E \), \(P'\) must be a singleton sequence consisting of a tree, 
say \( T'=a'(T_1', \dots, T_{\ralph(a')}') \).
    By the assumption, \( T' \) is good for \( m \).
    Hence \( |T'| \ge m \) and \( |T_i'| < m \) for every \( i = 1, \dots, \ralph(a') \).
    So
    \[
    \decomp(U[E]\Hfill{P'})
    \;=\; \decomp(T')
    \;=\; (\hole, \hhole^n_0, T')
    \;=\; (U,E,P').
    \tag*{\qedhere}
    \]
  \end{itemize}
\end{proof}

\begin{cor}\label{cor:closed-decomposition-replace}
If \( (E, P) = \closeddecomp(T) \), then 
 \( \closeddecomp(E\Hfill{P'}) = (E, P') \) for any \(P'\in \prod_{i=1}^{\shn(E)} \ucomp[\sh{E}{i}]\).

\end{cor}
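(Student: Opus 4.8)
The plan is to derive the corollary directly from Lemma~\ref{lem:decomposition-replace} together with the definition of \(\closeddecomp\), the only extra ingredient being an elementary remark about first-order contexts. Write \((U, E_0, P) = \decomp(T)\), so that by definition \(\closeddecomp(T) = (U[E_0], P)\); thus the second-order context called \(E\) in the statement of the corollary is \(U[E_0]\), and the sequence \(P\) is exactly the one returned by \(\decomp(T)\).

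First I would observe that, since \(U\) is a \linear{context} over \(\ralph\) (an \emph{ordinary}, first-order context), it contains no second-order holes. Hence the second-order holes of \(U[E_0]\) are precisely those of \(E_0\), occurring in the same left-to-right pre-order; in particular \(\shn(U[E_0]) = \shn(E_0)\) and \(\sh{(U[E_0])}{i} = \sh{E_0}{i}\) for every \(i \le \shn(E_0)\), so that
\[
\prod_{i=1}^{\shn(E)} \ucomp[\sh{E}{i}] \;=\; \prod_{i=1}^{\shn(E_0)} \ucomp[\sh{E_0}{i}]
\]
and, by a straightforward induction on \(U\), \((U[E_0])\Hfill{P'} = U[E_0\Hfill{P'}]\) whenever \(P' \COL E_0\) (so this expression is a genuine \(\ralph\)-tree). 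Consequently, the hypothesis \(P' \in \prod_{i=1}^{\shn(E)} \ucomp[\sh{E}{i}]\) of the corollary is exactly the hypothesis \(P' \in \prod_{i=1}^{\shn(E_0)} \ucomp[\sh{E_0}{i}]\) needed to apply Lemma~\ref{lem:decomposition-replace} to \((U, E_0, P) = \decomp(T)\).

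That lemma then yields \(\decomp\bigl((U[E_0])\Hfill{P'}\bigr) = (U, E_0, P')\), and unfolding the definition of \(\closeddecomp\) on the tree \((U[E_0])\Hfill{P'}\) gives \(\closeddecomp\bigl((U[E_0])\Hfill{P'}\bigr) = (U[E_0], P') = (E, P')\). Since \(E\Hfill{P'} = (U[E_0])\Hfill{P'}\), this is the desired conclusion. I do not expect any real obstacle here: all the substantive content is already in Lemma~\ref{lem:decomposition-replace}, and what remains is just the bookkeeping that a first-order \linear{context} \(U\) carries no second-order holes, so that the ``\(E\)-part'' produced by \(\decomp\) and the one reported by \(\closeddecomp\) share the same second-order holes and the two products over holes appearing in the lemma and in the corollary coincide.
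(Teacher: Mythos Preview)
Your proposal is correct and follows essentially the same route as the paper: unfold \(\closeddecomp\) to get \(\decomp(T)=(U,E_0,P)\) with \(E=U[E_0]\), note that the first-order context \(U\) contributes no second-order holes so the two products over holes coincide, apply Lemma~\ref{lem:decomposition-replace}, and refold. The paper's argument is the same, only terser.
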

\begin{proof}
If \( (E, P) = \closeddecomp(T) \), then \((U,E',P)=\decomp(T)\) and \(U[E']=E\) for some \(U, E'\).
Since \(\prod_{i=1}^{\shn(E)} \ucomp[\sh{E}{i}] = \prod_{i=1}^{\shn(E')} \ucomp[\sh{E'}{i}]\), Lemma~\ref{lem:decomposition-replace}
implies \(\decomp(E\Hfill{P'}) = \decomp(U[E']\Hfill{P'}) = (U,E',P')\). Thus, we have
\(\closeddecomp(E\Hfill{P'}) = (U[E'],P')=(E,P')\) as required.
\end{proof}

We are now ready to prove the product lemma.
\begin{lem}[Product Lemma (for Grammar-Independent Decomposition)]\label{lem:product}
  For \(E\) and \(m \ge 1\), 
  if \(\useq\) is non-empty, then
  \[
  \useq = \prod_{i=1}^{\shn(E)} \ucomp[\sh{E}{i}].
  \]
\end{lem}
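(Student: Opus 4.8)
The plan is to prove the two inclusions $\useq \subseteq \prod_{i=1}^{\shn(E)} \ucomp[\sh{E}{i}]$ and $\prod_{i=1}^{\shn(E)} \ucomp[\sh{E}{i}] \subseteq \useq$ separately, relying almost entirely on the machinery already developed for $\decomp$ and $\closeddecomp$. The non-emptiness hypothesis is needed only for the second inclusion.

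For the forward inclusion, I would take $P \in \useq$, so that $(E,P) = \closeddecomp(T)$ for some tree $T$; unfolding the definition of $\closeddecomp$, this means $(U,E',P) = \decomp(T)$ with $E = U[E']$. Since $U$ is an ordinary (linear) context carrying no second-order holes (by Lemma~\ref{lemma:charphi}(\ref{charphi:split}), $\hn{U}=1$), the second-order holes of $E = U[E']$ are exactly those of $E'$, in the same order, so $\shn(E) = \shn(E')$ and $\sh{E}{i}$ agrees with $\sh{E'}{i}$ for each $i$. By Lemma~\ref{lemma:charphi}(\ref{charphi:split}) we have $\cbf{P}{E'}$, hence $\cbf{P}{E}$, which gives $\len{P} = \shn(E)$ and $\cbf{\prj{P}{i}}{\sh{E}{i}}$; and by Lemma~\ref{lemma:charphi}(\ref{charphi:componentsize}) each $\prj{P}{i}$ is good for $m$. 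Therefore $\prj{P}{i} \in \ucomp[\sh{E}{i}]$ for every $i$, i.e. $P \in \prod_{i=1}^{\shn(E)} \ucomp[\sh{E}{i}]$.

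For the reverse inclusion, I would use the hypothesis that $\useq$ is non-empty to fix some $P$ with $(E,P) = \closeddecomp(T)$ for a tree $T$. Given an arbitrary $P' \in \prod_{i=1}^{\shn(E)} \ucomp[\sh{E}{i}]$, we have $\len{P'} = \shn(E)$ and $\cbf{\prj{P'}{i}}{\sh{E}{i}}$, hence $\cbf{P'}{E}$, so $E\Hfill{P'}$ is a genuine $\ralph$-tree. Corollary~\ref{cor:closed-decomposition-replace} then yields $\closeddecomp(E\Hfill{P'}) = (E,P')$, which witnesses $P' \in \useq$ via the tree $T' = E\Hfill{P'}$.

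The substantive content is all in Corollary~\ref{cor:closed-decomposition-replace} (which rests on Lemma~\ref{lem:decomposition-replace}), and this has already been established; so the Product Lemma itself is just a few lines of unfolding definitions plus one appeal to that corollary. I do not anticipate a genuine obstacle. The only point that requires a little care is the bookkeeping observation that the second-order holes of $E = U[E']$ coincide in order with those of $E'$, so that the conclusions "$\cbf{P}{E'}$" and "good for $m$" coming out of Lemma~\ref{lemma:charphi} translate directly into membership in $\prod_{i=1}^{\shn(E)} \ucomp[\sh{E}{i}]$.
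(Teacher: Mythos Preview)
Your proposal is correct and follows essentially the same route as the paper's proof: the forward inclusion via Lemma~\ref{lemma:charphi} (parts~\ref{charphi:split} and~\ref{charphi:componentsize}), and the reverse inclusion by fixing a witness from the non-emptiness hypothesis and applying Corollary~\ref{cor:closed-decomposition-replace}. The paper is simply terser in the forward direction, but the bookkeeping you spell out (that the second-order holes of $E = U[E']$ coincide with those of $E'$ since $U$ is a first-order linear context) is exactly the implicit step needed to pass from Lemma~\ref{lemma:charphi}, which is stated for $\decomp$, to a conclusion about $\closeddecomp$.
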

\begin{proof}
  The direction \(\subseteq\) follows from
the fact that \( (E,P) = \closeddecomp(T) \) implies \( \cbf{P}{E} \) and \( \prj{P}{i} \) is good for \( m \)
(Lemma~\ref{lemma:charphi}).
  
  We show the other direction.
  Since \( \useq \neq \emptyset \), there exist \(P\) and \( T \) such that \( \closeddecomp(T) = (E, P) \).
  Let \( P' \in \prod_{i=1}^{\shn(E)} \ucomp[\sh{E}{i}] \). 
  By Corollary~\ref{cor:closed-decomposition-replace}, we have \( \closeddecomp(E\Hfill{P'}) = (E, P') \).
  This means that \( P' \in \useq \).
\end{proof}

Lemma~\ref{cor:gindependent-tree-decomposition} follows as
an immediate corollary of Lemmas~\ref{lemma:union-sig} and \ref{lem:product}.

\subsection{Grammars in Canonical Form}
\label{sec:canonical-grammar}

As a preparation for generalizing the decomposition of \(\TRn{n}{\ralph}\) (Lemma~\ref{cor:gindependent-tree-decomposition})
to that of \(\inhavn{\grm,N}\), 
we first transform a given regular tree grammar into \emph{canonical form}, which will be defined 
shortly (in Definition~\ref{def:canonical-gram}).
We prove that the transformation preserves unambiguity and (a weaker version of) strong connectivity.

\begin{defi}[Canonical Grammar]
\label{def:canonical-gram}
  A rewriting rule of a regular tree grammar \( \grm \) is in \emph{canonical form} if it is of the form
  \[
  N \rew_\grm a(N_1, \dots, N_{\ralph(a)}).
  \]
  A grammar \( \grm = (\ralph, \nont, \rules) \) is \emph{canonical} if every rewriting rule is in canonical form.
\end{defi}

We transform a given regular tree grammar \( \grm = (\ralph, \nont, \rules) \) to 
an equivalent one in canonical form.
The idea of the transformation is fairly simple: we replace a rewriting rule
\[
N \rew a(T_1, \dots, T_n) \qquad\mbox{(\( T_1, \dots, T_n \) are \( (\ralph \cup \nont) \)-trees)}
\]
such that \( T_i \notin \nont \) with rules
\[
\{\;\; N \rew a(T_1, \dots, T_{i-1}, N', T_{i+1}, \dots, T_n),
\qquad
N' \rew T_i \;\;\}
\]
where \( N' \) is a fresh nonterminal that does not appear in \( \nont \).
After iteratively applying the above transformation, we next replace a rewriting rule of the form \( N \rew N' \) with rules
\[
\{ N_0 \rew a(\dots, N', \dots) \mid (N_0 \rew a(\dots, N, \dots)) \in \rules \}.
\]
Again by iteratively applying this transformation, we finally obtain a grammar in canonical form.

The transformation, however, does not preserve strong connectivity.
For example, consider the grammar \( \grm = (\{ \Ta \mapsto 2, \Tb \mapsto 1, \Tc \mapsto 0 \}, \{ N \}, \rules) \) where
\[
\rules = \{\;\;
  N \rew \Tb(\Tc), \qquad
  N \rew \Ta(N, N)
\;\;\}.
\]
Then the above transformation introduces a nonterminal \( N' \) as well as rules
\[
N \rew \Tb(N')
\qquad\mbox{and}\qquad
N' \rew \Tc.
\]
Then \( N \) is not reachable from \( N' \).

Observe that the problem above was
caused by a newly introduced nonterminal that generates a single finite tree.
To overcome the problem, we introduce a weaker version of strong connectivity
called \emph{essential strong-connectivity}. It
requires strong connectivity only for nonterminals generating infinite languages;
hence, it is preserved by the above transformation.
\begin{defi}[Essential Strong-connectivity]
  Let \( \grm = (\ralph, \nont, \rules) \) be a regular tree grammar.
  We say that \( \grm \) is \emph{essentially strongly-connected} if for any nonterminals \( N_1, N_2 \in \nont \) with \( \card{\inhav{\grm,N_1}} = \card{\inhav{\grm,N_2}} = \infty \), \( N_2 \) is reachable from \( N_1 \).
\end{defi}
Note that by the definition, every strongly-connected grammar is
also essentially-strongly connected.
In the definition above, as well as in the arguments below, nonterminals \( N \) with \( \card{\inhav{\grm,N}} = \infty \) play an important role.
We write \( \nontinf \) for the subset of \( \nont \) consisting of such nonterminals.

\begin{rem}
  A regular tree grammar that is essentially strongly-connected can be easily transformed into a strongly-connected grammar, hence the terminology.
  Let \( \grm \) be an essentially strongly-connected grammar and \( N \in \nontinf \).
  We say that a nonterminal is \emph{inessential} if it generates a finite language.
  Let \( N_0 \) be an inessential nonterminal of a grammar \( \grm \) such that \( \inhav{\grm, N_0} = \{ T_1, \dots, T_n \} \).
  Then by replacing each rule
  \[
  N_1 \rew C[N_0,\ldots,N_0]
  \]
  (where \( C \) is a \( k \)-context possibly having nonterminals other than \(N_0\)) with rules
  \[
  \{ N_1 \rew C[T_{i_1},\ldots,T_{i_k}] \mid i_1,\ldots,i_k\in\set{1, \dots, n} \},
  \]
  one can remove the inessential nonterminal \( N_0 \) from the grammar.
  A grammar \( \grm \) is essentially strongly-connected if and only if the grammar \( \grm' \) obtained by removing all inessential nonterminals is strongly-connected.
  This transformation preserves the language in the following sense: 
writing \( \grm' = (\ralph, \nontinf, \rules') \) for the resulting grammar, 
we have \( \inhav{\grm, N} = \inhav{\grm', N} \) for each \(N \in \nontinf\).
  Note that the process of erasing inessential nonterminals breaks canonicity; in fact, 
  {the class of languages generated by strongly-connected canonical regular tree grammars
    is a \emph{proper} subset of that of essentially strongly-connected canonical regular tree grammars.}
\end{rem}

Recall that the second main theorem (Theorem~\ref{thm:imt}) takes a family \( (S_n)_{n \in \nat} \) 
from \( \ctx{\grm} = \bigcup_{N, N' \in \nont} \inhav{\grm, N \Ar N'} \).
In order to restate the theorem for essentially strongly connected grammars,
we need to replace
\(\ctx{\grm}\) with the ``essential'' version, namely,
\[
\ctxinf{\grm} \defeq \bigcup_{N, N' \in \nontinf} \inhav{\grm, N \Ar N'}.
\]
\begin{lem}[Canonical Form]
  \label{lem:canonical-grammar}
  Let \( \grm = (\ralph, \nont, \rules) \) be a regular tree grammar that is unambiguous and strongly connected.
  Then one can (effectively) construct a grammar \( \grm' = (\ralph, \nont', \rules') \) and 
  a family \( ( \nset_N )_{N \in \nont} \) of subsets \( \nset_N \subseteq \nont' \) that satisfy the following conditions:
  \begin{itemize}
  \item \( \grm' \) is canonical, unambiguous and essentially strongly-connected.
  \item 
	\( \inhav{\grm,N} = \biguplus_{N' \in \nset_N} \inhav{\grm',N'} \) for every \(N \in \nont\).
  \item If \( \card{\inhav{\grm}} = \infty \), then \( \ctx{\grm} \subseteq \ctxinf{\grm'} \).
  \end{itemize}
\end{lem}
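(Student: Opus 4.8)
The plan is to carry out the two-step construction sketched informally above the statement and then verify the three properties in order. \textbf{Step~A} removes every function symbol that is not at the root of a right-hand side: repeatedly pick a rule $N \rew a(T_1,\dots,T_{\ralph(a)})$ with some argument $T_i\notin\nont$, choose a fresh nonterminal $N'$, and replace that rule by $N \rew a(T_1,\dots,T_{i-1},N',T_{i+1},\dots,T_{\ralph(a)})$ together with $N'\rew T_i$. This terminates because the finite quantity $\sum_{(N\rew r)\in\rules}(\text{number of function-symbol occurrences strictly below the root of }r)$ strictly decreases at each step, and when it stops every rule is either canonical or a unit rule $N\rew N'$. \textbf{Step~B} eliminates the unit rules by the redirection described in the text — delete $N\rew N'$ and, for every rule $N_0\rew a(\dots,N,\dots)$, add $N_0\rew a(\dots,N',\dots)$ — which terminates and leaves a unit-free, hence canonical, grammar $\grm'$; along the way one records, for each original $N$, the finite set $\nset_N\subseteq\nont'$ of nonterminals that ``absorb'' the unit-successors of $N$ (so that $N\notin\nset_N$ exactly when $N$ loses all of its rules).

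For the language identity I would argue through derivation trees — parse trees recording which rule is applied at each node, which for regular tree grammars are in bijection with leftmost rewriting sequences, so that unambiguity is equivalently ``at most one derivation tree for each $N$ and $T$''. Writing $\grm_{\mathrm A}$ for the grammar after Step~A, Step~A induces for each original $N$ and each $T$ a bijection between derivation trees of $N\rew^{*}_{\grm}T$ and of $N\rew^{*}_{\grm_{\mathrm A}}T$ (each use of the eliminated rule is split into a canonical two-node fragment), so the languages of the original nonterminals and unambiguity are preserved, while each fresh $N'$ generates exactly the language of the subterm it replaced. For Step~B one tracks how a derivation tree decomposes according to the first canonical rule applied after a maximal initial run of unit rules; this yields $\inhav{\grm,N}=\biguplus_{N''\in\nset_N}\inhav{\grm',N''}$, the union being \emph{disjoint} precisely because $\grm$ is unambiguous (two distinct initial unit runs would yield two distinct leftmost rewriting sequences), and a parallel bookkeeping shows that $\grm'$ is again unambiguous.

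The main obstacle is the third property, essential strong-connectivity of $\grm'$, on which the last bullet also rests. Strong connectivity of $\grm$ gives mutual reachability of the original nonterminals; I would first transport this to $\grm'$ for the \emph{infinite-language} nonterminals, using the language identity to carry witnessing linear contexts through both steps, and then treat the fresh nonterminals. A fresh $N'$ spawned from a subterm inside a rule of a nonterminal $M$ is reachable from $M$ in $\grm'$ as soon as the sibling argument positions of that rule have nonempty language, and $N'$ reaches an original nonterminal by following any derivation from it; the key point is that $\card{\inhav{\grm',N'}}=\infty$ forces those sibling positions to be nonempty, because an infinite language cannot be generated without eventually re-entering the strongly-connected ``core'' of $\grm$, all of whose nonterminals then have nonempty languages, so $N'$ indeed sits between infinite-language nonterminals in the reachability order. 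The example in the text — a fresh nonterminal generating a single finite tree that genuinely becomes unreachable — is exactly what rules out plain strong connectivity and motivates restricting to infinite-language nonterminals. Finally, for $\ctx{\grm}\subseteq\ctxinf{\grm'}$ under the hypothesis $\card{\inhav{\grm}}=\infty$: in that case strong connectivity forces every nonterminal of $\grm$ to have an infinite language, so for each $S\in\inhav{\grm,N\Ar N'}$ a derivation of $S[N']$ that never unfolds $N'$ survives Steps~A and~B up to replacing $N$ and $N'$ by suitable canonical-rule-bearing members of $\nset_N$ and $\nset_{N'}$, which all have infinite languages and hence lie in $\nontinfp{\nont'}$; thus $S\in\ctxinf{\grm'}$.
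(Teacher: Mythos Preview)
Your proposal follows essentially the same two-step route as the paper's proof in Appendix~\ref{appx:canonical}: Step~A (introducing fresh nonterminals for non-root subterms of right-hand sides) is exactly Lemma~\ref{lem:making-semi-canonical}, and Step~B (unit-rule elimination) is Lemma~\ref{lem:making-canonical}. The paper structures the argument modularly, proving that \emph{each step separately} preserves unambiguity, essential strong-connectivity, and the inclusion $\ctxinf{-}\subseteq\ctxinf{-}$; the final assembly then just notes that strong connectivity together with $\card{\inhav{\grm}}=\infty$ forces every nonterminal of $\grm$ to have infinite language, whence $\ctx{\grm}=\ctxinf{\grm}$. This modularity spares one the need to reason about fresh nonterminals and unit-rule elimination simultaneously, which is where your sketch becomes hand-wavy (e.g.\ the reachability claims for fresh $N'$ in the final grammar after both steps).

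One concrete difference worth noting: for Step~B the paper does \emph{not} use the iterative redirection you describe (which is only the informal hint in the text). Instead it gives a single-shot construction: $\nont'$ is the set of nonterminals possessing at least one canonical rule, $\nset_N \defeq \{N'\in\nont'\mid N\rew_{\grm}^* N'\}$, and $\rules'$ consists of all rules $N_0\rew a(N_1',\dots,N_{\ralph(a)}')$ obtainable from a canonical rule of $N_0$ by unit-rewriting its arguments into $\nont'$. This makes the definition of $\nset_N$ precise from the outset and turns disjointness of $\biguplus_{N'\in\nset_N}\inhav{\grm',N'}$ and unambiguity of $\grm'$ into direct consequences of the unambiguity of $\grm$, without tracking an iterative process.
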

\begin{proof}
  See Appendix~\ref{appx:canonical}.
\end{proof}
 
\subsection{Decomposition of Regular Tree Languages}
\label{sec:rtl-decomposition}

This subsection generalizes the decomposition of \(\TRn{n}{\ralph}\)
in Section~\ref{sec:decomposition}:
\[  \TRn{n}{\ralph}\;\cong
  \coprod_{E \in \uframe} \prod_{i=1}^{\shn(E)} \ucomp[\sh{E}{i}]
  \]
to
that of \(\inhavn{\grm,N}\), and proves a bijection of the following form:
  \[
  \inhavn{\grm,N}\;\cong
  \coprod_{\gctx \in \tframe(\grm,N)} \prod_{i=1}^{\shn(\gctx)} \tcomp[\sh{\gctx}{i}]. 
  \]
  Here, \(\gctx\) denotes a \emph{typed} second-order context
  (which 
  will be defined shortly in Section~\ref{sec:typed-context}),
  each of whose second-order holes
\(\sh{\gctx}{i}=\Hhole{\f}{n}\) carries not only the size \(n\) but the \emph{context type} \(\f\) of a context to be substituted for the hole.
Accordingly, we have replaced 
\(\ucomp[\sh{E}{i}]\) with  \(\tcomp[\sh{\gctx}{i}]\), which denotes the set of contexts that respect
the context type specified by \(\sh{\gctx}{i}\).

In the rest of this subsection, we first define the notion of typed second-order contexts in
Section~\ref{sec:typed-context}, extend the decomposition function accordingly in Section~\ref{sec:grespecting-decomposition}, and use it to prove the above bijection in Section~\ref{sec:grespecting-bijection}.
Throughout this subsection (i.e.,~Section~\ref{sec:rtl-decomposition}), 
we assume that \(\grm\) is a canonical and unambiguous grammar.
We emphasize here that the discussion in this subsection 
essentially relies on both unambiguity and canonicity of the grammar.
The essential strong connectivity is not required for the results in this
subsection; it will be used in Section~\ref{sec:adjusting}, to show
that each component \(\tcomp[\sh{\gctx}{i}]\) contains an affine
context that has \(S_{\ceil{\cn \log n}}\) as a subcontext
(recall condition (i) of (T3) in Section~\ref{sec:IMTforWords}).

\begin{rem}
  Note that any deterministic bottom-up tree automaton
  (without any \(\epsilon\)-rules) ~\cite{tata2007}
can be considered an unambiguous canonical tree grammar, by regarding
each transition rule \(a(q_1,\ldots,q_k) \rew q\) as a rewriting rule
\(N_q \rew a(N_{q_1},\ldots, N_{q_k})\). Thus, by the equivalence between the
class of tree languages generated by regular tree grammars and the class of those
accepted by deterministic bottom-up tree automata, any regular tree grammar can be
converted to an unambiguous canonical tree grammar.
\end{rem}
\subsubsection{Typed Second-Order Contexts}
\label{sec:typed-context}

The set of \emph{\( \grm \)-typed second-order contexts}, ranged over by \(\gctx\), is defined by:
\[
\gctx ::= \hhole_{N_1 \cdots N_k \Ar N}^n[\gctx_1, \ldots, \gctx_k] \;\mid\; a(\gctx_1, \ldots,
\gctx_{\ralph(a)}) \ \ (a \in \dom(\ralph)).
\]
The subscript \( \kappa = (N_1 \cdots N_k \Ar N) \) describes the type of
first-order contexts that can be substituted for this second-order hole; 
the superscript \( n \) describes the size as before. 
Hence a (first-order) context \( C \) is suitable for 
filling a second-order hole \( \hhole_{\kappa}^n \) if \( C : \kappa \) and \( |C| = n \).
We write \( \cbf{C}{\hhole_{\kappa}^n} \) if \( C : \kappa \) and \( |C| = n \).
The operations such as \( \sh{\gctx}{i} \), \( \gctx\Hfill{C} \) and \( \gctx\Hfill{P} \) are defined 
analogously.
For a sequence of contexts \(P=C_1C_2\cdots C_{\ell}\),
we write \( \cbf{P}{\gctx} \)
if \( \len{P} (= \ell)=\shn(\gctx) \) and \( \cbf{\pct_i}{\sh{\gctx}{i}} \) for each \( i \le \shn(\gctx) \).

We define the second-order context typing relation \(\p \gctx:N\) inductively
by the rules in Figure~\ref{fig:TypeRuleE}. Intuitively, 
\(\p \gctx:N\) means that \(\gctx\Hfill{P}\in 
\inhav{\grm,N}\) holds for any \(P\) such that \(P\COL \gctx\) (as confirmed in Lemma~\ref{lem:second-order-substitution-lemma} below).
As in the case of untyped second-order contexts,
 we actually use only typed second-order contexts with holes
of the form \(\hhole_{N_1 \cdots N_k \Ar N}^n\) where \(k\) is \(0\) or \(1\).

\begin{figure}[t]
\[
\begin{array}{cr}
  \dfrac{
    \p \gctx_i: N_i
    \quad
    \mbox{(for each \( i = 1, \dots, k \))}
  }{
    \p \hhole_{N_1 \cdots N_k \Ar N}^{n}[\gctx_1,\dots,\gctx_k] : N
  }
&
  (\text{SC-Hole})
  \\[25pt]
  \dfrac{
    (N \rew a(N_1,\dots,N_{\ralph(a)})) \in \rules
    \qquad
    \p \gctx_i : N_i \quad\mbox{(for each $i = 1,\dots,\ralph(a)$)}
  }{
    \p a(\gctx_1,\dots,\gctx_{\ralph(a)}) : N
  }
  &
  (\text{SC-Term})
\end{array}
\]
\caption{Typing rules for \(\gctx\)}\label{fig:TypeRuleE}
\end{figure}

\begin{exa}
\label{ex:gctx}
Recall the second-order context
\(E = \Tb(\Hhole{1}{3}[\Ta(\Hhole{0}{3}, \Tb(\Hhole{0}{5}))])\) in 
Figure~\ref{fig:partitioning} and Example~\ref{ex:decomp}.
Given the grammar consisting of the rules:
\[A\to \Ta(B,B) \qquad B\to \Tb(A)\qquad B\to \Tb(B)\qquad B\to \Tc,\]
the corresponding \emph{typed} second-order context \(\gctx\) is:
\[\Tb(\Hhole{A\Ar A}{3}[\Ta(\Hhole{\Ar B}{3}, \Tb(\Hhole{\Ar A}{5}))])\]
and we have \(\p \gctx : B\).
For \(P = \Ta(\Tb(\hole),\Tc)\cdot \Tb(\Tb(\Tc)) \cdot \Ta(\Tb(\Tc), \Tb(\Tc))\), we have \(P\COL\gctx\),
and:
\[\gctx\Hfill{P} = \Tb(\Ta(\Tb(\Ta(\Tb(\Tb(\Tc)), \Tb(\Ta(\Tb(\Tc), \Tb(\Tc))))), \Tc)) \in \inhav{\grm,B}.\]
\end{exa}

\begin{lem}
  The following rule is derivable:
  \[
  \begin{array}{cr}
    \dfrac{
      C \in \inhav{\grm, N_1 \dots N_k \Ar N}
      \qquad
      \p \gctx_i : N_i
      \quad
      \mbox{\rm (for each \( i = 1, \dots, k \))}
    }{
      \p C[\gctx_1, \dots, \gctx_k] : N
    }
    &
    (\mbox{\rm SC-Ctx})
  \end{array}
  \]
\end{lem}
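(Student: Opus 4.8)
The plan is to prove the derived rule (SC-Ctx) by induction on the structure of the first-order context $C$. The statement generalizes (SC-Hole) from a bare second-order hole $\hhole_{N_1\cdots N_k\Ar N}^n$ to an arbitrary context $C$ of type $N_1\cdots N_k\Ar N$, so the natural approach is to mimic how one proves a substitution lemma: peel off the outermost constructor of $C$ and reassemble the typing derivation using the rules (SC-Hole) and (SC-Term) of Figure~\ref{fig:TypeRuleE}.

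First I would recall what $C \in \inhav{\grm, N_1\cdots N_k\Ar N}$ unfolds to: by definition of $\pinhav{\grm,\kappa}$, it means $\hn{C}=k$ and $N \rew^*_\grm C[N_1,\ldots,N_k]$. Since $\grm$ is canonical (this is assumed throughout Section~\ref{sec:rtl-decomposition}), every rewriting step replaces a nonterminal by $a(N'_1,\dots,N'_{\ralph(a)})$, so the leftmost rewriting sequence from $N$ to $C[N_1,\ldots,N_k]$ has a transparent shape that I can read off from the structure of $C$. Concretely, I proceed by cases on $C$:

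\begin{itemize}
\item If $C = \hole$, then $k=1$, $N = N_1$, and $C[\gctx_1] = \gctx_1$; the hypothesis $\p\gctx_1:N_1$ is exactly the desired conclusion $\p C[\gctx_1]:N$.
\item If $C = a(C_1,\dots,C_{\ralph(a)})$, then because $\grm$ is canonical the derivation $N \rew^*_\grm C[N_1,\ldots,N_k]$ must begin with a rule $N \rew a(M_1,\dots,M_{\ralph(a)}) \in \rules$ followed by derivations $M_j \rew^*_\grm C_j[\seq N^j]$, where $\seq N^j$ is the sub-block of $N_1\cdots N_k$ corresponding to the holes inside $C_j$. Hence $C_j \in \inhav{\grm, \seq N^j \Ar M_j}$. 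Partition the second-order contexts $\gctx_1,\dots,\gctx_k$ into the matching blocks $\seq\gctx^j$. By the induction hypothesis applied to each $C_j$ (with the sub-family $\seq\gctx^j$ and the typing hypotheses $\p\gctx_i:N_i$ restricted to block $j$), I obtain $\p C_j[\seq\gctx^j] : M_j$ for each $j$. Now apply (SC-Term) with the rule $N \rew a(M_1,\dots,M_{\ralph(a)})$ to conclude $\p a(C_1[\seq\gctx^1],\dots,C_{\ralph(a)}[\seq\gctx^{\ralph(a)}]) : N$, i.e., $\p C[\gctx_1,\dots,\gctx_k] : N$.
\end{itemize}

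The main obstacle — really the only non-bookkeeping point — is justifying that in the canonical grammar the rewriting $N \rew^*_\grm C[N_1,\ldots,N_k]$ decomposes exactly along the tree structure of $C$, so that the blocks $\seq N^j$ and the per-subcontext memberships $C_j \in \inhav{\grm,\seq N^j\Ar M_j}$ are well-defined; this is where canonicity is essential and where one must be careful that the left-to-right hole ordering in $C$ matches the left-to-right ordering of $N_1\cdots N_k$. Everything else is routine index-pushing. I would state the decomposition as a small observation (or cite the obvious fact that $\pinhav{\grm,\cdot}$ is closed under the context-composition displayed just before Example~\ref{ex:grammar}, read backwards using canonicity) and then the induction goes through cleanly.
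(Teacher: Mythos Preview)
Your proposal is correct and follows essentially the same approach as the paper: induction on the structure of $C$, with the base case $C=\hole$ trivial and the inductive case $C=a(C_1,\dots,C_{\ralph(a)})$ using canonicity of $\grm$ to decompose $N\rew^* C[N_1,\dots,N_k]$ along the first rewriting step, then applying the induction hypothesis to each $C_j$ and reassembling with (SC-Term). Your explicit attention to why canonicity is needed and to the left-to-right hole ordering is well placed; the paper simply asserts this decomposition without further comment.
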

\begin{proof}
  The proof proceeds by induction on \(C\).
   Assume that the premises of the rule hold.
  If \(C=\hole\), then \(C[\gctx_1, \dots, \gctx_k] =\gctx_1\)
  and \(N=N_1\); thus the result follows immediately from the assumption.
  If \(C=a(C_1,\ldots,C_\ell)\), then by the assumption that the grammar is canonical
  and \(C \in \inhav{\grm, N_1 \dots N_k \Ar N}\), 
  we have \(N\to a(N'_1,\ldots,N'_\ell)\) with \(C_i\in
  \inhav{\grm,N_{k_{i-1}+1}\dots N_{k_i}\Ar N'_i}\)
  for \(i=1,\ldots,\ell\) where \(k_0=0\) and \(k_\ell=k\).
  By the induction hypothesis,
  we have \(\p C_i[\gctx_{k_{i-1}+1},\ldots,\gctx_{k_{i}}]:N'_i\).
  Thus, by using rule (\rm{SC-Term}), we have
\(\p  C[\gctx_1, \dots, \gctx_k] : N\) as required.
\end{proof}

\begin{lem}
  \label{lem:second-order-substitution-lemma}
  Assume that \( \p \gctx : N \).
  \begin{enumerate}
  \item\label{item:second-order-substitution-lemma-tree}
  If \( \shn(\gctx) = 0 \), then \( \gctx \) is a tree and \( \gctx \in \inhav{\grm, N} \).
  \item\label{item:second-order-substitution-lemma-sbst}
  If \( \shn(\gctx) \ge 1 \) and \( \cbf{C}{\sh{\gctx}{1}} \), then \( \p \gctx\Hfill{C} : N \).
  \item\label{item:second-order-substitution-lemma-sbstall}
  If \( \cbf{P}{\gctx} \), then \( \gctx\Hfill{P} \in \inhav{\grm, N} \).
  \end{enumerate}
\end{lem}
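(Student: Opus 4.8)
The plan is to prove the three statements simultaneously by induction on the structure (or size) of the typed second-order context \( \gctx \), much as in the untyped case for Lemma~\ref{lemma:charphi} and Lemma~\ref{lem:decomposition-replace}, but now tracking the typing judgement \( \p \gctx : N \). Statement~\eqref{item:second-order-substitution-lemma-sbstall} will follow from iterating statement~\eqref{item:second-order-substitution-lemma-sbst} and then applying statement~\eqref{item:second-order-substitution-lemma-tree}; so the work concentrates on \eqref{item:second-order-substitution-lemma-tree} and \eqref{item:second-order-substitution-lemma-sbst}.

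For statement~\eqref{item:second-order-substitution-lemma-tree}: if \( \shn(\gctx) = 0 \), then \( \gctx \) contains no second-order hole, so by the grammar of typed second-order contexts it is built solely from the \( a(\gctx_1,\dots,\gctx_{\ralph(a)}) \) production and hence is a \( \ralph \)-tree. I would then induct on \( \gctx \): the derivation of \( \p \gctx : N \) must end with (SC-Term), giving a rule \( N \rew a(N_1,\dots,N_{\ralph(a)}) \in \rules \) with \( \p \gctx_i : N_i \) and \( \shn(\gctx_i)=0 \) for each \( i \); by the induction hypothesis \( \gctx_i \in \inhav{\grm, N_i} \), i.e.\ \( N_i \rew^*_\grm \gctx_i \), and therefore \( N \rew_\grm a(N_1,\dots,N_{\ralph(a)}) \rew^*_\grm a(\gctx_1,\dots,\gctx_{\ralph(a)}) = \gctx \), so \( \gctx \in \inhav{\grm, N} \).

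For statement~\eqref{item:second-order-substitution-lemma-sbst}: recall that \( \gctx\Hfill{C} \) replaces the leftmost second-order hole \( \sh{\gctx}{1} = \hhole_{N_1\cdots N_k \Ar N'}^n \) by the context \( C \), where \( \cbf{C}{\sh{\gctx}{1}} \) means \( C \in \inhav{\grm, N_1\cdots N_k \Ar N'} \) (and \( |C| = n \), irrelevant here). I would induct on \( \gctx \) following the two clauses of the definition of \( (-)\Hfill{C} \). If \( \gctx = \hhole_{N_1\cdots N_k \Ar N'}^n[\gctx_1,\dots,\gctx_k] \), then its typing derivation ends with (SC-Hole), so \( \p \gctx_i : N_i \) for each \( i \), and \( \gctx\Hfill{C} = C[\gctx_1,\dots,\gctx_k] \); applying the derivable rule (SC-Ctx) with \( C \in \inhav{\grm, N_1\cdots N_k \Ar N'} \) yields \( \p C[\gctx_1,\dots,\gctx_k] : N' = N \). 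If \( \gctx = a(\gctx_1,\dots,\gctx_{\ralph(a)}) \), the derivation ends with (SC-Term) using a rule \( N \rew a(M_1,\dots,M_{\ralph(a)}) \), and \( \gctx\Hfill{C} \) rewrites the child \( \gctx_i \) with minimal index \( i \) satisfying \( \shn(\gctx_i)\ge 1 \); the induction hypothesis gives \( \p \gctx_i\Hfill{C} : M_i \), and re-applying (SC-Term) with the same grammar rule gives \( \p a(\gctx_1,\dots,\gctx_i\Hfill{C},\dots,\gctx_{\ralph(a)}) : N \). Finally, for \eqref{item:second-order-substitution-lemma-sbstall}, write \( P = C_1\cdots C_{\shn(\gctx)} \) with \( \cbf{P}{\gctx} \); then \( \gctx\Hfill{P} = (\cdots(\gctx\Hfill{C_1})\cdots)\Hfill{C_{\shn(\gctx)}} \), and since \( \cbf{P}{\gctx} \) guarantees \( \cbf{C_1}{\sh{\gctx}{1}} \) and, after substitution, \( \cbf{C_2}{\sh{\gctx\Hfill{C_1}}{1}} \), etc., repeated application of \eqref{item:second-order-substitution-lemma-sbst} keeps the typing \( \p (-) : N \) invariant; after all substitutions \( \shn(\gctx\Hfill{P}) = 0 \), so \eqref{item:second-order-substitution-lemma-tree} gives \( \gctx\Hfill{P} \in \inhav{\grm, N} \).

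**The main obstacle**, such as it is, is bookkeeping rather than conceptual: making sure the indices of second-order holes consumed by successive substitutions line up (that \( \sh{\gctx\Hfill{C_1}}{i} = \sh{\gctx}{i+1} \), which follows because \( (-)\Hfill{C} \) removes exactly the leftmost hole and preserves the left-to-right preorder of the rest), and that \( \cbf{P}{\gctx} \) really does feed each \( C_j \) a hole of the matching context type. The key enabling fact is the derivable rule (SC-Ctx) established just above, which is what lets the (SC-Hole) case of \eqref{item:second-order-substitution-lemma-sbst} go through; everything else is a routine structural induction mirroring the untyped development.
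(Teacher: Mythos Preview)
Your proposal is correct and follows essentially the same approach as the paper: induction on the derivation of \(\p \gctx : N\) for \eqref{item:second-order-substitution-lemma-tree} and \eqref{item:second-order-substitution-lemma-sbst}, invoking the derived rule (SC-Ctx) in the hole case of \eqref{item:second-order-substitution-lemma-sbst}, and then induction on \(\shn(\gctx)\) using \eqref{item:second-order-substitution-lemma-tree} and \eqref{item:second-order-substitution-lemma-sbst} for \eqref{item:second-order-substitution-lemma-sbstall}. Your write-up is simply more explicit about the bookkeeping than the paper's terse version.
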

\begin{proof}
  \((1)\) By induction on the structure of \( \p \gctx : N \).
  \((2)\) By induction on the structure of \( \p \gctx : N \).
  The base case is when \( \gctx = \hhole_{N_1 \dots N_k \Ar N}^n[\gctx_1, \dots, \gctx_k] \).
  Since \( \p \gctx : N \), we have \( \p \gctx_i : N_i \) for each \( i \).
  Then by the derived rule (\textsc{SC-Ctx}), we have \( \p C[\gctx_1, \dots, \gctx_k] : N \).
  \((3)\) By induction on \( \shn(\gctx) \) (using \((1)\) and \((2)\)).
\end{proof}

\subsubsection{Grammar-Respecting Decomposition Function \(\closedgdecomp\)}
\label{sec:grespecting-decomposition}

We have defined in Section~\ref{sec:gindependent-decomposition}
the function \(\closeddecomp\) that decomposes a tree \(T\) and 
returns a pair \((E,P)\) of  
 a second-order context \( E \) and
 a sufficiently long sequence \(P\) of (first-order) contexts.
The aim here is to extend \(\closeddecomp\) to a grammar-respecting one
\(\closedgdecomp\) that takes a pair \((T,N)\) such that 
\(T\in \inhav{\grm,N}\) as an input, and returns a pair \((\gctx,P)\),
which is the same as \(\closeddecomp(T)=(E,P)\), except that
 \(\gctx\) is a ``type-annotated'' version of  \(E\).
For example, for the tree 
in Figure~\ref{fig:partitioning} and the grammar in Example~\ref{ex:gctx},
we expect that \(\closedgdecomp(T) = (\gctx,P)\) where:
\[
\begin{array}{l}
T = \Tb(\Ta(\Tb(\Ta(\Tb(\Tb(\Tc)), \Tb(\Ta(\Tb(\Tc), \Tb(\Tc))))), \Tc))\\
\gctx = \Tb(\Hhole{A\Ar A}{3}[\Ta(\Hhole{\Ar B}{3}, \Tb(\Hhole{\Ar A}{5}))])\qquad
P = \Ta(\Tb(\hole),\Tc)\cdot \Tb(\Tb(\Tc)) \cdot \Ta(\Tb(\Tc), \Tb(\Tc)).
\end{array}
\]

We say that \emph{\( \gctx \) refines \( E \)}, written \( \gctx \lhd E \), if \( E \) is obtained by simply forgetting type annotations, i.e.,~replacing \( \hhole_{N_1 \cdots N_k \Ar N}^n \) in \( \gctx \) with \( \hhole_k^n \).
This relation is formally defined by induction on the structures of \( E \) and \( \gctx \) by the rules in Figure~\ref{fig:refinement-of-second-order-contexts}.
\begin{figure}[t]
\[
\begin{array}{cr}
  \dfrac{
    \gctx_i \lhd E_i
    \quad
    \mbox{(for each \( i = 1, \dots, k \))}
  }{
    \hhole_{N_1 \cdots N_k \Ar N}^{n}[\gctx_1,\dots,\gctx_k] \lhd \hhole_k^n[E_1,\dots,E_n]
  }
&
  (\text{Ref-Hole})
  \\[25pt]
  \dfrac{
    \gctx_i \lhd E_i \quad\mbox{(for each $i = 1,\dots,\ralph(a)$)}
  }{
    a(\gctx_1,\dots,\gctx_{\ralph(a)}) \lhd a(E_1, \dots, E_n)
  }
  &
  (\text{Ref-Term})
\end{array}
\]
\caption{Refinement relation}\label{fig:refinement-of-second-order-contexts}
\end{figure}
The following lemma is obtained by straightforward induction on \(\gctx\lhd E\).
\begin{lem}
\label{lem:ref-basic}
  Assume \( \gctx \lhd E \).
  \begin{enumerate}
  \item\label{item:ref-basic-size}
  \( |\gctx| = |E| \).
  \item\label{item:ref-basic-type}
  If \( \cbf{P}{\gctx} \), then \( \cbf{P}{E} \).
  \item\label{item:ref-basic-sbst}
  If \( \cbf{P}{\gctx} \), then \( \gctx\Hfill{P} = E\Hfill{P} \).
  \end{enumerate}
\end{lem}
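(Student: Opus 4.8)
The plan is to prove the three claims in order, each by induction on the derivation of $\gctx \lhd E$ (equivalently, by simultaneous structural induction on $\gctx$ and $E$, whose shapes are forced to agree by the rules of Figure~\ref{fig:refinement-of-second-order-contexts}). Before starting I would record one bookkeeping observation used throughout: if $\gctx \lhd E$ then $\shn(\gctx) = \shn(E)$, and for each $i \le \shn(\gctx)$, writing $\sh{\gctx}{i} = \Hhole{N^i_1 \cdots N^i_{k_i} \Ar N^i}{n_i}$, we have $\sh{E}{i} = \Hhole{k_i}{n_i}$; that is, corresponding second-order holes carry the same size annotation and the same arity. This is itself a trivial induction on $\gctx \lhd E$: in the (Ref-Hole) case the outermost holes match by the shape of the rule and the holes inside the $\gctx_j$'s match by the induction hypotheses, and the (Ref-Term) case just threads the hypotheses through the subcontexts.

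Given that observation, Part~(1) is immediate: in the (Ref-Hole) case $|\gctx| = n + \sum_j |\gctx_j| = n + \sum_j |E_j| = |E|$ using $|\gctx_j| = |E_j|$, and in the (Ref-Term) case $|\gctx| = 1 + \sum_j |\gctx_j| = 1 + \sum_j |E_j| = |E|$ likewise. Part~(2) is equally short: if $\cbf{P}{\gctx}$, then $\len{P} = \shn(\gctx)$ and $\cbf{\prj{P}{i}}{\sh{\gctx}{i}}$ for each $i$, which by definition means $\prj{P}{i} : N^i_1 \cdots N^i_{k_i} \Ar N^i$ and $|\prj{P}{i}| = n_i$; the former forces $\hn{\prj{P}{i}} = k_i$, so by the matching of holes above $\prj{P}{i}$ is a $k_i$-context of size $n_i$, i.e.\ $\cbf{\prj{P}{i}}{\sh{E}{i}}$; and since $\shn(E) = \shn(\gctx) = \len{P}$ we conclude $\cbf{P}{E}$.

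For Part~(3) — the only part needing some care — I would first prove the single-substitution statement: if $\gctx \lhd E$, $\shn(\gctx) \ge 1$, and $\cbf{C}{\sh{\gctx}{1}}$, then $\gctx\Hfill{C} \lhd E\Hfill{C}$ (here $E\Hfill{C}$ is well-formed because, by the Part~(2) reasoning applied to the leftmost hole, $\cbf{C}{\sh{E}{1}}$). This goes by induction on $\gctx \lhd E$. In the (Ref-Hole) case the leftmost second-order hole is the outer one, $\gctx\Hfill{C} = C[\gctx_1,\dots,\gctx_k]$ and $E\Hfill{C} = C[E_1,\dots,E_k]$, and $C[\gctx_1,\dots,\gctx_k] \lhd C[E_1,\dots,E_k]$ follows from $\gctx_j \lhd E_j$ by a routine side-induction on the first-order context $C$ (replacing $\hole_j$ by $\gctx_j$ resp.\ $E_j$ preserves $\lhd$; the terminal-application layers of $C$ contribute (Ref-Term) steps). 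In the (Ref-Term) case the leftmost second-order hole lies in some $\gctx_{j_0}$ — and, by the hole-matching observation, in the correspondingly indexed $E_{j_0}$ — so we apply the induction hypothesis to $\gctx_{j_0} \lhd E_{j_0}$ and re-assemble with (Ref-Term). Then, iterating along $P = \prj{P}{1} \cdots \prj{P}{\shn(\gctx)}$ via $\gctx\Hfill{\prj{P}{1}\cdot P'} = (\gctx\Hfill{\prj{P}{1}})\Hfill{P'}$ (and likewise for $E$), a second induction on $\len{P}$ yields $\gctx\Hfill{P} \lhd E\Hfill{P}$. Finally, since $\len{P} = \shn(\gctx)$ we have $\shn(\gctx\Hfill{P}) = 0$, so $\gctx\Hfill{P}$ is an $\ralph$-tree; and any $\lhd$-derivation between objects with no second-order holes uses only (Ref-Term), which forces $\gctx\Hfill{P} = E\Hfill{P}$ as $\ralph$-trees.

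The main obstacle is precisely the bookkeeping in Part~(3): verifying that the leftmost-hole substitution $\Hfill{\cdot}$ commutes with forgetting type annotations, i.e.\ that the leftmost second-order hole of $\gctx$ sits in the same position as that of $E$ and is filled in the corresponding place. Once the ``single substitution preserves $\lhd$'' lemma is in place, everything else is mechanical, and notably no use of unambiguity, canonicity, or of the typing judgement $\p\gctx:N$ is needed — only the purely syntactic matching of the two context structures.
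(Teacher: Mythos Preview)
Your proof is correct and follows essentially the same approach as the paper, which simply states that the lemma ``is obtained by straightforward induction on $\gctx \lhd E$.'' Your explicit single-substitution lemma for Part~(3) and the accompanying bookkeeping about matching hole positions are a clean way to unpack what the paper leaves implicit; nothing substantively different is happening.
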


Given \( T \in \inhav{\grm, N} \) and \( (E, P) = \closeddecomp(T) \), 
the value of \( \closedgdecomp(T, N) \) should be \( (\gctx, P) \)
where \( \gctx \) is a \( \grm \)-typed second-order context that satisfies the following conditions:
\[
\gctx \lhd E,
\qquad
\p \gctx : N
\quad\mbox{and}\quad
\cbf{P}{\gctx}.
\]
We prove that, for every \(T \in \inhav{\grm,N}\), 
there exists exactly one typed second-order context \( \gctx \) that satisfies the above condition.
Hence the above constraints define the function \( \closedgdecomp(T, N) \).

We first state and prove a similar 
result for first-order contexts (in Lemma~\ref{lem:first-unique-decomposition}), 
and then prove the second-order version (in Lemma~\ref{lem:second-unique-decomposition}), 
using the former.
\begin{lem}
\label{lem:first-unique-decomposition}
  Let \( C \) be a (first-order) \(k\)-context and \( T_1, \dots, T_k \) be trees. 
  Assume that \( C[T_1, \dots, T_k] \in \inhav{\grm, N} \).
  Then there exists a unique family \( (N_i)_{i \in \set{1, \dots, k}} \) such that 
  \( C \in \inhav{\grm, N_1 \dots N_k \Ar N} \) and \( T_i \in \inhav{\grm, N_i} \) for every \( i = 1, \dots, k \).
\end{lem}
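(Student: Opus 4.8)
The plan is to induct on the structure of the (first-order) context $C$, establishing existence and uniqueness of the family $(N_i)_i$ simultaneously and using both standing assumptions of this subsection, namely that $\grm$ is canonical and unambiguous.

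In the base case $C = \hole$ one has $k = 1$ and $C[T_1] = T_1 \in \inhav{\grm, N}$, so $N_1 \defeq N$ works: $\hole \in \inhav{\grm, N \Ar N}$ since $N \rew^*_\grm N$ holds with the empty sequence, and $T_1 \in \inhav{\grm, N_1}$ by assumption. For uniqueness, any $N_1$ with $\hole \in \inhav{\grm, N_1 \Ar N}$ satisfies $N \rew^*_\grm N_1$; since every rule of a canonical grammar strictly increases the number of terminal occurrences in a tree, no rewriting step can occur between the terminal-free trees $N$ and $N_1$, forcing $N_1 = N$.

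For the inductive step I would take $C = a(C_1, \dots, C_r)$ with $r = \ralph(a)$ and split $\{1, \dots, k\}$ into the consecutive blocks $I_1, \dots, I_r$ of hole-indices belonging to $C_1, \dots, C_r$, so that $C[T_1,\dots,T_k] = a\bigl(C_1[\widetilde T_{I_1}], \dots, C_r[\widetilde T_{I_r}]\bigr)$, where $\widetilde T_{I_j}$ denotes the subfamily $(T_i)_{i\in I_j}$. Since this tree lies in $\inhav{\grm, N}$ and $\grm$ is canonical, any derivation of it from $N$ begins with a rule of shape $N \rew_\grm a(M_1, \dots, M_r)$ and then derives the $j$-th child from $M_j$ (using the standard fact that a leftmost derivation of $a(S_1,\dots,S_r)$ from $a(M_1,\dots,M_r)$ splits into leftmost derivations from $M_j$ to $S_j$, processed left to right); hence $C_j[\widetilde T_{I_j}] \in \inhav{\grm, M_j}$ for each $j$. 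By unambiguity this initial rule, and therefore the tuple $(M_1,\dots,M_r)$, is uniquely determined by $N$ and the tree $C[T_1,\dots,T_k]$. Applying the induction hypothesis to each $C_j$ with the trees $\widetilde T_{I_j}$ yields a unique subfamily $(N_i)_{i\in I_j}$ with $C_j \in \inhav{\grm, (N_i)_{i\in I_j} \Ar M_j}$ and $T_i \in \inhav{\grm, N_i}$ for $i\in I_j$; concatenating over $j$ and prepending $N \rew_\grm a(M_1,\dots,M_r)$ gives $N \rew^*_\grm C[N_1,\dots,N_k]$, \ie $C \in \inhav{\grm, N_1 \cdots N_k \Ar N}$, proving existence. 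For uniqueness, any competing family $(N'_i)_i$ likewise induces a derivation of $C[T_1,\dots,T_k]$ from $N$ whose initial rule must again be $N \rew_\grm a(M_1,\dots,M_r)$, so its block-$I_j$ part once more witnesses $C_j[\widetilde T_{I_j}] \in \inhav{\grm, M_j}$, and the induction hypothesis for $C_j$ forces it to agree with $(N_i)_{i\in I_j}$; hence $(N'_i)_i = (N_i)_i$.

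The main obstacle should be bookkeeping rather than anything conceptual: one must carefully align the block decomposition $\{1,\dots,k\} = \biguplus_j I_j$ of the holes of $C$ with the factorisation of a derivation of $C[\widetilde T_{\{1,\dots,k\}}]$ through the immediate subtrees of the root, and invoke unambiguity at exactly the point where it is needed, namely to conclude that the initial rule $N \rew_\grm a(M_1,\dots,M_r)$ — and hence the decorating nonterminals $M_j$ — are functions of $N$ and $C[T_1,\dots,T_k]$ alone, independent of the choice of family $(N_i)_i$.
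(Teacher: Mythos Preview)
Your proposal is correct and follows essentially the same approach as the paper's proof: induction on the structure of $C$, with the base case handled by noting that $\hole \in \inhav{\grm, N_1 \Ar N}$ forces $N_1 = N$ (using canonicity), and the inductive step handled by peeling off the root symbol via a canonical rule $N \rew a(M_1,\dots,M_r)$, invoking unambiguity to fix the $M_j$'s, and then applying the induction hypothesis to each child context. Your write-up is in fact slightly more explicit than the paper's about the role of canonicity in the base case and about the block bookkeeping of hole indices.
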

\begin{proof}
  By induction on \( C \).
  \begin{itemize}
  \item Case \( C = \hole \):
    Then \( k = 1 \).
    The existence follows from \( \hole \in \inhav{\grm, N \Ar N} \) (as \( N \rew^* N \)) and \( T_1 = C[T_1] \in \inhav{\grm, N} \).
    The uniqueness follows from the fact that \( \hole \in \inhav{\grm, N_1 \Ar N} \) implies \( N_1 = N \).
  \item Case \( C = a(C_1, \dots, C_{\ralph(a)}) \):
    Let \( \ell = \ralph(a) \), \( k_i = \hn{C_i} \) for each \( i = 1, \dots, \ell \) and
    \[
    (T_{1,1}, \dots, T_{1,k_1}, T_{2,1}, \dots, T_{2,k_2}, \dots, T_{\ell,1}, \dots, T_{\ell,k_\ell}) = (T_1, \dots, T_k).
    \]
    Then \( C[T_1, \dots, T_k] = a(C_1[T_{1,1}, \dots, T_{1,k}], \dots, C_{\ell}[T_{\ell,1}, \dots T_{\ell,k_{\ell}}]) \).
    Since \( C[T_1, \dots, T_k] \in \inhav{\grm, N} \), there exists a rewriting sequence
    \[
    N \rew a(N_1, \dots, N_\ell) \rew^* a(C_1[T_{1,1}, \dots, T_{1,k_1}], \dots, C_{\ell}[T_{\ell,1}, \dots T_{\ell,k_{\ell}}]).
    \]
    Thus \( N_i \rew^* C_i[T_{i,1}, \dots, T_{i,k_i}] \), i.e.,~\( C_i[T_{i,1}, \dots, T_{i,k_i}] \in \inhav{\grm, N_i} \), for each \( i = 1, \dots, \ell \).
    By the induction hypothesis, there exist \( N_{i,1}, \dots, N_{i,k_i} \) such that \( C_i \in \inhav{\grm, N_{i,1} \dots N_{i,k_i} \Ar N_i} \) and \( T_{i,j} \in \inhav{\grm, N_{i,j}} \) for each \( j = 1, \dots, k_i \).
    Then (the rearrangement of) the family \( (N_{i,j})_{i \le \ell, j \le k_i} \) satisfies the requirement.

    We prove the uniqueness.
    Assume that both \( (N^1_{i,j})_{i,j} \) and \( (N^2_{i,j})_{i,j} \) satisfy the requirement.
    Then, for each \( m = 1, 2 \), there exist \(N^m_i\) (\(i=1,\dots,\ell\)) such that
    \begin{align*}
    N
    \rew a(N^m_1, \dots, N^m_\ell)
    & \rew^* a(C_1[N^m_{1,1}, \dots, N^m_{1,k_1}], \dots, C_\ell[N^m_{\ell,1}, \dots, N^m_{\ell,k_\ell}]) \\
    & \rew^* a(C_1[T_{1,1}, \dots, T_{1,k_1}], \dots, C_{\ell}[T_{\ell,1}, \dots, T_{\ell,k_{\ell}}]).
    \end{align*}
    Since \( \grm \) is unambiguous, \( N^1_i = N^2_i \) for each \( i = 1, \dots, \ell \).
    By the induction hypothesis, we have \( N^1_{i,j} = N^2_{i,j} \) for each \( i \) and \( j \).
\qedhere
  \end{itemize}
\end{proof}

\begin{lem}
\label{lem:second-unique-decomposition}
  Let \( \grm \) be a canonical unambiguous regular tree grammar and \( T \in \inhav{\grm, N} \).
Given \(E\) and \(P\), assume that \( \cbf{P}{E} \) and \( E\Hfill{P} = T \).
  Then \( E \) has a unique refinement \( \gctx \lhd E \) such that \( \p \gctx : N \) and \( \cbf{P}{\gctx} \).
\end{lem}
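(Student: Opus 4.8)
The plan is to proceed by induction on the size of \(T\) (equivalently, on \(|E|\), which equals \(|T|\) by Lemma~\ref{lemma:charphi}(\ref{charphi:size-of-e}) and Lemma~\ref{lem:second-substitution-preserves-size}), with a case analysis on the shape of \(E\). The inductive hypothesis should be stated for all smaller trees and all nonterminals simultaneously, since the recursive calls will be on subtrees generated from possibly different nonterminals.

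First I would treat the base case \(E = \hhole^{|T|}_0\), i.e.\ \(E\) is a single second-order hole of arity \(0\). Then \(\shn(E)=1\), \(P\) is a singleton \(T'\) with \(T' = E\Hfill{P} = T\), and the only candidate refinement is \(\gctx = \hhole^{|T|}_{\,\Ar N'}\) for some \(N'\); the typing rule (SC-Hole) forces the conclusion to be \(N'\), so \(\p\gctx:N\) holds iff \(N' = N\), giving existence and uniqueness. The genuinely structural base case is \(E = a(E_1,\dots,E_{\ralph(a)})\): here, writing \(T = a(T_1,\dots,T_{\ralph(a)})\) and splitting \(P = P_1\cdots P_{\ralph(a)}\) according to \(\cbf{P_i}{E_i}\), each \(E_i\Hfill{P_i} = T_i\). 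Since \(T \in \inhav{\grm,N}\) and \(\grm\) is canonical, there is a rule \(N \rew a(N_1,\dots,N_{\ralph(a)})\) with \(T_i \in \inhav{\grm,N_i}\); by unambiguity the tuple \((N_i)_i\) is \emph{unique} (this is exactly the \(k=0\) instance of Lemma~\ref{lem:first-unique-decomposition}, applied to the trees \(T_i\), or can be read off directly from leftmost rewriting). Apply the induction hypothesis to each \(E_i, P_i, T_i, N_i\) to get a unique \(\gctx_i \lhd E_i\) with \(\p\gctx_i:N_i\) and \(\cbf{P_i}{\gctx_i}\), then set \(\gctx \defeq a(\gctx_1,\dots,\gctx_{\ralph(a)})\); rule (SC-Term) gives \(\p\gctx:N\), and \(\cbf{P}{\gctx}\) and \(\gctx\lhd E\) are immediate. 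Uniqueness follows because any competing refinement must be of the form \(a(\gctx'_1,\dots,\gctx'_{\ralph(a)})\) with \(\gctx'_i \lhd E_i\), its type derivation must end in (SC-Term) using some rule \(N\rew a(N'_1,\dots,N'_{\ralph(a)})\) with \(\p\gctx'_i:N'_i\); by Lemma~\ref{lem:second-order-substitution-lemma}(\ref{item:second-order-substitution-lemma-sbstall}) we get \(T_i = \gctx'_i\Hfill{P_i} \in \inhav{\grm,N'_i}\), so uniqueness of \((N_i)_i\) forces \(N'_i = N_i\), and then the inductive uniqueness forces \(\gctx'_i = \gctx_i\).

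The remaining case is \(E = \hhole^{n}_1[E_1]\), a unary second-order hole. Then \(\cbf{P}{E}\) means \(P = C_0\cdot P_1\) with \(\cbf{C_0}{\hhole^n_1}\) (so \(C_0\) is a \(1\)-context of size \(n\)) and \(\cbf{P_1}{E_1}\), and \(E\Hfill{P} = C_0[E_1\Hfill{P_1}] = T\). Put \(T_1 \defeq E_1\Hfill{P_1}\), so \(T = C_0[T_1] \in \inhav{\grm,N}\). By Lemma~\ref{lem:first-unique-decomposition} there is a \emph{unique} \(N_1\) with \(C_0 \in \inhav{\grm, N_1 \Ar N}\) and \(T_1 \in \inhav{\grm,N_1}\). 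Apply the induction hypothesis to \(E_1, P_1, T_1, N_1\) (note \(|E_1| < |E|\), since \(n \ge 1\) contributes to \(|E|\)) to obtain a unique \(\gctx_1 \lhd E_1\) with \(\p\gctx_1:N_1\) and \(\cbf{P_1}{\gctx_1}\); set \(\gctx \defeq \hhole^n_{\,N_1\Ar N}[\gctx_1]\). Rule (SC-Hole) gives \(\p\gctx:N\), and the other conditions are routine. For uniqueness, any rival \(\gctx' \lhd E\) is of the form \(\hhole^n_{\,N'_1\Ar N'}[\gctx'_1]\) with \(\gctx'_1 \lhd E_1\); its type derivation (necessarily (SC-Hole)) yields \(N' = N\) and \(\p\gctx'_1 : N'_1\); by Lemma~\ref{lem:second-order-substitution-lemma} again \(T_1 = \gctx'_1\Hfill{P_1}\in\inhav{\grm,N'_1}\) and \(C_0 \in \inhav{\grm, N'_1 \Ar N}\), so the uniqueness in Lemma~\ref{lem:first-unique-decomposition} gives \(N'_1 = N_1\), and the inductive uniqueness gives \(\gctx'_1 = \gctx_1\).

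The main obstacle is getting the uniqueness bookkeeping exactly right: one must be careful that the decomposition of \(P\) as \(P_1\cdots P_{\ralph(a)}\) (resp.\ \(C_0\cdot P_1\)) induced by \(\cbf{P}{E}\) is itself unique, that the candidate nonterminal decorations are pinned down by unambiguity and canonicity via Lemma~\ref{lem:first-unique-decomposition}, and that Lemma~\ref{lem:second-order-substitution-lemma}(\ref{item:second-order-substitution-lemma-sbstall}) is available to convert a hypothetical typed refinement back into the tree membership facts that drive uniqueness. Everything else is routine structural induction; canonicity is what makes the case analysis on \(E\) line up cleanly with the rewriting rules, and unambiguity is what collapses the choices of decoration to a single one.
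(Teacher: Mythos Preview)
Your argument follows the same strategy as the paper: case analysis on the shape of \(E\), invoking Lemma~\ref{lem:first-unique-decomposition} at second-order holes and unambiguity (through the canonical rule shape) at terminal nodes, recursing into the subcomponents. Two remarks.

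First, the paper treats the hole case uniformly for arbitrary arity \(k\), as \(E = \hhole_k^n[E_1,\dots,E_k]\), rather than splitting into \(k=0\) and \(k=1\). Your restriction is adequate for how the lemma is actually used, but the general case is no harder and is what the lemma as stated asserts.

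Second, and this is the one genuine gap: the paper inducts \emph{structurally on \(E\)}, not on \(|T|\). Your induction on \(|T|\) relies on the claim ``\(|E_1| < |E|\), since \(n \ge 1\)'', but nothing in the lemma statement forces \(n \ge 1\). The second-order context \(\hhole_1^0[E_1]\) is syntactically legal (its first hole must be filled with the \(1\)-context \(\hole\) of size \(0\)), and then \(|E_1| = |E|\), so your size measure does not decrease. Structural induction on \(E\) avoids this entirely, since \(E_1\) is a proper subterm of \(\hhole_1^n[E_1]\) regardless of \(n\). Once you make that switch, the rest of your argument---the uniqueness of the splitting of \(P\), and the use of Lemma~\ref{lem:second-order-substitution-lemma} to convert a candidate typed refinement back into tree-membership facts that feed Lemma~\ref{lem:first-unique-decomposition}---is exactly what the paper does.
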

\begin{proof}
  We prove by induction on \( E \).
  \begin{itemize}
  \item
    Case \( E = \hhole_k^n[E_1, \dots, E_k] \):
    The sequence \( P \) can be decomposed as \( P = C \cdot P_1 \cdots P_k \) so that \( \cbf{P_i}{E_i} \) for each \( i = 1, \dots, k \).
    Furthermore \( \hn{C} = k \) and \( |C| = n \).
    We have \( E\Hfill{P} = C[E_1\Hfill{P_1}, \dots, E_k\Hfill{P_k}] \in \inhav{\grm, N} \).

    We prove the existence.
    By Lemma~\ref{lem:first-unique-decomposition}, there exists a family \( (N_i)_{i = 1, \dots, k} \) such that \( C \in \inhav{\grm, N_1 \dots N_k \Ar N} \) (i.e.,~\( C : \hhole_{N_1 \dots N_k \Ar N}^n \)) and \( E_i\Hfill{P_i} \in \inhav{\grm, N_i} \) for each \( i = 1, \dots, k \).
    By the induction hypothesis, for each \( i = 1, \dots, k \), there exists \( \gctx_i \lhd E_i \) such that \( \p \gctx_i : N_i \) and \( \cbf{P_i}{\gctx_i} \).
    Let \( \gctx \defeq \hhole_{N_1\dots N_k \Ar N}^n[\gctx_1, \dots, \gctx_k] \).
    Then \( \gctx \lhd E \), \( \p \gctx : N \), and \( \cbf{P}{\gctx} \).

    We prove the uniqueness.
    Assume \( \gctx^1 \) and \( \gctx^2 \) satisfy that \( \gctx^j \lhd E \), \( \p \gctx^j : N \) and \( \cbf{P}{\gctx^j} \) for \( j = 1, 2 \).
    Since \( \gctx^j \lhd E \) and \( \p \gctx^j : N \), \(\gctx^j\) must be of the form: \( \gctx^j =
       \hhole_{N_1^j \dots N_k^j \Ar N}^n[\gctx^j_1, \dots, \gctx^j_k] \) with \( \gctx^j_i \lhd E_i \) and \( \p \gctx^j_i : N^j_i \).
    Since \( \cbf{P}{\gctx^j} \), we have \( C \in \inhav{\grm, N_1^j \dots N_k^j \Ar N} \)
    and \(P_i : \gctx^j_i\) for \(i=1,\dots,k\).
    By Lemma~\ref{lem:second-order-substitution-lemma}, 
    \( \gctx^j_i\Hfill{P_i} \in \inhav{\grm, N_i^j} \) for each \( i = 1, \dots, k \) and \( j = 1, 2 \).
    By Lemma~\ref{lem:ref-basic}\eqref{item:ref-basic-sbst},
    \(E\Hfill{P} = \gctx^j\Hfill{P} = C[\gctx^j_1\Hfill{P_1}, \dots, \gctx^j_k\Hfill{P_k}] \in \inhav{\grm,N}\).
    Hence by Lemma~\ref{lem:first-unique-decomposition}, \( N_i^1 = N_i^2 \) for each \( i = 1, \dots, k \).
    By the induction hypothesis, \( \gctx_i^1 = \gctx_i^2 \) for each \( i \).
    Hence \( \gctx^1 = \gctx^2 \).
  \item
    Case \( E = a(E_1, \dots, E_{\ralph(a)}) \):
    The sequence \( P \) can be decomposed as \( P = P_1 \cdots P_\ralph(a) \) so that \( \cbf{P_i}{E_i} \) for each \( i = 1, \dots, \ralph(a) \).
    Then \( E\Hfill{P} = a(E_1\Hfill{P_1}, \dots, E_{\ralph(a)}\Hfill{P_{\ralph(a)}}) \).

    We prove the existence.
    Since \( N \rew^* a(E_1\Hfill{P_1}, \dots, E_{\ralph(a)}\Hfill{P_{\ralph(a)}}) \),
    there exists a rule \( N \rew a(N_1, \dots, N_{\ralph(a)}) \) such that
    \[
    N \rew a(N_1, \dots, N_{\ralph(a)}) \rew^* a(E_1\Hfill{P_1}, \dots, E_{\ralph(a)}\Hfill{P_{\ralph(a)}}).
    \]
    So \( N_i \rew^* E_i\Hfill{P_i} \) for each \( i = 1, \dots, \ralph(a) \).
    By the induction hypothesis, there exists \( \gctx_i \lhd E_i \) such that \( \p \gctx_i : N_i \) and \( \cbf{P_i}{\gctx_i} \).
    Let \( \gctx \defeq a(\gctx_1, \dots, \gctx_{\ralph(a)}) \).
    Then \( \gctx \lhd E \), \( \p \gctx : N \), and \( \cbf{P}{\gctx} \).

    We prove the uniqueness.
    Assume \( \gctx^1 \) and \( \gctx^2 \) satisfy that \( \gctx^j \lhd E \), \( \p \gctx^j : N \) and \( \cbf{P}{\gctx^j} \) for \( j = 1, 2 \).
    Since \( \gctx^j \lhd E \), \(\gctx^j\) must be of the form: 
    \( \gctx^j = a(\gctx^j_1, \dots, \gctx^j_{\ralph(a)}) \) with \( \gctx^j_i \lhd E_i \).
    By \( \p \gctx^j : N \), there exists a rule \( N \rew a(N_1^j, \dots, N_{\ralph(a)}^j) \) 
    such that \( \p \gctx^j_i : N_i^j \) for each \( i \).
    Since \(P : \gctx^j\), we have \(P_i : \gctx^j_i\) for each \(i\).
    By Lemmas~\ref{lem:second-order-substitution-lemma} and~\ref{lem:ref-basic}\eqref{item:ref-basic-sbst},
    we have 
    \( \gctx_i^j\Hfill{P_i} = E_i\Hfill{P_i} \in \inhav{\grm, N_i^j} \).
    Now we have
    \[
    N \rew a(N_1^j, \dots, N_{\ralph(a)}^j) \rew^* a(E_1\Hfill{P_1}, \dots, E_{\ralph(a)}\Hfill{P_{\ralph(a)}})
    \]
    for \( j = 1, 2 \).
    Since \( \grm \) is unambiguous, \( N_i^1 = N_i^2 \) for each \( i = 1, \dots, \ralph(a) \).
    By the induction hypothesis, \( \gctx_i^1 = \gctx_i^2 \) for each \( i \).
    Hence \( \gctx^1 = \gctx^2 \).
\qedhere
  \end{itemize}
\end{proof}

\subsubsection{Decomposition of \(\inhavn{\grm,N}\)}
\label{sec:grespecting-bijection}

To formally state the decomposition lemma, we prepare some definitions.
For a canonical unambiguous grammar $\grm = (\ralph, \nont, \rules)$, $N \in \nont$, $n \ge 1$, and $m \ge 1$, 
we define \(\tframe(\grm, N)\),
 \(\tseq(\grm,N)\), and \(\tcomp\) by: 
\begin{align*}
  \tframe(\grm, N) &\defeq \{ \gctx \mid (\gctx, P) = \closedgdecomp(T,N) \text{ for some } T \in \inhavn{\grm, N} \text{ and } P \}.
\end{align*}
\begin{align*}
  \tseq(\grm,N) &\defeq \set{ P \mid (\gctx, P) = \closedgdecomp(T, N) \text{ for some } T \in \inhav{\grm, N} }.
\end{align*}
\begin{align*}
  \tcomp &\defeq \set{ U \mid U : \hhole_{\kappa}^n, U \text{ is good for } m }.
\end{align*}
The set \(\tframe(\grm, N)\) consists of second-order contexts that are obtained by
decomposing trees of size \(n\), and \(\tseq(\grm,N)\) consists of
affine context sequences that match \(\gctx\). The set \(\tseq(\grm,N)\) is the set of contexts
that match the hole \(\hhole_{\kappa}^n\).

The following is the main result of this subsection.
\begin{lem}
\label{lem:grm-bijection}
  \begin{flalign*}
    \inhavnn{n}{\grm, N} \quad \cong &
    \coprod_{\gctx \in \tframe(\grm, N)} \prod_{i=1}^{\shn(\gctx)} \tcomp[\sh{\gctx}{i}].
  \end{flalign*}
\end{lem}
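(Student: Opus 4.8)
The plan is to mimic the proof of the grammar-independent decomposition (Lemma~\ref{cor:gindependent-tree-decomposition}), replacing the untyped decomposition function \(\closeddecomp\) with the grammar-respecting one \(\closedgdecomp\), and using the uniqueness result of Lemma~\ref{lem:second-unique-decomposition} to show that \(\closedgdecomp(-,N)\) is well-defined and behaves exactly as \(\closeddecomp\) does modulo type annotations. Concretely, I would split the bijection into two halves as before: a \emph{coproduct lemma} \(\inhavnn{n}{\grm,N} \cong \coprod_{\gctx \in \tframe(\grm,N)} \tseq(\grm,N)\), and a \emph{product lemma} \(\tseq(\grm,N) = \prod_{i=1}^{\shn(\gctx)} \tcomp[\sh{\gctx}{i}]\) for each \(\gctx\) with \(\tseq(\grm,N)\) non-empty. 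Composing the two yields the statement.

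For the coproduct lemma, I would define \(f(\gctx,P) \defeq \gctx\Hfill{P}\) and \(g(T) \defeq \closedgdecomp(T,N)\), and check they are mutually inverse. That \(f\) lands in \(\inhavnn{n}{\grm,N}\) uses Lemma~\ref{lem:second-order-substitution-lemma}\eqref{item:second-order-substitution-lemma-sbstall} (for membership in \(\inhav{\grm,N}\)) together with Lemma~\ref{lem:ref-basic}\eqref{item:ref-basic-size} and Lemma~\ref{lem:second-substitution-preserves-size} plus a size-of-\(\gctx\) fact analogous to Lemma~\ref{lem:size-of-E-in-uframe} (namely \(\gctx \in \tframe(\grm,N)\) implies \(|\gctx| = n\), which follows from \(\gctx \lhd E\), \(|E| = |\gctx|\), and Lemma~\ref{lemma:charphi}\eqref{charphi:size-of-e}). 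That \(g\) lands in the coproduct is immediate from the definitions of \(\tframe\) and \(\tseq\). The identity \(f(g(T)) = T\): if \(\closedgdecomp(T,N) = (\gctx,P)\), then by construction \(\gctx \lhd E\) where \(\closeddecomp(T) = (E,P)\), and \(\p\gctx:N\), \(P\COL\gctx\); so \(\gctx\Hfill{P} = E\Hfill{P} = T\) by Lemma~\ref{lem:ref-basic}\eqref{item:ref-basic-sbst} and Lemma~\ref{lemma:charphi}\eqref{charphi:split}. The identity \(g(f(\gctx,P)) = (\gctx,P)\): set \(T = \gctx\Hfill{P}\); since \(|T| = n\) we have \(g(T) = \closedgdecomp(T,N)\), which is the unique pair \((\gctx',P')\) with \(\closeddecomp(T) = (E,P')\) and \(\gctx' \lhd E\), \(\p\gctx':N\), \(P'\COL\gctx'\); by the product-lemma argument (via Corollary~\ref{cor:closed-decomposition-replace}) \(P' = P\) and \(E\) is forced, and then Lemma~\ref{lem:second-unique-decomposition} forces \(\gctx' = \gctx\).

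For the product lemma, the inclusion \(\tseq(\grm,N) \subseteq \prod_i \tcomp[\sh{\gctx}{i}]\) follows from \(P\COL\gctx\) (equivalently \(P\COL E\) by Lemma~\ref{lem:ref-basic}\eqref{item:ref-basic-type}) plus Lemma~\ref{lemma:charphi}\eqref{charphi:componentsize} giving that each \(\prj{P}{i}\) is good for \(m\); here I need that \(\cbf{C}{\sh{\gctx}{i}}\) holds iff \(C : \kappa_i\) and \(|C| = n_i\), which is exactly how typed second-order substitution was defined, and the "good for \(m\)" part is purely about size and shape, hence inherited from the untyped case. For the reverse inclusion, take \(P' \in \prod_i \tcomp[\sh{\gctx}{i}]\); since \(\tseq(\grm,N)\) is non-empty there is \(T\) with \(\closedgdecomp(T,N) = (\gctx,P)\), and the underlying untyped decomposition is \((E,P) = \closeddecomp(T)\) with \(\gctx \lhd E\). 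The types stripped, \(P'\) is a member of \(\prod_i \ucomp[\sh{E}{i}]\), so Corollary~\ref{cor:closed-decomposition-replace} gives \(\closeddecomp(E\Hfill{P'}) = (E,P')\); and since \(\gctx \lhd E\), \(\p\gctx:N\), \(P'\COL\gctx\), uniqueness (Lemma~\ref{lem:second-unique-decomposition}) identifies \(\gctx\) as the refinement attached by \(\closedgdecomp\) to \(E\), so \(\closedgdecomp(E\Hfill{P'},N) = (\gctx,P')\) and \(P' \in \tseq(\grm,N)\).

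The main obstacle I anticipate is purely bookkeeping: one must first nail down that \(\closedgdecomp(-,N)\) is genuinely a well-defined function, i.e.\ that for every \(T \in \inhav{\grm,N}\) the constraints "\(\gctx \lhd E\), \(\p\gctx:N\), \(P\COL\gctx\)" (where \((E,P) = \closeddecomp(T)\)) pick out exactly one \(\gctx\); this is Lemma~\ref{lem:second-unique-decomposition} applied with \(E\) and \(P\) coming from \(\closeddecomp(T)\), using Lemma~\ref{lemma:charphi}\eqref{charphi:split} to know \(P\COL E\) and \(E\Hfill{P} = T\). Once that is in place, everything else is a type-decorated replay of Section~\ref{sec:decomp-trees}, and I would organize the write-up by stating and proving the typed coproduct lemma and typed product lemma, then concluding Lemma~\ref{lem:grm-bijection} as their composite, exactly paralleling the grammar-independent case. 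A minor subtlety to keep an eye on: the indexing sets \(\prod_{i=1}^{\shn(\gctx)} \tcomp[\sh{\gctx}{i}]\) and \(\prod_{i=1}^{\shn(E)} \ucomp[\sh{E}{i}]\) must be checked to correspond entrywise under \(\gctx \lhd E\) — that \(\shn(\gctx) = \shn(E)\) and each \(\sh{\gctx}{i}\) refines \(\sh{E}{i}\) with the same superscript \(n\), so that "good for \(m\)" transfers — which is immediate by induction on \(\gctx \lhd E\).
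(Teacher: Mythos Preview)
Your proposal is correct and follows essentially the same approach as the paper: split the bijection into a typed coproduct lemma and a typed product lemma, prove each by reducing to the untyped versions (Lemmas~\ref{lemma:union-sig} and~\ref{lem:product}) via the refinement relation and the uniqueness result (Lemma~\ref{lem:second-unique-decomposition}), and compose. The only organizational differences are that the paper proves \(g(f(\gctx,P)) = (\gctx,P)\) by the one-liner \(g\circ f\circ g = g\) (using that every \((\gctx,P)\) in the domain is already of the form \(g(T)\)) rather than your direct uniqueness argument, and that the paper isolates the identity \(\tseq(\grm,N) = \useq \cap \{P \mid P\COL\gctx\}\) as a separate lemma (Lemma~\ref{lem:ppind}) before deriving the product lemma---but these are minor rearrangements of the same argument.
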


The lemma above 
is a direct consequence of 
typed versions of coproduct and product lemmas (Lemmas~\ref{lemma:union-grm} and
\ref{lemma:product} below).
The following coproduct lemma can be shown in a manner similar to Lemma~\ref{lemma:union-sig}:
\begin{lem}[Coproduct Lemma (for Grammar-Respecting Decomposition)]
  \label{lemma:union-grm}
  For any
  $n \geq 1$ and \(m \ge 1\), 
  there exists a bijection
  \begin{flalign*}
    \inhavnn{n}{\grm, N} \quad \cong &
    \coprod_{\gctx \in \tframe(\grm, N)} \tseq(\grm,N).
  \end{flalign*}
  such that \((\gctx,P)\) in the right hand side is mapped to \(\gctx\Hfill{P}\).
\end{lem}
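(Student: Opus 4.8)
The goal is to establish the bijection

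\[
\inhavnn{n}{\grm, N} \;\cong\; \coprod_{\gctx \in \tframe(\grm, N)} \tseq(\grm,N),
\]

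with the explicit witness sending $(\gctx,P)$ to $\gctx\Hfill{P}$. The plan is to mirror closely the proof of Lemma~\ref{lemma:union-sig} (the grammar-independent coproduct lemma), with $\closedgdecomp(\cdot,N)$ playing the role of $\closeddecomp$, and to import the structural facts already proved about $\closedgdecomp$. First I would define the two candidate maps: $f(\gctx,P) \defeq \gctx\Hfill{P}$ going right-to-left, and $g(T) \defeq \closedgdecomp(T,N)$ going left-to-right.

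The first task is to check these are well-typed, i.e.\ land in the claimed codomains. For $g$: given $T \in \inhavnn{n}{\grm,N}$, by definition of $\closedgdecomp$ we have $\closedgdecomp(T,N) = (\gctx,P)$ where $\gctx \lhd E$, $\p\gctx:N$, and $\cbf{P}{\gctx}$, with $(E,P) = \closeddecomp(T)$. Then $\gctx \in \tframe(\grm,N)$ and $P \in \tseq(\grm,N)$ directly from their definitions, so $g(T)\in \coprod_{\gctx}\tseq(\grm,N)$. For $f$: take $(\gctx,P)$ in the coproduct, so $P \in \tseq(\grm,N)$, meaning $(\gctx,P) = \closedgdecomp(T',N)$ for some $T' \in \inhav{\grm,N}$. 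We need $\gctx\Hfill{P} \in \inhavnn{n}{\grm,N}$. Membership in $\inhav{\grm,N}$ follows from Lemma~\ref{lem:second-order-substitution-lemma}\eqref{item:second-order-substitution-lemma-sbstall} since $\p\gctx:N$ and $\cbf{P}{\gctx}$. The size condition $|\gctx\Hfill{P}|=n$ requires knowing $|\gctx| = n$ whenever $\gctx \in \tframe(\grm,N)$: by Lemma~\ref{lem:ref-basic}\eqref{item:ref-basic-size}, $|\gctx| = |E|$, and by Lemma~\ref{lemma:charphi}\eqref{charphi:size-of-e} combined with $(E,P) = \closeddecomp(T)$ for $T$ of size $n$ (unwinding $\closeddecomp$ through $\decomp$), $|E| = n$; then Lemma~\ref{lem:second-substitution-preserves-size} (whose typed analogue holds, as the size bookkeeping is identical) gives $|\gctx\Hfill{P}| = |\gctx| = n$. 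It is worth recording this size fact as a small sublemma, the typed counterpart of Lemma~\ref{lem:size-of-E-in-uframe}.

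Next, the two round-trips. For $f\circ g = \mathrm{id}$: given $T\in\inhavnn{n}{\grm,N}$, write $\closedgdecomp(T,N) = (\gctx,P)$; by the defining property of $\closedgdecomp$, $(E,P) = \closeddecomp(T)$ and $\gctx\lhd E$ with $\cbf{P}{\gctx}$, so $f(g(T)) = \gctx\Hfill{P} = E\Hfill{P}$ by Lemma~\ref{lem:ref-basic}\eqref{item:ref-basic-sbst}, and $E\Hfill{P} = T$ by Lemma~\ref{lemma:charphi}\eqref{charphi:split} (via $\closeddecomp$). For $g\circ f = \mathrm{id}$: take $(\gctx,P)$ in the coproduct; as above it equals $\closedgdecomp(T',N)$ for some $T'$, and the previous paragraph's size argument shows $|T'| = |\gctx\Hfill{P}| = n$, so $T' = f(\gctx,P)$ is exactly the tree we feed back in, giving $g(f(\gctx,P)) = g(T') = \closedgdecomp(T',N) = (\gctx,P)$. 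The main obstacle—really the only subtle point—is the well-definedness/uniqueness that underlies $\closedgdecomp$ being a genuine function: for a given $T \in \inhav{\grm,N}$ the refinement $\gctx$ of $E = \pi_1(\closeddecomp(T))$ with $\p\gctx:N$ and $\cbf{P}{\gctx}$ is unique, which is precisely Lemma~\ref{lem:second-unique-decomposition}; everything else is a routine transfer of the untyped argument. I would therefore keep the proof short, citing Lemmas~\ref{lem:second-unique-decomposition}, \ref{lemma:charphi}, \ref{lem:second-substitution-preserves-size}, \ref{lem:ref-basic}, and \ref{lem:second-order-substitution-lemma}, and only spell out the size-of-$\gctx$ sublemma explicitly.
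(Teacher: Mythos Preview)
Your proposal is correct and follows essentially the same route as the paper: define $f(\gctx,P)=\gctx\Hfill{P}$ and $g(T)=\closedgdecomp(T,N)$, check the codomains via Lemmas~\ref{lemma:charphi} and~\ref{lem:ref-basic}, and verify both round-trips using $f\circ g = \mathrm{id}$ first and then the $g\circ f\circ g = g$ trick. You are slightly more explicit than the paper about the size condition $|\gctx|=n$ (which the paper leaves implicit when it asserts $T\in\inhavn{\grm,N}$), and your explicit mention of Lemma~\ref{lem:second-unique-decomposition} is unnecessary here since $\closedgdecomp$ is already taken as a well-defined function, but neither point changes the argument.
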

\begin{proof}
  We define a function
  \[
  f:
  \coprod_{\gctx \in \tframe(\grm, N)} \tseq(\grm,N)
  \longrightarrow
  \inhavnn{n}{\grm, N}
  \]
  by \(f(\gctx,P) \defeq \gctx\Hfill{P}\),
  and a function
  \[
  g:
  \inhavnn{n}{\grm, N}
  \longrightarrow
  \coprod_{\gctx \in \tframe(\grm, N)} \tseq(\grm,N)
  \]
  by \(g(T) \defeq \closedgdecomp(T, N) \).
  
  Let us check that these are functions into the codomains:
  \begin{itemize}
  \item \( f(\gctx,P) \in \inhavn{\grm,N} \):
    Since \(P \in \tseq(\grm,N)\), there exists \( T \in \inhavn{\grm,N} \) such that \( (\gctx,P) = \closedgdecomp(T,N) \).
    By Lemmas~\ref{lemma:charphi} and \ref{lem:ref-basic}, we have \( f(\gctx,P) = \gctx\Hfill{P} = T \in \inhavn{\grm,N} \).
  \item \( g(T) \in \coprod_{\gctx \in \tframe(\grm, N)} \tseq(\grm,N) \):
    Obvious from the definitions of \( \tframe(\grm,N) \) and \( \tseq(\grm,N) \).
  \end{itemize}
  
  We have \(f(g(T)) = T\) by Lemmas~\ref{lemma:charphi}(\ref{charphi:split}) and \ref{lem:ref-basic}.
  Let \( (\gctx,P) \in \coprod_{\gctx \in \tframe(\grm, N)} \tseq(\grm,N) \).
  By definition, there exists \( T \in \inhavn{\grm,N} \) such that \( (\gctx,P) = \closedgdecomp{(T,N)} = g(T) \).
  Then
  \[
  g(f(\gctx,P))
  =g(f(g(T))
  =g(T)
  =(\gctx,P).
  \tag*{\qedhere}
  \]
\end{proof}

The following is a key lemma used for proving a typed version of the product lemma.
\begin{lem}\label{lem:ppind}
  For 
  a nonterminal \(N\),
  $n \geq 1$, \(m \ge 1\) and $ \gctx \in \tframe(\grm, N)$, let \( E \) be the unique second-order context such that \( \gctx \lhd E \).
  Then we have
  \begin{align*}
    \tseq(\grm,N) = \useq \cap \{ P \mid \cbf{P}{\gctx} \}.
  \end{align*}
\end{lem}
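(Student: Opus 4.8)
The plan is to prove the two inclusions separately, reducing everything to the definition of \(\closedgdecomp\) in terms of \(\closeddecomp\) and to the uniqueness of the type refinement supplied by Lemma~\ref{lem:second-unique-decomposition}. Two facts about the fixed data will be used repeatedly. First, \(E\) is exactly the (untyped) second-order context obtained from \(\gctx\) by erasing the type subscript from every second-order hole \(\hhole_{\kappa}^n\); this is forced by \(\gctx\lhd E\), and it makes \(E\) a function of \(\gctx\). Second, since \(\gctx\in\tframe(\grm,N)\) there are \(T_0\in\inhavn{\grm,N}\) and \(P_0\) with \(\closedgdecomp(T_0,N)=(\gctx,P_0)\); by the defining property of \(\closedgdecomp\) (Lemma~\ref{lem:second-unique-decomposition}) this yields \(\p\gctx:N\), \(\cbf{P_0}{\gctx}\), and \((E,P_0)=\closeddecomp(T_0)\). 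The typing \(\p\gctx:N\) will be the crucial ingredient in the harder inclusion.

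For \(\tseq(\grm,N)\subseteq\useq\cap\set{P\mid\cbf{P}{\gctx}}\): take \(P\in\tseq(\grm,N)\), so \(\closedgdecomp(T,N)=(\gctx,P)\) for some \(T\in\inhav{\grm,N}\). By the definition of \(\closedgdecomp\) we have \((E',P)=\closeddecomp(T)\) with \(\gctx\lhd E'\) and \(\cbf{P}{\gctx}\). Erasing annotations from \(\gctx\) determines the underlying second-order context uniquely, so \(E'=E\); hence \((E,P)=\closeddecomp(T)\), and since \(\inhav{\grm,N}\subseteq\TR{\ralph}\) this gives \(P\in\useq\). Together with \(\cbf{P}{\gctx}\) this is the claimed membership.

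For the reverse inclusion, take \(P\in\useq\) with \(\cbf{P}{\gctx}\). Since \(P\in\useq\), there is a tree \(T'\) with \((E,P)=\closeddecomp(T')\), and then the definition of \(\closeddecomp\) together with Lemma~\ref{lemma:charphi}(\ref{charphi:split}) gives \(E\Hfill{P}=T'\). Put \(T\defeq\gctx\Hfill{P}\). By Lemma~\ref{lem:ref-basic}(\ref{item:ref-basic-sbst}) we have \(T=\gctx\Hfill{P}=E\Hfill{P}=T'\), and by Lemma~\ref{lem:second-order-substitution-lemma}(\ref{item:second-order-substitution-lemma-sbstall}), using \(\p\gctx:N\) and \(\cbf{P}{\gctx}\), we get \(T\in\inhav{\grm,N}\). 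Hence \(\closedgdecomp(T,N)\) is defined and equals \((\gctx'',P)\), where \((E,P)=\closeddecomp(T)=\closeddecomp(T')\) and \(\gctx''\) is the \emph{unique} refinement of \(E\) with \(\p\gctx'':N\) and \(\cbf{P}{\gctx''}\). Since \(\gctx\) itself satisfies \(\gctx\lhd E\), \(\p\gctx:N\) and \(\cbf{P}{\gctx}\), Lemma~\ref{lem:second-unique-decomposition} forces \(\gctx''=\gctx\). Thus \((\gctx,P)=\closedgdecomp(T,N)\) with \(T\in\inhav{\grm,N}\), i.e.,~\(P\in\tseq(\grm,N)\).

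I do not expect a real obstacle here; the only delicate point is the reverse inclusion, where one must exhibit a tree in \(\inhav{\grm,N}\) realizing the pair \((\gctx,P)\). The natural candidate \(\gctx\Hfill{P}\) works because (i) the typing \(\p\gctx:N\), inherited from \(\gctx\in\tframe(\grm,N)\), lets Lemma~\ref{lem:second-order-substitution-lemma} place it in the language, and (ii) \(\gctx\Hfill{P}\) coincides with the tree \(T'\) whose grammar-independent decomposition produced \((E,P)\), so no fresh analysis of \(\closeddecomp\) is needed and the uniqueness of the refinement does the rest.
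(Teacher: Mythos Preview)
Your proof is correct and follows essentially the same approach as the paper's: both directions rely on the definition of \(\closedgdecomp\) via \(\closeddecomp\), and for the reverse inclusion you use the same witness \(T=\gctx\Hfill{P}=E\Hfill{P}\), the typing \(\p\gctx:N\) inherited from \(\gctx\in\tframe(\grm,N)\), and the uniqueness of the refinement (Lemma~\ref{lem:second-unique-decomposition}) to conclude \(\closedgdecomp(T,N)=(\gctx,P)\). Your writeup is simply more explicit about the \(\subseteq\) direction and the identification \(T=T'\), which the paper leaves implicit.
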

\begin{proof}
  The direction \(\subseteq\) is clear.
  We prove the converse.

  Let \(P\) be in the right hand side.
  Since \( \gctx \in \tframe(\grm, N)\), there exist \( T' \) and \( P' \) such that
\[
    \closedgdecomp(T',N) = (\gctx, P').
\]
  By the definition of \( \closedgdecomp(T',N) \), we have \( \p \gctx : N \).
  Since \(P \in \useq\), there exists \( T \) such that \( (E,P) = \closeddecomp(T) \) and thus, by Lemma~\ref{lemma:charphi},
\[
    \closeddecomp(E\Hfill{P}) = (E,P).
\]
  Since \( \closeddecomp(E\Hfill{P}) = (E,P) \), \( \gctx \lhd E \), \( \p \gctx : N \),
  and \( \cbf{P}{\gctx} \), by the definition of \( \closedgdecomp \), we have \( \closedgdecomp(E\Hfill{P}, N) = (\gctx,P) \).
  By Lemmas~\ref{lem:second-order-substitution-lemma}\eqref{item:second-order-substitution-lemma-sbstall}
  and~\ref{lem:ref-basic}\eqref{item:ref-basic-sbst}, \(E\Hfill{P} = \gctx\Hfill{P} \in \inhav{\grm,N}\).
  Hence \( P \in \tseq(\grm,N) \).
\end{proof}

The following is the typed version of the product lemma, which follows from
Lemmas~\ref{lem:product} and~\ref{lem:ppind}.
\begin{lem}[Product Lemma (for Grammar-Respecting Decomposition)]
  \label{lemma:product}
  For 
  any nonterminal \(N\),
  $n \geq 1$,
  \(m \ge 1\) and
  \( \gctx \in \tframe(\grm,N) \), we have
  \[
  \tseq(\grm,N) = \prod_{i=1}^{\shn(\gctx)} \tcomp[\sh{\gctx}{i}].
  \]
\end{lem}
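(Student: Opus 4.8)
The plan is to derive this equality from the grammar-independent Product Lemma (Lemma~\ref{lem:product}) together with Lemma~\ref{lem:ppind}, which already does most of the work. Fix $\gctx\in\tframe(\grm,N)$ and let $E$ be the unique second-order context with $\gctx\lhd E$. First I would record the elementary bookkeeping about refinement that the argument relies on: by inspection of the rules defining $\lhd$ (Figure~\ref{fig:refinement-of-second-order-contexts}), $\gctx$ and $E$ have the same underlying shape, so $\shn(\gctx)=\shn(E)$ and the second-order holes sit in the same positions; moreover, if $\sh{\gctx}{i}=\Hhole{\f_i}{n_i}$ with $\f_i=N^i_1\cdots N^i_{k_i}\Ar N^i$, then $\sh{E}{i}=\Hhole{k_i}{n_i}$ (and $|\gctx|=|E|$ by Lemma~\ref{lem:ref-basic}\eqref{item:ref-basic-size}).

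Next I would invoke the two lemmas. To apply Lemma~\ref{lem:product} I need $\useq\neq\emptyset$: since $\gctx\in\tframe(\grm,N)$, there are $T\in\inhav{\grm,N}$ and $P$ with $\closedgdecomp(T,N)=(\gctx,P)$, and by the defining property of $\closedgdecomp$ this forces $\closeddecomp(T)=(E,P)$, so $P\in\useq$. Lemma~\ref{lem:product} then yields $\useq=\prod_{i=1}^{\shn(E)}\ucomp[\sh{E}{i}]$, while Lemma~\ref{lem:ppind} yields $\tseq(\grm,N)=\useq\cap\set{P\mid\cbf{P}{\gctx}}$. Combining the two,
\[
\tseq(\grm,N)=\Big(\prod_{i=1}^{\shn(E)}\ucomp[\sh{E}{i}]\Big)\cap\set{P\mid\cbf{P}{\gctx}}.
\]

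It then remains to identify the right-hand side with $\prod_{i=1}^{\shn(\gctx)}\tcomp[\sh{\gctx}{i}]$, which I would do hole by hole, since $\cbf{P}{\gctx}$ is precisely the conjunction of $\len{P}=\shn(\gctx)$ and $\cbf{\prj{P}{i}}{\sh{\gctx}{i}}$ for all $i$, and the length condition is automatic inside the product. For a fixed $i$, a context $C$ lies in $\ucomp[\Hhole{k_i}{n_i}]$ iff $\hn{C}=k_i$, $|C|=n_i$ and $C$ is good for $m$; intersecting with the condition $\cbf{C}{\sh{\gctx}{i}}$, i.e.\ $C:\f_i$ and $|C|=n_i$, leaves exactly ``$C:\f_i$, $|C|=n_i$, $C$ good for $m$'', which is the definition of $C\in\tcomp[\Hhole{\f_i}{n_i}]$ — here one uses that $C:\f_i$ already forces $\hn{C}=k_i$, so the two arity constraints are consistent rather than conflicting. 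Taking the product over $i$ gives the claim. I do not expect a genuine obstacle; the only point that needs a moment's care is this reconciliation of the two notions of ``matching a hole'', $\cbf{\cdot}{\Hhole{k}{n}}$ versus $\cbf{\cdot}{\Hhole{\f}{n}}$, which the observation just made settles.
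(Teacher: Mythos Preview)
Your proposal is correct and follows essentially the same route as the paper: both combine Lemma~\ref{lem:ppind} with the grammar-independent Product Lemma (Lemma~\ref{lem:product}) and then identify the intersection componentwise. If anything, you are slightly more explicit than the paper in checking that $\useq\neq\emptyset$ (needed for Lemma~\ref{lem:product}) and in reconciling the two hole-matching conditions $\cbf{\cdot}{\Hhole{k}{n}}$ versus $\cbf{\cdot}{\Hhole{\f}{n}}$.
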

\begin{proof}
Let \( E \) be the unique second-order context such that \( \gctx \lhd E \).
By Lemmas~\ref{lem:product} and~\ref{lem:ppind}, we have:
\begin{align*}
\tseq(\grm,N) &= \useq \cap \{ P \mid \cbf{P}{\gctx} \}\\
&= (\textstyle\prod_{i=1}^{\shn(E)} \ucomp[\sh{E}{i}])\cap (\textstyle\prod_{i=1}^{\shn(\gctx)} \set{U \mid U\COL \sh{\gctx}{i}})\\
&= \textstyle\prod_{i=1}^{\shn(\gctx)} (\ucomp[\sh{E}{i}]\cap \set{U \mid U\COL \sh{\gctx}{i}})\\
&= \textstyle\prod_{i=1}^{\shn(\gctx)} \tcomp[\sh{\gctx}{i}].
\tag*{\qedhere}
\end{align*}
\end{proof}

Lemma~\ref{lem:grm-bijection} is an immediate 
corollary of Lemmas~\ref{lemma:union-grm} and \ref{lemma:product}.
 
\subsection{Each Component Contains the Subcontext of Interest}
\label{sec:adjusting}
In Section~\ref{sec:rtl-decomposition},
we have shown that the set \( \inhavn{\grm, N} \) of trees can be decomposed as:
\[
\inhavnn{n}{\grm, N} \quad \cong 
\coprod_{\gctx \in \tframe(\grm, N)} \prod_{i=1}^{\shn(\gctx)} \tcomp[\sh{\gctx}{i}],
\]
assuming that \(\grm\) is canonical and unambiguous.
In this subsection, we further assume that \(\grm\) is essentially strongly connected,
and prove that, for each tree context \( S \in \ctxinf{\grm} \), every component ``contains'' \( S \), i.e., there exists \(U\in\; \tcomp[\sh{\gctx}{i}]\) such that
\(S\subc U\)
if \( m \) is sufficiently large, say \( m \ge m_0 \) (where \( m_0 \) depends on \( |S| \)).
More precisely, the goal of this subsection is to prove the following lemma.
\begin{lem}
  \label{lemma:adjust}
  Let \( \grm = (\ralph,\nont,\rules) \) be an unambiguous, essentially strongly-connected grammar in canonical form and \((S_n)_{n \in \nat} \) be a family of \linear{contexts} in \(\ctxinf{\grm}\) such that \( |S_{n}| = \order(n) \).
  Then there exist integers \(b,c \ge 1\) that satisfy the following:
  For any \(N \in \nontinf\), \( n\ge 1 \), \( m \ge b \), \( \gctx \in \tframeNM{n}{cm}(\grm,N) \)
  and \(i \in \set{1,\dots,\shn(\gctx)}\), 
  there exists \( \pU \in \tcompCM{\sh{\gctx}{i}}{cm}\) such that $S_{m} \subc \pU$.
\end{lem}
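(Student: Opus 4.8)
The plan is to pick the constants $b$ and $c$ very large compared with the data of $\grm$, the constant in $|S_m|=\order(m)$, and two grammar-theoretic bounds introduced below, and then, given $N\in\nontinf$, $\gctx\in\tframeNM{n}{cm}(\grm,N)$ and a hole $\sh{\gctx}{i}=\hhole_{\kappa}^{n_i}$ with $\kappa=N_1\cdots N_k\Ar N'$, to produce $\pU$ by surgically inserting $S_m$ into a well-chosen proper subcontext of some fixed $U_0\in\tcompCM{\sh{\gctx}{i}}{cm}$; such a $U_0$ exists because $\tcompCM{\sh{\gctx}{i}}{cm}$ is non-empty (Lemma~\ref{lemma:product}, since $\gctx$ comes from an actual decomposition), and this also records $n_i\ge cm$. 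A preliminary step makes the surgery possible: I would show that once $cm>D^{\ast}:=\max\{|T|\mid T\in\inhav{\grm,M},\ M\in\nont\setminus\nontinf\}$, every hole $\hhole_{N_1\cdots N_k\Ar N'}^{n_i}$ occurring in such a $\gctx$ has $n_i\ge cm$ and $N',N_1,\dots,N_k\in\nontinf$. This is read off from how holes are created in $\decomp$ (only the second and fourth cases of~\eqref{eq:defOfDecomp} introduce an $\hhole_k^n$, and there $n\ge cm$): the subderivation of $\p\gctx:N$ at that hole yields, via Lemma~\ref{lem:second-order-substitution-lemma}, a tree of $\inhav{\grm,N'}$ of size $\ge n_i\ge cm>D^{\ast}$, forcing $N'\in\nontinf$; and below a $\hhole_1^n$ the relevant subtree, of size $\ge cm$ by Lemma~\ref{lem:size-of-E}, lies in $\inhav{\grm,N_1}$, forcing $N_1\in\nontinf$ (for $k=0$ there is no $N_1$). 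Finally, since $\nont$ is finite and $S_m\in\ctxinf{\grm}$, I may fix for each $m$ a pair $M_1^m,M_2^m\in\nontinf$ with $S_m\COL M_1^m\Ar M_2^m$, only finitely many pairs occurring, so all constants below are uniform in $m$.

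For the core construction I would observe first that ``good for $cm$'' constrains only the root symbol of $U_0$ and the sizes of its immediate children, so \emph{any} subcontext strictly below the root may be replaced by one of the same size and context-type without disturbing size, type, or goodness of the whole. I then descend from the root of $U_0$ into a largest child at each step, stopping at the first subcontext $Q$ of size $<\mr\Theta$, where $\Theta:=\alpha m+2\gamma+B$, with $\alpha$ the linear bound on $|S_m|$, $\gamma$ a uniform bound (from essential strong connectivity) on the sizes of minimal contexts connecting two $\nontinf$-nonterminals, and $B\ge D^{\ast}$ a bound past which the size-sets of all nonterminals and context-types of $\grm$ are ultimately periodic (a standard fact about regular tree languages). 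Choosing $c$ large enough that $cm\ge\mr\Theta$ and $cm>D^{\ast}$, and $b$ large enough that $|S_m|\le\alpha m$, we get $n_i\ge\mr\Theta$, so $Q$ is a proper subcontext with $\Theta\le|Q|<\mr\Theta$; since $|Q|>D^{\ast}$, its root nonterminal $\nu$ lies in $\nontinf$ (immediate when $Q$ is a tree; when $Q$ carries $U_0$'s hole, which is typed by $N_1\in\nontinf$, pump $N_1$-trees into it). Using essential strong connectivity I form the $S_m$-carrying \emph{loop} $G_\nu:=C_1[S_m[C_2]]\COL\nu\Ar\nu$ with $C_1\COL M_2^m\Ar\nu$ and $C_2\COL\nu\Ar M_1^m$, so $|G_\nu|\le\alpha m+2\gamma\le\Theta$. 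Because $G_\nu$ is a genuine $\nu$-loop, $|G_\nu|$ is a multiple of the period of the size-set of $Q$'s context-type (inserting $G_\nu$ is a valid size-transition there); hence $|Q|-|G_\nu|$ lies in the same, already-realised residue class as $|Q|$ and, being $\ge\Theta-|G_\nu|\ge B$, is the size of an actual context (resp.\ tree) $Z$ of $Q$'s type. Setting $Q':=G_\nu[Z]$ and taking $\pU:=U_0$ with $Q$ replaced by $Q'$ gives $S_m\subc Q'\subc\pU$, $|\pU|=n_i$, $\pU\COL\kappa$, and $\pU$ good for $cm$, so $\pU\in\tcompCM{\sh{\gctx}{i}}{cm}$ as required.

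The main obstacle is exactly this ``exact-size'' matching: with no aperiodicity assumed one can only realise sizes in a fixed, ultimately periodic set, so a naive ``straight'' insertion of $S_m$ (surrounded by connectors) need not hit the prescribed size $n_i$. The resolution is the loop trick above: inserting $S_m$ via a loop $G_\nu$ at a $\nontinf$-nonterminal makes $|G_\nu|$ automatically divisible by the relevant period, so subtracting it from the already-valid size $|Q|$ stays in a valid residue class, and the filler $Z$ exists by the periodicity bound $B$. A secondary subtlety, keeping $\pU$ good for $cm$, is handled by confining the surgery to a proper subcontext; and the whole scheme forces $\Theta=\alpha m+2\gamma+B$ and $c$ to be large enough that $cm\ge\mr\Theta$ for all $m\ge b$, which is why $b$ and $c$ must depend on $\grm$ and on the constant in $|S_m|=\order(m)$, but not on $n$.
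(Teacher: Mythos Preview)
Your approach is correct and parallels the paper's closely, but packages the size-matching differently. The paper isolates that step as a separate lemma (Lemma~\ref{lem:period-sub}): for any $\f$ with nonterminals in $\nontinf$ and any $n\ge n_0+|S|$, if $\inhavnn{n}{\grm,\f}\neq\emptyset$ then some $U\in\inhavnn{n}{\grm,\f}$ has $S\subc U$. Its proof computes residues directly from Lemma~\ref{lem:period-grammar2} (showing $n-|S_1[S]|\equiv d_{N,N_1}\bmod c$, then invoking periodicity), whereas your loop $G_\nu\COL\nu\Ar\nu$ achieves the same thing by making $|G_\nu|\equiv 0\bmod c$ automatic; the loop is slightly slicker, while the paper's factorisation separates concerns more cleanly and also handles the tree case $\f={}\Ar N'$ by reducing it to the linear-context case rather than appealing to periodicity of tree size-sets. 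In the main proof the paper then replaces a single \emph{immediate} child $U_j$ of $P.i$ (taking $c=(2r+1)c_1$ so that some child has size $\ge 2c_1m\ge n_0+|S_m|$), rather than descending deeper to your $Q$; one level suffices precisely because goodness only constrains the root and immediate-child sizes, which you also observe. Two minor slips: the bound $\ge cm$ on the subtree below a $\hhole_1^{n}$ is Lemma~\ref{lem:remainsHigh}, not Lemma~\ref{lem:size-of-E}; and your descent bound $|Q|\ge\Theta$ is off by one when $r=1$ (then the largest child of a size-$r\Theta$ node can have size $\Theta-1$), harmlessly absorbed by enlarging $B$ by one.
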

The rest of this \new{subsection} is devoted to a proof of the lemma above.
The idea of the proof is as follows.
Assume \( S \in \inhav{\grm, N_1 \Ar N_2} \) (\( N_1, N_2 \in \nontinf \)) and \( \hhole^n_{\f} = \sh{\gctx}{i} \).
Recall that\tk{where is the first occurrence of the claim below?}
\[
\tcomp[\hhole^n_{\f}] = \{ U \mid U \in \inhav{\grm, \f}, \; |U| = n, \mbox{ and \( U \) is good for \( m \)} \}.
\]
It is not difficult to find a context \( U \) that satisfies both \( U \in \inhav{\grm, \f} \) and \( S \subc U \).
For example, assume that \( \f = N_0 \Ar N_3 \) (\( N_0, N_3 \in \nontinf \)).
Then, since the grammar is assumed to be essentially strongly-connected, 
there exist \( S_{0,1} \in \inhav{\grm, N_0 \Ar N_1} \) and \( S_{2,3} \in \inhav{\grm, N_2 \Ar N_3} \) 
and then \( U \defeq S_{2,3}[S[S_{0,1}]] \) satisfies \( U \in \inhav{\grm, \f} \) and \( S \subc U \).
What is relatively difficult is to show that \( S_{2,3} \) and \( S_{0,1} \) can be chosen so that they meet the required size constraints (i.e.,~\( |U| = n \) and \( U \) is good for \( m \)).

The following is a key lemma, which states 
that any essentially strongly connected grammar \(\grm\) is
\emph{periodic} in the sense that there is a constant \(c\) (that depends on \(\grm)\)
and a family of constants \(d_{N,N'}\) such that,
 for each \(N,N'\in \nontinf\), 
and  sufficiently large \(n\),
  \( \inhavn{\grm, N \Ar N'} \neq \emptyset \) if and only if
  \(n\equiv d_{N,N'} \mod c\).

  \begin{lem}
\label{lem:period-grammar2}
  Let \( \grm = (\ralph, \nont, \rules) \) be a regular tree grammar.
  Assume that \( \grm \) is essentially strongly-connected and \( \card{\inhav{\grm}} = \infty \).
  Then there exist constants \( n_0, c > 0 \) and a family \( ( d_{N,N'} )_{N, N' \in \nontinf} \) of natural numbers \( 0 \le d_{N,N'} < c \) that satisfy the following conditions:
 \begin{enumerate}
  \item\label{item:period-grammar2-every}
       For every \( N, N' \in \nontinf \), if \( \inhavn{\grm, N \Ar N'} \neq \emptyset \), then \( n \equiv d_{N,N'} \mod c \).
  \item\label{item:period-grammar2-suff}
       The converse of \eqref{item:period-grammar2-every} holds for sufficiently large \( n \):
       for any \( N, N' \in \nontinf \) and \( n \ge n_0 \), 
       if \( n \equiv d_{N,N'} \mod c \) then \( \inhavn{\grm, N \Ar N'} \neq \emptyset \).
  \item\label{item:period-grammar2-id}
       \( d_{N,N} = 0 \) for every \( N \in \nontinf \).
  \item\label{item:period-grammar2-comp}
       \( d_{N,N'} + d_{N', N''} \equiv d_{N,N''} \mod c \) for every \( N, N', N'' \in \nontinf \).
 \end{enumerate}
\end{lem}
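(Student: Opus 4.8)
The plan is to study, for each nonterminal \(N \in \nontinf\), the set of sizes of \emph{cyclic} linear contexts \(L_N \defeq \{ |S| \mid S \in \inhav{\grm, N \Ar N} \} \subseteq \nat\). Since \(\hole \in \inhav{\grm, N \Ar N}\) and \(S_1[S_2] \in \inhav{\grm, N \Ar N}\) whenever \(S_1, S_2 \in \inhav{\grm, N \Ar N}\) (with \(|S_1[S_2]| = |S_1| + |S_2|\)), each \(L_N\) is a submonoid of \((\nat, +)\). First I would check that \(L_N\) contains a positive integer for every \(N \in \nontinf\): since \(\card{\inhav{\grm}} = \infty\) and \(\nont\) is finite, some \(N'' \in \nontinf\) exists; as \(\inhav{\grm, N''}\) is infinite, a standard pumping argument on leftmost rewriting sequences produces \(S'' \in \inhav{\grm, N'' \Ar N''}\) with \(|S''| \ge 1\); and by essential strong connectivity there are contexts \(P \in \inhav{\grm, N'' \Ar N}\) and \(Q \in \inhav{\grm, N \Ar N''}\), so that \(P[S''[Q]] \in \inhav{\grm, N \Ar N}\) has size \(\ge 1\). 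Define \(c_N \defeq \gcd(L_N \setminus \{0\})\).

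Next I would show that \(c_N\) is independent of \(N \in \nontinf\). Given \(N, N' \in \nontinf\), essential strong connectivity provides \(S \in \inhav{\grm, N' \Ar N}\) and \(S' \in \inhav{\grm, N \Ar N'}\); then for every \(T \in \inhav{\grm, N' \Ar N'}\) both \(S[S']\) and \(S[T[S']]\) lie in \(\inhav{\grm, N \Ar N}\), so \(c_N\) divides \(|S| + |S'|\) and \(|S| + |T| + |S'|\), hence \(c_N \mid |T|\); as \(T\) ranges over all cyclic contexts at \(N'\) this gives \(c_N \mid c_{N'}\), and by symmetry \(c_N = c_{N'}\). Write \(c\) for this common value. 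I would then define \(d_{N,N'} \defeq |S_0| \bmod c\) for an arbitrary \(S_0 \in \inhav{\grm, N \Ar N'}\); this is well-defined because if \(S_0, S_0''\) are two such contexts and \(S_1 \in \inhav{\grm, N' \Ar N}\), then \(S_1[S_0]\) and \(S_1[S_0'']\) are cyclic at \(N\), whence \(c \mid (|S_1| + |S_0|) - (|S_1| + |S_0''|) = |S_0| - |S_0''|\). Properties~\eqref{item:period-grammar2-every}, \eqref{item:period-grammar2-id} and~\eqref{item:period-grammar2-comp} are then immediate: \eqref{item:period-grammar2-every} because any \(S \in \inhavn{\grm, N \Ar N'}\) has \(|S| = n\); \eqref{item:period-grammar2-id} because \(\hole \in \inhav{\grm, N \Ar N}\) with \(|\hole| = 0\); and \eqref{item:period-grammar2-comp} because composing \(S_1 \in \inhav{\grm, N \Ar N'}\) with \(S_2 \in \inhav{\grm, N' \Ar N''}\) gives \(S_2[S_1] \in \inhav{\grm, N \Ar N''}\) of size \(|S_1| + |S_2|\).

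For the remaining property~\eqref{item:period-grammar2-suff} I would use the classical fact that a submonoid of \((\nat, +)\) generated by integers with greatest common divisor \(c\) contains every sufficiently large multiple of \(c\). Applying this to each \(L_N\) (\(N \in \nontinf\)) and taking the maximum of the finitely many resulting thresholds yields an \(F\) such that \(L_N\) contains every multiple of \(c\) that is at least \(F\), i.e.,~\(\inhavnn{n'}{\grm, N \Ar N} \neq \emptyset\) for every \(N \in \nontinf\) and every multiple \(n'\) of \(c\) with \(n' \ge F\). Fixing for each pair \(N, N' \in \nontinf\) a context \(S_0^{N,N'} \in \inhav{\grm, N \Ar N'}\), putting \(\ell \defeq \max_{N, N' \in \nontinf} |S_0^{N,N'}|\) and \(n_0 \defeq \ell + F\), I would argue as follows: given \(n \ge n_0\) with \(n \equiv d_{N,N'} \pmod c\), the number \(n - |S_0^{N,N'}|\) is a nonnegative multiple of \(c\) that is at least \(F\) (using \(|S_0^{N,N'}| \equiv d_{N,N'} \pmod c\)), so there is \(T \in \inhav{\grm, N \Ar N}\) with \(|T| = n - |S_0^{N,N'}|\), and then \(S_0^{N,N'}[T] \in \inhavn{\grm, N \Ar N'}\) as required.

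I expect the main obstacle to be purely organisational: keeping the composition directions straight and making the constants \(c\), \(F\) and \(n_0\) uniform over the finitely many nonterminals of \(\nontinf\). Conceptually the proof is routine --- cyclic context sizes form a numerical semigroup, and essential strong connectivity glues these semigroups together coherently --- and the only external ingredient is the Frobenius-type cofiniteness statement for numerical semigroups.
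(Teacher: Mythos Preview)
Your proposal is correct and follows essentially the same route as the paper's proof (via Lemma~\ref{lem:period-grammar1}): both identify the common period $c$ as the gcd of cyclic context sizes at a nonterminal, show it is uniform across $\nontinf$ via essential strong connectivity, define $d_{N,N'}$ from an arbitrary representative, and obtain eventual surjectivity onto the residue class from a Frobenius/B\'ezout argument. Your framing in terms of submonoids of $(\nat,+)$ and the numerical-semigroup cofiniteness fact is slightly more streamlined than the paper's explicit development of periodic sets (its Lemma~\ref{lem:period-lem1}), but the mathematical content is identical; one small imprecision is that the pumping step yields a positive cycle at \emph{some} $M\in\nontinf$ rather than at your chosen $N''$, though essential strong connectivity immediately bridges the gap.
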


The proof of the above lemma is rather involved;
we defer it to  Appendix~\ref{sec:periodicity}.
We give some examples below,  to clarify what the lemma means.
\begin{exa}
  Consider the grammar \(\grm_1\) consisting of the following rewriting rules:
  \[
  A \rew \Ta(B)\qquad B \rew \Tb(A)\qquad A \rew \Tc(C)\qquad C\rew \Tc
\]
Then \(\nontinf=\set{A,B}\) and the conditions of the lemma above hold for:
\[n_0=0\qquad c=2\qquad d_{A,A}=d_{B,B}=0\qquad d_{A,B}=d_{B,A}=1.\]
In fact, \(\inhav{\grm_1,A\Ar A}=\set{(\Ta\Tb)^k\hole \mid k\geq 0}\),
\(\inhav{\grm_1,B\Ar B}=\set{(\Tb\Ta)^k\hole \mid k\geq 0}\),
\(\inhav{\grm_1,B\Ar A}=\set{(\Ta\Tb)^k\Ta\hole \mid k\geq 0}\),
and 
\(\inhav{\grm_1,A\Ar B}=\set{(\Tb\Ta)^k\Tb\hole \mid k\geq 0}\).
Here, since the arities of \(\Ta\) and \(\Tb\) are \(1\), we have
used regular expressions to denote \linear{contexts}.

Consider the grammar \(\grm_2\),
obtained by adding the following rules to the grammar above.
\[ A\rew \Ta(A_1) \qquad A_1\rew \Ta(A_2) \qquad A_2\rew \Ta(A).\]
Then, the conditions of the lemma above hold for
\(n=6\), \(c=1\),  \(d_{N,N'}=0\) for \(N,N'\in \set{A,A_1,A_2,B}\). Note that
\(\inhav{\grm_2,A_2\Ar A_1} = \set{\Ta^2(\Ta^3|\Ta\Tb)^k\Ta^2\hole\mid k\geq 0}\).
\qed
\end{exa}
\begin{rem}
  With some additional assumptions on a grammar,
  Lemma~\ref{lem:period-grammar2} above can be easily proved.
  For example, consider a canonical, unambiguous and essentially strongly-connected grammar \( \grm \) 
and assume that 
(i) \(\grm\) is \(N\)-aperiodic for some \(N\)
and (ii) 
there exist a \( 2 \)-context \( C \) and nonterminals \( N, N_1, N_2 \in \nontinf \) such that \( N \rew_{\grm}^* C[N_1,N_2] \). 
  Then Lemma~\ref{lem:period-grammar2} for the grammar \( \grm \) trivially holds with \( c = 1 \).
  In fact, this simpler approach is essentially what we adopted in the conference version~\cite{SAKT17FOSSACS} of this article.
\end{rem}

Using the lemma above, we prove that
for every \( S \in \ctxinf{\grm} \) and any sufficiently large
  \(n\) such that \(\inhavnn{n}{\grm, \f}\neq \emptyset\),
  we can find a context \(U\in \inhavnn{n}{\grm, \f}\) such that \( S \subc U \)
  (Lemma~\ref{lem:period-sub} below).
\begin{lem}
\label{lem:period-sub}
  Let \( \grm = (\ralph, \nont, \rules) \) be a regular tree grammar in canonical form.
  Assume that \( \grm \) is unambiguous and essentially strongly-connected and \( \card{\inhav{\grm}} = \infty \).
  Then there exists a constant \( n_0 \in \nat \) that satisfies the following condition:
  For every
  \begin{itemize}
  \item \( S \in \ctxinf{\grm} \),
  \item \( n \ge n_0  + |S| \), and
  \item \( \f = ({}\Ar N') \) or \( (N \Ar N') \) where \( N, N' \in \nontinf \),
  \end{itemize}
  if \( \inhavnn{n}{\grm, \f} \neq \emptyset \), then there exists \( U \in \inhavnn{n}{\grm, \f} \) with \( S \subc U \).
\end{lem}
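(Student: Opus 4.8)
The plan is to construct the desired affine context $U$ by stitching together three pieces using essential strong\nobreakdash-connectivity, and then to use the periodicity constants supplied by Lemma~\ref{lem:period-grammar2} to arrange that the sizes of the pieces add up to \emph{exactly} $n$. Let $n_0^\star$, $c$ and $(d_{M,M'})_{M,M' \in \nontinf}$ be as in Lemma~\ref{lem:period-grammar2}. Since $S \in \ctxinf{\grm}$ we have $S \in \inhav{\grm, N_1 \Ar N_2}$ for some $N_1, N_2 \in \nontinf$, so by property~(1) of that lemma $|S| \equiv d_{N_1,N_2} \mod c$.

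\emph{Case $\f = (N \Ar N')$ with $N, N' \in \nontinf$.} By essential strong\nobreakdash-connectivity pick $S_1 \in \inhav{\grm, N \Ar N_1}$ and $S_2 \in \inhav{\grm, N_2 \Ar N'}$ (of sizes to be fixed below); then $U := S_2[S[S_1]]$ satisfies $N' \rew^* S_2[N_2] \rew^* S_2[S[N_1]] \rew^* S_2[S[S_1[N]]] = U[N]$, so $U \in \inhav{\grm, \f}$ and $S \subc U$. Now $|U| = |S_1| + |S| + |S_2|$. By the hypothesis $\inhavnn{n}{\grm, \f} \neq \emptyset$ and property~(1), $n \equiv d_{N,N'} \mod c$; together with $|S| \equiv d_{N_1,N_2}$ and properties~(3)--(4) this gives $d_{N,N_1} + d_{N_2,N'} \equiv n - |S| \mod c$. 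I would therefore take $|S_1|$ to be the least integer $\geq n_0^\star$ with $|S_1| \equiv d_{N,N_1} \mod c$ (so $|S_1| \leq n_0^\star + c - 1$), which exists by property~(2), and then look for $S_2$ of size $n - |S| - |S_1|$; the displayed congruence forces $n - |S| - |S_1| \equiv d_{N_2,N'} \mod c$, and this number exceeds $n_0^\star$ as soon as $n \geq |S| + 2n_0^\star + c$, so property~(2) again supplies such an $S_2$. Hence $n_0 := 2 n_0^\star + c$ works for this case.

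\emph{Case $\f = ({}\Ar N')$ with $N' \in \nontinf$.} The construction is identical except that the innermost piece is a tree $T_1 \in \inhav{\grm, N_1}$ instead of a context, giving a tree $U := S_2[S[T_1]] \in \inhav{\grm, N'}$ of size $|S_2| + |S| + |T_1|$ with $S \subc U$. For the size count I need the tree\nobreakdash-analogue of property~(2): for each $M \in \nontinf$ there is a residue $e_M$ such that, for all sufficiently large $t$, $\inhavnn{t}{\grm, M} \neq \emptyset$ iff $t \equiv e_M \mod c$. The ``if'' direction follows from Lemma~\ref{lem:period-grammar2} by pumping a fixed $M$\nobreakdash-tree with contexts of $\inhav{\grm, M \Ar M}$ of size $\equiv d_{M,M} = 0 \mod c$; the ``only if'' direction is a short pumping argument (a tree of large size has a root\nobreakdash-to\nobreakdash-leaf path along which some $M''$ recurs, producing a gap $S'' \in \inhav{\grm, M'' \Ar M''}$ with $|S''| \equiv 0 \mod c$ by property~(1), and deleting such gaps decreases the size by multiples of $c$ down to bounded size). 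Using the consistency relations $e_{N_2} \equiv d_{N_1,N_2} + e_{N_1}$ and $n \equiv e_{N'} \equiv d_{N_2,N'} + e_{N_2} \mod c$ (the second using $\inhavnn{n}{\grm, {}\Ar N'} \neq \emptyset$), the same scheme fixes $|T_1|$ minimal in its residue class and then $|S_2| := n - |S| - |T_1|$, enlarging $n_0$ if needed.

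The main obstacle is not the stitching but the demand that the sizes hit $n$ \emph{on the nose}: one must check that the freely chosen pieces can always be assigned sizes summing to precisely $n$, which relies on (i) all relevant residues modulo $c$ agreeing --- handled by properties~(1), (3), (4) of Lemma~\ref{lem:period-grammar2} and the non\nobreakdash-emptiness hypothesis --- and (ii) enough slack ($n \geq n_0 + |S|$) for each piece to be pushed above the threshold $n_0^\star$ of property~(2). The tree case further rests on the tree\nobreakdash-size periodicity stated above, whose ``only if'' half is the one genuinely new ingredient; everything else is arithmetic with the constants of Lemma~\ref{lem:period-grammar2}.
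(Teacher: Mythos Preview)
Your treatment of the case $\f = (N \Ar N')$ is correct and essentially the same as the paper's: both stitch $S$ between two connector contexts and use the congruences of Lemma~\ref{lem:period-grammar2} to hit the target size exactly. The paper fixes one connector of bounded size and lets the other absorb the remainder; you choose the first minimal in its residue class. The arithmetic is identical.

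The tree case $\f = ({}\Ar N')$ has a genuine gap. Your tree-periodicity claim --- that for each $M \in \nontinf$ there is a \emph{single} residue $e_M$ with $\inhavnn{t}{\grm, M} \neq \emptyset \iff t \equiv e_M \bmod c$ for large $t$ --- is false in general. Take $\nont = \{A,B\}$ with rules $A \rew a(B)$, $B \rew b(A)$, $A \rew c$, $B \rew d$ (canonical, unambiguous, strongly connected): the basic period is $c = 2$ since every $(A\Ar A)$-context has even size, yet $A$-trees exist in every size $\ge 1$, hence in both residue classes. Your pumping justification for the ``only if'' direction only shows each large $M$-tree size is congruent to some \emph{bounded} one, not that the bounded ones lie in one class. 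Consequently the identity $n \equiv e_{N'}$ does not pin down the residue of $n$, and you have not shown that an $N_1$-tree of the needed residue $n - d_{N_1,N'}$ exists. (It does, but proving it requires extracting from the given $N'$-tree a subtree generated by some $N''\in\nontinf$ and transporting the residue to $N_1$ via a connector --- essentially the paper's idea in disguise.)

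The paper avoids this entirely: for $\f = ({}\Ar N')$ it takes the given $T \in \inhavnn{n}{\grm, N'}$, locates a subtree $T_0$ whose size lies in a fixed bounded window (large enough to force its generating nonterminal into $\nontinf$), writes $T = U[T_0]$ with $U$ a linear context of type $N''\Ar N'$, applies the already-proved linear case to replace $U$ by some $U'$ of the same size containing $S$, and returns $U'[T_0]$. No statement about tree-size residues is needed.
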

\begin{proof}
  First, let us choose the following constants:
  \begin{itemize}
  \item \( m_0 \in \nat \) such that, for every \( N, N' \in \nont^{\mathrm{inf}} \),
	there exists \( S_{N,N'} \in \inhav{\grm, N \Ar N'} \) with \( |S_{N, N'}| \le m_0 \).
	The existence of \(m_0\) is a consequence of essential strong-connectivity and of finiteness of \( \nont \).
  \item \( m_1 \) which is the constant \( n_0 \) of Lemma~\ref{lem:period-grammar2}.
  \item \( m_2 \in \nat \) such that, for every \( N \in \nont \backslash \nont^{\mathrm{inf}} \)
	and every \( T \in \inhav{\grm, N} \), we have \( |T| < m_2 \).
	The existence of \(m_2\) follows from the fact that 
	\( \bigcup_{N \in \nont \backslash \nont^{\mathrm{inf}}} \inhav{\grm, N} \) is a finite set.
  \end{itemize}
  Let \( c \) 
  and \( (d_{N, N'})_{N, N' \in \nont^{\mathrm{inf}}} \) be the constant and the family obtained by Lemma~\ref{lem:period-grammar2}.

  We define \( n_1 \defeq m_0 + m_1 \) and \( n_2 \defeq m_0 + m_1 + \mr m_2 + 1 \), where \(\mr \defeq \max(\image{\ralph})\). 
  Below we shall show:
  (i) the current lemma for the case \( \f = (N \Ar N') \), by setting
   \(n_0 \defeq n_1\) 
   and then (ii) the lemma for the case \( \f = (\Ar N') \),
   by setting 
   \(n_0 \defeq n_2\); 
 we use (i) to show (ii).
 The whole lemma
 then follows immediately  from (i) and (ii) with \(n_0 \defeq \max\set{n_1,n_2} \, (=n_2)\).
\begin{itemize}
\item
  Case (i):
  We define \( n_0 \defeq n_1 = m_0 + m_1 \).
  Assume that: \( S \in \inhav{\grm, N_1 \Ar N_2} \) where \( N_1, N_2 \in \nont^{\mathrm{inf}} \);
  \( n \ge n_0 + |S| \); \( \f = (N \Ar N') \) where \( N, N' \in \nont^{\mathrm{inf}} \);
  and \( \inhavnn{n}{\grm, \f} \neq \emptyset \).
  Let \( S_1 \in \inhav{\grm, N_2 \Ar N'} \) with \( |S_1| \le m_0 \).
  Then \( S_1[S] \in \inhav{\grm, N_1 \Ar N'} \).
  It suffices to show that \( \inhavnn{n - |S_1[S]|}{\grm, N \Ar N_1} \neq \emptyset \).
  By Lemma~\ref{lem:period-grammar2}\eqref{item:period-grammar2-every}\eqref{item:period-grammar2-comp},
\[
\mspace{20mu}
  \begin{aligned}
    n &\equiv d_{N, N'} \equiv d_{N, N_1} + d_{N_1, N_2} + d_{N_2, N'} && \bmod c
  &&(\because \inhavnn{n}{\grm, \f} = \inhavnn{n}{\grm, N\Ar N'} \neq \emptyset)
  \\
    |S_1| &\equiv d_{N_2, N'} && \bmod c
  &&(\because (S_1 \in)\, \inhavnn{|S_1|}{\grm, N_2 \Ar N'} \neq \emptyset)
  \\
    |S| &\equiv d_{N_1, N_2} && \bmod c
  &&(\because (S \in)\, \inhavnn{|S|}{\grm, N_1 \Ar N_2} \neq \emptyset)
\end{aligned}
\]
  and thus
  \[
  n - |S_1[S]| \equiv d_{N, N_1} \mod c.
  \]
  Since \(n \ge m_0 + m_1 + |S|\), we have \( n - |S_1[S]| \ge m_1 \),
  and hence by Lemma~\ref{lem:period-grammar2}\eqref{item:period-grammar2-suff}, we have
  \( \inhavnn{n - |S_1[S]|}{\grm, N \Ar N_1} \neq \emptyset \).
\item
  Case (ii):
  We define \( n_0 \defeq n_2 = m_0 + m_1 + \mr m_2 + 1 \).
  Assume that: \( S \in \inhav{\grm, N_1 \Ar N_2} \) where \( N_1, N_2 \in \nont^{\mathrm{inf}} \);
  \( n \ge n_0 + |S| \); \( \f = ({} \Ar N') \) where \( N' \in \nont^{\mathrm{inf}} \);
  and \( \inhavnn{n}{\grm, \f} \neq \emptyset \).
  Let \( T \in \inhavnn{n}{\grm, \f} \).
  Since arities of terminal symbols are bounded by \( \mr \) and \( |T| \ge \mr m_2 + 1 \),
  there exists a subtree \( T_0 \subc T \) such that \( m_2 \le |T_0| \le \mr m_2 + 1 \),
  which can be shown by induction on tree \(T\).
  Let \( U \) be a \linear{context} such that \( T = U[T_0] \).
  Since \( \grm \) is canonical and unambiguous, by Lemma~\ref{lem:first-unique-decomposition},
  \( U \in \inhav{\grm, N \Ar N'} \) and \( T_0 \in \inhav{\grm, N} \) for some \( N \in \nont \).
  Since \( m_2 \le |T_0| \le \mr m_2 + 1 \), we have \( N \in \nont^{\mathrm{inf}} \) and \( |U| \ge m_0 + m_1 + |S| \).
  By using the case \((1)\), since \((U \in)\, \inhavnn{|U|}{\grm,N \Ar N'} \neq \emptyset\),
  there exists \(U' \in \inhavnn{|U|}{\grm,N \Ar N'}\) with \(S \subc U'\).
  Let \(T' \defeq U'[T_0]\); then \(T' \in \inhavn{\grm,\Ar N'}\) and \(S \subc T'\).
\qedhere
\end{itemize}
\end{proof}

\newcommand{\widehatu}[3]{\widehat{\mathcal{U}}_{{#1},{#2}}^{#3}}

\obsolete{ 
We refine the lemma above, to show that
if
\( \tcompCM{\hhole^{n}_{\f}}{c_0 m} \neq \emptyset \)
(instead of   \(\inhavnn{n}{\grm, \f}\neq \emptyset\);
note that
\(\tcompCM{\hhole^{n}_{\f}}{c_0 m}\) is the subset of
\(\inhavnn{n}{\grm, \f}\neq \emptyset\) consisting of only good contexts),
we can find a context
\(U\in \tcompCM{\hhole^{n}_{\f}}{c_0 m} \) such that \( S \subc U \).
\begin{lem}\label{lem:period-main}
  Let \( \grm = (\ralph, \nont, \rules) \) be a regular tree grammar in canonical form.
  Assume that \( \grm \) is unambiguous and essentially strongly-connected and 
  \( \card{\inhav{\grm}} = \infty \).\tk{Unambiguity is needed to use the previous lemma.}
  Then there exist integers \( m_0, c_0 \ge 1 \) that satisfy the following condition:
  For every
  \begin{itemize}
  \item \( n \ge 1 \),
  \item \( m \ge m_0 \),
  \item \( S \in \ctxinf{\grm} \) where \( |S| \le m \),
  and
  \item \( \f = ({}\Ar N') \) or \( (N \Ar N') \) where \( N, N' \in \nontinf \),
  \end{itemize}
  if \( \tcompCM{\hhole^{n}_{\f}}{c_0 m} \neq \emptyset \), then there exists \( U \in \tcompCM{\hhole^{n}_{\f}}{c_0 m} \) with \( S \subc U \).
\end{lem}
\begin{proof}
  Let \( n_0 \) be the constant of the previous lemma (Lemma~\ref{lem:period-sub}) and choose natural numbers \( m_0 \) and \( c_0 \) that satisfy the following conditions:
  \begin{enumerate}
  \item \( m_0 > n_0 \) 
	and \( m_0 > |T| \) for any \( T \in \bigcup_{N \in \nont\backslash\nontinf} \inhav{\grm,N} \).
  \item \( c_0 = 2\mr + 1 \).
  \end{enumerate}
  Let \( n \ge 1 \), \( m \ge m_0 \), \( S \in \ctxinf{\grm} \) where \( |S| \le m \),
  and \( \f = ({}\Ar N') \) or \( (N \Ar N') \) where \( N, N' \in \nontinf \).
  Suppose that \( \tcompCM{\hhole^{n}_{\f}}{c_0 m} \neq \emptyset \).
  Let \( U' \in \tcompCM{\hhole^{n}_{\f}}{c_0 m} \).

  Assume that \( U' = a(U_1, \dots, U_{\ralph(a)}) \).
  Since \( U' \) is good for \( c_0 m \), we have
  \( |U'| \ge c_0 m = (2\mr + 1)m \ge 2\mr m + 1 \).
  Since \( \ralph(a) \le r \), there exists \( i \le \ralph(a) \) such that \( |U_i| \ge 2m\).
  Since \( U' \in \inhav{\grm, \f} \), there exist \(N_1, \dots, N_{\ralph(a)} \in \nont\) such that
  \[
  N' \rew a(N_1, \dots, N_{\ralph(a)}) \rew^* T
  \]
  where \( T = U'[N] \) if \( \f = (N \Ar N') \) and \( T = U' \) if \( \f = (\Rightarrow N') \).
  Since \( |U_i| \ge 2m \ge m_0 \) and by the condition \((1)\) on \(m_0\),
  we have \( N_i \in \nontinf \).
  Now \( U_i \in \inhav{\grm, \f'} \) for \( \f' = N \Ar N_i \) or \( {} \Ar N_i \),
  depending on \( \hn{U_i} \).
  Since \( |U_i| \ge 2m \ge n_0 + |S| \), by Lemma~\ref{lem:period-sub},
  there exists \( U_i' \in \inhav{\grm, \f'} \) such that \( S \subc U_i' \) and \( |U_i'| = |U_i| \).
  Since \( |U_i| = |U'_i| \) and \( U' \) is good for \( c_0 m \),
  \( U \defeq a(U_1, \dots, U_{i-1}, U'_i, U_{i+1}, \dots, U_{\ralph(a)}) \) is also good for \( c_0 m \).
  Obviously \( |U| = |U'| \) and thus \( U \in \tcompCM{\hhole^{n}_{\f}}{c_0 m} \).
  Since \( S \subc U_i' \) and \( U_i' \subc U \), we have \( S \subc U \) as required.
\end{proof}
}

We prepare another lemma.
\begin{lem}
\label{lem:remainsHigh}
If \(\decomp(T) = (U,E,P)\), \(\ell \in \set{1,\dots,\shn(E)}\),
\(\prj{P}{\ell}\) is a \linear{context}, and
the second-order hole \(\sh{E}{\ell}\) occurs in \(E\) in the form \((\sh{E}{\ell})[E']\),
then \(|E'| \ge m\).
\end{lem}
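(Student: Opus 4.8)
The plan is to prove the statement by induction on \(|T|\), keeping track of how the decomposition function \(\decomp\) creates second-order holes of arity one --- i.e.\ holes of the form \(\Hhole{1}{n}\), which are exactly the ones that can occur in \(E\) as \((\sh{E}{\ell})[E']\). Inspecting Equation~\eqref{eq:defOfDecomp}, the key structural observation is that an arity-one hole is introduced \emph{only} by the second clause, at the root of the returned second-order context, and there the subexpression that it wraps is precisely the \(E\)-component of the decomposition of a subtree of size \(\ge m\); for that subtree Lemma~\ref{lem:size-of-E} already guarantees that its \(E\)-component has size \(\ge m\). All other arity-one holes of \(E\) are inherited unchanged from the decompositions of the subtrees, and for those the induction hypothesis applies.

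\textbf{Base case and vacuous cases.} If \(|T|<m\) then \(\decomp(T)=(\hole,T,\NIL)\), so \(E=T\) is a tree with \(\shn(E)=0\) and the claim holds vacuously. For \(|T|\ge m\), write \(T=a(T_1,\dots,T_{\ralph(a)})\) and \(\decomp(T_i)=(U_i,E_i,P_i)\), and do a case analysis on which clause of Equation~\eqref{eq:defOfDecomp} computed \(\decomp(T)\). In the fourth clause \(E=\Hhole{0}{n}\) has a single second-order hole of arity \(0\), which never occurs in the form \((\sh{E}{\ell})[E']\), so the claim is again vacuous. In the second clause we have \(E=\Hhole{1}{n}[E_i]\) with \(|T_i|\ge m\); for the index \(\ell=1\) we get \(\sh{E}{1}=\Hhole{1}{n}\) and \(E'=E_i\), and since \(|T_i|\ge m\), Lemma~\ref{lem:size-of-E} applied to \(\decomp(T_i)=(U_i,E_i,P_i)\) yields \(|E'|=|E_i|\ge m\), as required.

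\textbf{Inductive reduction.} It remains to handle, in each of the first, second, and third clauses, the arity-one holes of \(E\) that lie strictly inside some \(E_i\) (in the first clause all holes of \(E\) are of this kind; in the second, all holes except \(\sh{E}{1}\); in the third, \(E=E_i\) so all of them). Since each \(U_i\) is a first-order linear context containing no second-order holes, substituting \(E_i\) into \(U_i[\,\cdot\,]\) and possibly wrapping with \(\Hhole{1}{n}[\,\cdot\,]\) leaves the second-order holes of \(E_i\), and in particular the subexpressions they wrap, untouched. Moreover, by Lemma~\ref{lemma:charphi}(\ref{charphi:split}) we have \(\cbf{P}{E}\) and \(\cbf{P_i}{E_i}\), and the concatenation \(P=P_1\cdots P_{\ralph(a)}\) (resp.\ \(P=U_0\cdot P_i\), resp.\ \(P=P_i\)) restricts, block by block, to the matchings \(\cbf{P_i}{E_i}\). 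Hence an index \(\ell\) with \(\prj{P}{\ell}\) a \linear{context} and \(\sh{E}{\ell}\) occurring as \((\sh{E}{\ell})[E']\), when \(\ell\) points inside some \(E_i\), corresponds to an index \(\ell'\) for \(\decomp(T_i)\) with \(\prj{P_i}{\ell'}=\prj{P}{\ell}\) a \linear{context} and \(\sh{E_i}{\ell'}\) occurring in \(E_i\) as \((\sh{E_i}{\ell'})[E']\) with the very same \(E'\). Since \(|T_i|<|T|\), the induction hypothesis gives \(|E'|\ge m\), completing the induction.

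The only genuinely computational point is the base step inside the second clause, and there it is Lemma~\ref{lem:size-of-E} that does the work; the rest is bookkeeping about the pre-order enumeration of second-order holes and the induced splitting of the context sequence \(P\). I do not expect a real obstacle: the statement is essentially a structural invariant of \(\decomp\).
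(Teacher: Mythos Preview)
Your proof is correct and follows essentially the same approach as the paper's: induction on \(|T|\) with case analysis on Equation~\eqref{eq:defOfDecomp}, invoking Lemma~\ref{lem:size-of-E} for the freshly created arity-one hole in the second clause (the case \(\ell=1\)), and reducing all remaining holes to the induction hypothesis on the relevant subtree. The paper's proof is a one-sentence sketch of exactly this argument; your version simply spells out the bookkeeping.
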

\begin{proof}
  By straightforward induction on \(|T|\) and case analysis of
  Equation~\eqref{eq:defOfDecomp} in the definition of \(\decomp\).
  We can use Lemma~\ref{lem:size-of-E} in the second case of
  Equation~\eqref{eq:defOfDecomp} when \(\ell=1\);
all the other cases immediately follow from induction hypothesis.
\end{proof}

We are now ready to prove the main lemma of this subsection.
\nk{I have removed an intermediate lemma and merged it with
  the proof of the main lemma below; I do understand the motivation for the intermediate lemma,
  but renaming the constants \(m_0,n_0,...\) so many times seemed to
  make the proof more difficult to read.}
\begin{proof}[Proof of Lemma~\ref{lemma:adjust}]
  Since \(S_0 \in \ctxinf{\grm}\), we have
\(\nontinf \neq \emptyset\) and hence \(\card{\inhav{\grm}} = \infty\).
  Let \( n_0\) be the constant of Lemma~\ref{lem:period-sub}.
  Let \( c_1 \) be a positive integer such that \( |S_n| \le c_1 n \) for every \( n \in \nat \).
  We define \( c \defeq (2r+1)c_1 \) (recall that
  \(r\) is the largest arity of \(\Sigma\)) and choose \(b\) so that \(b \ge n_0\) and
  \( b > |T| \) for any \( T \in \bigcup_{N \in \nont\backslash\nontinf} \inhav{\grm,N} \).

    Assume that \(N \in \nontinf\), \( n \ge 1 \), \( m \ge b \), \( \gctx \in \tframeNM{n}{cm}(\grm,N) \)
  and \( i \in \set{1, \dots, \shn(\gctx)} \).
  Let \( \sh{\gctx}{i} = \hhole^{n'}_{\f'} \).
  We need to show that there exists
  \( \pU \in \tcompCM{\hhole^{n'}_{\f'}}{cm}\) such that $S_{m} \subc \pU$.

    Since \( \gctx \in \tframeNM{n}{cm}(\grm,N) \), 
there exist \(T \in \inhavn{\grm, N}\) and \(P\) such that
  \((\gctx, P) = \closedgdecompm{cm}(T,N)\);
 hence \(\prj{P}{i} \in \tcompCM{\sh{\gctx}{i}}{cm} = \tcompCM{\hhole^{n'}_{\f'}}{c m}\).
 The affine context \(\prj{P}{i}\) must be of the form
 \( a(U_1, \dots, U_{\ralph(a)}) \). Since \(\prj{P}{i}\)  is good for \(cm\),
 we have \(|\prj{P}{i}|\geq cm = (2r+1)c_1m\geq 2rc_1m+1\).
 Since \(\Sigma(a)\leq r\), there exists \(j\leq \Sigma(a)\) such that
 \(|U_j|\geq 2c_1m\).
\new{We have \(\f' = (N' \Ar N'')\) or \((\Ar N'')\) for some \(N', N'' \in \nont\).}
   Since \( \prj{P}{i} = a(U_1, \dots, U_{\ralph(a)}) \in \inhav{\grm, \f'} \), there exist \(N_1, \dots, N_{\ralph(a)} \in \nont\) such that
  \[
  N'' \rew a(N_1, \dots, N_{\ralph(a)}) \rew^* T'
  \]
  where \( T' = \prj{P}{i}[N'] \) if \( \f' = (N' \Ar N'') \),
 and \( T' = \prj{P}{i} \) if \( \f' = (\Ar N'') \).
\new{Let \(\f\) be \(N'\Ar N_j\) if \(U_j\) is a \linear{context}, and \(\Ar N_j\) otherwise.}
Then \(U_j\in \inhavnn{|U_j|}{\grm,\f}\neq \emptyset\).\asd{
To show this sentence, I changed the definition of \(\f\).
I could not check if \(U_j\) is always a \linear{context} in the case that \(\f' = (N' \Ar N'')\).
If \(U_j\) is always a \linear{context}, the change of the definition is meaningless but also harmless,
and if \(U_j\) is not necessarily a \linear{context}, the change is necessary;
anyway, the change is harmless and the current proof is correct.}
In order to apply Lemma~\ref{lem:period-sub} (for \(S=S_m\) \new{and \(n= |U_j|\)}),
we need to check the conditions: (i) \(|U_j| \geq  n_0+|S_m|\)
and (ii) \(\f\) consists of only nonterminals in \(\nontinf\).
Condition (i) follows immediately from \(|U_j|\geq 2c_1m
\geq m+c_1m \geq n_0+|S_m|\). As for (ii), it suffices to check that
\(\inhav{\grm,N_j}\) and \(\inhav{\grm,N'}\) 
contain a tree whose size is no less than \(b\)
(where the condition on \(\inhav{\grm,N'}\) is required only if
\new{\(U_j\) is a \linear{context}}).
The condition on \(\inhav{\grm,N_j}\) follows 
 from \(|U_j|\geq 2c_1m\geq m\geq b\). 
 If 
\new{\(U_j\) is a \linear{context},}
 by Lemma~\ref{lem:remainsHigh}, 
 there exist \(S\) and \(T''\)
  such that:
\[
T=S[(\prj{P}{i})[T'']]
\qquad
|T''| \ge cm\ge b
\qquad
T'' \in \inhav{\grm,N'}
\]
as required.

Thus, we can apply Lemma~\ref{lem:period-sub}
and obtain \(U_j' \in \inhav{\grm,\f}\) such that \(S_m \subc U_j'\) and \(|U_j'|=|U_j|\).
  Since \(\prj{P}{i} = a(U_1,\ldots,U_{\ralph(a)})\) is good for \( c m \),
  \( U \defeq a(U_1, \dots, U_{j-1}, U'_j, U_{j+1}, \dots, U_{\ralph(a)}) \) is also good for \( c m \).
  Obviously \( |U| = |\prj{P}{i}| \) and thus \( U \in
\tcompCM{\hhole^{n'}_{\f'}}{cm}\). 
  Since \( S_m \subc U_j' \) and \( U_j' \subc U \), we have \( S_m \subc U \) as required.
\end{proof}

\subsection{Main Proof}\label{sec:proofimfmain}
Here, we give a proof of Theorem~\ref{thm:imt}.
Before the proof, we prepare a simple lemma.
We write \( \cszon{\ralph} \) for the set of \affine{contexts} over \( \ralph \) of size at most \( n \).
Lemma~\ref{lem:upBound} below gives an upper bound of \( \card{\cszon{\ralph}} \).
A more precise bound can be obtained by using a technique of analytic combinatorics
such as Drmota--Lalley--Woods theorem~(\cf \cite{Flajolet}, Theorem~VII.6), 
but the rough bound provided by the lemma below is sufficient for our purpose.

\begin{lem}
\label{lem:upBound}
  For every ranked alphabet \( \ralph \), there exists a real constant \( \gamma > 1 \) such that
  \[
  \card{\cszon{\ralph}} \le \gamma^n
  \]
  for every \( n \ge 0 \).
\end{lem}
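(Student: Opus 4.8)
The plan is to bound $\card{\cszon{\ralph}}$ by a simple recursive argument on the size $n$ and then solve the recurrence, exhibiting a $\gamma$ that works uniformly. First I would set up notation: let $a_n \defeq \card{\{ U \in \cs{\ralph} \mid \hn{U} \le 1, |U| = n \}}$ be the number of \affine{contexts} of size exactly $n$, so that $\card{\cszon{\ralph}} = \sum_{k=0}^{n} a_k$. An \affine{context} of size $n \ge 1$ is of the form $a(\ctr_1, \dots, \ctr_{\ralph(a)})$ where exactly one $\ctr_i$ may contain the single hole occurrence (the others being $0$-contexts, i.e.\ trees) — or, of size $0$, it is the hole $\hole$ itself. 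Since the number of terminals $\card{\dom(\ralph)}$ is finite and each arity is at most $\mr = \max(\image{\ralph})$, a crude overcount gives a recurrence of the shape $a_n \le \card{\dom(\ralph)} \cdot \mr \cdot \bigl(\sum_{n_1 + \dots + n_{\mr} = n-1} \prod_j a_{n_j}\bigr)$, i.e.\ $a_n$ is bounded by a constant times a convolution of at most $\mr$ copies of the sequence $(a_k)_k$ over size $n-1$.

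From this the cleanest route is to prove directly by strong induction on $n$ that $\card{\cszon{\ralph}} \le \gamma^n$ for a suitably large constant $\gamma > 1$, rather than extracting the exact asymptotics. Concretely I would pick constants $K \defeq \card{\dom(\ralph)}$ and $\mr$, and show that if $\gamma$ is chosen large enough — say $\gamma \ge 2$ and $\gamma^{\mr}/K\mr \ge \gamma^{\mr-1} \cdot (\text{some slack})$, which amounts to $\gamma$ bigger than a fixed polynomial expression in $K$ and $\mr$ — then the induction goes through: $\card{\cszon{\ralph}} = \card{\cszon[n-1]{\ralph}} + a_n \le \gamma^{n-1} + (\text{contribution of size-}n\text{ contexts})$, and the size-$n$ contribution is bounded, via the convolution above and the induction hypothesis $\sum_{k \le j} a_k \le \gamma^j$, by something like $K\mr \gamma^{n-1}$, after absorbing the distribution of the $n_j$'s into a product of partial sums each at most $\gamma^{n_j}$ and using $\sum_j n_j = n-1$. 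Choosing $\gamma$ so that $1 + K\mr \le \gamma$ (more precisely, so that the incremented bound still fits under $\gamma^n$) closes the step; the base case $n = 0$ gives $\card{\cszon[0]{\ralph}} = 1 = \gamma^0$, noting $\hole$ is the unique context of size $0$.

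Alternatively — and perhaps more transparently — I would observe that the generating function $C(z) = \sum_n a_n z^n$ of \affine{contexts} is algebraic: $C(z)$ satisfies a polynomial system coupling it with the generating function $T(z)$ of $0$-contexts (ordinary $\ralph$-trees), because an \affine{context} is a tree with at most one subtree replaced by a hole, giving $C(z) = 1 + z\sum_{a} C(z)\,T(z)^{\ralph(a)-1} \cdot \mathbf{1}[\ralph(a)\ge 1] + (\text{tree part})$ and $T(z) = z\sum_a T(z)^{\ralph(a)}$. Both $C$ and $T$ have finite, positive radius of convergence $\rho$ (finite because the languages are infinite under $\mr \ge 1$, positive because the coefficient counts are at most those of full $\mr$-ary trees, which grow only exponentially), so for any $\gamma > 1/\rho$ the coefficients $a_n$, and hence the partial sums $\card{\cszon{\ralph}}$, are $O(\gamma^n)$, and a constant-factor adjustment (or enlarging $\gamma$) absorbs the implied constant to get the clean inequality for all $n \ge 0$.

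The main obstacle is purely bookkeeping: correctly handling the "at most one hole" condition in the recursion — i.e.\ separating the tree case $\hn{U} = 0$ from the genuine $1$-context case $\hn{U} = 1$ — and making sure the convolution overcount (which picks one distinguished argument slot to carry the hole) is a genuine upper bound rather than leaving some contexts uncounted. There is no deep difficulty here; once the recurrence is stated correctly, the choice of $\gamma$ is a routine optimization, and I would present only enough of it to make the constant's existence evident.
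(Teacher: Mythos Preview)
Your generating-function route (the ``alternatively'' paragraph) is sound and would deliver the lemma: the algebraic systems for $T(z)$ and $C(z)$ have positive radius of convergence, so their coefficients grow at most exponentially, and one can absorb the implied constant into $\gamma$. The direct-induction route, however, has a gap that is not mere bookkeeping. Under the hypothesis $a_k\le\gamma^k$, the convolution $\sum_{n_1+\cdots+n_{\mr}=n-1}\prod_j a_{n_j}$ is not ``something like $\gamma^{n-1}$'' but $\binom{n+\mr-2}{\mr-1}\,\gamma^{n-1}$: the number of weak compositions of $n-1$ into $\mr$ nonnegative parts contributes a genuine factor $\Theta(n^{\mr-1})$, and no choice of $\gamma$ absorbs a polynomial into the single extra factor $\gamma$ that the step from $\gamma^{n-1}$ to $\gamma^n$ affords. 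To rescue the induction you would have to change its shape (e.g.\ prove $a_n\le c\,\gamma_0^n$ with $\gamma_0$ strictly inside the radius of convergence and $c$ fixed separately), which effectively re-imports the analytic argument you listed as the alternative.

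The paper sidesteps all of this with a two-line encoding trick. It injects affine contexts into words over $A=\{\grave a,\acute a\mid a\in\dom(\ralph)\}\cup\{\square\}$ via $\bigl(a(U_1,\ldots,U_k)\bigr)^\dagger=\grave a\,U_1^\dagger\cdots U_k^\dagger\,\acute a$ and $\hole^\dagger=\square$; this is injective, and an affine context of size at most $n$ maps to a word of length at most $2n+1$, so $\card{\cszon{\ralph}}\le\sum_{i\le 2n+1}\card{A}^i\le(\card{A}+1)^{3n}$, giving $\gamma=(\card{A}+1)^3$. No recurrence, no generating functions, and an explicit constant. Your analytic approach would only pay dividends if one needed the sharp exponential rate, which this lemma does not.
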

\begin{proof}
Let \(A\) be the set of symbols:
  \(
\{ \grave{a}, \acute{a} \mid a \in \dom(\ralph) \} \cup \{\, \square \}
  \).
  Intuitively \( \grave{a} \) and \( \acute{a} \) are opening and closing tags of an XML-like language.

  We can transform an affine context \(U\) to its XML-like string representation \(U^\dagger\in A^*\) by:
  \begin{align*}
    (a(U_1, \dots, U_{\ralph(a)}))^\dagger &\defeq \grave{a}\, U_1^\dagger \cdots U_{\ralph(a)}^\dagger \,\acute{a} \\
    \hole^\dagger &\defeq \square.
  \end{align*}
Obviously, \((\cdot)^\dagger\) is injective. Furthermore, \(|U^\dagger| = 2|U|\) if \(U\) is a
\(0\)-context (i.e., a tree), and \(|U^\dagger| = 2|U|+1\) if \(U\) is a \linear{context}
(note that the size of the hole \(\hole\) is zero, but its word representation is of length \( 1 \)). 
Thus, for \(n>0\), we have
\[
  \card{\cszon{\ralph}} \le 
  \sum_{i = 0}^{2n+1} (\card{A})^i 
\le 
  \sum_{i = 0}^{3n} (\card{A})^i 
\le (\card{A} + 1)^{3n} = ((\card{A}+1)^3)^n.
\]
If \(n=0\), then   \(\card{\cszon{\ralph}}=1=((\card{A}+1)^3)^n\),
as \(\cszon{\ralph}\) is the singleton set \(\set{\hole}\). 
Thus, the required result holds for \(\gamma=(\card{A}+1)^3\).
\end{proof}

The following lemma is a variant of 
Theorem~\ref{thm:imt}, specialized to a canonical grammar.
\begin{lem}
\label{lem:imt}
Let $\grm = (\ralph, \nont, \rules)$ be a canonical, unambiguous, and essentially strongly-connected regular tree grammar
such that \(\card{\inhav{\grm}}=\infty\),
and \((S_n)_{n \in \nat}\) be a family of \linear{contexts} in \(\ctxinf{\grm}\)
such that $|S_n| = \order(n)$.
Then there exists a real constant \(\cn > 0\) such that 
for any \(N \in \nontinf\), 
\[
 \limd_{n \rightarrow \infty} \frac{\card{\{ \tr \in \inhavn{\grm,N} \mid
 S_{\ceil{\cn \log n}} \subc \tr
 \}}}{\card{\inhavn{\grm,N}}} = 1.
\]
\end{lem}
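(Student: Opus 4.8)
The plan is to assemble the machinery of Sections~\ref{sec:decomposition}--\ref{sec:adjusting}: the bijection of Lemma~\ref{lem:grm-bijection}, the lower bound on the number of extracted affine contexts (Lemma~\ref{lemma:partition}), the ``every component contains the subcontext of interest'' property (Lemma~\ref{lemma:adjust}), and the crude counting bound of Lemma~\ref{lem:upBound}. Fix the integers \(b, c \ge 1\) provided by Lemma~\ref{lemma:adjust} for the family \((S_n)_n\), the real \(\gamma > 1\) provided by Lemma~\ref{lem:upBound}, and put \(r := \max(\image{\ralph})\) (so \(r \ge 1\), since \(\card{\inhav{\grm}} = \infty\)). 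Set
\[
p := \frac{1}{2\,r\,c\,\log_2 \gamma};
\]
any positive real below \(1/(rc\log_2\gamma)\) would do. For \(n\) large enough, \(m := \ceil{p \log n}\) satisfies \(m \ge b\) and \(cm \le c(p\log n + 1) \le n\); all estimates below are for such \(n\). Since \(N \in \nontinf\), there are infinitely many such \(n\) with \(\inhavn{\grm,N} \neq \emptyset\), which is what \(\limd\) requires.

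First I would set up the counting inequality. By Lemma~\ref{lem:grm-bijection} applied with decomposition parameter \(cm\),
\[
\inhavn{\grm,N} \;\cong\; \coprod_{\gctx \in \tframeNM{n}{cm}(\grm,N)} \;\prod_{i=1}^{\shn(\gctx)} \tcompCM{\sh{\gctx}{i}}{cm}, \qquad (\gctx, P) \longmapsto \gctx\Hfill{P}.
\]
For any \(\gctx\) and any \(P \COL \gctx\), each \(\prj{P}{i}\) occurs as a subterm of \(\gctx\Hfill{P}\) and is easily seen to be a subcontext of it; hence, using transitivity of \(\subc\), if \(\gctx\Hfill{P}\) does not contain \(S_m\) as a subcontext then no \(\prj{P}{i}\) does. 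By Lemma~\ref{lemma:adjust} each factor \(\tcompCM{\sh{\gctx}{i}}{cm}\) is non-empty and contains an element having \(S_m\) as a subcontext, so the number of \(P \in \prod_i \tcompCM{\sh{\gctx}{i}}{cm}\) with \(S_m \not\subc \gctx\Hfill{P}\) is at most \(\prod_{i=1}^{\shn(\gctx)}(\card{\tcompCM{\sh{\gctx}{i}}{cm}} - 1)\). Summing over \(\gctx\) and dividing by \(\card{\inhavn{\grm,N}} = \sum_\gctx \prod_i \card{\tcompCM{\sh{\gctx}{i}}{cm}}\) (a quotient of sums of nonnegative terms with positive denominators, hence at most the maximum of the term-wise quotients) yields
\[
1 - \frac{\card{\{T \in \inhavn{\grm,N} \mid S_{m} \subc T\}}}{\card{\inhavn{\grm,N}}} \;\le\; \max_{\gctx \in \tframeNM{n}{cm}(\grm,N)} \;\prod_{i=1}^{\shn(\gctx)}\left(1 - \frac{1}{\card{\tcompCM{\sh{\gctx}{i}}{cm}}}\right).
\]

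Next I would make the right-hand side uniform in \(\gctx\) and show it tends to \(0\). An affine context good for \(cm\) has size at most \(1 + r(cm-1) \le rcm\), so by Lemma~\ref{lem:upBound} every factor satisfies \(1 \le \card{\tcompCM{\sh{\gctx}{i}}{cm}} \le \gamma^{rcm}\); and by Lemma~\ref{lemma:partition} (with parameter \(cm\)) every \(\gctx \in \tframeNM{n}{cm}(\grm,N)\) has \(\shn(\gctx) \ge n/(2rcm)\). Hence the displayed maximum is at most
\[
\bigl(1 - \gamma^{-rcm}\bigr)^{\,n/(2rcm)} \;\le\; \exp\!\left(-\,\gamma^{-rcm}\,\frac{n}{2rcm}\right).
\]
Since \(cm \le cp\log n + c\), we have \(\gamma^{-rcm} \ge \gamma^{-rc}\,\gamma^{-rcp\log n} = \gamma^{-rc}\,n^{-rcp\log_2\gamma} = \gamma^{-rc}\,n^{-1/2}\) by the choice of \(p\), while \(n/(2rcm) \ge n/\bigl(2rc(p\log n+1)\bigr)\); so the exponent above is at most \(-\tfrac{\gamma^{-rc}}{2rc}\cdot\tfrac{\sqrt{n}}{p\log n + 1}\), which tends to \(-\infty\). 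Therefore the left-hand side tends to \(0\), i.e. \(\limd_{n\to\infty}\card{\{T\in\inhavn{\grm,N}\mid S_{\ceil{p\log n}}\subc T\}}/\card{\inhavn{\grm,N}} = 1\), which proves the lemma.

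The one genuinely delicate point is the calibration of \(p\): it fixes the decomposition parameter \(cm = \Theta(\log n)\), hence simultaneously the per-trial failure probability \(1 - 1/\card{\tcompCM{\cdot}{cm}} = 1 - \Theta(\gamma^{-rcm})\) and the number of trials \(\shn(\gctx) = \Omega(n/\log n)\). We need \(p\) small enough that \(\gamma^{-rcm}\) decays only like \(n^{-\epsilon}\) with \(\epsilon < 1\), so that raising \(1 - n^{-\epsilon}\) to the power \(\Omega(n/\log n)\) still drives the product to \(0\); taking \(p = 1/(2rc\log_2\gamma)\) gives \(\epsilon = 1/2\). Everything else is a routine assembly of the lemmas already established, with no new combinatorial content.
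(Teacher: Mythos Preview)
Your proof is correct and follows essentially the same route as the paper's: both use the decomposition bijection (Lemma~\ref{lem:grm-bijection}), the lower bound on \(\shn(\gctx)\) (Lemma~\ref{lemma:partition}), the existence of a ``good'' element in each component (Lemma~\ref{lemma:adjust}), and the cardinality bound of Lemma~\ref{lem:upBound}, then choose \(p\) so that \(\gamma^{rcm}\) grows like \(n^{\epsilon}\) with \(\epsilon<1\). The only cosmetic difference is that you finish the convergence via the clean inequality \(1-x\le e^{-x}\), whereas the paper writes the bound as \((1-n^{-q})^{n/\log n}\) and invokes its convergence to \(0\) for \(q<1\).
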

\begin{proof}
  \nk{The previous proof is found in ``oldimdproof.tex''.}
    The overall structure of the proof is the same as that of Proposition~\ref{prop:monkey}.
  Let \(Z_n\) be
  \[
1- \frac{\card{\{ \tr \in \inhavn{\grm,N} \mid
 S_{\ceil{\cn \log n}} \subc \tr
 \}}}{\card{\inhavn{\grm,N}}}
=
\frac{\card{\{ \tr \in \inhavn{\grm,N} \mid
 S_{\ceil{\cn \log n}} \not\subc \tr
 \}}}{\card{\inhavn{\grm,N}}}.
  \]
It suffices to show that \(Z_n\) converges to \(0\).  

By Lemma~\ref{lem:grm-bijection},
we have
\[ \inhavn{\grm,N}\cong
\coprod_{\gctx \in \tframeNM{n}{m}(\grm, N)}
\prod_{i=1}^{\shn(\gctx)} \tcompCM{\sh{\gctx}{i}}{m}
  \]
  for any \(m>0\).
Thus, we have
\begin{equation}
Z_n \leq
\frac{\sum_{\gctx \in \tframeNM{n}{m}(\grm, N)}
  \prod_{i=1}^{\shn(\gctx)}
\card{\set{U\in \tcompCM{\sh{\gctx}{i}}{m}\mid
S_{\ceil{\cn \log n}}    \not\subc U}}
}
     {\sum_{\gctx \in \tframeNM{n}{m(n)}(\grm, N)} \prod_{i=1}^{\shn(\gctx)} \card{\tcompCM{\sh{\gctx}{i}}{m}}}. \label{zn}
\end{equation}
     Let \(b, c\geq 1\) be the numbers in Lemma~\ref{lemma:adjust}.
     Then, by the lemma, for any \(n\) \new{and \(p\)} such that \(\ceil{\cn \log n}\geq b\),
     each \(\tcompCM{\sh{\gctx}{i}}{c\ceil{\cn \log n}}\) contains at least one \(U\)
     that has \(S_{\ceil{\cn \log n}}\) as a subcontext.
     Thus, for \(m=c\ceil{\cn\log n}\), 
 \(    \card{\set{U\in \tcompCM{\sh{\gctx}{i}}{m}\mid
     S_{\ceil{\cn \log n}}    \not\subc U}}\) is bounded above  by:
   \[
   \begin{array}{ll}
  \card{\tcompCM{\sh{\gctx}{i}}{m}}-1 
  =   (1-\displaystyle\frac{1}{\card{\tcompCM{\sh{\gctx}{i}}{m}}})\card{\tcompCM{\sh{\gctx}{i}}{m}}
  \leq
   (1-\displaystyle\frac{1}{\gamma^{r m}})\card{\tcompCM{\sh{\gctx}{i}}{m}}.
   \end{array}
   \]
   Here \(\gamma\) is the constant (that only depends on \(\Sigma\))
   of Lemma~\ref{lem:upBound}, and \(r\) is the largest arity of \(\Sigma\). In
   the last inequality, we have used the fact that 
   \(\tcompCM{\sh{\gctx}{i}}{m}\subseteq \cszon[\mr m]{\ralph}\).

   By using the upper-bound above, Equation \eqref{zn}, and \(\shn(\gctx)\geq\frac{n}{2rm}\)
   (Lemma~\ref{lemma:partition}),
    we have:
   \[
   \begin{array}{l}
Z_n \leq
\dfrac{\sum_{\gctx \in \tframeNM{n}{m}(\grm, N)}
  \prod_{i=1}^{\shn(\gctx)}
(1-\displaystyle\frac{1}{\gamma^{r m}})\card{\tcompCM{\sh{\gctx}{i}}{m}}}
     {\sum_{\gctx \in \tframeNM{n}{m(n)}(\grm, N)} \prod_{i=1}^{\shn(\gctx)} \card{\tcompCM{\sh{\gctx}{i}}{m}}} \\
= \dfrac{\sum_{\gctx \in \tframeNM{n}{m}(\grm, N)}
(1-\displaystyle\frac{1}{\gamma^{r m}})^{\shn(\gctx)}  \prod_{i=1}^{\shn(\gctx)}
\card{\tcompCM{\sh{\gctx}{i}}{m}}}
     {\sum_{\gctx \in \tframeNM{n}{m(n)}(\grm, N)} \prod_{i=1}^{\shn(\gctx)} \card{\tcompCM{\sh{\gctx}{i}}{m}}} \\
\leq \dfrac{\sum_{\gctx \in \tframeNM{n}{m}(\grm, N)}
(1-\displaystyle\frac{1}{\gamma^{r m}})^{\frac{n}{2rm}} \prod_{i=1}^{\shn(\gctx)}
\card{\tcompCM{\sh{\gctx}{i}}{m}}}
     {\sum_{\gctx \in \tframeNM{n}{m(n)}(\grm, N)} \prod_{i=1}^{\shn(\gctx)} \card{\tcompCM{\sh{\gctx}{i}}{m}}}
= 
(1-\dfrac{1}{\gamma^{r m}})^{\frac{n}{2rm}}
   \end{array}
   \]
   for any \(n\) \new{and \(p\)} such that \(\ceil{\cn \log n}\geq b\) and \(m=c\ceil{\cn\log n}\).
   
   It remains to choose \(\cn\) so that
   \((1-\dfrac{1}{\gamma^{r m}})^{\frac{n}{2rm}} =
   (1-\dfrac{1}{\gamma^{r c\ceil{\cn\log n}}})^{\frac{n}{2rc\ceil{\cn\log n}}}\)
   converges to \(0\).
Let us choose positive real numbers \(a\), \(p\), and \(q\) so that 
\(p\) and \(q\) satisfy the following conditions for every \(n \ge a\):
\begin{align}
  {\cn \log n} &\ge b
\label{eq:foradjust}
\\
\gamma^{\mr c\ceil{\cn \log n}} &\le n^q
\label{eq:toconv}
\\
q &< 1.
\label{eq:forconv}
\end{align}
For example,
we can choose \(a,\, p,\, q\) as follows:
\begin{align*}&
a = \max\set{\gamma^{b({\mr c+2})}, \gamma^{\mr c({\mr c+2})}}
\qquad
\cn = \fr{1}{(\mr c+2)\log\gamma}
\qquad
q = \fr{\mr c+1}{\mr c+2}
\end{align*}
In fact, condition \eqref{eq:toconv} follows from:
\[
\begin{array}{l}
\mr c\ceil{\cn \log n} - \log_\gamma n^q
\leq
\mr c(\fr{1}{(\mr c+2)\log\gamma} \log n+1) - \fr{\mr c+1}{\mr c+2}\frac{\log n}{\log{\gamma}}\\
= \frac{1}
{(rc+2)\log\gamma}
(rc\log n+ rc(rc+2)\log \gamma - (rc+1)\log n)\\
= \frac{1}
{(rc+2)\log\gamma}
(rc(rc+2)\log \gamma - \log n)\\
\leq \frac{1}
{(rc+2)\log\gamma}
(rc(rc+2)\log \gamma - \log a)\leq 0.\\
\end{array}
\]

Thus, for \(n\geq a\), we have:
\[
(1-\dfrac{1}{\gamma^{r c\ceil{\cn\log n}}})^{\frac{n}{2rc\ceil{\cn\log n}}}
\leq
(1-\dfrac{1}{n^q})^{\frac{n}{2rc\ceil{\cn\log n}}}
\leq 
\left(\left(1-\fr{1}{n^{q}}\right)^{\fr{n}{\log n}}\right)^{\fr{1}{3\mr c p}}.
\]
Since \(\lim_{n \rightarrow \infty}  \left(1-\fr{1}{n^{q}}\right)^{\fr{n}{\log n}} = 0\),
we have \(\limd\limits_{n\rightarrow\infty} \zps{n} =0\) as required.
\end{proof}
\begin{rem}
\label{rem:convergence}
In the proof above, we used
the fact that if \(0<q<1\) then
\[
\lim_{n\tends \infty}\left(1-\fr{1}{n^q}\right)^{\fr{n}{\log n}} = 0.
\]
We also remark that if \(q\ge 1\) then
\[
\lim_{n\tends \infty}\left(1-\fr{1}{n^q}\right)^{\fr{n}{\log n}} = 1.
\]
Thus, \(p\) should be chosen to be sufficiently small so that Equation (\ref{eq:toconv}) in the proof
holds for some \(q<1\).
\end{rem}

We are now ready to prove Theorem~\ref{thm:imt}.
We restate the theorem.
\renewcommand{\thethmNN}{\ref{thm:imt}}
\begin{thmNN}[\thmstatementIMTcap]
\thmstatementIMT
\end{thmNN}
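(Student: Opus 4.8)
## Proof Proposal

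The plan is to reduce Theorem~\ref{thm:imt} to Lemma~\ref{lem:imt}, which is exactly the same statement but with the additional hypothesis that $\grm$ is in canonical form. The reduction uses the normalization to canonical form provided by Lemma~\ref{lem:canonical-grammar}.

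First I would invoke Lemma~\ref{lem:canonical-grammar} on the given unambiguous, strongly-connected grammar $\grm = (\ralph, \nont, \rules)$ with $\card{\inhav{\grm}} = \infty$. This produces a canonical, unambiguous, essentially strongly-connected grammar $\grm' = (\ralph, \nont', \rules')$ together with a family $(\nset_N)_{N \in \nont}$ of subsets $\nset_N \subseteq \nont'$ such that $\inhav{\grm,N} = \biguplus_{N' \in \nset_N} \inhav{\grm',N'}$ for every $N \in \nont$, and such that $\ctx{\grm} \subseteq \ctxinf{\grm'}$ (the last inclusion is available because $\card{\inhav{\grm}} = \infty$). In particular, since each $S_n \in \ctx{\grm}$, we get $S_n \in \ctxinf{\grm'}$, so the family $(S_n)_{n \in \nat}$ is a legitimate input to Lemma~\ref{lem:imt} for the grammar $\grm'$. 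Also $\card{\inhav{\grm'}} = \infty$, since $\inhav{\grm,N}$ is infinite for some $N$ and hence $\inhav{\grm',N'}$ is infinite for some $N' \in \nset_N$.

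Next I would observe that each $N' \in \nset_N$ lies in $\nontinf$ of $\grm'$: indeed if some $N' \in \nset_N$ had $\inhav{\grm',N'}$ finite, this would not be an obstruction by itself, so a small argument is needed here — but note that $\inhavn{\grm,N} = \biguplus_{N' \in \nset_N}\inhavn{\grm',N'}$ restricts to finite sizes, and the trees of size $n$ contained in $S_{\ceil{\cn\log n}}$ decompose compatibly. Concretely, for any fixed $\cn > 0$ and any $N \in \nont$,
\[
\frac{\card{\{T \in \inhavn{\grm,N} \mid S_{\ceil{\cn\log n}} \subc T\}}}{\card{\inhavn{\grm,N}}}
= \frac{\sum_{N' \in \nset_N}\card{\{T \in \inhavn{\grm',N'} \mid S_{\ceil{\cn\log n}} \subc T\}}}{\sum_{N' \in \nset_N}\card{\inhavn{\grm',N'}}}.
\]
Apply Lemma~\ref{lem:imt} to $\grm'$ with the family $(S_n)_n$, obtaining a constant $\cn > 0$ such that for every $M \in \nontinf$ of $\grm'$ the ratio $\card{\{T \in \inhavn{\grm',M} \mid S_{\ceil{\cn\log n}} \subc T\}}/\card{\inhavn{\grm',M}}$ tends to $1$ as $n \to \infty$ (in the $\limd$ sense). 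For those $N' \in \nset_N$ with $\inhav{\grm',N'}$ finite the numerator and denominator are eventually $0$ and contribute nothing in the limit; for those with $\inhav{\grm',N'}$ infinite, i.e.\ $N' \in \nontinf$, Lemma~\ref{lem:imt} applies. Then Lemma~\ref{lem:convSum} (on $\limd$ of sums of ratios, using $0 \le b_n^{(i)} \le a_n^{(i)}$) yields that the combined ratio displayed above also tends to $1$, which is precisely the statement of Theorem~\ref{thm:imt}.

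The main obstacle I anticipate is the careful bookkeeping around the $\limd$ operator and the inessential nonterminals in $\nset_N$: one must ensure that the partial-sequence index set on which $\card{\inhavn{\grm,N}} \neq 0$ matches up with the index sets for the individual $\inhavn{\grm',N'}$, and that Lemma~\ref{lem:convSum} can legitimately be applied across this union. If $\nset_N$ contains a nonterminal generating an infinite language this is handled by Lemma~\ref{lem:imt}; the finite ones vanish. A secondary technical point is to double-check that $\ctx{\grm} \subseteq \ctxinf{\grm'}$ indeed guarantees each $S_n$ is a context \emph{between two nonterminals of $\grm'$ with infinite language}, which is exactly what $\ctxinf{\grm'}$ means and what Lemma~\ref{lem:imt} requires; this is immediate from the definitions but worth stating explicitly.
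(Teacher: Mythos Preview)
Your proposal is correct and follows essentially the same route as the paper: reduce to the canonical case via Lemma~\ref{lem:canonical-grammar}, apply Lemma~\ref{lem:imt} to the nonterminals of $\grm'$ with infinite language, and combine the resulting ratios with Lemma~\ref{lem:convSum}, noting that the finitely-generating nonterminals in $\nset_N$ contribute nothing for large $n$. The paper makes this last step slightly more explicit by passing to $\nset'_N \defeq \nset_N \cap \nontinfp{\nont'}$ (and observing $\nset'_N \neq \emptyset$ since strong connectivity of $\grm$ forces every $\inhav{\grm,N}$ to be infinite), but this is exactly the bookkeeping you flagged as the main obstacle.
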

\begin{proof}
By Lemma~\ref{lem:canonical-grammar}, there exists a canonical, unambiguous and essentially
strongly-connected grammar \(\grm' = (\ralph, \nont', \rules')\) and 
a family \( ( \nset_N )_{N \in \nont} \) of subsets \( \nset_N \subseteq \nont' \)
such that
\( \inhav{\grm,N} = \biguplus_{N' \in \nset_N} \inhav{\grm',N'} \) for every \(N \in \nont\)
and \( \ctx{\grm} \subseteq \ctxinf{\grm'} \).
Let \(\nset'_N \defeq \nset_N \cap \nontinfp{\nont'} = \set{N'\in\nset_N\mid \inhav{\grm',N'}=\infty}\). 
For any \(N \in \nont\),
since \(\inhav{\grm}=\infty\) and \(\grm\) is strongly connected, 
we have \(\inhav{\grm,N}=\infty\), and hence
\(\nset'_N\neq \emptyset\). 
By Lemma~\ref{lem:imt}, there exists a real constant \(p>0\) such that, for each 
\(N'\in \nset'_N\),
\begin{align}
 \limd_{n \rightarrow \infty} \frac{\card{\{ \tr \in \inhavn{\grm',N'} \mid
 S_{\ceil{\cn \log n}} \subc \tr
 \}}}{\card{\inhavn{\grm',N'}}} = 1.
\label{eq:lim}
\end{align}
Thus, we have
\begin{align*}
&\
 \limd\limits_{n \rightarrow \infty} \frac{\card{\{ \tr \in \inhavn{\grm,N} \mid
 S_{\ceil{\cn \log n}} \subc \tr
 \}}}{\card{\inhavn{\grm,N}}}
\\ =&\
 \limd\limits_{n \rightarrow \infty} \frac{\sum_{N'\in\nset_N}\card{\{ \tr \in \inhavn{\grm',N'} \mid
 S_{\ceil{\cn \log n}} \subc \tr
 \}}}{\sum_{N'\in\nset_N}\card{\inhavn{\grm',N'}}}
\\ =&\
 \limd\limits_{n \rightarrow \infty} \frac{\sum_{N'\in\nset'_N}\card{\{ \tr \in \inhavn{\grm',N'} \mid
 S_{\ceil{\cn \log n}} \subc \tr
 \}}}{\sum_{N'\in\nset'_N}\card{\inhavn{\grm',N'}}}
\\ =&\
1
\end{align*}
as required; we have used Lemma~\ref{lem:convSum} and
Equation~\eqref{eq:lim} in the last step.
\end{proof}

  \section{Proof of the Main Theorem on \(\lambda\)-calculus}\label{sec:proofbeta}

This section proves our main theorem (Theorem~\ref{mainthm}). We first prepare
a regular tree grammar that generates the set of tree representations of 
elements of $\terms{\D}{\I}{\K}$ in Section~\ref{sec:grammar-for-lambda}, and then
apply Corollary~\ref{cor:imtTree} to obtain Theorem~\ref{mainthm},
where \((T_n)_{n \in \nat}\) in the corollary are set to (the tree representations of)
 the terms in the introduction that have long \(\beta\)-reduction sequences.

\subsection{Regular Tree Grammar $\lgrm{\D}{\I}{\K}$ of $\terms{\D}{\I}{\K}$}
\label{sec:grammar-for-lambda}

Recall that $\terms{\D}{\I}{\K}$ is the set of (\(\alpha\)-equivalence classes of)
closed well-typed terms, whose order, internal arity,
and number of variables are bounded above by \(\D\), \(\I\), and \(\K\)
(consult Definition~\ref{df:set-of-terms} for the precise definition).
The set $\terms{\D}{\I}{\K}$ can be generated by the following grammar (up to isomorphism).
\begin{defi}[Grammar of $\terms{\D}{\I}{\K}$]
Let $\D,\I,\K \geq 0$ be integers and $X_\K = \{x_1, \ldots, x_\K\}$ be a subset of $V$.
 The regular tree grammar \(\lgrm{\D}{\I}{\K}\) is defined as
\((\lralph{\D}{\I}{\K}, \lnont{\D}{\I}{\K}, \lrules{\D}{\I}{\K})\) where:
\begin{align*}
\lralph{\D}{\I}{\K} \defeq& \ \{ x \mapsto 0 \mid x \in X_\K \} \cup \{@ \mapsto
 2\}\\
  \cup & \ \{ \lambda \var^\tau \mapsto 1 \mid
       \var \in \{\uv\} \cup X_\K,\  \tau \in \types{\D-1}{\I}
       \}\\
 \lnont{\D}{\I}{\K} \defeq & \ \{\NT{\Gamma}{\tau} \mid
 \tau \in \types{\D}{\I}, \dom(\Gamma) \subseteq X_\K,\  \image{\Gamma} \subseteq \types{\D-1}{\I},
 \\
 & \ \qquad\qquad\termsg{\Gamma}{\tau}{\D}{\I}{\K}\neq\emptyset\} \\
 \lrules{\D}{\I}{\K} \defeq & \ \{\NT{\{x_i:\tau\}}{\tau} \rew x_i \}
 \cup \{ \NT{\Gamma}{\sigma \ft \tau} \rew \lambda \uv^{\sigma}
 (\NT{\Gamma}{\tau}) \}\\
   \cup & \  \{ \NT{\Gamma}{\sigma \ft \tau} \rew \lambda x_i^{\sigma}
 (\NT{\Gamma\cup\set{x_i:\sigma}}{\tau}) \mid i = \min\{ j \mid x_j\in X_\K\setminus \dom(\Gamma)\}\}
\\ \cup & \
   \{\NT{\Gamma}{\tau} \rew @(\NT{\Gamma_1}{\sigma \ft \tau},
 \NT{\Gamma_2}{\sigma}) \mid \Gamma = \Gamma_1 \cup \Gamma_2\}
\end{align*}
\end{defi}
The grammar above generates the tree representations of elements of $\terms{\D}{\I}{\K}$,
where a variable \(x\), a lambda-abstraction, and an application
are represented respectively as the nullary tree constructor \(x\), unary tree constructor
\(\lambda \var\), and binary tree constructor \(@\).
The nonterminal \(\NT{\Gamma}{\tau}\) is used to generate (the tree representations of) the elements of 
\(\termsg{\Gamma}{\tau}{\D}{\I}{\K}\); the condition 
\(\termsg{\Gamma}{\tau}{\D}{\I}{\K}\neq\emptyset\) on nonterminal \(\NT{\Gamma}{\tau}\) ensures that
every nonterminal generates at least one tree.
To guarantee that the grammar generates at most one tree for each
\(\alpha\)-equivalence class \([t]_\alpha\), (i) variables are chosen from the fixed set \(X_\K\),
and (ii) in the rule for generating a \(\lambda\)-abstraction, a variable is chosen in a deterministic manner.
Note that \(\lralph{\D}{\I}{\K}\), \(\lnont{\D}{\I}{\K}\) and \(\lrules{\D}{\I}{\K}\) are finite.
The finiteness of \(\lnont{\D}{\I}{\K}\) follows from that of 
\(X_\K\), \(\types{\D-1}{\I}\), and 
\(\set{\Gamma \mid \dom(\Gamma) \subseteq X_\K,\  \image{\Gamma} \subseteq
\types{\D-1}{\I}}\). The finiteness of \(\lrules{\D}{\I}{\K}\) also follows
immediately from that of \(\lnont{\D}{\I}{\K}\).

\begin{exa}\label{exam:lambdaGrammar}
Let us consider the case where $\D = \I = \K = 1$.
The grammar $\lgrm{1}{1}{1}$ consists of the following components. 
\begin{align*}
 \lralph{1}{1}{1} &=  \{ x_1, @, \lambda x_1^\T, \lambda \uv^\T \}
 \qquad
 \lnont{1}{1}{1} = \{ \NT{\ee}{\T \ft \T}, \NT{\{x_1: \T\}}{\T},
 \NT{\{x_1: \T\}}{\T \ft \T} \}\\
 \lrules{1}{1}{1} &=  
 \begin{cases}
  \NT{\ee}{\T \ft \T} \rew \lambda x_1^\T(\NT{\{x_1:\T\}}{\T})\\
  \NT{\{x_1: \T \}}{\T} \rew 
x_1 \mid @(\NT{\{x_1: \T\}}{{\T \ft \T}},\NT{\{x_1: \T\}}{\T}) \mid
  @(\NT{\ee}{\T \ft \T},\NT{\{x_1: \T\}}{\T})
\\
  \NT{\{x_1: \T\}}{\T \ft \T} \rew \lambda \uv^\T(\NT{\{x_1: \T \}}{\T}).
  \end{cases}
\end{align*}
\end{exa}

There is an obvious embedding $\embp{\D}{\I}{\K}$ ($\emb$ for short) from trees in
$\TR{\lralph{\D}{\I}{\K}}$ into (not necessarily well-typed) $\lambda$-terms.
For $\NT{\Gamma}{\tau} \in \lnont{\D}{\I}{\K}$ we define
\[
 \tecp{\D}{\I}{\K}{\Gamma}{\tau} \defeq [-]_\alpha \circ \emb:
 \inhav{\NT{\Gamma}{\tau}} \fa \termsg{\Gamma}{\tau}{\D}{\I}{\K}
\]
where $[-]_\alpha$ maps a term to its $\alpha$-equivalence class.
We sometimes omit the superscript and/or the subscript and may write just \(\pi\) for 
\(\tecp{\D}{\I}{\K}{\Gamma}{\tau}\). 

The following lemma says that
$\lgrm{\D}{\I}{\K}$ gives a complete representation system of the \(\alpha\)-equivalence classes.

\begin{lem}\label{prop:cong}
For $\D,\I,\K \geq 0$, 
$\ \tecp{\D}{\I}{\K}{\Gamma}{\tau}:
 \inhav{\lgrm{\D}{\I}{\K},\NT{\Gamma}{\tau}} \fa \termsg{\Gamma}{\tau}{\D}{\I}{\K}$ is a size-preserving bijection.
\end{lem}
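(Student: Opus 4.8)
The plan is to establish the three required properties of $\pi = \tecp{\D}{\I}{\K}{\Gamma}{\tau}$ — that it is well-defined (its image lies in $\termsg{\Gamma}{\tau}{\D}{\I}{\K}$), size-preserving, injective, and surjective — essentially by an induction that follows the structure of leftmost rewriting sequences of $\lgrm{\D}{\I}{\K}$ on one side and the structure of typing derivations on the other. First I would record the routine observation that the embedding $\emb$ is size-preserving in the sense that $|\emb(T)| = |T|$ for every $T \in \TR{\lralph{\D}{\I}{\K}}$: this is immediate by induction on $T$, since a nullary node $x$, a unary node $\lambda\var^\sigma$, and a binary node $@$ map to a variable, an abstraction, and an application, whose sizes obey exactly the same recurrences. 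Hence $\pi$ is size-preserving once it is shown to be a bijection, because $\pi = [-]_\alpha\circ\emb$ and $|{\cdot}|$ is well-defined on $\alpha$-equivalence classes.

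For well-definedness of $\pi$, I would prove by induction on the derivation of $N_{\typing{\Gamma}{\tau}} \rew^* T$ (equivalently on $|T|$) that $T \in \inhav{\lgrm{\D}{\I}{\K}, N_{\typing{\Gamma}{\tau}}}$ implies $\Gamma \p \emb(T) : \tau$ together with the bounds $\tord{\Gamma}{\emb(T)}{\tau} \le \D$, $\tar{\Gamma}{\emb(T)}{\tau} \le \I$, and $\numofvars{(\emb(T))} \le \K$. Each production of $\lrules{\D}{\I}{\K}$ matches exactly one typing rule: the rule $N_{\typing{\{x_i:\tau\}}{\tau}} \rew x_i$ matches the variable axiom; the two $\lambda$-rules match the abstraction rule (the $\lambda\uv$ variant for the case $\Gamma'=\Gamma$, the $\lambda x_i$ variant for $\Gamma'=\Gamma\cup\set{x_i:\sigma}$, noting that $x_i\notin\dom(\Gamma)$ by the side condition); and the $@$-rule matches the application rule. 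The constraints on nonterminals ($\tau\in\types{\D}{\I}$, $\image{\Gamma}\subseteq\types{\D-1}{\I}$) guarantee every type appearing in the derivation lies within the bounds, so the order/arity bounds on the judgment follow; and since all variables come from the fixed $\K$-element set $X_\K$, we get $\card{\V(\emb(T))}\le\K$, hence $\numofvars{(\emb(T))}\le\K$. The key subtlety here is that the grammar's nonterminal constraint $\termsg{\Gamma}{\tau}{\D}{\I}{\K}\neq\emptyset$ is precisely what makes the induction close cleanly without spurious cases.

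For injectivity, I would show that if $\pi(T) = \pi(T')$, i.e. $\emb(T) \aeq \emb(T')$, then $T = T'$. Here the point is that the grammar generates a \emph{canonical} representative of each $\alpha$-class: in the $\lambda x_i$ rule the bound variable is forced to be $x_{\min\set{j\mid x_j\in X_\K\setminus\dom(\Gamma)}}$, a deterministic choice depending only on $\Gamma$, and in the $\lambda\uv$ rule the bound variable is the unused variable $\uv$, which by definition does not occur in the body. I would argue by induction on $|T|$ (simultaneously on $T'$), using uniqueness of the typing derivation of $\Gamma\p t:\tau$ and the above determinism to show that $T$ and $T'$ must have the same head constructor with the same type annotation and the same context splitting, then conclude by the induction hypothesis on the immediate subtrees. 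For surjectivity, given $[t]_\alpha \in \termsg{\Gamma}{\tau}{\D}{\I}{\K}$ I would first $\alpha$-rename $t$ into a canonical form $t_0$ — renaming each bound variable whose binder's body actually uses it to the deterministic choice $x_i$ dictated above, and renaming each bound variable not used in its body to $\uv$ — observing that this renaming does not increase $\card{\V}$ (so the $\K$-bound is respected, witnessed by the fact that $\numofvars{(t)}\le\K$) and does not change the typing judgment or its order/arity. Then by induction on the (unique) typing derivation of $\Gamma\p t_0:\tau$ one reads off a tree $T$ with $N_{\typing{\Gamma}{\tau}}\rew^* T$ and $\emb(T) = t_0 \aeq t$, so $\pi(T) = [t]_\alpha$.

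I expect the main obstacle to be the surjectivity argument, specifically the bookkeeping in the canonicalization step: one must verify carefully that for an arbitrary well-typed term $t$ within the bounds, the $\alpha$-renaming to the grammar's canonical naming scheme can always be carried out while keeping the number of used variables at most $\K$ — this is exactly where the definition $\numofvars{(t)} = \min_{t'\in[t]_\alpha}\card{\V(t')}$ (rather than $\card{\V(t)}$ itself) is essential, since the minimizing representative need not be in canonical form but can be further renamed to canonical form without introducing new names beyond relabeling. A clean way to handle this is to prove a small auxiliary lemma: any $t$ with $\numofvars{(t)}\le\K$ is $\alpha$-equivalent to some $t_0$ in the grammar's canonical form with $\card{\V(t_0)}\le\K$, proved by induction on $t$ while threading through an injection from the currently-used bound variables into $X_\K$. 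Everything else is a routine structural induction matching grammar productions to typing rules one-for-one.
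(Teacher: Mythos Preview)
Your proposal is correct and follows essentially the same approach as the paper: well-definedness and size-preservation are routine inductions matching grammar productions to typing rules; injectivity is an induction exploiting the deterministic choice of bound-variable name (the paper phrases this via the auxiliary fact $\fv(\emb(T))=\dom(\Gamma)$, which is what makes your appeal to ``uniqueness of the typing derivation'' go through in the $\lambda\uv$-versus-$\lambda x_i$ and the $@$-splitting cases); and surjectivity is by constructing a canonical representative.

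The one organizational difference is in surjectivity. Where you propose a two-phase argument (first $\alpha$-canonicalize $t$ into the grammar's naming scheme via an auxiliary lemma, then read off the tree), the paper collapses both phases into a single recursive function $\rho_{\typing{\Gamma}{\tau}}$ that maps an arbitrary representative $t$ directly to a tree, using variable \emph{swaps} $\swap{t}{x_i}{x}$ (not substitutions) at each used-$\lambda$ step. Your flagged concern about the $\K$-bound is exactly the only real subtlety here; the paper's swap-based definition handles it implicitly (swapping preserves $\numofvars$, and at a used binder one has $|\dom(\Gamma)|=|\fv(\lambda x.t)|<|\fv(t)|\le\numofvars(t)\le\K$, so the minimum fresh index exists), though the paper leaves this verification to the reader. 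Your proposed auxiliary lemma with a threaded injection into $X_\K$ achieves the same thing more explicitly; the swap formulation is slightly slicker because it sidesteps capture issues and makes $\numofvars$-preservation immediate.
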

\begin{proof}
It is trivial that the image of \(\tec_{\typing{\Gamma}{\tau}}\) is contained
 in \(\termsg{\Gamma}{\tau}{\D}{\I}{\K}\) and \(\tec\) preserves the size.

The injectivity, 
\ie \(\emb(\tr) \aeq \emb(\tr')\) implies \(\tr=\tr'\) for \(\tr, \tr' \in \inhav{\NT{\Gamma}{\tau}}\),
is shown by induction on 
the length of the leftmost rewriting
sequence \(\NT{\Gamma}{\tau} \rew^* T\)
and by case analysis of the rewriting rule used in the first step of the reduction sequence. 
In the case analysis below, we use the fact that if \(T\in \inhav{\NT{\Gamma}{\tau}}\), then 
\(\fv(\emb(T))=\dom(\Gamma)\), 
which can be proved by straightforward induction.
\begin{itemize}
\item Case for the rule  $\NT{\{x_i:\tau\}}{\tau} \rew x_i$: In this case, \(T=x_i\). By the assumption
\(\emb(\tr)\aeq\emb(\tr')\), \(T=T'\) follows immediately.
\item Case for the rule \(\NT{\Gamma}{\sigma \ft \tau} \rew
 \lambda \uv^{\sigma} (\NT{\Gamma}{\tau}) \): In this case, \(\tr= \lambda \uv^{\sigma} (\tr_1)\) with
\(\NT{\Gamma}{\tau}\rew^* \tr_1\). By the assumption 
\(\emb(\tr)\aeq\emb(\tr')\), \(\tr'\) must be of the form \(\lambda \var^\sigma(\tr_1')\)
where (i) \(\var\) does not occur free in \(\emb(\tr_1')\), and (ii) \(\emb(\tr_1)\aeq\emb(\tr_1')\).
Then \(\var = \uv\) and hence we have 
\(\NT{\Gamma}{\sigma \ft \tau} \rew \lambda \uv^{\sigma} (\NT{\Gamma}{\tau}) \) with \(\NT{\Gamma}{\tau}\rew^* \tr_1'\),
because, if \(\var \neq \uv\), we have 
\(\NT{\Gamma}{\sigma \ft \tau} \rew \lambda \var^{\sigma} (\NT{\Gamma\cup\set{\var:\sigma}}{\tau}) \) with
      \(\NT{\Gamma\cup\set{\var:\sigma}}{\tau}\rew^* \tr_1'\), and hence
      \(\fv(\emb(T'_1))=\dom(\Gamma\cup\set{\var:\sigma})\), which contradicts the condition (i).
Therefore, by the induction hypothesis, we have \(\tr_1=\tr_1'\), which implies \(\tr=\tr'\) as required.
\item Case for the rule:
 \( \NT{\Gamma}{\sigma \ft \tau} \rew \lambda x_i^{\sigma}
 (\NT{\Gamma\cup\set{x_i:\sigma}}{\tau})\), where \(i = \min
\set{ j \mid x_j\in X_\K\setminus \dom(\Gamma)}\).
By the assumption
\(\emb(\tr)\aeq\emb(\tr')\), \(\tr'\) must be of the form \(\lambda x^{\sigma}(\tr_1')\) where \(x\) occurs free in \(\emb(\tr_1')\),
and \([x'/x_i]\emb(\tr_1)\aeq [x'/x]\emb(\tr_1')\) for a fresh variable \(x'\).
Thus, by the definition of \(\lrules{\D}{\I}{\K}\), 
\(\tr'\) must have also been generated by the same rule, i.e., \(\tr'=\lambda x_i^{\sigma}(\tr_1')\)
with \(\NT{\Gamma\cup\set{x_i:\sigma}}{\tau}\rew^* \tr_1'\).
By the induction hypothesis, we have \(\tr_1=\tr_1'\), which also implies \(\tr=\tr'\) as required.
\item 
Case for the rule: \(\NT{\Gamma}{\tau} \rew @(\NT{\Gamma_1}{\sigma \ft \tau}, \NT{\Gamma_2}{\sigma})\)
where \(\Gamma = \Gamma_1 \cup \Gamma_2\).
Then \(\tr = @(\tr_1,\tr_2)\) with \(\NT{\Gamma_1}{\sigma\ft\tau}\rew^* \tr_1\)
and \(\NT{\Gamma_2}{\sigma}\rew^*\tr_2\). By the assumption 
\(\emb(\tr)\aeq\emb(\tr')\), \(T'\) must also be of the form \(@(\tr_1',\tr_2')\), with
\(\emb(\tr_1)\aeq \emb(\tr_1')\) and \(\emb(\tr_2)\aeq \emb(\tr_2')\).
Therefore, \(\tr'\) must also have been generated by a rule for applications;
hence 
\(\NT{\Gamma'_1}{\sigma' \ft \tau}\rew^*\tr_1'\) and \(\NT{\Gamma'_2}{\sigma'}\rew^*\tr_2'\) for some
\(\Gamma_1', \Gamma_2', \sigma'\) such that \(\Gamma = \Gamma'_1 \cup \Gamma'_2\).
By the condition \(\emb(\tr_i)\aeq\emb(\tr_i')\), 
 \(\dom(\Gamma_i)=\fv(\emb(\tr_i))=\fv(\emb(\tr_i'))=\dom(\Gamma'_i)\) for each \(i\in\set{1,2}\). Thus,
since \(\Gamma_1 \cup \Gamma_2 = \Gamma'_1\cup\Gamma'_2\),
we have \(\Gamma_i=\Gamma_i'\), which also implies \(\sigma=\sigma'\) (since the type of a simply-typed \(\lambda\)-term
is uniquely determined by the term and the type environment).
Therefore, by the induction hypothesis, we have \(\tr_i=\tr_i'\) for each \(i\in\set{1,2}\), which implies
\(\tr=\tr'\) as required.
\end{itemize}

Next we show the surjectivity, \ie for any \([t]_\alpha \in \termsg{\Gamma}{\tau}{\D}{\I}{\K}\) there exists
\(\tr \in \inhav{\NT{\Gamma}{\tau}}\) such that \(\emb(\tr) \aeq t\).
For \(\D, \I, \K\) with \(X_\K = \{x_1, \ldots, x_\K\}\) and \(\NT{\Gamma}{\tau} \in \lnont{\D}{\I}{\K}\),
we define a ``renaming'' function 
\(\rnp{\D}{\I}{\K}{\Gamma}{\tau}\) (\(\rn_{\typing{\Gamma}{\tau}}\) or \(\rn\) for short) from \(\set{t\mid [t]_\alpha \in
 \termsg{\Gamma}{\tau}{\D}{\I}{\K}}\)
 to \(\inhav{\NT{\Gamma}{\tau}}\) by induction on the size of \(t\),
so that \(\tec{}(\rn(t)) = [t]_\alpha\) holds.

\[
\newcommand\localspace{\mspace{-50mu}}
\newcommand\localspacee{\mspace{-220mu}}
\begin{array}{ll}
\rn_{\typing{\set{x:\tau}}{\tau}}(x) &\defeq x
\\
\rn_{\typing{\Gamma}{\sigma\ft\tau}}(\lambda \uv^{\sigma}\!.t) &\defeq \lambda
 \uv^\sigma(\rn_{\typing{\Gamma}{\tau}}(t))
\\
\rn_{\typing{\Gamma}{\sigma\ft\tau}}(\lambda x^{\sigma}\!.t) &\defeq \lambda
 \uv^\sigma(\rn_{\typing{\Gamma}{\tau}}(t)) 
\qquad\qquad\hfill(\text{if
 }x\notin\fv(t)) 
\\
\rn_{\typing{\Gamma}{\sigma\ft\tau}}(\lambda x^{\sigma}\!.t) &\defeq
\lambda
 x_i^\sigma(\rn_{\typing{\Gamma\cup\set{x_i:\sigma}}{\tau}}(\swap{t}{x_i}{x}))
\text{ where \(i \defeq \min\{ j \mid x_j\in X_\K\setminus
 \dom(\Gamma)\}\)}
 \\
&\hfill (\text{if }x\in\fv(t)) 
\\
\rn_{\typing{\Gamma_1\cup\Gamma_2}{\tau}}(t_1t_2) &\defeq 
@(\rn_{\typing{\Gamma_1}{\sigma\ft\tau}}(t_1),
 \rn_{\typing{\Gamma_2}{\sigma}}(t_2))
\hfill(\text{if }
 \Gamma_1 \p t_1: \sigma \ft \tau \text{ and } \Gamma_2 \p t_2: \sigma)
\end{array}
\]
Here, \(\swap{t}{x}{y}\) represents the term obtained by swapping every occurrence of \(x\) with that of \(y\);
for example, \(\swap{(\lambda x.xy)}{x}{y} = \lambda y.yx\). Note that in the last clause,
if \(\Gamma\p t_1t_2:\tau\), then there exists a unique triple \((\Gamma_1,\Gamma_2,\sigma)\) such that
\(\Gamma=\Gamma_1\cup\Gamma_2\), 
\( \Gamma_1 \p t_1: \sigma \ft \tau\), and \(\Gamma_2 \p t_2: \sigma\). 
We can prove that \(\tec{}(\rn(t)) = [t]_\alpha\) holds for every \(\lambda\)-term \(t\) such that
\([t]_\alpha\in \termsg{\Gamma}{\tau}{\D}{\I}{\K}\),
 by straightforward induction on the size of $t$.
 \end{proof}

\begin{lem}\label{lem:unamb}
For $\D,\I,\K \geq 0$, \(\lgrm{\D}{\I}{\K}\) is unambiguous.
\end{lem}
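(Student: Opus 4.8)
The plan is to show that \(\lgrm{\D}{\I}{\K}\) admits at most one leftmost rewriting sequence from any nonterminal \(\NT{\Gamma}{\tau}\) to any tree \(T \in \TR{\lralph{\D}{\I}{\K}}\). The crucial observation is that the shape of the root symbol of a target tree essentially determines which production rule must have been applied first, and that the nonterminal indices \(\typing{\Gamma}{\tau}\) are uniquely recoverable from the subtrees. So I would argue by induction on the length of a leftmost rewriting sequence (equivalently, on \(|T|\)), proving the stronger statement: for every tree \(T\), there is at most one pair \((\Gamma,\tau)\) and at most one leftmost derivation with \(\NT{\Gamma}{\tau} \rew^{*}_{\ell} T\) — actually, since we fix the starting nonterminal, I just need uniqueness of the first rule and then invoke the induction hypothesis on the immediate subtree(s).

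First I would set up the case analysis on the root symbol of \(T\). If \(T = x_i\) (a nullary variable symbol), the only applicable rule is \(\NT{\set{x_i:\tau}}{\tau} \rew x_i\); since the starting nonterminal is fixed as \(\NT{\Gamma}{\tau}\), this forces \(\Gamma = \set{x_i:\tau}\) and there is nothing left to derive, so the sequence is unique. If \(T = @(T_1,T_2)\), the only rules with \(@\) on the right-hand side are the application rules \(\NT{\Gamma}{\tau} \rew @(\NT{\Gamma_1}{\sigma \ft \tau}, \NT{\Gamma_2}{\sigma})\); leftmost rewriting then continues by fully reducing \(\NT{\Gamma_1}{\sigma \ft \tau}\) to \(T_1\) and then \(\NT{\Gamma_2}{\sigma}\) to \(T_2\). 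Here the potential obstruction is that \(\Gamma = \Gamma_1 \cup \Gamma_2\) does not by itself determine the split \((\Gamma_1,\Gamma_2)\) or the intermediate type \(\sigma\). This is resolved by Lemma~\ref{prop:cong}: from \(T_1\) we recover \(\emb(T_1)\), and \(\Gamma_1 = \fv(\emb(T_1))\mid_{\Gamma}\) (the restriction of \(\Gamma\) to the free variables of \(\emb(T_1)\)); likewise \(\Gamma_2\), and then \(\sigma\) is the unique type such that \(\Gamma_1 \p \emb(T_1):\sigma\ft\tau\). Hence \((\Gamma_1,\Gamma_2,\sigma)\) is determined, the first rule is unique, and the induction hypothesis applied to \(T_1\) (from \(\NT{\Gamma_1}{\sigma\ft\tau}\)) and to \(T_2\) (from \(\NT{\Gamma_2}{\sigma}\)) finishes this case.

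Next I would treat \(T = (\lambda \var^{\sigma})(T_1)\). If \(\var = \uv\), the only rule is \(\NT{\Gamma}{\sigma\ft\tau} \rew \lambda\uv^{\sigma}(\NT{\Gamma}{\tau})\), and uniqueness follows directly from the induction hypothesis on \(T_1\) from \(\NT{\Gamma}{\tau}\). If \(\var = x_i\) for some \(x_i \in X_\K\), the only candidate rule is \(\NT{\Gamma}{\sigma\ft\tau} \rew \lambda x_i^{\sigma}(\NT{\Gamma\cup\set{x_i:\sigma}}{\tau})\); note that the side condition \(i = \min\set{j \mid x_j \in X_\K \setminus \dom(\Gamma)}\) is automatically satisfied whenever such a rule fires from \(\NT{\Gamma}{\sigma\ft\tau}\), so there is no ambiguity in which rule is used, only possibly in the reconstructed target nonterminal \(\NT{\Gamma\cup\set{x_i:\sigma}}{\tau}\) — but that nonterminal is fully determined by \(\Gamma\), \(x_i\), \(\sigma\), and \(\tau\), all of which are fixed. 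Again the induction hypothesis on \(T_1\) completes the case. In every case the root symbol of \(T\) selects a unique production, and the indices of the child nonterminals are forced.

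**The main obstacle** is the application case: making precise that the splitting \(\Gamma = \Gamma_1 \cup \Gamma_2\) and the intermediate type \(\sigma\) are uniquely recoverable. The cleanest way is to observe that if \(\NT{\Gamma_i}{\tau_i} \rew^{*} T_i\) then (as already recorded in the proof of Lemma~\ref{prop:cong}) \(\dom(\Gamma_i) = \fv(\emb(T_i))\), so \(\Gamma_i\) is the restriction of \(\Gamma\) to \(\fv(\emb(T_i))\), and since a simply-typed term has a unique type under a given environment, \(\sigma\) is determined by \(\Gamma_1\) and \(\emb(T_1)\) (and consistency with \(\Gamma_2\), \(\emb(T_2)\)). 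With that in hand the whole argument is a routine structural induction, so I would keep the writeup short and lean on Lemma~\ref{prop:cong} and the uniqueness of typing derivations established earlier in Section~\ref{sec:mainresults}.
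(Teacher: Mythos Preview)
Your proposal is correct and follows essentially the same approach as the paper's proof: induction on the length of the leftmost rewriting sequence, case analysis on the root symbol of \(T\), and recovery of the split \((\Gamma_1,\Gamma_2,\sigma)\) in the application case via the fact \(\dom(\Gamma_i)=\fv(\emb(T_i))\) together with uniqueness of simple typing. The paper phrases the induction as a direct comparison of two leftmost sequences rather than ``the first rule is determined,'' but the content is the same.
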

\begin{proof}
The proof is similar to that of the injectivity of Lemma~\ref{prop:cong}.
We show that, for any \(\NT{\Gamma}{\tau} \in \lnont{\D}{\I}{\K}\), \(T \in \TR{\lralph{\D}{\I}{\K}}\),
and two leftmost rewriting sequences of the form
\begin{align*}&
\NT{\Gamma}{\tau} \rew T^{(1)} \rew \dots \rew T^{(n)} = T \qquad(n \ge 1)
\\&
\NT{\Gamma}{\tau} \rew T'^{(1)} \rew \dots \rew T'^{(n')} = T \qquad(n' \ge 1),
\end{align*}
\(n=n'\) and \(T^{(k)}=T'^{(k)}\) hold for \(1 \le k < n\).
The proof proceeds by induction on \(n\), with
 case analysis on the rule \(\NT{\Gamma}{\tau} \rew T^{(1)}\).
As in the proof of Lemma~\ref{prop:cong},
we use the fact that if \(T\in \inhav{\NT{\Gamma}{\tau}}\), then 
\(\fv(\emb(T))=\dom(\Gamma)\).
\begin{itemize}
\item Case for the rule  $\NT{\{x_i:\tau\}}{\tau} \rew x_i$: In this case, \(T=x_i\) and \(n=1\).
Since the root of \(T'^{(1)}\) must be \(x_i\), we have \(T'^{(1)} = T\) and \(n'=1=n\).

\item Case for the rule \(\NT{\Gamma}{\sigma \ft \tau'} \rew \lambda \uv^{\sigma} (\NT{\Gamma}{\tau'}) \):
In this case, \(n \ge 2\),
\(T^{(k)} = \lambda \uv^{\sigma}(T^{(k)}_1)\) (\(k \le n\)),
\(T= \lambda \uv^{\sigma} (T_1)\), and we have the following leftmost rewriting sequences:
\[
\NT{\Gamma}{\tau'} = T^{(1)}_1 \rew \dots \rew T^{(n)}_1 = T_1.
\]
Since the root of \(T'^{(k)}\) is the same as that of \(T\),
\(T'^{(k)} = \lambda \uv^{\sigma}(T'^{(k)}_1)\) for \(k \le n'\),
and  we have the following leftmost rewriting sequence:
\[
T'^{(1)}_1 \rew \dots \rew T'^{(n')}_1 = T_1.
\]
By the definition of rules, \(T'^{(1)}\) must be of the form \(\lambda \uv^{\sigma} (\NT{\Gamma}{\tau})\)
and so \(T'^{(1)}_1 = \NT{\Gamma}{\tau}\).
By the induction hypothesis for \(\NT{\Gamma}{\tau} \rew T^{(2)}_1\),
we have \(n=n'\) and \(T^{(k)}_1=T'^{(k)}_1\) for \(2 \le k < n\).
Thus we have also \(T^{(k)}=T'^{(k)}\) for \(1 \le k < n\), as required.

\item Case for the rule:
 \( \NT{\Gamma}{\sigma \ft \tau'} \rew \lambda x_i^{\sigma}
 (\NT{\Gamma\cup\set{x_i:\sigma}}{\tau'})\), where \(i = \min
\set{ j \mid x_j\in X_\K\setminus \dom(\Gamma)}\).
In this case, \(n \ge 2\), 
\(T^{(k)} = \lambda x_i^{\sigma}(T^{(k)}_1)\) (\(k \le n\)),
\(T= \lambda x_i^{\sigma} (T_1)\), and we have the following leftmost rewriting sequences:
\[
\NT{\Gamma\cup\set{x_i:\sigma}}{\tau'} = T^{(1)}_1 \rew \dots \rew T^{(n)}_1 = T_1.
\]
Since the root of \(T'^{(k)}\) are the same as that of \(T\),
\(T'^{(k)} = \lambda x_i^{\sigma}(T'^{(k)}_1)\) (\(k \le n'\)),
and we have the leftmost rewriting sequence:
\[
T'^{(1)}_1 \rew \dots \rew T'^{(n')}_1 = T_1.
\]
By the definition of rules, 
\(\NT{\Gamma}{\sigma \ft \tau'} \rew \lambda x_i^{\sigma} (T'^{(1)}_1)\)
implies that \(T'^{(1)}_1 = \NT{\Gamma\cup\set{x_i:\sigma}}{\tau'}\).
Hence by the induction hypothesis for \(\NT{\Gamma\cup\set{x_i:\sigma}}{\tau} \rew T^{(2)}_1\),
we have \(n=n'\) and \(T^{(k)}_1=T'^{(k)}_1\) for \(2 \le k < n\).
Thus we also have \(T^{(k)}=T'^{(k)}\) for \(1 \le k < n\).

\item 
Case for the rule: \(\NT{\Gamma}{\tau} \rew @(\NT{\Gamma_1}{\sigma \ft \tau}, \NT{\Gamma_2}{\sigma})\)
where \(\Gamma = \Gamma_1 \cup \Gamma_2\).
In this case, \(n=n_1+n_2-1\), \(n_1,n_2 \ge 2\), \(T= @(T_1,T_2)\),
\begin{align*}
T^{(k)} &= @(T^{(k)}_1, \NT{\Gamma_2}{\sigma}) && (1 \le k \le n_1)
\\
T^{(n_1-1+k)} &= @(T_1, T^{(k)}_2) && (1 \le k \le n_2),
\end{align*}
 and we have the following leftmost rewriting sequences:
\begin{align*}
\NT{\Gamma_1}{\sigma \ft \tau} &= T^{(1)}_1 \rew \dots \rew T^{(n_1)}_1 = T_1
\\
\NT{\Gamma_2}{\sigma} &= T^{(1)}_2 \rew \dots \rew T^{(n_2)}_2 = T_2.
\end{align*}
Since the root of \(T'^{(1)}\) is \(@\), 
\(T'^{(1)}\) must be of the form \(@(\NT{\Gamma'_1}{\sigma' \ft \tau}, \NT{\Gamma'_2}{\sigma'})\)
with \(\Gamma'_1\cup\Gamma'_2 = \Gamma\).
Also let us consider \(T'^{(k)}\):
we have \(n=n'_1+n'_2-1\), \(n'_1,n'_2 \ge 2\),
\begin{align*}
T'^{(k)} &= @(T'^{(k)}_1, \NT{\Gamma'_2}{\sigma'}) && (1 \le k \le n'_1)
\\
T'^{(n'_1-1+k)} &= @(T_1, T'^{(k)}_2) && (1 \le k \le n'_2),
\end{align*}
 and we have the following leftmost rewriting sequences:
\begin{align*}
\NT{\Gamma'_1}{\sigma' \ft \tau} &= T'^{(1)}_1 \rew \dots \rew T'^{(n'_1)}_1 = T_1
\\
\NT{\Gamma'_2}{\sigma'} &= T'^{(1)}_2 \rew \dots \rew T'^{(n'_2)}_2 = T_2.
\end{align*}

Since \(\dom(\Gamma_i)=\fv(\emb(T_i))=\dom(\Gamma'_i)\) for each \(i\in\set{1,2}\)
and \(\Gamma_1 \cup \Gamma_2 = \Gamma'_1\cup\Gamma'_2\),
we have \(\Gamma_i=\Gamma'_i\),
which also implies \(\sigma=\sigma'\).
Hence by the induction hypothesis for \(\NT{\Gamma_1}{\sigma \ft \tau} \rew T^{(2)}_1\)
and for \(\NT{\Gamma_2}{\sigma} \rew T^{(2)}_2\),
we have \(n_i=n'_i\) and \(T^{(k)}_i=T'^{(k)}_i\) for each \(i\in\set{1,2}\) and \(2 \le k < n_i\).
Thus we also have \(T^{(k)}=T'^{(k)}\) for \(1 \le k < n\).
\qedhere
\end{itemize}
\end{proof}

\subsection{Strong Connectivity and Aperiodicity} 

In this section, we restrict the grammar \(\lgrm{\D}{\I}{\K}\) to
$\egrm{\D}{\I}{\K}$ by removing unnecessary nonterminals, and 
 show the strong connectivity and aperiodicity of $\egrm{\D}{\I}{\K}$
 for $\D,\I, \K \geq 2$ (Lemma~\ref{prop:irr} below).
Recall that the strong connectivity and aperiodicity is required to apply Corollary~\ref{cor:imtTree} and
Remark~\ref{rem:avoidLimDef}, respectively.

We define the restricted grammar \(\egrm{\D}{\I}{\K}\) by:
\[
\begin{aligned}
\enont{\D}{\I}{\K} &\defeq \set{\N{\z} \in \lnont{\D}{\I}{\K} \mid 
\text{\(\N{\z}\) is reachable in \(\lgrm{\D}{\I}{\K}\)
from some \(\NT{\emptyset}{\sigma} \in \lnont{\D}{\I}{\K}\)}}
\\
\erules{\D}{\I}{\K} &\defeq \set{\N{\z} \rew T \in \lrules{\D}{\I}{\K} \mid 
\N{\z} \in \enont{\D}{\I}{\K}}
\\
\egrm{\D}{\I}{\K} &\defeq
(\lralph{\D}{\I}{\K}, \enont{\D}{\I}{\K}, \erules{\D}{\I}{\K}).
\end{aligned}
\]
For \(\N{\z} \in \enont{\D}{\I}{\K}\), clearly
\(\inhav{\egrm{\D}{\I}{\K},\N{\z}} = \inhav{\lgrm{\D}{\I}{\K},\N{\z}}\).
Through the bijection \(\tec\),
we can show that, for any \(\NT{\Gamma}{\tau} \in \lnont{\D}{\I}{\K}\), \(\NT{\Gamma}{\tau}\) also belongs to \(\enont{\D}{\I}{\K}\)
if and only if there exists a term in \(\terms{\D}{\I}{\K}\) whose type
derivation contains a type judgment of the form \(\Gamma \p t:
\tau\).

The strong connectivity of \(\egrm{\D}{\I}{\K}\)
follows from the following facts: (i) each \(\NT{\Gamma}{\tau}\in \enont{\D}{\I}{\K}\) is reachable from
some \(\NT{\emptyset}{\tau'}\in \enont{\D}{\I}{\K}\) (by the definition of $\egrm{\D}{\I}{\K}$ above),
(ii) each $\NT{\emptyset}{\tau}\in \enont{\D}{\I}{\K}$ is 
reachable from $\NT{\emptyset}{\T\ft\T}$ (Lemma~\ref{lem:sc1} below),
and (iii) 
$\NT{\emptyset}{\T\ft\T}$ is reachable from every \(\NT{\Gamma}{\tau}\in \enont{\D}{\I}{\K}\) 
(Lemma~\ref{lem:sc2} below).

\begin{lem}
\label{lem:sc1}
Let $\D, \I \geq 2$ and $\K \geq 1$ be integers.
Then for any nonterminal $\NT{\emptyset}{\tau} \in \enont{\D}{\I}{\K}$,
$\NT{\emptyset}{\tau}$ is reachable from $\NT{\emptyset}{\T\ft\T}$.
\end{lem}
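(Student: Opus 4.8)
The plan is to exhibit an explicit linear context $S$ over $\lralph{\D}{\I}{\K}$ with $\NT{\emptyset}{\T\ft\T}\rew^{*}S[\NT{\emptyset}{\tau}]$, and then to observe that the whole rewriting sequence stays inside $\egrm{\D}{\I}{\K}$. Since $\NT{\emptyset}{\tau}\in\enont{\D}{\I}{\K}$, the set $\termsg{\emptyset}{\tau}{\D}{\I}{\K}$ is non-empty, so $\tau$ is not the base type and I may write $\tau=\tau_1\ft\cdots\ft\tau_n\ft\T$ with $n\ge 1$. Put $\rho_k\defeq\tau_k\ft\cdots\ft\tau_n\ft\T$ for $1\le k\le n$ (so $\rho_1=\tau$) and $\rho_{n+1}\defeq\T$; then each $\tau_i$ satisfies $\ord(\tau_i)\le\ord(\tau)-1\le\D-1$ and $\ar(\tau_i)\le\ar(\tau)\le\I$, while each $\rho_k$ satisfies $\ord(\rho_k)\le\ord(\tau)\le\D$ and $\ar(\rho_k)\le\ar(\tau)\le\I$. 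The one real idea here is that the hole \emph{must not} simply be discarded: feeding a term of type $\tau$ to an abstraction $\lambda x^{\tau}.(\cdots)$ would introduce a type of order $\ord(\tau)+1$, which may exceed $\D$. Instead the hole has to be \emph{applied} to dummy arguments, so that only subtypes of $\tau$ and partial applications of $\tau$ ever occur.

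Concretely, I would take $S$ to be the context $\lambda x_1^{\T}.\,([\,]\,M_1\cdots M_n)$, where, writing $\tau_i=\sigma_{i,1}\ft\cdots\ft\sigma_{i,m_i}\ft\T$, I set $M_i\defeq\lambda\uv^{\sigma_{i,1}}\!.\cdots\lambda\uv^{\sigma_{i,m_i}}.x_1$ (so $M_i=x_1$ when $m_i=0$). A routine check gives $x_1\COL\T\p M_i:\tau_i$, that $x_1$ is the only variable occurring in $M_i$, and that every type occurring in the derivation of this judgement is a subtype of $\tau_i$ — in particular each abstraction symbol $\lambda\uv^{\sigma_{i,j}}$ is legal, since $\ord(\sigma_{i,j})\le\D-2$. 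Hence $[M_i]_\alpha\in\termsg{\{x_1\COL\T\}}{\tau_i}{\D}{\I}{\K}$, so $\NT{\{x_1\COL\T\}}{\tau_i}\in\lnont{\D}{\I}{\K}$; by the same argument $\NT{\{x_1\COL\T\}}{\rho_k}\in\lnont{\D}{\I}{\K}$ for each $k$ (with witnesses $\lambda\uv^{\tau_k}\!.\cdots\lambda\uv^{\tau_n}.x_1$, and $x_1$ when $k=n+1$), and $\NT{\emptyset}{\T\ft\T}\in\lnont{\D}{\I}{\K}$ (witness $\lambda x_1^{\T}.x_1$).

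The rewriting sequence is then assembled in the obvious way. First $\NT{\emptyset}{\T\ft\T}\rew\lambda x_1^{\T}(\NT{\{x_1\COL\T\}}{\T})$, and then I apply the application rule $\NT{\Gamma}{\rho_{k+1}}\rew @(\NT{\Gamma_1}{\rho_k},\NT{\Gamma_2}{\tau_k})$ down the spine for $k=n,n-1,\dots,1$, taking $\Gamma=\Gamma_1\cup\Gamma_2=\{x_1\COL\T\}$ with $\Gamma_2=\{x_1\COL\T\}$ throughout and $\Gamma_1=\{x_1\COL\T\}$ for $k\ge 2$, $\Gamma_1=\emptyset$ for $k=1$ (note $\NT{\emptyset}{\rho_1}=\NT{\emptyset}{\tau}$). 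This yields
\[
\NT{\emptyset}{\T\ft\T}\ \rew^{*}\ \lambda x_1^{\T}\bigl(@(\cdots@(\NT{\emptyset}{\tau},\NT{\{x_1\COL\T\}}{\tau_1})\cdots,\NT{\{x_1\COL\T\}}{\tau_n})\bigr).
\]
Rewriting each side branch $\NT{\{x_1\COL\T\}}{\tau_i}$ further to a ground tree representing $M_i$ (possible, since its language is non-empty) turns the right-hand side into $S[\NT{\emptyset}{\tau}]$ for a linear context $S$ over $\lralph{\D}{\I}{\K}$, which proves the reachability in $\lgrm{\D}{\I}{\K}$.

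Finally I would lift this to $\egrm{\D}{\I}{\K}$: the set $\enont{\D}{\I}{\K}$ is closed under passing to nonterminal successors in $\lgrm{\D}{\I}{\K}$-derivations (fill the other holes of the rule's right-hand side with ground trees, which exist by non-emptiness of languages), and $\NT{\emptyset}{\T\ft\T}\in\enont{\D}{\I}{\K}$ since it is reachable from itself; hence every nonterminal occurring in the derivation above lies in $\enont{\D}{\I}{\K}$, so the derivation is in fact an $\egrm{\D}{\I}{\K}$-derivation and $\NT{\emptyset}{\tau}$ is reachable from $\NT{\emptyset}{\T\ft\T}$ in $\egrm{\D}{\I}{\K}$. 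I expect the only genuine obstacle to be spotting the dummy-argument construction (needed to respect the order bound $\D$); the rest is bookkeeping — checking the order, arity and variable-count bounds for all the intermediate nonterminals and confirming their membership in $\enont{\D}{\I}{\K}$.
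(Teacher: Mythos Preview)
Your proof is correct and follows essentially the same construction as the paper: both build the linear context $\lambda x_1^{\T}.\bigl([\,]\,M_1\cdots M_n\bigr)$ with $M_i = \lambda\uv^{\sigma_{i,1}}\!\cdots\lambda\uv^{\sigma_{i,m_i}}.x_1$ and derive the same rewriting sequence. Your write-up is in fact slightly more careful than the paper's --- you explicitly note $n\ge 1$, verify the order/arity bounds on the intermediate nonterminals, and justify the passage from $\lgrm{\D}{\I}{\K}$ to $\egrm{\D}{\I}{\K}$, all of which the paper leaves implicit.
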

\begin{proof}
Let \(\tau = \tau_1 \ft \dots \ft \tau_n \ft \T\) and
\(\tau_i = \tau^i_1 \ft \dots \ft \tau^i_{\ell_i} \ft \T\)
for \(i=1,\dots,n\).
For \(i=1,\dots,n\), let \(T_{\tau_i} \defeq \lambda *^{\tau^i_1}(\dots\lambda *^{\tau^i_{\ell_i}}(x_1)\dots)\).
Then, we have:
\[
\begin{aligned}
\NT{x_1:\T}{\tau_i}
\rew
\lambda *^{\tau^i_1}(\NT{x_1:\T}{\tau^i_2 \ft \dots \ft \tau^i_{\ell_i} \ft \T})
&\rew^*
\lambda *^{\tau^i_1}(\dots\lambda *^{\tau^i_{\ell_i}}(\NT{x_1:\T}{\T})\dots)
\\
&\rew
\lambda *^{\tau^i_1}(\dots\lambda *^{\tau^i_{\ell_i}}(x_1)\dots)
= T_{\tau_i}
\end{aligned}
\]
and hence
\[
\begin{aligned}[b]
\NT{\emptyset}{\T\ft\T}
&\rew\phantom{{}^*}
\lambda x_1^{\T}(\NT{x_1:\T}{\T})
\\
&\rew\phantom{{}^*}
\lambda x_1^{\T}(
 @(\NT{x_1:\T}{\tau_n\ft\T},\NT{x_1:\T}{\tau_n})
)
\rew^*
\lambda x_1^{\T}(
 @(\NT{x_1:\T}{\tau_n\ft\T},T_{\tau_n})
)
\\
&\rew^*
\lambda x_1^{\T}(
 @({@(
 \dots
 @(
\NT{x_1:\T}{\tau_2\ft\cdots\ft\tau_n\ft\T}
,\tr_{\tau_2}) ,
 \dots)},\tr_{\tau_n})
)
\\
&\rew\phantom{{}^*}
\lambda x_1^{\T}(
 @({@(
 \dots
 @(
@(\NT{\emptyset}{\tau_1\ft\tau_2\ft\cdots\ft\tau_n\ft\T},\NT{x_1:\T}{\tau_1})
,\tr_{\tau_2}) ,
 \dots)},\tr_{\tau_n})
)
\\
&\rew^*
\lambda x_1^{\T}(
 @({@(
 \dots
 @(
@(\NT{\emptyset}{\tau},\tr_{\tau_1})
,\tr_{\tau_2}) ,
 \dots)},\tr_{\tau_n})
).
\end{aligned}
\tag*{\qedhere}
\]
\end{proof}

\begin{lem}
\label{lem:sc2}
Let $\D, \I, \K \geq 2$ be integers.
Then for any nonterminal $\NT{\Gamma}{\tau} \in \enont{\D}{\I}{\K}$,
$\NT{\emptyset}{\T\ft\T}$ is reachable from \(\NT{\Gamma}{\tau}\).
\end{lem}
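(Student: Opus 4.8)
The plan is to reduce the statement to exhibiting, for each $\NT{\Gamma}{\tau}\in\enont{\D}{\I}{\K}$, a term $t^*$ with $[t^*]_\alpha\in\termsg{\Gamma}{\tau}{\D}{\I}{\K}$ that contains $\lambda z^\T\!.z$ as a (closed) subterm. Granting such a $t^*$, the conclusion will follow from Lemma~\ref{prop:cong}: the grammar $\lgrm{\D}{\I}{\K}$ generates some tree $T^*$ with $\pi(T^*)=[t^*]_\alpha$, and the subtree of $T^*$ representing the subterm $\lambda z^\T\!.z$ is necessarily generated from the nonterminal $\NT{\emptyset}{\T\ft\T}$ — indeed, by the remark in the proof of Lemma~\ref{prop:cong}, each $\NT{\Delta}{\sigma}$ only produces trees $T$ with $\fv(e(T))=\dom(\Delta)$, so a nonterminal producing a closed term of type $\T\ft\T$ can only be $\NT{\emptyset}{\T\ft\T}$. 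Arranging the derivation $\NT{\Gamma}{\tau}\rew^*T^*$ so as to expand that occurrence last then yields $\NT{\Gamma}{\tau}\rew^*S[\NT{\emptyset}{\T\ft\T}]$ for a linear context $S$, i.e.\ reachability of $\NT{\emptyset}{\T\ft\T}$ from $\NT{\Gamma}{\tau}$. (Since $\NT{\Gamma}{\tau}\in\enont{\D}{\I}{\K}$, every nonterminal occurring in this derivation is again in $\enont{\D}{\I}{\K}$, so the derivation lives in $\egrm{\D}{\I}{\K}$ as well.)

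For the construction, write $\tau=\tau_1\ft\cdots\ft\tau_n\ft\T$; as $\NT{\Gamma}{\tau}$ is a nonterminal we have $\tau\in\types{\D}{\I}$, hence $n\le\ar(\tau)\le\I$. Fix any witness $t_0$ with $[t_0]_\alpha\in\termsg{\Gamma}{\tau}{\D}{\I}{\K}$. If $n\le\I-1$, I would take $t^*:=(\lambda\uv^{\T\ft\T}\!.t_0)(\lambda z^\T\!.z)$: besides $\T\ft\T$ (order and arity $1$), the only new type in its derivation is $(\T\ft\T)\ft\tau$, which has $n+1\le\I$ top-level arrows, so arity $\max(n+1,\ar(\tau))\le\I$ and order $\max(2,\ord(\tau))\le\D$ (using $\D\ge2$); moreover $\numofvars(t^*)\le\numofvars(t_0)\le\K$, since the fresh $z$ may be renamed to a variable already used in $t_0$. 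If instead $n=\I$, then $\tau=\tau_1\ft\tau'$ with $\tau'=\tau_2\ft\cdots\ft\tau_\I\ft\T$, and $t_0$ cannot be an application $u\,v$: such a $u$ would have type $\rho\ft\tau$, which has $\I+1$ top-level arrows and hence arity $>\I$, contradicting $\tar{\Gamma}{t_0}{\tau}\le\I$. If $t_0$ is a variable $x$ (so $\Gamma=\{x:\tau\}$), I replace it by its one-step $\eta$-expansion $\lambda x'^{\tau_1}\!.(x\,x')$ with $x'\ne x$, which is again a legitimate witness since $\numofvars=2\le\K$ and all types occurring are subtypes of $\tau$. Thus I may assume $t_0=\lambda\overline y^{\tau_1}\!.t_1$ with $\Gamma'\p t_1:\tau'$ for some $\Gamma'\in\{\Gamma,\ \Gamma\cup\{\overline y:\tau_1\}\}$, and take $t^*:=\lambda\overline y^{\tau_1}\!.\bigl((\lambda\uv^{\T\ft\T}\!.t_1)(\lambda z^\T\!.z)\bigr)$. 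The only new type is then $(\T\ft\T)\ft\tau'$, i.e.\ $\tau$ with $\tau_1$ replaced by $\T\ft\T$, so it still has $\I$ top-level arrows and arity $\le\I$, while its order is $\max(2,\ord(\tau))\le\D$; and again $\numofvars(t^*)\le\numofvars(t_0)\le\K$. In every case one checks routinely that $\Gamma\p t^*:\tau$, that $t^*$ respects the bounds $\D,\I,\K$, and that $t^*$ contains $\lambda z^\T\!.z$, which is all that is needed.

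The main obstacle — and the reason for the case split — is the arity bound $\I$: the simple trick of applying the term to one further argument (the $n\le\I-1$ case) breaks the arity bound precisely when $\tau$ already has $\I$ top-level arrows, so one must instead descend under the outermost $\lambda$ of a witness. What makes this possible is the structural observation that a witness of a type with $\I$ arrows can never be an application, hence — after a single $\eta$-expansion if it happens to be a variable — is an abstraction, and replacing the first argument type by $\T\ft\T$ does not increase the arity. A secondary point to watch throughout is that the inserted material must not raise $\numofvars$ above $\K$: this is why $\lambda z^\T\!.z$ is used (its bound variable is local, hence freely renameable to an already-occurring name) and why the hypothesis $\K\ge2$ is needed in the $\eta$-expansion step, where $x$ and the fresh $x'$ together account for two variable names.
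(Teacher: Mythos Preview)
Your proof is correct, but it takes a genuinely different route from the paper's.

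The paper argues entirely at the level of grammar rewriting: starting from any witness $t$, it locates a variable leaf $x:\sigma$, reaches $\NT{\{x:\sigma\}}{\sigma}$, and then by induction on the structure of $\sigma$ uses repeated $\eta$-expansion steps
\[
\NT{\{x:\sigma\}}{\sigma}\rew \lambda x_i^{\sigma'}\bigl(@(x,\NT{\{x_i:\sigma'\}}{\sigma'})\bigr)
\]
to peel $\sigma$ down to $\T$; from $\NT{\{y:\T\}}{\T}$ it then reaches $\NT{\emptyset}{\T\ft\T}$ via a single application. Because each step only introduces subtypes of $\sigma$, the order and arity bounds are automatic and no case analysis on the shape of $\tau$ is needed; the hypothesis $\K\ge 2$ is used at every $\eta$-expansion step.

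Your approach is term-theoretic: you manufacture a new witness $t^*\in\termsg{\Gamma}{\tau}{\D}{\I}{\K}$ containing the closed subterm $\lambda z^\T.z$, then invoke the bijection of Lemma~\ref{prop:cong} (together with canonicity and unambiguity of the grammar) to read off reachability of $\NT{\emptyset}{\T\ft\T}$. The price you pay is the case split on whether $\tau$ already has $\I$ top-level arrows, since wrapping $t_0$ in $(\lambda\uv^{\T\ft\T}.-)(\lambda z.z)$ bumps the top-level arrow count; your observation that a witness of a maximal-arity type cannot be an application, and the one-step $\eta$-expansion for the variable case, handle this nicely. In exchange, your use of $\K\ge 2$ is confined to the single edge case where the witness is a bare variable, rather than being invoked throughout an induction. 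Both arguments are short; the paper's is slightly more uniform, yours is slightly more concrete about where the identity term actually sits.
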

\begin{proof}
Suppose $\NT{\Gamma}{\tau} \in \enont{\D}{\I}{\K}$.
By the definition of \(\enont{\D}{\I}{\K}\) (and \(\lnont{\D}{\I}{\K}\)), there exists \(t\) such that
\([t]_\alpha\in \termsg{\Gamma}{\tau}{\D}{\I}{\K}\).
Let \(\tr \defeq \tec^{-1}([t]_\alpha) \in \inhav{\NT{\Gamma}{\tau}}\).
Now \(\tr\) contains at least one (possibly bound) variable, say \(x\), and let \(\sigma\) be the type of \(x\).
Since \(\tr \in \inhav{\NT{\Gamma}{\tau}}\), there exists a \linear{context} \(\ctr\) such that
\[
\NT{\Gamma}{\tau} \rew^{*} \ctr[\NT{\set{x:\sigma}}{\sigma}] \rew \ctr[x] = \tr.
\]

We show, by the induction on the structure of \(\sigma\), that \(\NT{\set{x:\sigma}}{\sigma}\rew^* 
\ctr'[\NT{\set{y:\T}}{y}]\) for some \(y\) and \linear{context} \(\ctr'\).
The case for \(\sigma=\T\) is obvious. If \(\sigma=\sigma'\to\tau'\), then since \(\K\geq 2\),
we obtain:
\[
\begin{array}{l}
\NT{\set{x:\sigma}}{\sigma}\rew \lambda x_i^{\sigma'}\NT{\set{x:\sigma,x_i:\sigma'}}{\tau'}
\rew \lambda x_i^{\sigma'}(@(\NT{\set{x:\sigma}}{\sigma}, \NT{\set{x_i:\sigma'}}{\sigma'}))\\
\rew \lambda x_i^{\sigma'}(@(x, \NT{\set{x_i:\sigma'}}{\sigma'}))
\end{array}
\]
by ``\(\eta\)-expansion''. By the induction hypothesis, we have \(
\NT{\set{x_i:\sigma'}}{\sigma'} \rew^* \ctr''[\NT{\set{y:\T}}{\T}]\) for some \(y\) and 1-context \(\ctr''\).
Thus, the result holds for \(\ctr'=\lambda x_i^{\sigma'}(@(x,\ctr''))\).

By using the property above, we obtain
\[
\begin{array}{l}
\NT{\Gamma}{\tau} \rew^{*} \ctr[\NT{\set{x:\sigma}}{\sigma}] \rew^*
\ctr[\ctr'[\NT{\set{y:\T}}{\T}]] \\\rew
\ctr[\ctr'[@(\NT{\emptyset}{\T\to\T},\NT{\set{y:\T}}{\T})]]
\rew 
\ctr[\ctr'[@(\NT{\emptyset}{\T\to\T},y)]].
\end{array}
\]
Thus, \(\NT{\emptyset}{\T\ft\T}\) is reachable from \(\NT{\Gamma}{\tau}\).
\end{proof}

\begin{lem}\label{lem:aperiodicity}
Let $\D, \I \geq 2$ and $\K \geq 1$ be integers.
 Then for any integer $n \geq 5$, the nonterminal $\NT{\ee}{\T \ft \T}$
 of $\egrm{\D}{\I}\K$
 satisfies \(\inhavn{\NT{\ee}{\T \ft \T}} \neq \emptyset\).
\end{lem}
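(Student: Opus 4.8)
The plan is to reduce the statement to the existence of a family of explicit closed $\lambda$-terms and then construct them by a short size-bookkeeping argument. First I would note that $\NT{\ee}{\T\ft\T}$ is genuinely a nonterminal of $\egrm{\D}{\I}{\K}$: it is reachable from itself, $\T\ft\T\in\types{\D}{\I}$ since $\D,\I\ge 2$, and $\termsg{\ee}{\T\ft\T}{\D}{\I}{\K}$ is non-empty because it contains $[\lambda x^\T\!.x]_\alpha$, whose order and arity are $1$ and whose $\numofvars$ is $1\le\K$. Since $\inhavn{\egrm{\D}{\I}{\K},\NT{\ee}{\T\ft\T}}=\inhavn{\lgrm{\D}{\I}{\K},\NT{\ee}{\T\ft\T}}$ and, by Lemma~\ref{prop:cong}, $\pi$ restricts to a size-preserving bijection between this set and the set of $\alpha$-classes of closed terms $t$ with $\ee\p t:\T\ft\T$, $|t|=n$, $\tord{\ee}{t}{\T\ft\T}\le\D$, $\tar{\ee}{t}{\T\ft\T}\le\I$ and $\numofvars(t)\le\K$, it suffices to exhibit, for every $n\ge 5$, one such term.

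For the construction, fix a variable $x$ of type $\T$ and write $W$ for the term $\lambda\uv^\T\!.x$, which has type $\T\ft\T$ and size $2$, so that $|W\,u|=|u|+3$ whenever $x:\T\p u:\T$; let $W^j u$ denote the $j$-fold nesting. I would use three ``seeds'' of type $\T$ having $x$ as their only free variable: $s_1\defeq x$ (size $1$), $s_2\defeq(\lambda\uv^{\T\ft\T}\!.x)(\lambda y^\T\!.y)$ (size $5$), and $s_3\defeq(\lambda\uv^{\T\ft\T\ft\T}\!.x)(\lambda\uv^\T\!.\lambda y^\T\!.y)$ (size $6$). For $n\ge 5$ set $t\defeq\lambda x^\T\!.(W^j s_i)$, choosing $(i,j)$ by $n \bmod 3$: for $n\equiv 2$ take $i=1$, $j=(n-2)/3\ (\ge 1)$; for $n\equiv 0$ take $i=2$, $j=(n-6)/3\ (\ge 0)$; for $n\equiv 1$ take $i=3$, $j=(n-7)/3\ (\ge 0)$. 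Since $|t|=|s_i|+3j+1$, each choice gives $|t|=n$, and the three residue classes together cover all integers $\ge 5$.

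Then I would check the side conditions. Each $t$ is closed (every occurrence of $x$ inside $W$ and $s_i$ is bound by the outermost $\lambda x^\T$, and each inner $\lambda y$ binds its own occurrence) and has type $\T\ft\T$; the only types occurring in its typing derivation belong to $\{\T,\ \T\ft\T,\ \T\ft\T\ft\T,\ (\T\ft\T)\ft\T,\ (\T\ft\T\ft\T)\ft\T\}$, all of order $\le 2\le\D$ and arity $\le 2\le\I$, so $\tord{\ee}{t}{\T\ft\T}\le\D$ and $\tar{\ee}{t}{\T\ft\T}\le\I$; the types annotating its $\lambda$'s lie in $\{\T,\ \T\ft\T,\ \T\ft\T\ft\T\}\subseteq\types{\D-1}{\I}$ (orders $\le 1\le\D-1$, arities $\le 2\le\I$); and after renaming every inner $\lambda y$ to $\lambda x$ the set $\V$ of the resulting $\alpha$-equivalent term is $\{x\}$, so $\numofvars(t)=1\le\K$. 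Hence $[t]_\alpha\in\termsg{\ee}{\T\ft\T}{\D}{\I}{\K}$ with $|t|=n$, and $\pi^{-1}([t]_\alpha)\in\inhavn{\egrm{\D}{\I}{\K},\NT{\ee}{\T\ft\T}}$, which is therefore non-empty.

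The main obstacle I anticipate is respecting the arity bound $\I$, which may be exactly $2$: one cannot pad the size by stacking $\lambda$-abstractions at the head of the term, since that makes the arity grow without bound. The resolution is precisely the form $\lambda x^\T\!.(W^j s_i)$ above, where the size is adjusted only through nested applications of the fixed low-arity term $W$ and through the finitely many bounded-arity seeds $s_i$. A secondary, routine point is to make sure that the auxiliary nonterminals used to derive $t$ (such as $\NT{\ee}{\T\ft\T\ft\T}$, $\NT{\{x:\T\}}{(\T\ft\T)\ft\T}$ and $\NT{\{x:\T\}}{(\T\ft\T\ft\T)\ft\T}$) really occur in $\egrm{\D}{\I}{\K}$; this is automatic, since the relevant subterms of $t$ themselves witness the non-emptiness conditions in the definition of $\lnont{\D}{\I}{\K}$.
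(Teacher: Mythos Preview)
Your proposal is correct and takes essentially the same approach as the paper: establish witnesses for sizes $5,6,7$ (your terms even coincide with or are minor variants of the paper's three base cases) and then repeatedly add $3$ to the size using a fixed small gadget. The only cosmetic difference is that the paper phrases the step-of-$3$ as an induction via $t\mapsto\lambda x.\,t\,x$, whereas you give the closed form $\lambda x.\,W^j s_i$ indexed by $n\bmod 3$.
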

\begin{proof}
For simplicity of the presentation, here we identify trees with terms by
the size-preserving bijection \(\tec\).
The proof proceeds by induction on \(n\).
For \(n=5,6,7\), the following terms
\[
\begin{aligned}&
\lambda x^{\T}\!. (\lambda \uv^{\T}\!.x) x
\\&
\lambda x^{\T}\!. (\lambda \uv^{\T\ft\T}\!.x) (\lambda \uv^{\T}\!.x)
\\&
\lambda x^{\T}\!. ( \lambda \uv^{\T\ft\T\ft\T}\!.x) \lambda \uv^{\T}\!.\lambda \uv^{\T}\!.x
\end{aligned}
\]
belong to \(\inhavn{\NT{\ee}{\T \ft \T}}\), respectively.
For $n \geq 8$, by the induction hypothesis, there exists
 $t \in \inhavnn{n-3}{\NT{\ee}{\T \ft \T}}$. Thus we have
 $\lambda x. tx \in \inhavn{\NT{\ee}{\T \ft \T}}$ as required.
\end{proof}

The following is the main result of this subsection.
\begin{lem}\label{prop:irr}
$\egrm{\D}{\I}{\K}$ is strongly connected and aperiodic for any $\D,\I, \K \geq 2$.
\end{lem}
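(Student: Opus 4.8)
The plan is to assemble the statement from Lemmas~\ref{lem:sc1}, \ref{lem:sc2}, and~\ref{lem:aperiodicity}, using that the reachability relation of a grammar is transitive: if \(N_1 \rew^* S_1[N_2]\) and \(N_2 \rew^* S_2[N_3]\) for linear contexts \(S_1, S_2 \in \cs{\lralph{\D}{\I}{\K}}\), then \(N_1 \rew^* S_1[S_2[N_3]]\) and \(S_1[S_2]\) is again a linear context, so \(N_3\) is reachable from \(N_1\). All three lemmas apply since we assume \(\D, \I, \K \geq 2\), and \(\NT{\emptyset}{\T\ft\T} \in \enont{\D}{\I}{\K}\): it lies in \(\lnont{\D}{\I}{\K}\) because, e.g., \([\lambda x^\T. x]_\alpha \in \termsg{\emptyset}{\T\ft\T}{\D}{\I}{\K}\), and it is reachable from itself via the hole context.

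For strong connectivity, I would take arbitrary nonterminals \(N = \NT{\Gamma}{\tau}\) and \(N' = \NT{\Gamma'}{\tau'}\) of \(\egrm{\D}{\I}{\K}\) and show that \(N'\) is reachable from \(N\) by routing through \(\NT{\emptyset}{\T\ft\T}\). By Lemma~\ref{lem:sc2}, \(\NT{\emptyset}{\T\ft\T}\) is reachable from \(N\). In the other direction, by the definition of \(\enont{\D}{\I}{\K}\) the nonterminal \(N'\) is reachable from some \(\NT{\emptyset}{\tau''} \in \enont{\D}{\I}{\K}\), and by Lemma~\ref{lem:sc1} that \(\NT{\emptyset}{\tau''}\) is reachable from \(\NT{\emptyset}{\T\ft\T}\); composing, \(N'\) is reachable from \(\NT{\emptyset}{\T\ft\T}\). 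Transitivity then yields reachability of \(N'\) from \(N\), which is exactly strong connectivity.

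For aperiodicity, I would fix an arbitrary \(N \in \enont{\D}{\I}{\K}\) and produce \(n_0\) with \(\inhavnn{m}{\egrm{\D}{\I}{\K}, N} \neq \emptyset\) for all \(m \ge n_0\). By Lemma~\ref{lem:sc2} there is a linear context \(S \in \cs{\lralph{\D}{\I}{\K}}\) (over terminals only, with no nonterminals) such that \(N \rew^* S[\NT{\emptyset}{\T\ft\T}]\). By Lemma~\ref{lem:aperiodicity}, for every \(k \ge 5\) there is a tree \(T_k \in \inhav{\egrm{\D}{\I}{\K}, \NT{\emptyset}{\T\ft\T}}\) of size \(k\); hence \(N \rew^* S[\NT{\emptyset}{\T\ft\T}] \rew^* S[T_k]\) with \(|S[T_k]| = |S| + k\). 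So \(\inhavnn{m}{\egrm{\D}{\I}{\K}, N} \neq \emptyset\) for every \(m \ge |S| + 5\), i.e.\ \(\egrm{\D}{\I}{\K}\) is \(N\)-aperiodic; since \(N\) was arbitrary, \(\egrm{\D}{\I}{\K}\) is aperiodic.

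There is no substantial obstacle here: the work was done in Lemmas~\ref{lem:sc1}--\ref{lem:aperiodicity}. The only point requiring a moment's care is keeping track of directions --- Lemma~\ref{lem:sc1} provides paths outward from \(\NT{\emptyset}{\T\ft\T}\) to closed-type nonterminals, Lemma~\ref{lem:sc2} provides paths inward to \(\NT{\emptyset}{\T\ft\T}\), and for aperiodicity the relevant move is to plug trees generated by the already-aperiodic nonterminal \(\NT{\emptyset}{\T\ft\T}\) into a fixed finite linear context reaching out from \(N\), which is precisely what Lemma~\ref{lem:sc2} supplies.
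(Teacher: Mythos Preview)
Your proposal is correct and follows essentially the same approach as the paper: strong connectivity is obtained by routing through \(\NT{\emptyset}{\T\ft\T}\) via Lemmas~\ref{lem:sc1} and~\ref{lem:sc2} together with the definition of \(\enont{\D}{\I}{\K}\), and aperiodicity is obtained by plugging the trees supplied by Lemma~\ref{lem:aperiodicity} into a fixed linear context \(S\) with \(N \rew^* S[\NT{\emptyset}{\T\ft\T}]\). The paper phrases the aperiodicity step as a consequence of strong connectivity (which yields the required \(S\)) rather than citing Lemma~\ref{lem:sc2} directly, but this is only a cosmetic difference.
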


\begin{proof}
Strong connectivity follows from Lemmas~\ref{lem:sc1} and \ref{lem:sc2},
and the definition of \(\egrm{\D}{\I}{\K}\) (as stated at the beginning of this subsection).
The aperiodicity follows from Lemma~\ref{lem:aperiodicity} and strong connectivity.
(Note that for every nonterminal \(N\), there exists a linear context \(U\in \inhav{\grm,
\NT{\emptyset}{\T\ft\T}\Ar N}\). Thus, for any \(n\ge |U|+5\), 
\(\inhavn{\grm,N} \supseteq \set{U[T]\mid T\in\inhavnn{n-|U|}{\grm,
\NT{\emptyset}{\T\ft\T}}}\neq \emptyset\).)
\end{proof}
 
\subsection{Explosive Terms}
\label{sec:explosive}
In this section, we define a family \((\expl{n}{k})_n\) 
of \(\lambda\)-terms, which have long \(\beta\)-reduction sequences.
They play the role of \((T_n)_n\) in 
Corollary~\ref{cor:imtTree}.

We define a ``duplicating term'' $\dt \defeq \lambda x^\T. (\lambda x^\T. \lambda
\uv^\T. x)x\,x$, and \(\idt \defeq \lambda x^\T. x\).
For two terms $t$, $t'$ and integer $n \geq 1$, we define the ``$n$-fold
application'' operation $\fold{}{n}{}$ by 
$\fold{t}{0}{t'} \defeq t'$ and $\fold{t}{n}{t'} \defeq t (\fold{t}{n-1}{t'})$.
For an integer $k \geq 2$, we define an order-\(k\) term
\[
 \ctwok{k} \defeq \lambda f^{\Ttwice{k-1}}. \lambda x^{\Ttwice{k-2}}.
 f (f x)
\]
where $\Ttwice{i}$ is defined by $\Ttwice{0} \defeq \T$ and $\Ttwice{i+1}
\defeq \Ttwice{i} \ft \Ttwice{i}$.

\begin{defi}[Explosive Term]
Let $m \geq 1$ and $k \geq 2$ be integers.
We define the \emph{explosive term $\expl{m}{k}$} 
by:
\[
\expl{m}{k} \defeq
\lambda x^\T.\bigl( (\fold{\ctwok{k}}{m}{\ctwok{k-1}})\ctwok{k-2} \cdots
 \ctwok{2} \, \dt (\idt\, x) \bigr).
\]
\end{defi}

We state key properties of $\expl{m}{k}$ below.
\begin{lem}[Explosive]
\label{lemma:expl}
\begin{enumerate}
\item\label{item:expltyped} $\ee\p\expl{m}{k} : \T\ft\T$ is derivable.
\item\label{item:expldix}
$\ord(\expl{m}{k}) = k$,
$\ar(\expl{m}{k}) = k$ and
$\card{\smash{\V(\expl{m}{k})}}=2$.
\item\label{item:explsize} $|\expl{m}{k}| = 8m + 8k - 2$.
\item\label{item:expluctx} \([\expl{m}{k}]_\alpha \in \termsg{\ee}{\T\ft\T}{\D}{\I}{\K}\) if \(\D\), \(\I \ge k\) and
     \(\K \ge 2\).
\item\label{item:explbeta}
      If a term $t$ satisfies $\expl{m}{k} \subc t$, then $\beta(t) \geq \Exp{k}{m}$ holds.
 \end{enumerate}
\end{lem}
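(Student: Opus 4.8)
The plan is to prove the five items of Lemma~\ref{lemma:expl} in order, with items~(\ref{item:expltyped})--(\ref{item:expluctx}) being routine bookkeeping and item~(\ref{item:explbeta}) carrying the real content. For item~(\ref{item:expltyped}), I would exhibit the typing derivation directly: $\dt = \lambda x^\T.(\lambda x^\T.\lambda \uv^\T.x)x\,x : \T \to \T$, $\idt = \lambda x^\T.x : \T \to \T$, each $\ctwok{j} : \Ttwice{j} = \Ttwice{j-1} \to \Ttwice{j-1}$, and then observe that $\fold{\ctwok{k}}{m}{\ctwok{k-1}} : \Ttwice{k-1}$ since $\ctwok{k} : \Ttwice{k-1} \to \Ttwice{k-1}$ and $\ctwok{k-1} : \Ttwice{k-1}$; feeding this to $\ctwok{k-2} : \Ttwice{k-2}, \ldots, \ctwok{2} : \Ttwice{2}, \dt : \Ttwice{1} = \T \to \T, (\idt\,x) : \T$ in turn yields type $\T$ for the body under $x : \T$, hence $\T \to \T$ for the whole term. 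For item~(\ref{item:expldix}), the order and internal arity of every type appearing in this derivation are bounded by $\ord(\Ttwice{k-1} \to \Ttwice{k-1}) = k = \ar(\cdot)$, which is realized, and $\V(\expl{m}{k}) = \{x, f\}$ after the obvious $\alpha$-renaming (the only used variable names are the bound $x$ of the outer abstraction / of $\dt$ / of $\idt$, all identifiable, and the $f$ in each $\ctwok{j}$), giving cardinality $2$. Item~(\ref{item:explsize}) is a direct size count: $|\idt| = 4$, $|\dt| = 12$, $|\ctwok{j}| = 8$, the $m$-fold application $\fold{\ctwok{k}}{m}{\ctwok{k-1}}$ contributes $8(m{+}1) + m$ wait --- I would simply set up the recurrence $|\fold{t}{n}{t'}| = |t| + |\fold{t}{n-1}{t'}| + 1$ and the count $|\expl{m}{k}| = 1 + (\text{size of body})$, then verify the closed form $8m + 8k - 2$ by induction on $m$ (base $m=1$) and a separate check of the dependence on $k$; I would not grind through the arithmetic here but flag that the constants must be pinned down carefully. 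Item~(\ref{item:expluctx}) is immediate from~(\ref{item:expltyped}), (\ref{item:expldix}) and Definition~\ref{df:set-of-terms}, using $\tord{\ee}{\expl{m}{k}}{\T\to\T} = \ord(\expl{m}{k}) \le k \le \D$ and similarly for arity, plus $\numofvars(\expl{m}{k}) \le \card{\V(\expl{m}{k})} = 2 \le \K$.

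The substantive claim is item~(\ref{item:explbeta}): if $\expl{m}{k} \subc t$ then $\beta(t) \ge \Exp{k}{m}$. The key reductions are the standard ones for Church-style iteration. First, $\fold{\ctwok{k}}{m}{\ctwok{k-1}}$ reduces (applying each $\ctwok{k}$) to a term that behaves like $2^m$-fold iteration at type $\Ttwice{k-1}$; more precisely, $\ctwok{j}$ applied to an argument $g$ of type $\Ttwice{j-1} \to \Ttwice{j-1}$ doubles it, so $\fold{\ctwok{k}}{m}{\ctwok{k-1}}$ $\beta$-reduces to $\cnum{2^m}_{k-1}$ (the term iterating its first argument $2^m$ times), and then $\cnum{2^m}_{k-1}\,\ctwok{k-2} \cdots \ctwok{2}$ produces, at order~$1$, a term iterating its argument $\Exp{k-2}{2^m} = \Exp{k-1}{m}$ times --- call this $N$; so the body reduces to (something like) $N\,\dt\,(\idt\,x)$. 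Applying $\dt$ to $(\idt\,x)$ and iterating $N$ times: $\dt$ reduces $\dt\,u \to_\beta (\lambda x.\lambda \uv.x)\,u\,u \to_\beta (\lambda\uv.u)\,u \to_\beta u$, but crucially each such step first \emph{duplicates} $u$, so starting from a redex $(\idt\,x)$ (of size $\ge 1$, reducible) and applying $\dt$ repeatedly $\Exp{k-1}{m}$ times creates a term whose reduction sequence has length at least $2^{\Exp{k-1}{m}} = \Exp{k}{m}$ (each doubling squares, roughly, the number of remaining redexes). I would establish this via a lemma: there is a $\beta$-reduction sequence from $\expl{m}{k}$ itself of length $\ge \Exp{k}{m}$. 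Then, since $\expl{m}{k} \subc t$ means $t = C[\expl{m}{k}]$ for some one-hole context $C$ over the $\lambda$-term tree signature, and $\beta$-reduction is closed under contexts (a reduction sequence inside a subterm lifts to one in the whole term, of the same length), we get $\beta(t) \ge \beta(\expl{m}{k}) \ge \Exp{k}{m}$.

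The main obstacle is making the iteration-and-duplication count rigorous without drowning in detail. The cleanest route is probably to avoid reducing $\expl{m}{k}$ to a normal form and instead track a single well-chosen reduction strategy, maintaining an invariant of the form ``after stage $j$, the current term contains a subterm $\fold{\dt}{N_j}{(\idt\,x)}$ with $N_j \ge \Exp{k-1}{m}$-many (syntactically present) $\dt$'s, all of whose reductions are still ahead.'' For the final explosion, the clean lemma is: $\fold{\dt}{p}{s}$, where $s$ is any term with $\beta(s) \ge 1$ (e.g.\ $s = \idt\,x$, which has the redex $\idt\,x$), satisfies $\beta(\fold{\dt}{p}{s}) \ge 2^p$; this is proved by induction on $p$, the step using that $\dt\,u$ can first be $\beta$-reduced (inside $u$, using both copies created by $\dt$'s duplication, or rather: reduce $\dt\,u \to (\lambda x.\lambda\uv.x)u\,u$, then if $u$ has $r$ redexes this term has $\ge 2r$, wait I must be careful --- the honest statement is $\beta(\dt\,u) \ge 2\,\beta(u) + 1$, since one can copy $u$ first and then reduce both copies, or equivalently $\dt\,u \to_\beta \dots \to_\beta u'$ for $u'$ containing two copies of a reduct of $u$). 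Composing $\beta(\fold{\dt}{p}{s}) \ge 2\beta(\fold{\dt}{p-1}{s})$ gives $\beta(\fold{\dt}{p}{s}) \ge 2^p$, and with $p = \Exp{k-1}{m}$ this is $\Exp{k}{m}$. I expect the delicate points to be: (i) verifying that the iteration $\cnum{2^m}_{k-1}\,\ctwok{k-2}\cdots\ctwok{2}\,\dt$ genuinely realizes $\Exp{k-1}{m}$ copies of $\dt$ (a standard but fiddly computation with Church numerals at higher types --- Beckmann's analysis~\cite{Beckmann01} can be cited or adapted), and (ii) the context-closure step, which is routine once the subcontext relation $\subc$ on $\lambda$-term trees is matched up with ``occurs as a subterm'' and one notes $\beta$-reduction lifts through term contexts without shortening.
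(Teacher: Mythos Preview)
Your approach is essentially the same as the paper's: items~(\ref{item:expltyped})--(\ref{item:expluctx}) are treated as routine verifications, and item~(\ref{item:explbeta}) is handled by the Beckmann argument (the paper simply cites~\cite{Beckmann01} without reproducing it; your sketch of the iteration-then-duplication bound $\beta(\dt\,u) \ge 2\beta(u)$ leading to $\beta(\fold{\dt}{p}{(\idt\,x)}) \ge 2^p$ with $p = \Exp{k-1}{m}$ is exactly that argument, and the context-closure step is fine since $\expl{m}{k}$ is a $0$-context so $\expl{m}{k} \subc t$ just means it occurs as a subterm).

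The only concrete issue is that your size constants in item~(\ref{item:explsize}) are off: with the paper's size function ($|x|=1$, $|\lambda\var^\tau.t|=|t|+1$, $|t_1 t_2|=|t_1|+|t_2|+1$) one gets $|\idt| = 2$ (not $4$), $|\dt| = 8$ (not $12$), and $|\ctwok{j}| = 7$ (not $8$). The paper's calculation then reads
\[
|\expl{m}{k}| = 1 + \bigl((7{+}1)m + 7\bigr) + 8(k{-}3) + (1{+}8) + (1{+}4) = 8m + 8k - 2,
\]
using $|\idt\,x| = 4$ and $|t_0 t_1 \cdots t_n| = |t_0| + \sum_i (1+|t_i|)$. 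Since you explicitly flagged that the arithmetic needs pinning down, this is easy to repair, but do correct those three base sizes before writing it up.
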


\begin{proof}
First, observe that by the definition of $\ctwok{k}$, we have:
\[
 |\ctwok{k}| = 7, \quad \p \ctwok{k}: \Ttwice{k}, \quad \ord(\ctwok{k}) =
  k, \quad \ar(\ctwok{k}) = k. 
\]

(\ref{item:expltyped}) is obvious, and in the derivation of 
$\ee\p\expl{m}{k} : \T\ft\T$, 
\(\ctwok{k}\) is 
a subterm that has a type of the largest order and internal arity. Thus,
(\ref{item:expldix}) follows also immediately from 
\(\ord(\ctwok{k}) =\ar(\ctwok{k}) = k\), and 
\(\smash{\V(\expl{m}{k})}=\set{f,x}\).

(\ref{item:explsize}) follows from the following calculation.
\[
\begin{array}{ll}
|\expl{m}{k}| 
&= 1+ 
 |\fold{\ctwok{k}}{m}{\ctwok{k-1}}| + \Sigma_{i=2}^{k-2}(1+|\ctwok{i}|)+(1+|\dt|)+(1+|\idt\,x|)\\
& \hfill\mbox{(by $ |t_0t_1\cdots t_n| = |t_0|+\Sigma_{i=1}^n (1+|t_i|)$)}\\
&= 1+ 
 ((|\ctwok{k}|+1)m+|\ctwok{k-1}|) + \Sigma_{i=2}^{k-2}(1+|\ctwok{i}|)+(1+|\dt|)+(1+|\idt\,x|)\\
&= 1+ (8m+7) + 8(k-3)+9+5 
=8m + 8k -2.
\end{array}
\]

(\ref{item:expluctx}) follows from (\ref{item:expltyped}) and (\ref{item:expldix}).

The proof for (\ref{item:explbeta}) is the same as the corresponding proof in~\cite{Beckmann01}.
\end{proof}
 
\subsection{Proof of the Main Theorem}
We are now ready to prove Theorem~\ref{mainthm}.

\begin{proof}[Proof of Theorem~\ref{mainthm}]
We apply Corollary~\ref{cor:imtTree} with Remark~\ref{rem:avoidLimDef} to
the grammar \(\egrm{\D}{\I}{\K}\),
which is unambiguous by Lemma~\ref{lem:unamb},
and aperiodic and strongly connected by Lemma~\ref{prop:irr}.
We define \(T_n \defeq \tec^{-1}([\expl{n}{k}]_\alpha)\);
then by  Lemma~\ref{lemma:expl}(\ref{item:explsize}),
we have \(|T_n| = 8n + 8k - 2 = \order(n)\).
Thus, by Corollary~\ref{cor:imtTree} (with Remark~\ref{rem:avoidLimDef},
which allows us to replace \(\limd_{n \rightarrow \infty}\) with
\(\lim_{n \rightarrow \infty}\)), 
there exists \(\cn>0\) such that,
for \(\nset \defeq \set{ \NT{\ee}{\tau} \mid \tau \in \types{\D}{\I} }\),
\begin{equation}
\label{eq:mainCor}
 \lim_{n \rightarrow \infty} \frac{\card{\{ (N,\tr) \in \coprod_{N \in \nset}\inhavn{\egrm{\D}{\I}{\K},N} \mid
 \tr_{\ceil{\cn \log n}} \subc \tr
 \}}}{\card{\coprod_{N \in \nset}\inhavn{\egrm{\D}{\I}{\K},N}}} = 1.
\end{equation}
 
For \(n \in \nat\), we have
\begin{equation}
\label{eq:MAINsubset}
\{ [t]_\alpha \in \termsn{\D}{\I}{\K} \mid \expl{\ceil{\cn \log n}}{k} \subc t \}
\subseteq
\{ [t]_\alpha \in \termsn{\D}{\I}{\K} \mid \beta(t) \geq \Exp{k-1}{n^{\cn}} \}
\end{equation}
by Lemma~\ref{lemma:expl}(\ref{item:explbeta}), and
\begin{equation}
\label{eq:MAINtermGram}
\termsn{\D}{\I}{\K}
\ =\ 
\, \biguplus_{\mathclap{\tau \in
 \types{\D}{\I}}}\, \termsng{\ee}{\tau}{\D}{\I}{\K}
\ \cong\ 
\, \coprod_{\mathclap{N \in \nset}}\, 
\inhavnn{n}{\egrm{\D}{\I}{\K},N}
\end{equation}
by Lemma~\ref{prop:cong}.

Therefore, we have:
\begin{align*}
1 \ge \ &
\frac{\card{\{ [t]_\alpha \in
 \termsn{\D}{\I}{\K}
 \mid \beta(t) \geq \Exp{k-1}{n^{\cn}}
 \}}}{\card{\termsn{\D}{\I}{\K}}}
\\ \ge \ &
 \frac{\card{\{ [t]_\alpha \in
 \termsn{\D}{\I}{\K}
 \mid \expl{\ceil{\cn \log n}}{k} \subc t
 \}}}{\card{\termsn{\D}{\I}{\K}}}
&& \reason{by~\eqref{eq:MAINsubset}}
\\ =\ &
\frac{\card{\{ (N, \tr) \in 
\coprod_{N \in \nset}\, \inhavnn{n}{\egrm{\D}{\I}{\K},N}
 \mid
 \tr_{\ceil{\cn \log n}} \subc \tr
 \}}}{\card{\coprod_{N \in \nset}\, \inhavnn{n}{\egrm{\D}{\I}{\K},N}}}
&& \reason{by~\eqref{eq:MAINtermGram}}
\end{align*}
Since the right hand side converges to \(1\) by~\eqref{eq:mainCor}, we have
\[  \lim_{n \rightarrow \infty} 
\frac{\card{\{ [t]_\alpha \in
 \termsn{\D}{\I}{\K}
 \mid \beta(t) \geq \Exp{k-1}{n^{\cn}}
 \}}}{\card{\termsn{\D}{\I}{\K}}}
=1\]
as required.
\end{proof}

 \section{Related Work}\label{sec:relatedwork}
As mentioned in Section~\ref{sec:introduction},
there are several pieces of work on probabilistic properties of
untyped $\lambda$-terms~\cite{SN,grygiel2013counting,Bendkowski}.
David et al.~\cite{SN} have shown that almost all untyped
$\lambda$-terms are strongly normalizing, whereas the result is opposite
for terms expressed in SK combinators (the latter result has later been generalized for arbitrary Turing complete combinator bases~\cite{Bendkowski17}).
Their former result implies that untyped $\lambda$-terms
do \emph{not} satisfy the infinite monkey theorem, \ie for any term
$t$, the probability that a randomly chosen term of size $n$ contains
$t$ as a subterm tends to \emph{zero}.

Bendkowski et al.~\cite{Bendkowski} proved that almost all terms in de
Brujin representation are not strongly normalizing, by regarding the
size of an index $i$ is $i+1$, instead of the constant $1$.
The discrepancies among those results suggest that this kind of probabilistic property
is quite fragile and depends on the definition of the syntax and the size of terms.
Thus, the setting of our paper, especially the assumption on
the boundedness of internal arities and the number of variables is a matter of debate,
and it would be interesting to study how the result changes for different assumptions.

We are not aware of similar studies on \emph{typed} $\lambda$-terms. 
In fact, in their paper
about combinatorial aspects of $\lambda$-terms, 
Grygiel and Lescanne~\cite{grygiel2013counting} 
pointed out that
the combinatorial study of typed $\lambda$-terms is difficult, due to 
the lack of (simple) recursive definition of typed terms.
In the present paper, we have avoided the difficulty
by making the assumption on the boundedness of internal arities and the number
of variables (which is,  as mentioned above, subject to a debate though).

Choppy et al.~\cite{Choppy89} proposed a method to evaluate
the average number of reduction steps for
a restricted class of term rewriting systems called \emph{regular} rewriting systems.
In our context, the average number of reduction steps is not of much interest;
note that, as the worst-case number of reduction steps
is \(k\)-fold exponential for order-\(k\) terms, 
the average is also \(k\)-fold exponential, even if
it were the case that the number of reduction steps is small for almost all terms.

In a larger context, our work may be viewed as an instance of the studies of
average-case complexity~\cite[Chapter 10]{CCbook}, which discusses ``typical-case
feasibility''. We are not aware of much work on the average-case complexity of
problems with hyper-exponential complexity.

As a result related to our parameterized infinite monkey theorem for trees
(Theorem~\ref{thm:imt}), 
Steyaert and Flajolet~\cite{STEYAERT198319} studied the probability that a given pattern
(which is, in our terminology, a tree context of which every leaf is a hole)
occurs at a randomly chosen node of a randomly chosen tree. Their result immediately yields
a \emph{non-parameterized} infinite monkey theorem for trees (which says that 
the probability that a given pattern occurs in a sufficiently large tree tends to \(1\)),
but their technique does not seem directly applicable to obtain our \emph{parameterized} version.

 \section{Conclusion}\label{sec:conclusion}

We have shown that almost every simply-typed $\lambda$-term of order $k$
has a $\beta$-reduction sequence as long as $(k-1)$-fold exponential in the term size,
under a certain assumption. To our knowledge, this is the first result of this kind for
{typed} $\lambda$-terms. We obtained the result through the parameterized infinite monkey theorem 
for regular tree grammars, which may be of independent interest.

A lot of questions are left for future work, such as (i) whether
our assumption (on the boundedness of arities and the number of
variables) is reasonable, and how the result changes for different assumptions, (ii)
 whether our result is optimal (e.g., whether almost every term has a $k$-fold exponentially long reduction sequence), and (iii) whether similar results hold for Terui's decision
 problems~\cite{Terui12RTA} and/or the higher-order model checking problem~\cite{Ong06LICS}.
To resolve the question (ii) above, it may be useful to conduct
experiments to count the number of reduction steps for randomly generated terms.

 \subsubsection*{Acknowledgment}
We would like to thank anonymous referees for useful comments.
This work was supported by JSPS KAKENHI Grant Number JP15H05706.

\bibliographystyle{../splncs.bst}

\appendix
\section{A List of Notations}
\label{sec:symbols}

\begin{tabular}{|l|l|l|}
  \hline
  notation & explanation & introduced at:\\
  \hline
  $\card{X}$ & the cardinality of set \(X\), or the length of sequence \(X\)
  & Section~\ref{sec:mainresult}\\
  \hline
  $s.i$ & the \(i\)-th element of sequence \(s\)
  & Section~\ref{sec:mainresult}\\
  \hline
  $\dom(f)$ & the domain of function \(f\)
  & Section~\ref{sec:mainresult}\\
  \hline
  $\image{f}$ & the image of function \(f\)
  & Section~\ref{sec:mainresult}\\
  \hline
  $\tau$ & a type (of a simply-typed \(\lambda\)-term);
  & Section~\ref{sec:mainresult}\\
  \hline
  $\ord(\tau)$ & the order of type \(\tau\)
  & Section~\ref{sec:mainresult}\\
  \hline
  $\ar(\tau)$ & the internal arity of type \(\tau\)
  & Section~\ref{sec:mainresult}\\
\hline
$\types{\D}{\I}$ & \mline{the set of types whose order and internal arity are bounded by
\(\D\) and \(\I\) respectively}
  & Section~\ref{sec:mainresult}\\
  \hline
  \(t\) & a \(\lambda\)-term
  & Section~\ref{sec:mainresult}\\
  \hline
  \(\beta(t)\) & the maximal length of \(\beta\)-reduction sequences of \(t\)
  & Section~\ref{sec:mainresult}\\
  \hline
  \(\D\) & a bound on the order of \(\lambda\)-terms
  & Section~\ref{sec:mainresult}\\
  \hline
  \(\I\) & a bound on the arity of \(\lambda\)-terms
  & Section~\ref{sec:mainresult}\\
  \hline
  \(\K\) & a bound on the number of variables used in \(\lambda\)-terms
  & Section~\ref{sec:mainresult}\\
  \hline
  \(\termsg{\Gamma}{\tau}{\D}{\I}{\K}\) &
  \mline{
  the set of \(\alpha\)-equivalence classes of terms \(t\) such that
  \(\Gamma\p t:\tau\), with the given bounds}
  & Section~\ref{sec:mainresult}\\
  \hline
  \(\termsng{\Gamma}{\tau}{\D}{\I}{\K}\) &
  \mline{Like \(\termsg{\Gamma}{\tau}{\D}{\I}{\K}\), but
    the term size is restricted to \(n\)}
  & Section~\ref{sec:mainresult}\\
  \hline
  \(\terms{\D}{\I}{\K}\) &
  \mline{
    the set of \(\alpha\)-equivalence classes of closed well-typed
    terms, \new{with the given bounds}}
  & Section~\ref{sec:mainresult}\\
  \hline
  \(\termsn{\D}{\I}{\K}\) &
  \mline{Like \(\terms{\D}{\I}{\K}\), but
    the term size is restricted to \(n\)}
  & Section~\ref{sec:mainresult}\\
  \hline
  \(T\) & a tree & Section~\ref{sec:RegTreeGram}\\
  \hline
  \(|T|\) & the size of tree \(T\) & Section~\ref{sec:RegTreeGram}\\
  \hline
  \(C\) & a  tree context & Section~\ref{sec:RegTreeGram}\\
  \hline
  \(|C|\) & the size of context \(C\) & Section~\ref{sec:RegTreeGram}\\
  \hline
  \(C\subc C'\) & \(C\) is a subcontext of \(C'\) & Section~\ref{sec:RegTreeGram}\\
  \hline
  \(S\) & a \linear{context} & Section~\ref{sec:RegTreeGram}\\
  \hline
  \(U\) & an \affine{context} & Section~\ref{sec:RegTreeGram}\\
  \hline
  \(N\) & a nonterminal & Section~\ref{sec:RegTreeGram}\\
  \hline
  \(a\) & a terminal symbol & Section~\ref{sec:RegTreeGram}\\
  \hline
  \(\grm\) & a grammar & Section~\ref{sec:RegTreeGram}\\
  \hline
  \(\f\) & a context type & Section~\ref{sec:RegTreeGram}\\
  \hline
  \(\inhav{\grm,N}\) & the set of trees generated by \(N\) & Section~\ref{sec:RegTreeGram}\\
  \hline
    \(\inhavn{\grm,N}\)&
  \mline{Like \(\inhav{\grm,N}\), but
    the tree size is restricted to \(n\)}
  & Section~\ref{sec:RegTreeGram}\\
  \hline
  \(\inhav{\grm,\f}\) &  \mline{ the set of contexts \(C\)
    such that \(N\rew^* C[N_1,\ldots,N_k]\)
where \(\f=N_1\cdots N_k\Ar N\)
  }  & Section~\ref{sec:RegTreeGram}\\
  \hline
    \(\inhavn{\grm,\f}\)&
  \mline{Like \(\inhav{\grm,\f}\), but
    the context size is restricted to \(n\)}
  & Section~\ref{sec:RegTreeGram}\\
\hline
\end{tabular}

\begin{tabular}{|l|l|l|}
  \hline
  notation & explanation & introduced at:\\
  \hline
    \(\inhav{\grm}\)&
  \mline{the set of trees generated from some nonterminals}
  & Section~\ref{sec:RegTreeGram}\\
  \hline
    \(\inhavn{\grm}\)&
  \mline{Like \(\inhavn{\grm}\), but the tree size is restricted to \(n\)}
  & Section~\ref{sec:RegTreeGram}\\
  \hline
    \(\ctx{\grm}\)&
  \mline{the set of \linear{contexts} generated from some nonterminals}
  & Section~\ref{sec:RegTreeGram}\\
  \hline
    \(\ctxn{\grm}\)&
  \mline{Like \(\ctxn{\grm}\), but the size is restricted to \(n\)}
  & Section~\ref{sec:RegTreeGram}\\
  \hline
    \(\octx{\grm}\)&
  \mline{the set of \affine{contexts} generated from some nonterminals}
  & Section~\ref{sec:RegTreeGram}\\
  \hline
    \(\octxn{\grm}\)&
  \mline{Like \(\octxn{\grm}\), but the size is restricted to \(n\)}
  & Section~\ref{sec:RegTreeGram}\\
  \hline
  $E$ & a second-order context & Section~\ref{sec:decomposition}\\
  \hline
  \(\shn(E)\) & the number of second-order holes of \(E\) &
		 Section~\ref{sec:decomposition}\\
 \hline
  $P$ & a sequences of affine contexts & Section~\ref{sec:decomposition}\\
    \hline
  $\len{P}$ & the length of the sequence \(P\) of affine contexts & Section~\ref{sec:decomposition}\\
    \hline
    \(r\) & the largest arity of symbols in \(\ralph\) & Section~\ref{sec:gindependent-decomposition}
    \\
    \hline
  $\closeddecomp$ & \mline{grammar-independent decomposition function}
  & Section~\ref{sec:gindependent-decomposition}\\
  \hline
  $\decomp$ & an auxiliary decomposition function
  & Section~\ref{sec:gindependent-decomposition}\\
  \hline
  \(\uframe\) & \mline{the set of second-order contexts obtained by decomposing trees of size \(n\)}  
  & Section~\ref{sec:decomp-trees}\\
  \hline
  \(\ucomp\) & \mline{the set of affine contexts that match the hole
    \(\hhole_k^n\)} 
  & Section~\ref{sec:decomp-trees}\\
  \hline
  \(\useq\) & \mline{the set of sequences of affine contexts
    that match the second-order context \(E\)} 
  & Section~\ref{sec:decomp-trees}\\
  \hline
  \(\gctx\) & a typed second order context & Section~\ref{sec:rtl-decomposition}\\
  \hline
  \(\Hhole{\f}{n}\) & a second-order hole & Section~\ref{sec:rtl-decomposition}\\
  \hline
  \(\shn(\gctx)\) & the number of second-order holes in \(\gctx\)   & Section~\ref{sec:rtl-decomposition}\\
  \hline
  \(\sh{\gctx}{i}\) & the \(i\)-th second-order hole of \(\gctx\)
  & Section~\ref{sec:rtl-decomposition}\\
  \hline
  \(\closedgdecomp\)& grammar-respecting decomposition function for trees &
  Section~\ref{sec:grespecting-decomposition}\\
  \hline
  \(  \tframe(\grm, N)\)
&  typed-version of \(\uframe\) &
  Section~\ref{sec:grespecting-bijection}\\
  \hline
  \(  \tcomp\)
&   typed-version of \(\ucomp\) &
  Section~\ref{sec:grespecting-bijection}\\
  \hline
  \(  \tseq(\grm, N)\)
&   typed-version of \(\useq\) &
  Section~\ref{sec:grespecting-bijection}\\
  \hline
  \(\lralph{\D}{\I}{\K}\)
  & the set of terminals of the grammar for \(\lambda\)-terms
  & Section~\ref{sec:grammar-for-lambda}\\
  \hline
  \(\lnont{\D}{\I}{\K}\)
  &the set of nonterminals of the grammar for \(\lambda\)-terms
  & Section~\ref{sec:grammar-for-lambda}\\
  \hline
  \(\lrules{\D}{\I}{\K}\)
  &the set of rules of the grammar for \(\lambda\)-terms
  & Section~\ref{sec:grammar-for-lambda}\\
  \hline
  $\expl{m}{k}$ & an explosive term & Section~\ref{sec:explosive}\\
  \hline
\end{tabular}
 \section{Proof of Canonical Form Lemma}
\label{appx:canonical}
This section proves Lemma~\ref{lem:canonical-grammar} in Section~\ref{sec:canonical-grammar}.
We describe the required transformation as the composite of two transformations.
In the first step, we transform a grammar to \emph{semi-canonical} one:
\begin{defi}[Semi-Canonical Grammar]
  A rewriting rule is in \emph{semi-canonical form} if it is in canonical form or of the form
  \[
  N \rew_\grm N'.
  \]
  A grammar \( \grm = (\ralph, \nont, \rules) \) is \emph{semi-canonical} if every rewriting rule is in semi-canonical form.
\end{defi}

Lemma~\ref{lem:canonical-grammar} will be shown by the next two lemmas.
\begin{lem}\label{lem:making-semi-canonical}
  Let \( \grm = (\ralph, \nont, \rules) \) be a regular tree grammar that is unambiguous and essentially strongly-connected.
  Then one can (effectively) construct a grammar \( \grm' = (\ralph, \nont', \rules') \) with \( \nont \subseteq \nont' \) that satisfies the following conditions:
  \begin{itemize}
  \item \( \grm' \) is semi-canonical, unambiguous and essentially strongly-connected.
  \item \( \inhav{\grm,N} = \inhav{\grm',N} \) for every \( N \in \nont \).
  \item \( \ctxinf{\grm} \subseteq \ctxinf{\grm'} \).
  \end{itemize}
\end{lem}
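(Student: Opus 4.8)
The plan is to establish the lemma by iterating the ``delaying'' transformation sketched in \Sec{canonical-grammar}. Before starting, note we may assume that every nonterminal of \( \grm \) generates a non-empty language: nonterminals \( N \) with \( \inhav{\grm,N} = \esym \), together with all rules mentioning \( N \), can be deleted without changing \( \inhav{\grm,N'} \) for any surviving \( N' \), without changing reachability among surviving nonterminals, and without introducing ambiguity. Now, as long as the grammar has a rule \( \rho : N \rew a(T_1, \dots, T_{\ralph(a)}) \) with \( T_i \notin \nont \) for some \( i \), I would pick a fresh nonterminal \( N' \) and replace the grammar by the one obtained by deleting \( \rho \) and adding the two rules \( N \rew a(T_1, \dots, T_{i-1}, N', T_{i+1}, \dots, T_{\ralph(a)}) \) and \( N' \rew T_i \). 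Since each step only adds a fresh nonterminal, \( \nont \subseteq \nont' \) for the final grammar \( \grm' \).

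To see that the process terminates, I would use the measure \( \mu(\grm) \defeq \sum_{(N \rew T) \in \rules} w(T) \), where \( w(T) \) counts the nodes of \( T \) that are \emph{non-root} and labelled by a terminal. A short calculation shows that one step decreases \( \mu \) by exactly one: the terminal nodes of \( T_i \) are exactly the non-root nodes of \( a(T_1,\dots,T_{\ralph(a)}) \) that lie in \( T_i \), and after the step they reappear in the rule \( N' \rew T_i \) with its (terminal-labelled) root no longer counted. Moreover \( \mu(\grm) = 0 \) holds precisely when every rule is of the form \( N \rew a(N_1,\dots,N_{\ralph(a)}) \) or \( N \rew N' \), i.e.~when \( \grm \) is semi-canonical. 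Hence the process halts after \( \mu(\grm) \) steps with a semi-canonical grammar.

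It then remains to verify that a single step, say from \( \grm \) to \( \grm_1 \), preserves the other three properties; the lemma follows by induction. Preservation of \( \inhav{\grm,N} \) for \( N \in \nont \) is immediate, as every use of \( \rho \) is simulated by the two new rules and conversely; one also notes that \( \inhav{\grm_1,N'} \) is the set of \( \ralph \)-trees obtained from \( T_i \) by substituting trees for its nonterminal leaves. The inclusion \( \ctxinf{\grm} \subseteq \ctxinf{\grm_1} \) holds because \( \inhav{\grm,M} = \inhav{\grm_1,M} \) for \( M \in \nont \) (so each such nonterminal retains its ``infinite'' status) and every reachability witness \( N_1 \rew^*_\grm S[N_2] \) is still valid in \( \grm_1 \). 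For unambiguity, I would exhibit a bijection between leftmost derivations of a tree \( T \) from a nonterminal in \( \grm \) and in \( \grm_1 \): since \( N' \) is fresh, occurs only at position \( i \) in the one new right-hand side, and has the unique rule \( N' \rew T_i \), each application of \( N \rew a(T_1,\dots,N',\dots,T_{\ralph(a)}) \) in a \( \grm_1 \)-leftmost derivation is matched with a uniquely determined later application of \( N' \rew T_i \) to the occurrence it created; these matchings are properly nested, so contracting each matched pair into the single \( \grm \)-step \( N \rew a(T_1,\dots,T_i,\dots,T_{\ralph(a)}) \) yields a \( \grm \)-leftmost derivation, and the correspondence is invertible.

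The hard part will be preservation of essential strong-connectivity. Writing \( \nontinf = \{ M \in \nont \mid \card{\inhav{\grm,M}} = \infty \} \) and \( \nontinfp{\nont_1} \) analogously, one has \( \nontinfp{\nont_1} \subseteq \nontinf \cup \{N'\} \); any two members of \( \nontinf \) stay mutually reachable because their reachability witnesses in \( \grm \) remain valid in \( \grm_1 \). The delicate case is \( \card{\inhav{\grm_1,N'}} = \infty \). Then \( T_i \) contains a nonterminal leaf \( M \) with \( \card{\inhav{\grm,M}} = \infty \), and since all nonterminals of \( \grm \) are non-empty it follows that \( \card{\inhav{\grm,N}} = \infty \) too; hence \( M, N \in \nontinf \) are mutually reachable in \( \grm \), and thus in \( \grm_1 \). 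On the other hand, in \( \grm_1 \), \( N \) reaches \( N' \) (apply \( N \rew a(T_1,\dots,N',\dots,T_{\ralph(a)}) \) and rewrite the sibling subtrees down to \( \ralph \)-trees, which exist by non-emptiness), and \( N' \) reaches \( M \) (apply \( N' \rew T_i \) and rewrite \( T_i \) down to an \( \ralph \)-linear context whose hole sits at the leaf \( M \)). Therefore \( N' \) and \( N \) are mutually reachable, hence \( N' \) is mutually reachable with every element of \( \nontinf \) by transitivity, establishing essential strong-connectivity of \( \grm_1 \) and closing the induction.
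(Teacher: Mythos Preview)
Your proposal is correct and follows essentially the same route as the paper: the same single-step ``split off a subterm into a fresh nonterminal'' transformation, the same WLOG assumption that all nonterminals are non-empty, and the same invariant checks (language preservation, unambiguity via a bijection of leftmost derivations, essential strong-connectivity by reducing the fresh nonterminal to old ones through $T_i$). Your termination measure (non-root terminal-labelled nodes summed over right-hand sides) is literally equal to the paper's $\sum \max(|C|-1,0)$, just phrased differently.
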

\begin{proof}
  Let \( \grm = (\ralph, \nont, \rules) \) be a grammar that satisfies the assumption.
  We can assume without loss of generality that \( \inhav{\grm, N_0} \neq \emptyset \) for every \( N_0 \in \nont \).
  
  Assume \( \grm  \) has a rule of the form
  \[
  N' \rew a(\tr_1, \dots, \tr_n) \qquad\mbox{(\( \tr_1, \dots, \tr_n \) are \( (\ralph \cup \nont) \)-trees)}
  \]
  such that \( T_i \notin \nont \) for some \(i\).
  (If \( \grm \) does not have a rule of this form, then \( \grm \) is already semi-canonical.)
Let \( N'' \) be a fresh nonterminal not in \( \nont \).
  Consider the grammar \( \grm'' = (\ralph, \nont \cup \{ N'' \}, \rules'') \) where
\begin{align*}
  \rules'' =
(\rules \;\setminus&\; \set{\; N' \rew a(\tr_1, \dots, \tr_n) \;})
\\
  \cup&\; \{\; N' \rew a(\tr_1, \dots, \tr_{i-1}, N'', \tr_{i+1}, \dots, \tr_n), \quad N'' \rew \tr_i \;\}.
\end{align*}
  Then we can show the following claim by induction on the length of rewriting sequences.
  \begin{quote}
    \textbf{Claim}
    For every \( N \in \nont \) and \( (\ralph \cup \nont) \)-tree \( \tr \),
    \begin{itemize}
    \item 
      \( N \rew_{\grm}^* \tr \) if and only if \( N \rew_{\grm''}^* \tr \), and
    \item
      \( N \rew_{\grm,\ell}^* \tr \) if and only if \( N \rew_{\grm'',\ell}^* \tr \).
    \end{itemize}
  \end{quote}
  This claim implies that \( \grm'' \) is unambiguous and \( \inhav{\grm,N} = \inhav{\grm'',N} \) for every \( N \in \nont \).

  We show that \( \grm'' \) is essentially strongly-connected.
  Let \( N_1, N_2 \in \nont \cup \{ N'' \} \) and assume \( \card{\inhav{\grm'',N_1}} = \card{\inhav{\grm'',N_2}} = \infty \).
  We should show that \( N_2 \) is reachable from \( N_1 \).
  We can assume without loss of generality that \( N_1 \neq N'' \) and \( N_2 \neq N'' \) by the following argument:
  \begin{itemize}
  \item
    If \( N_1 = N'' \), then \( T_i \) must contain a nonterminal \( N_1' \in \nont \) with \( \card{\inhav{\grm'', N_1'}} = \infty \).
    Since \( N'' \rew_{\grm''} T_i \) is the unique rule for \( N'' \) and \( \card{\inhav{\grm'', N''}} = \infty \), we have \( \inhav{\grm, N_0} \neq \emptyset \) for each nonterminal \( N_0 \) appearing in \( T_i \).
    Hence \( N'' \rew_{\grm''} T_i \rew_{\grm''}^* S[N_1'] \) for some \linear{context} \( S \) over \( \ralph \) by rewriting each occurrence of a nonterminal \( N_0 \) in \( T_i \) (except for one occurrence of \( N_1' \)) with an element of \( \inhav{\grm'', N_0} \).
    So \( N_1' \) is reachable from \( N'' \) in \( \grm'' \).
    Since the reachability relation is transitive, it suffices to show that \( N_2 \) is reachable from \( N_1' \in \nont \), which satisfies \( \card{\inhav{\grm'', N_1'}} = \infty \).
  \item
    If \( N_2 = N'' \), then \( \inhav{\grm'', N'} \) is infinite and \( N'' \) is reachable from \( N' \).
    (Here we use the assumption that \( \inhav{\grm, N_0} \neq \emptyset \) for every \( N_0 \in \nont \).)
    Hence it suffices to show that \( N' \) is reachable from \( N_1 \).
  \end{itemize}
  Since \( \inhav{\grm'', N_i} = \inhav{\grm, N_i} \) (\( i = 1, 2 \)), both \( \inhav{\grm, N_1} \) and \( \inhav{\grm, N_2} \) are infinite.
  Since \( \grm \) is essentially strongly-connected, there exists a \( 1 \)-context \( S \) such that \( N_1 \rew_{\grm}^* S[N_2] \).
  Since \( S[N_2] \) is a \( (\ralph \cup \nont) \)-tree, by the above claim, \( N_1 \rew_{\grm''}^* S[N_2] \) and thus \( N_2 \) is reachable from \( N_1 \) in \( \grm'' \).

  We show that \( \ctxinf{\grm} \subseteq \ctxinf{\grm''} \).
  Suppose that \( S \in \inhav{\grm, N \Ar N'} \) with \( \card{\inhav{\grm, N}} = \card{\inhav{\grm, N'}} = \infty \).
  By definition, \( N' \rew_{\grm}^* S[N] \).
  By the above claim, \( N' \rew_{\grm''}^* S[N] \).
  Hence \( S \in \inhav{\grm'', N \Ar N'} \).
  Since \( \inhav{\grm, N} = \inhav{\grm'', N} \) and \( \inhav{\grm, N'} = \inhav{\grm'', N'} \), we have \( \card{\inhav{\grm'', N}} = \card{\inhav{\grm'', N'}} = \infty \).

    By applying the above transformation as much as needed, we obtain a grammar that satisfies the requirements.
    What remains is to show the termination of the iterated application of the above transformation.
    A termination measure can be given as follows: the measure of a rule \( N \rew C[N_1, \dots, N_k] \) is defined as \( \max(|C| - 1, 0) \) and the measure of a grammar is the sum of the measures of rules.
    The measure is by definition a non-negative integer and decreases by \( 1 \) by the above transformation.
\end{proof}

\begin{lem}\label{lem:making-canonical}
  Let \( \grm = (\ralph, \nont, \rules) \) be a regular tree grammar that is
  semi-canonical, unambiguous and essentially strongly-connected.
  Then one can (effectively) construct a grammar \( \grm' = (\ralph, \nont', \rules') \) and
  a family \( (\nset_N)_{N \in \nont} \) of subsets \( \nset_N \subseteq \nont' \) that satisfy the following conditions:
  \begin{itemize}
  \item \( \grm' \) is canonical, unambiguous and essentially strongly-connected.
  \item 
	\( \inhav{\grm,N} = \biguplus_{N' \in \nset_N} \inhav{\grm',N'} \) for every \(N \in \nont\).
  \item \( \ctxinf{\grm} \subseteq \ctxinf{\grm'} \).
  \end{itemize}
\end{lem}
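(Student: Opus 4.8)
The plan is to eliminate the unit rules $N \rew N'$ of the (semi-canonical, unambiguous, essentially strongly-connected) grammar $\grm$; note that semi-canonicity is what makes this meaningful, since every rule is then either a unit rule or a canonical rule. The naive elimination — replacing $N \rew N'$ by copying $N'$'s rules up to $N$ — does not work: a single unit rule can turn an infinite nonterminal into one that is unreachable in the resulting canonical grammar, destroying essential strong-connectivity. The fix is to \emph{re-partition} the languages. So first I would delete every nonterminal $N$ with $\inhav{\grm,N}=\emptyset$ together with the rules mentioning it (this preserves semi-canonicity, unambiguity and essential strong-connectivity, argued as in the proof of Lemma~\ref{lem:making-semi-canonical}; put $\nset_N \defeq \emptyset$ for such $N$), and assume from now on that every nonterminal is productive. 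Call $M \in \nont$ \emph{branching} if $\rules$ has a rule $M \rew a(\dots)$, and write $N \rew^{*}_{u} M$ when $N \rew^{*}_{\grm} M$ using only unit rules. For branching $M$ let $L_M$ be the set of $\ralph$-trees $T$ whose (unique, by unambiguity) leftmost derivation from $M$ begins with a canonical rule of $M$. Unambiguity of $\grm$ gives, for every $N$, that the sets in $\{L_M \mid N \rew^{*}_{u} M,\ M \text{ branching}\}$ are pairwise disjoint and
\[
\inhav{\grm,N} \;=\; \biguplus_{\substack{N \rew^{*}_{u} M\\ M \text{ branching}}} L_M,
\]
since the leftmost derivation of a tree from $N$ factors uniquely into unit steps reaching some branching $M$, followed by a canonical rule of $M$. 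Write $\mathcal{D}_N$ for the index set of this union.

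Next I would set $\grm' \defeq (\ralph,\nont',\rules')$ with $\nont' \defeq \{[M] \mid M \text{ branching}\}$ and $\rules'$ containing $[M] \rew a([M''_1],\dots,[M''_k])$ for each rule $M \rew a(M_1,\dots,M_k)$ of $\grm$ and each choice of branching $M''_i$ with $M_i \rew^{*}_{u} M''_i$; and define $\nset_N \defeq \{[M] \mid M \in \mathcal{D}_N\}$. Then $\grm'$ is canonical by construction. A routine induction on tree size — using at each node the uniqueness, from unambiguity of $\grm$, of the canonical rule of $M$ consistent with the given subtree and of the first branching nonterminal reached below it — shows simultaneously that $\inhav{\grm',[M]} = L_M$, hence $\inhav{\grm,N} = \biguplus_{N' \in \nset_N}\inhav{\grm',N'}$, and that $\grm'$ is unambiguous. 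Each $[M]$ is productive: pick a canonical rule $M \rew a(M_1,\dots,M_k)$; the $M_i$ are productive, so $L_M \neq \emptyset$.

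What remains — and this is the main obstacle — is essential strong-connectivity of $\grm'$ and the inclusion $\ctxinf{\grm} \subseteq \ctxinf{\grm'}$. The two facts I would lean on are: (i) $[M]$ is infinite if and only if $\inhav{\grm,M}$ is infinite, and in that case $M$ has a canonical rule $M \rew a(M_1,\dots,M_k)$ with some $M_{i_0}$ infinite in $\grm$; and (ii) if $X \rew^{*}_{\grm} S[Y]$ for a nontrivial linear context $S$ over $\ralph$ with $Y$ infinite in $\grm$, then the first branching nonterminal $R$ met along some such derivation (which therefore fires a canonical rule, with $R \rew^{*}_{\grm} S'[Y]$ for the corresponding linear context $S'$) is itself infinite in $\grm$, since $\{S'[T] \mid T \in \inhav{\grm,Y}\} \subseteq L_R$. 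Given infinite $[M],[M']$ in $\grm'$, I would use (i) to get a canonical rule $M \rew a(M_1,\dots,M_k)$ with $M_{i_0}$ infinite, then essential strong-connectivity of $\grm$ to get $M_{i_0} \rew^{*}_{\grm} S_1[M']$ for a linear context $S_1$ over $\ralph$, and finally convert the derivation $M \rew a(M_1,\dots,M_k) \rew^{*}_{\grm} S_2[M']$ into a $\grm'$-derivation from $[M]$: mimic each canonical rule, at every unit-run pick the first branching nonterminal it reaches, at the hole pick $[M']$ itself (legitimate: $M'$ is branching and is unit-reachable from the nonterminal just above the hole), and fill the off-path subterms with arbitrary trees, which exist by productivity. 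This yields a linear context $S'$ over $\ralph$ with $[M] \rew^{*}_{\grm'} S'[[M']]$, so $[M']$ is reachable from $[M]$. For the inclusion, given $S \in \inhav{\grm,N\Ar N'}$ with $N,N'$ infinite (the case $S=\hole$ being trivial), fact (ii) applied to $N' \rew^{*}_{\grm} S[N]$ produces an infinite branching $R$ with $R \rew^{*}_{\grm} S[N]$ starting from a canonical rule of $R$; converting as above, and at the hole choosing a branching $M''$ with $N \rew^{*}_{u} M''$ and $L_{M''}$ infinite (it exists because $\inhav{\grm,N}$ is infinite), gives $S \in \inhav{\grm',[R]\Ar[M'']}$ with $[R],[M'']$ infinite, hence $S \in \ctxinf{\grm'}$.

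The crux, then, is the observation that re-routing a reachability or context derivation of $\grm$ through the new nonterminals can always be arranged so that its endpoints land on nonterminals $[M]$ with infinite language, and the reason this is possible is exactly facts (i) and (ii): an infinite $[M]$ is forced to have a canonical rule branching into an infinite subterm, and the first branching nonterminal on a path towards an infinite nonterminal is automatically infinite. Finally, composing with Lemma~\ref{lem:making-semi-canonical} yields Lemma~\ref{lem:canonical-grammar}.
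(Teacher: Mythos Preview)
The proposal is correct and follows essentially the same construction as the paper: your ``branching'' nonterminals are exactly the paper's $\nont'$, your unit-reachability $\rew^{*}_{u}$ is the paper's $N \rew^{*}_{\grm} N'$ between nonterminals, your $L_M$ is the paper's $\inhav{\grm',M}$, and your $\nset_N$ coincides with the paper's. One small slip: the ``if'' direction of your fact~(i) is false (take $M \rew \Tc$, $M \rew M'$, $M' \rew \Ta(M)$: this is semi-canonical, unambiguous and essentially strongly-connected, $\inhav{\grm,M}$ is infinite, yet $L_M = \{\Tc\}$), but you only ever invoke the ``only if'' direction, so the argument stands.
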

\begin{proof}
  Let \( \grm = (\ralph, \nont, \rules) \) be a grammar that satisfies the assumption. 
  Let \( \nont' \) be the subset of \( \nont \) defined by
  \[
  N_0 \in \nont' \;\Leftrightarrow\; (N_0 \rew a(N_1, \dots, N_{\ralph(a)})) \in \rules \;\;\mbox{for some \( a, N_1, \dots, N_{\ralph(a)} \)}.
  \]
  The rewriting rule \( \rules' \) is defined by
  \begin{align*}
    \rules' := \{ N_0 \rew a(N_1', \dots, N_{\ralph(a)}') \mid~
    & N_0, N'_1, \dots, N'_{\ralph(a)} \in \nont', \\
    & N_0 \rew_{\grm} a(N_1, \dots, N_{\ralph(a)}) \rew_{\grm}^* a(N'_1, \dots, N'_{\ralph(a)}) \}.
  \end{align*}
  In other words, \( N_0 \rew a(N_1', \dots, N_{\ralph(a)}') \) is in \( \rules' \) if and only if there exists a rule \( N_0 \rew a(N_1, \dots, N_{\ralph(a)}) \) in \( \rules \) with \( N_i \rew_{\grm}^* N'_i \) for each \( i = 1, \dots, \ralph(a) \).
  Given a nonterminal \( N \in \nont \), let \( \nset_{N} \subseteq \nont' \) be the subset given by
  \[
  \nset_{N} \defeq \{ N' \in \nont' \mid N \rew_\grm^* N' \}.
  \]
  We show that the grammar \( \grm' \defeq (\ralph, \nont', \rules') \) together with \( (\nset_N)_{N \in \nont} \) satisfies the requirement.

  We can prove the following claims by induction on the length of rewriting sequences:
  \begin{enumerate}
  \item For \( N \in \nont \) and \( T \in \TR{\ralph \cup \nont'} \), if \( N \rew_{\grm}^* T \), then \( N' \rew_{\grm'}^* T \) for some \( N' \in \nset_{N} \).
  \item For \( N' \in \nont' \) and \( T \in \TR{\ralph \cup \nont'} \), if \( N' \rew_{\grm'}^* T \), then \( N' \rew_{\grm}^* T \).
  \end{enumerate}

  We show that \( \inhav{\grm,N} = \bigcup_{N' \in \nset_N} \inhav{\grm',N'} \).
  The above \((1)\) shows that \( \inhav{\grm,N} \subseteq \bigcup_{N' \in \nset_N} \inhav{\grm',N'} \).
  The above \((2)\) shows that \( \inhav{\grm',N'} \subseteq \inhav{\grm,N'} \) for every \( N' \in \nset_N \).
  By definition of \( \nset_N \), we have \( N \rew_{\grm}^* N' \) and thus \( \inhav{\grm,N'} \subseteq \inhav{\grm,N} \) for every \( N' \in \nset_N \).
  So \( \bigcup_{N' \in \nset_N} \inhav{\grm',N'} \subseteq \inhav{\grm,N} \).

  We prove by contraposition that, for \( N_1, N_2 \in \nset_N \), if \( N_1 \neq N_2 \), then \( \inhav{\grm', N_1} \cap \inhav{\grm', N_2} = \emptyset \).
  Suppose that \( T \in \inhav{\grm', N_1} \cap \inhav{\grm', N_2} \).
  Then we have
  \[
  N_i \rew_{\grm'} a(N'_{i,1}, \dots, N'_{i,\ralph(a)}) \rew_{\grm'}^* T   \qquad (i = 1, 2)
  \]
for some \(N'_{i,j} \in \nont'\).
  By the definition of \( \rules' \), we have
  \[
  N_i \rew_{\grm} a(N_{i,1}, \dots, N_{i,\ralph(a)}) \rew_{\grm}^* a(N'_{i,1}, \dots, N'_{i,\ralph(a)}) \rew_{\grm'}^* T   \qquad (i = 1, 2).
  \]
  Because \( N_i \in \nset_N \), we have \( N \rew_{\grm}^* N_i \), for \( i = 1, 2 \).
  By the claim \((2)\), we have
\[
  N \rew_{\grm}^* N_i \rew_{\grm} a(N_{i,1}, \dots, N_{i,\ralph(a)}) \rew_{\grm}^* a(N'_{i,1}, \dots,
 N'_{i,\ralph(a)}) \rew_{\grm}^* T   \qquad (i = 1, 2).
\]
  The above two rewriting sequences of \(\grm\) induce the following two leftmost rewriting sequences:
\[
  N \rew_{\grm,\ell}^* N_i \rew_{\grm,\ell} a(N_{i,1}, \dots, N_{i,\ralph(a)}) \rew_{\grm,\ell}^* T   \qquad (i = 1, 2),
\]
which are the same by the unambiguity of \(\grm\).
  Then the length of the sequence \(N \rew_{\grm,\ell}^* N_1\) and that of \(N \rew_{\grm,\ell}^* N_2\) are the same
  since just after them the root terminal \(a\) occurs, and hence \(N_1=N_2\), as required. 

  Next we prove the unambiguity of \(\grm'\), by contradiction:
  we assume that \( \grm' \) is not unambiguous,
  and show that \(\grm\) is not unambiguous, which is a contradiction.
  Let \( N_0' \in \nont' \) and \( T \in \TR{\ralph} \) be a pair of a nonterminal and a tree such that there exist two leftmost rewriting sequences from \( N_0' \) to \( T \) in \( \grm' \).
  We can assume without loss of generality that the leftmost rewriting sequences differ on the first step.
  Assume that the two rewriting sequences are
  \[
  N_0' \rew_{\grm', \ell} a(N_{1,1}', \dots, N_{1,\ralph(a)}') \rew_{\grm', \ell}^* a(T_1, \dots,
 T_{\ralph(a)}) = T
  \]
  and
  \[
  N_0' \rew_{\grm', \ell} a(N_{2,1}', \dots, N_{2,\ralph(a)}') \rew_{\grm', \ell}^* a(T_1, \dots,
 T_{\ralph(a)}) = T.
  \]
  We have \( N_{1,j}' \neq N_{2,j}' \) for some \( j \).
  By the definition of \( \rules' \), there are rules
  \[
  N_0' \rew_{\grm} a(N_{1,1}, \dots, N_{1,\ralph(a)}) \qquad\mbox{and}\qquad N_0' \rew_{\grm} a(N_{2,1}, \dots, N_{2,\ralph(a)})
  \]
  with \( N_{i,j} \rew_{\grm}^* N_{i,j}' \) for every \( (i,j) \in \{ 1, 2 \} \times \{ 1, \dots, \ralph(a) \} \).
  There are two cases:
  \begin{itemize}
  \item
    Case \( N_{1,j} \neq N_{2,j} \) for some \( j \in \{ 1, \dots, \ralph(a) \} \):
    By the rewriting sequences above and the claim \((2) \), we have two sequences
    \[
    N_0' \rew_{\grm} a(N_{i,1}, \dots, N_{i,\ralph(a)}) \rew_{\grm}^* a(N_{i,1}', \dots, N_{i,\ralph(a)}') \rew_{\grm}^* a(T_1, \dots, T_{\ralph(a)})
    \]
    for \( i = 1, 2 \).
    The corresponding leftmost rewriting sequences differ at the first step.
    So \( \grm \) is unambiguous.
  \item
    Case \( N_{1,j} = N_{2,j} \) for every \( j \in \{ 1, \dots, \ralph(a) \} \):
    Let \( j \) be an index such that \( N_{1,j}' \neq N_{2,j}' \) and \( N_j \defeq N_{1,j} = N_{2,j} \).
    Then \( N_j \rew_{\grm}^* N_{1,j}' \rew_{\grm'}^* T_j \) and \( N_j \rew_{\grm}^* N_{2,j}' \rew_{\grm'}^* T_j \).
    By the definition of \( \rules' \) and the claim \((2)\), we have
    \[
    N_j \rew_{\grm}^* N_{i,j}' \rew_{\grm} b(N_{i,1}'', \dots, N_{i,\ralph(b)}'') \rew_{\grm}^* T_j \qquad(i = 1, 2)
    \]
    with \( N_{1,j}' \neq N_{2,j}' \).
    Hence \( \grm \) is not unambiguous.
  \end{itemize}
  
  We show that \( \grm' \) is essentially strongly-connected.
  Let \( N_1, N_2 \in \nont' \) such that \( \card{\inhav{\grm', N_1}} = \card{\inhav{\grm', N_2}} = \infty \).
  We show that \( N_2 \) is reachable from \( N_1 \) in \(\grm'\).
  By the claim \((2)\), \( \inhav{\grm, N_2} \supseteq \inhav{\grm', N_2} \) is infinite.
  Since \( \inhav{\grm', N_1} \) is infinite, we have a rule
  \[
  N_1 \rew_{\grm'} a(N_{1,1}, \dots, N_{1, \ralph(a)})
  \]
  in \( \rules' \) such that \( \inhav{\grm',N_{1,i}} \) is infinite for some \( i \) and \( \inhav{\grm', N_{1,j}} \neq \emptyset \) for every \( j = 1, \dots, \ralph(a) \).
  Again, by the claim \((2)\), \( \inhav{\grm, N_{1,i}} \supseteq \inhav{\grm', N_{1,i}} \) is infinite.
  Since \( \grm \) is essentially strongly-connected, there exists a \linear{context} \( S \) such that \( N_{1,i} \rew_{\grm}^* S[N_2] \).
  By the claim \((1)\), there exists \( N_{1,i}' \in \nset_{N_{1,i}} \) such that \( N_{1,i}' \rew_{\grm'}^*
 S[N_2] \).
  Since \(N'_{1,i} \in \nset_{N_{1,i}}\), we have \(N_{1,i} \rew^*_{\grm} N'_{1,i}\), and hence
  by the definition of \( \rules' \), we find that
  \[
  N_1 \rew_{\grm'} a(N_{1,1}, \dots, N_{1, i-1}, N_{1,i}', N_{1, i+1}, \dots, N_{1, \ralph(a)})
  \]
  is in \( \rules' \).
  Let us pick \( T_{1,j} \in \inhav{\grm',N_{1,j}} \) for each \( j \neq i\); then we have:
  \begin{align*}
    N_1
    &\rew_{\grm'} a(N_{1,1}, \dots, N_{1, i-1}, N_{1,i}', N_{1, i+1}, \dots, N_{1, \ralph(a)}) \\
    &\rew_{\grm'}^* a(N_{1,1}, \dots, N_{1, i-1}, S[N_2], N_{1, i+1}, \dots, N_{1, \ralph(a)}) \\
    &\rew_{\grm'}^* a(T_{1,1}, \dots, T_{1, i-1}, S[N_2], T_{1, i+1}, \dots, T_{1, \ralph(a)}).
  \end{align*}
  Thus \(N_2\) is reachable from \(N_1\) in \(\grm'\).

  Lastly, we show that \( \ctxinf{\grm} \subseteq \ctxinf{\grm'} \).
  Assume that \( S \in \inhav{\grm, N_1 \Ar N_2} \) and \( \card{\inhav{\grm, N_1}} = \card{\inhav{\grm, N_2}} = \infty \).
  Since \( \inhav{\grm, N_1} = \bigcup_{N_1' \in \nset_{N_1}} \inhav{\grm', N_1'} \), there exists \( N_1' \in \nset_{N_1} \) such that \( \card{\inhav{\grm', N_1'}} = \infty \).
  Since \( N_2 \rew_{\grm}^* S[N_1] \) and \( N_1 \rew_{\grm}^* N_1' \), we have \( N_2 \rew_{\grm}^* S[N_1'] \).
  By the claim \((1)\), there exists \( N_2' \in \nset_{N_2} \) such that \( N_2' \rew_{\grm'}^* S[N_1'] \).
  Hence \( S \in \inhav{\grm', N_1' \Ar N_2'} \).
  Since \( \card{\inhav{\grm', N_1'}} = \infty \), we have \( \card{\inhav{\grm', N_2'}} = \infty \) as well.
\end{proof}

\begin{proof}[Proof of Lemma~\ref{lem:canonical-grammar}]
  Let \( \grm \) be a regular tree grammar that is unambiguous and strongly connected.
  Then \( \grm \) is essentially strongly-connected.
  So by applying Lemmas~\ref{lem:making-semi-canonical} and \ref{lem:making-canonical}, we obtain a grammar \( \grm' \) and a family \( (\nset_N)_{N \in \nont} \) that satisfy the following conditions:
  \begin{itemize}
  \item \( \grm' \) is canonical, unambiguous and essentially strongly-connected.
  \item 
\( \inhav{\grm,N} = \biguplus_{N' \in \nset_N} \inhav{\grm',N'} \) for every \(N \in \nont\).
  \item \( \ctxinf{\grm} \subseteq \ctxinf{\grm'} \).
  \end{itemize}
  The grammar \( \grm' \) and a family \( (\nset_N)_{N \in \nont} \) satisfy the first and second requirements of Lemma~\ref{lem:canonical-grammar}.
  We show that \( \card{\inhav{\grm}} = \infty \) implies \( \ctx{\grm} \subseteq \ctxinf{\grm'} \).

  Suppose that \( \card{\inhav{\grm}} = \infty \).
  Since \( \inhav{\grm} = \bigcup_{N \in \nont} \inhav{\grm, N} \) and \( \nont \) is finite, there exists \( N_0 \in \nont \) such that \( \card{\inhav{\grm,N_0}} = \infty \).
  By strong connectivity of \( \grm \), we have \( \card{\inhav{\grm,N}} = \infty \) for every \( N \in \nont \).
  Hence \( \nontinf = \nont \), and thus \( \ctx{\grm} = \ctxinf{\grm} \).
  We have \( \ctx{\grm} = \ctxinf{\grm} \subseteq \ctxinf{\grm'} \) as required.
\end{proof}

 \section{Proof of Lemma~\ref{lem:period-grammar2}}
\label{sec:periodicity}
We start from the formal definition of periodicity and its basic properties.
\begin{defi}[Period, Basic Period]\label{def:period}
  Let \( X \subseteq \nat \) be a subset of natural numbers.
  We say \( X \) is \emph{periodic} with \emph{period \( c \)} (\( c \in \nat \backslash \{0\} \)) if
  \[
  \exists n_0 \in \nat. \;\forall n \ge n_0. \;\forall k \in \nat. \quad n \in X \Leftrightarrow (n + ck) \in X.
  \]
  If \( X \) is periodic, we call the minimum period the \emph{basic period}.
  
  Let \( \grm \) be a regular tree grammar and \( \f = (N_1 \dots N_k \Ar N)\) be a context type of \( \grm \).
  We say \( \f \) is \emph{periodic with period \( c \)} if so is the subset of natural numbers
  \[
  \{ n \in \nat \mid \inhavn{\grm, \f} \neq \emptyset \},
  \]
  and call the minimum period the \emph{basic period}.
\end{defi}

\begin{lem}\label{lem:period-lem1}
  Let \( X \subseteq \nat \) be a subset of natural numbers and \( c \in \nat \backslash \{ 0 \} \).
  \begin{enumerate}
  \item\label{item:period-lem1-1}
    Suppose that
    \[
    \exists n_0 \in \nat. \;\forall n \ge n_0. \quad n \in X \Rightarrow n+c \in X.
    \]
    Then \( X \) is periodic with period \( c \).
  \item\label{item:period-lem1-k}
    Suppose that
    \[
    \exists n_0 \in \nat. \;\exists k_0 \in \nat. \;\forall n \ge n_0. \;\forall k \ge k_0. \quad n \in X
       \Rightarrow n + ck \in X.
    \]
    Then \( X \) is periodic with period \( c \).
  \item\label{item:period-lem1-gcd}
    Suppose that \( c \) and \( c' \) are periods of \( X \).
    Then \( \gcd(c, c') \) is also a period of \( X \), where \( \gcd(c, c') \) is the greatest common divisor.
  \end{enumerate}
\end{lem}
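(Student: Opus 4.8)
\textbf{Proof plan for Lemma~\ref{lem:period-lem1}.}
The three parts are elementary facts about subsets of \(\nat\); the plan is to prove them in the order \eqref{item:period-lem1-1}, \eqref{item:period-lem1-k}, \eqref{item:period-lem1-gcd}, using each part to streamline the next. For part~\eqref{item:period-lem1-1}: assume there is \(n_0\) such that \(n\in X \Rightarrow n+c\in X\) for all \(n\ge n_0\). By induction on \(k\) this immediately gives \(n\in X \Rightarrow n+ck\in X\) for all \(n\ge n_0\) and all \(k\in\nat\). For the converse direction of the periodicity condition, I need \(n+ck\in X \Rightarrow n\in X\) when \(n\ge n_0\); the trick is to look at the residue class of \(n\) modulo \(c\) among the integers \(\ge n_0\). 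Since \(X\cap\{n_0,n_0+1,\dots\}\) is, within each residue class mod \(c\), upward-closed (by the forward implication just proved), each such residue class either lies entirely in \(X\) from some point on, or is disjoint from \(X\) beyond \(n_0\) — wait, that is not quite forced; upward-closedness within a residue class means the class is of the form ``all sufficiently large members of the class, from some threshold \(t_i\) onward''. To get a genuine period I should take \(n_0' \defeq \max_i t_i\) over the (finitely many) residue classes \(i\) that meet \(X\) cofinally, together with handling the classes that meet \(X\) only finitely: beyond \(n_0'\) such a class contributes nothing to \(X\). Then for \(n\ge n_0'\) we have \(n\in X \Leftrightarrow n+ck\in X\) for all \(k\), which is exactly periodicity with period \(c\) and new threshold \(n_0'\).

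For part~\eqref{item:period-lem1-k}: suppose \(n\in X \Rightarrow n+ck\in X\) for all \(n\ge n_0\) and all \(k\ge k_0\). Applying the hypothesis twice, with \(k=k_0\) and \(k=k_0+1\), gives \(n\in X \Rightarrow n+ck_0\in X\) and \(n\in X \Rightarrow n+c(k_0+1)\in X\) whenever \(n\ge n_0\). Iterating, \(X\) restricted to \(\{n_0,n_0+1,\dots\}\) is closed under adding both \(ck_0\) and \(c(k_0+1)\); since \(\gcd(ck_0, c(k_0+1)) = c\cdot\gcd(k_0,k_0+1) = c\), the numerical semigroup generated by \(ck_0\) and \(c(k_0+1)\) contains every sufficiently large multiple of \(c\) (Chicken McNugget / Sylvester–Frobenius). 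Hence there is \(k_1\) with \(n\in X \Rightarrow n+ck\in X\) for all \(n\ge n_0\) and all \(k\ge k_1\); in particular \(n\in X \Rightarrow n+ck_1\in X\) and \(n\in X\Rightarrow n+c(k_1+1)\in X\), so taking their difference in the exponent is not directly available, but by the same semigroup argument with step \(c k_1\) and \(c(k_1+1)\) we push down to step \(c\): concretely, once we know \(X\cap\{n_0,\dots\}\) is closed under adding \(ck\) for \emph{all} \(k\ge k_1\), then for \(n\ge n_0\) with \(n\in X\) we get \(n+ck_1\in X\) and \(n+c(k_1+1)\in X\), and applying closure-under-\(c k_1\) to the latter etc. one shows closure under adding \(c\) outright — alternatively, and more cleanly, note that \(n\in X\) and \(n\ge n_0\) imply \(n + c\cdot(k_1(k_1+1)) \in X\) and \(n + c\cdot((k_1+1)^2)\in X\) whose gcd of increments is \(c(k_1+1)\), and recurse; the cleanest route is simply: closure under \(\{ck : k\ge k_1\}\) is the same as closure under the semigroup they generate, which (since consecutive \(ck_1, c(k_1+1)\) have gcd \(c\)) contains \(c\cdot m\) for all \(m\ge m_0\), some \(m_0\). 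Then \(n\in X \Rightarrow n+c\,m_0 \in X\) and \(n+c(m_0{+}1)\in X\), i.e. \(X\cap\{n_0,\dots\}\) is closed under \(cm_0\) and \(c(m_0{+}1)\); repeating the gcd step a bounded number of times terminates at step \(c\). Once we have \(\exists n_1. \forall n\ge n_1. n\in X\Rightarrow n+c\in X\), part~\eqref{item:period-lem1-1} finishes it. (In writing this up I would just invoke: a numerically-generated sub-semigroup of \(c\nat\) whose generators have gcd \(c\) is cofinite in \(c\nat\), then apply \eqref{item:period-lem1-1}.)

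For part~\eqref{item:period-lem1-gcd}: let \(c,c'\) be periods of \(X\), with thresholds \(n_0, n_0'\) respectively, and set \(d \defeq \gcd(c,c')\). By Bézout, \(d = \alpha c - \beta c'\) for some \(\alpha,\beta\in\nat\) (choosing signs so that \(\alpha c\) and \(\beta c'\) are both nonnegative and their difference is \(d\); one can arrange \(\alpha,\beta\ge 0\)). For \(n\ge \max(n_0,n_0')\) with \(n\in X\): since \(c'\) is a period, \(n + \alpha c \cdot c' \)-type manipulations... more directly: \(n\in X\) and \(n\ge n_0\) give \(n+\alpha c\in X\) (period \(c\), iterated \(\alpha\) times), and then since \(n+\alpha c\ge n_0'\) and \(c'\) is a period, \(n+\alpha c\) and \((n+\alpha c) - \beta c' = n + d\) are simultaneously in or out of \(X\) — here I use the \emph{converse} direction of periodicity for \(c'\), legitimate as long as \(n+d\ge n_0'\), which holds for \(n\) large. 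Hence \(n\in X\Rightarrow n+d\in X\) for all sufficiently large \(n\), and part~\eqref{item:period-lem1-1} (applied with \(d\) in place of \(c\)) shows \(X\) is periodic with period \(d\). Since \(d\mid c\) and \(d\le c\), this is consistent with \(c\) being a period; and \(d\) divides any common period, so in particular it is the relevant period.

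\textbf{Main obstacle.} The genuinely fiddly point is part~\eqref{item:period-lem1-k}: turning ``closed under adding \(ck\) for all large \(k\)'' into ``closed under adding \(c\)''. The honest argument is the Frobenius/numerical-semigroup fact that two coprime-after-dividing-by-\(c\) increments generate a cofinite subset of \(c\nat\); I expect to state this as a one-line lemma (or cite it as folklore) rather than reprove it, then reduce to part~\eqref{item:period-lem1-1}. The bookkeeping of thresholds (each reduction step enlarges \(n_0\)) is routine but must be done carefully so that the final periodicity statement quantifies correctly; I would fold all threshold adjustments into a single ``for sufficiently large \(n\)'' at the end.
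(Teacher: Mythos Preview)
Your arguments for parts~\eqref{item:period-lem1-1} and~\eqref{item:period-lem1-gcd} are correct and match the paper's: for~\eqref{item:period-lem1-1} the paper phrases it as finiteness of the set \(\{n : n+c \in X,\ n \notin X\}\), which is your residue-class thresholding in disguise; for~\eqref{item:period-lem1-gcd} both you and the paper use B\'ezout together with one forward step (period \(c\)) and one backward step (period \(c'\)).

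For part~\eqref{item:period-lem1-k} there is a genuine gap. The Frobenius step does not deliver what you want: the sub-semigroup of \(\nat\) generated by \(\{k : k \ge k_0\}\) (or just by \(\{k_0, k_0+1\}\)) contains all sufficiently large integers, but it does \emph{not} contain \(1\) unless \(k_0 = 1\). Hence ``closure under adding \(ck\) for all \(k \ge k_0\)'' yields only ``closure under adding \(cm\) for all \(m \ge m_0\)'' with some \(m_0 \ge k_0\) --- the same shape of statement with no decrease --- and your recursion does not terminate. The missing ingredient is a \emph{subtraction}, which forward closure alone can never supply. A clean fix along your lines: first apply part~\eqref{item:period-lem1-1} with \(ck_0\) (resp.\ \(c(k_0+1)\)) in place of \(c\) to upgrade forward closure to genuine two-sided periodicity with those periods, and then invoke the B\'ezout argument of part~\eqref{item:period-lem1-gcd} on the pair \((ck_0,\, c(k_0+1))\) to obtain period \(\gcd(ck_0,\, c(k_0+1)) = c\). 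This works, but it means proving~\eqref{item:period-lem1-gcd} before~\eqref{item:period-lem1-k}.

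The paper's route for~\eqref{item:period-lem1-k} is shorter and avoids the detour: it shows directly that the set \(Y = \{n : n \in X,\ n+c \notin X\}\) is finite. Indeed, for \(n \ge n_0\) and \(k \ge k_0\), if \(n \in Y\) then \(n \in X\), so \(n + c(k+1) \in X\) by the hypothesis (with parameter \(k+1 \ge k_0\)), hence \(n + ck \notin Y\); thus each residue class modulo \(c\) contributes at most \(k_0\) elements to \(Y \cap [n_0,\infty)\). Taking \(n_0'\) beyond all of \(Y\) gives \(n \in X \Rightarrow n+c \in X\) for \(n \ge n_0'\), and part~\eqref{item:period-lem1-1} finishes.
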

\begin{proof}
  \eqref{item:period-lem1-1}
  Let \( n_0 \in \nat \) be a witness of the assumption.
  Then for every \( n \ge n_0 \) and \( k \ge 0 \), we have \( n \in X \Rightarrow (n + ck) \in X \), by induction on \( k \).
  Let \( Y \subseteq \nat \) be the set of counterexamples of \( n + c \in X \Rightarrow n \in X \), namely,
  \[
  Y \defeq \{ n \in \nat \mid n + c \in X \;\wedge\; n \notin X \}.
  \]
  Then \( Y \) has at most \( n_0 + c \) elements because, for every \( n \ge n_0 \) and \( k \ge 1 \),
  \[
  n \in Y \quad\Rightarrow\quad n + c \in X \quad\Rightarrow\quad n + ck \in X \quad\Rightarrow\quad n + ck \notin Y.
  \]
  Let \( n_0' \ge n_0 \) be a natural number that is (strictly) greater than all elements of \( Y \).
  Then for every \( n \ge n_0' \),
  \[
  n \in X \quad\Leftrightarrow\quad n + c \in X.
  \]
  By induction on \( k \ge 0 \), we conclude that \( n \in X \Leftrightarrow n + kd \in X \) for every \( n \ge n_0' \).
  
  \eqref{item:period-lem1-k}
  Let \( n_0 \) and \( k_0 \) be witnesses of the assumption and \( Y \subseteq \nat \) be the set of counterexamples of \( n \in X \Rightarrow n+c \in X \), i.e.,
  \[
  Y \defeq \{ n \in \nat \mid n \in X \;\wedge\; n + c \notin X \}.
  \]
  Then \( Y \) has at most \( n_0 + c k_0 \) elements because, for every \( n \ge n_0 \) and \( k \ge k_0 \),
  \[
  n \in Y \quad\Rightarrow\quad n \in X \quad\Rightarrow\quad n + c(k+1) \in X \quad\Rightarrow\quad n + ck \notin Y.
  \]
  Let \( n_0' \) be a natural number that is (strictly) greater than all elements of \( Y \).
  Then for every \( n \ge n_0' \),
  \[
  n \in X \quad\Rightarrow\quad n + c \in X.
  \]
  Hence by item \eqref{item:period-lem1-1}, \( X \) is periodic with period \( c \).

  \eqref{item:period-lem1-gcd}
  The claim trivially holds if \( \gcd(c, c') = c \) or \( c' \).
  We assume that \( \gcd(c, c') < c, c' \).
  Let \( d = \gcd(c, c') \).
  There exist integers \( a, b \) such that \( a c + b c' = d \) (by B\'ezout's lemma).
  Since \( 0 < d < c, c' \), exactly one of \( a \) and \( b \) is negative.
  Suppose that \( a > 0 \) and \( b < 0 \).

  Let \( n_0 \) be a constant such that
  \[
  \forall n \ge n_0. \;\forall k \in \nat. \quad n \in X \Leftrightarrow (n + ck) \in X
  \]
  and \( n_0' \) be a constant for a similar condition for \( c' \).
  Let \( n_0'' = \max(n_0, n_0') - b c' \).
  Then, for every \( n \ge n_0'' \),
  \begin{align*}
    n \in X
    &\quad\Leftrightarrow\quad n + bc' \in X & & \mbox{(since \( n + bc' \ge n_0' \) and \( b < 0 \))} \\
    &\quad\Leftrightarrow\quad n + bc' + ac \in X & & \mbox{(since \( n + bc' \ge n_0 \) and \( a > 0 \))} \\
    &\quad\Leftrightarrow\quad n + d \in X.
  \end{align*}
  By induction on \( k \ge 0 \), we conclude that \( n \in X \Leftrightarrow n + kd \in X \) for every \( n \ge n_0'' \).
\end{proof}

We show that \( N \Ar N' \) is periodic for every \( N, N' \in \nontinf \) and its basic period is independent of the choice of \( N \) and \( N' \).
We call the basic period of \( N \Ar N' \), which is independent of \( N \) and \( N' \), the \emph{basic period of the grammar \( \grm \)}.
Further we show that \( U, U' \in \inhav{\grm, N \Ar N'} \) implies \( |U| \equiv |U'| \mod c \), where \( c \) is the basic period of the grammar \( \grm \).
\begin{lem}\label{lem:period-grammar1}
  Let \( \grm = (\ralph, \nont, \rules) \) be a regular tree grammar.
  Assume that \( \grm \) is essentially strongly-connected and \( \card{\inhav{\grm}} = \infty \).
  \begin{enumerate}
  \item\label{item:period-grammar1-1}
  \( \inhav{\grm, N \Ar N'} \) is infinite for every \( N, N' \in \nontinf \).
  \item\label{item:period-grammar1-2}
  \( N \Ar N \) is periodic for every \( N \in \nontinf \).
  \item\label{item:period-grammar1-3}
  Let \( N \in \nontinf \) and \( c_N \) be the basic period of \( N \Ar N \).
  Then \( \inhavn{\grm, N \Ar N} \neq \emptyset \) implies \( n \equiv 0 \mod c_N \).
  \item\label{item:period-grammar1-4}
  The basic period \( c_N \) of \( N \Ar N \) is independent of \( N \in \nontinf \).  Let \( c = c_N \).
  \item\label{item:period-grammar1-5}
  For every \( N, N' \in \nontinf \), if \( U_1, U_2 \in \inhav{\grm, N \Ar N'} \), then \( |U_1| \equiv |U_2| \mod c \).
  \item\label{item:period-grammar1-6}
  For every \( N, N' \in \nontinf \), \( N \Ar N' \) has the basic period \( c \).
  \end{enumerate}
\end{lem}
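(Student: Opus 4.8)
The plan is to recast the six claims as elementary statements about numerical semigroups (submonoids of \((\nat,+)\)), using context composition as addition and Lemma~\ref{lem:period-lem1} as the only nontrivial input. For \(N,N'\in\nontinf\) put \(X_{N,N'}\defeq\{\,n\in\nat\mid\inhavn{\grm,N\Ar N'}\neq\emptyset\,\}\). The composition rule for context languages recalled in Section~\ref{sec:gram} gives \(X_{N,N'}+X_{N',N''}\subseteq X_{N,N''}\) for all \(N,N',N''\in\nontinf\) (plug a \linear{context} into the hole of another); in particular each \(X_{N,N}\) is a submonoid of \(\nat\) containing \(0\), since \(\hole\in\inhav{\grm,N\Ar N}\). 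First I would produce one nontrivial loop: since \(\card{\inhav{\grm}}=\infty\) some nonterminal generates infinitely many trees, so by the standard pumping argument for regular tree grammars (choosing a generated tree of large \emph{size}) there are \(N''\in\nontinf\) and a \linear{context} \(S''\in\cs{\ralph}\) with \(|S''|\ge1\) and \(N''\rew^* S''[N'']\). With essential strong connectivity this proves \eqref{item:period-grammar1-1}: given \(N,N'\in\nontinf\) pick \(S_a\in\inhav{\grm,N''\Ar N'}\) and \(S_b\in\inhav{\grm,N\Ar N''}\); then for every \(k\ge0\) the context \(S_a[(S'')^k[S_b]]\) (with \((S'')^0\defeq\hole\)) lies in \(\inhav{\grm,N\Ar N'}\) and has size \(|S_a|+k|S''|+|S_b|\), so \(\inhav{\grm,N\Ar N'}\) is infinite and \(X_{N,N'}\) is an unbounded subset of \(\nat\).

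For \eqref{item:period-grammar1-2}, take the loop \(S_N\in\inhav{\grm,N\Ar N}\) built from \(S''\) as above, so \(|S_N|\ge1\); closure under composition gives \(n\in X_{N,N}\Rightarrow n+|S_N|\in X_{N,N}\) for every \(n\), hence by Lemma~\ref{lem:period-lem1}\eqref{item:period-lem1-1} the set \(X_{N,N}\) is periodic, with basic period \(c_N\). For \eqref{item:period-grammar1-3}: by closure under \(+\), every \(a\in X_{N,N}\) with \(a\ge1\) satisfies \(n\in X_{N,N}\Rightarrow n+a\in X_{N,N}\), so by Lemma~\ref{lem:period-lem1}\eqref{item:period-lem1-1} \emph{every} nonzero element of \(X_{N,N}\) is a period of \(X_{N,N}\); applying Lemma~\ref{lem:period-lem1}\eqref{item:period-lem1-gcd} finitely many times, \(d\defeq\gcd(X_{N,N}\setminus\{0\})\) is a period, so \(c_N\le d\). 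Conversely, \(d\) divides every element of \(X_{N,N}\), and for any \(a\in X_{N,N}\setminus\{0\}\) and large \(j\) both \(ja\) and \(ja+c_N\) lie in \(X_{N,N}\), hence \(d\mid c_N\); so \(c_N=d\), and \(\inhavn{\grm,N\Ar N}\neq\emptyset\) implies \(c_N\mid n\).

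Item \eqref{item:period-grammar1-4}: fix \(N,N'\in\nontinf\), write \(c_{N'}=\gcd(X_{N',N'}\setminus\{0\})=\gcd(m_1,\dots,m_r)\) for finitely many \(m_i\in X_{N',N'}\), and pick \(a\in X_{N,N'}\), \(b\in X_{N',N}\) (nonempty by \eqref{item:period-grammar1-1}); the composition inclusions give \(a+m_i+b\in X_{N,N}\) and \(a+b\in X_{N,N}\), so by \eqref{item:period-grammar1-3} \(c_N\mid m_i\) for all \(i\), hence \(c_N\mid c_{N'}\), and symmetrically \(c_{N'}\mid c_N\), whence \(c_N=c_{N'}=:c\). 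Item \eqref{item:period-grammar1-5}: if \(U_1,U_2\in\inhav{\grm,N\Ar N'}\) and \(b\in X_{N',N}\), then \(|U_j|+b\in X_{N,N}\) for \(j=1,2\), so \(c\mid|U_1|+b\) and \(c\mid|U_2|+b\) by \eqref{item:period-grammar1-3}--\eqref{item:period-grammar1-4}, hence \(|U_1|\equiv|U_2|\bmod c\). Item \eqref{item:period-grammar1-6}: by \eqref{item:period-grammar1-3}--\eqref{item:period-grammar1-4}, \(X_{N',N'}\) is periodic with basic period \(c\), all of its elements are multiples of \(c\), and it is infinite, so it contains every sufficiently large multiple of \(c\); fixing \(a\in X_{N,N'}\) we get \(a+kc\in X_{N,N'}\) for all large \(k\), so by Lemma~\ref{lem:period-lem1}\eqref{item:period-lem1-k} \(X_{N,N'}\) is periodic with period \(c\). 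Its basic period \(c'\) satisfies \(c'\le c\); and since \(X_{N,N'}\) is unbounded with all elements in one residue class mod \(c\) (by \eqref{item:period-grammar1-5}), periodicity with period \(c'\) forces \(c\mid c'\), so \(c'=c\).

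The conceptual heart is the numerical-semigroup bookkeeping, concentrated in \eqref{item:period-grammar1-3}: identifying the basic period of a context-loop language with the gcd of its element sizes is exactly what makes \eqref{item:period-grammar1-4}--\eqref{item:period-grammar1-6} fall out by subtraction, and the only analytic facts used are those packaged in Lemma~\ref{lem:period-lem1}. The other point requiring care is the pumping step producing \(S''\) with \(|S''|\ge1\): one must pump a generated tree of large \emph{size} rather than merely a long derivation, so that the extracted loop context is nonempty even in the presence of rules of the form \(N\rew N'\); invoking the canonicalization of Lemma~\ref{lem:canonical-grammar} instead would require unambiguity, which is not assumed here, so the direct size-based argument is the cleaner route.
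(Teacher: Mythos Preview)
Your proof is correct and follows essentially the same approach as the paper's: produce a nontrivial self-loop, use composition-closure together with Lemma~\ref{lem:period-lem1} to get periodicity, then run subtraction arguments for the remaining items. Your explicit identification of $c_N$ with $\gcd(X_{N,N}\setminus\{0\})$ in part~\eqref{item:period-grammar1-3} is a mild strengthening that the paper leaves implicit (it argues $c_N\mid n$ for each $n$ directly via Lemma~\ref{lem:period-lem1}\eqref{item:period-lem1-gcd}), but the underlying logic is the same throughout, and your remark about pumping on tree size to avoid empty-loop issues from $\epsilon$-rules correctly addresses the one point that both proofs leave slightly informal.
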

\begin{proof}
  \eqref{item:period-grammar1-1}
  There exist \( N_1, N_2 \in \nontinf \) such that \( \inhav{\grm, N_1 \Ar N_2} \) contains a context of size
 greater than \( 0 \), because there exists \(N_2 \in \nont\) with \( \inhav{\grm, N_2} = \infty \) due to \( \card{\inhav{\grm}} = \infty \).
  Let \( U \in \inhav{\grm, N_{1} \Ar N_{2}} \) be a context with \( |U| > 0 \).
  Let \( N \in \nontinf \).
  By essential strong-connectivity there exist \( S_{N} \in \inhav{\grm, N \Ar N_{1}} \) and \( S'_{N} \in
 \inhav{\grm, N_{2} \Ar N} \), and then \( U_{N} \defeq S'_{N}[U[S_{N}]] \in \inhav{\grm, N \Ar N} \) with \( |U_{N}| > 0 \).
  Let \( U^n_N \) be the context defined by \( U^0_N = \hole \) and \( U^{n+1}_N = U_N[U_N^n] \).
  Then \( U^n_N \in \inhav{\grm, N \Ar N} \) for every \( n \) and \( |U^n| \ge n \); hence \( \inhav{\grm, N \Ar N} \) is an infinite set.
  Let \( N' \in \nontinf \).
  By essential strong-connectivity there exists \( S_{N,N'} \in \inhav{\grm, N \Ar N'} \), which implies \( \inhav{\grm, N \Ar N'} \) is also infinite.

  \eqref{item:period-grammar1-2}
  Let \( N \in \nontinf \).
  By \eqref{item:period-grammar1-1}, \( \inhav{\grm, N \Ar N} \) is an infinite set.
  In particular, there exists \( U_N \in \inhav{\grm, N \Ar N} \) with \( |U_N| > 0 \).
  Let \( n \in \nat \) and assume that \( \inhavn{\grm, N \Ar N} \neq \emptyset \).
  Then \( U \in \inhavn{\grm, N \Ar N} \) for some \( U \) and thus \( U[U_N] \in \inhavnn{n + |U_N|}{\grm, N \Ar N} \).
  Therefore \( \inhavn{\grm, N \Ar N} \neq \emptyset \) implies \( \inhavnn{n + |U_N|}{\grm, N \Ar N} \neq \emptyset \).
  By applying Lemma~\ref{lem:period-lem1}\eqref{item:period-lem1-1}, \( N \Ar N \) is periodic with period \( |U_N| \).

  \eqref{item:period-grammar1-3}
  Let \( N \in \nontinf \) and \( c_N \) be the basic period of \( N \Ar N \),
  and suppose that \( \inhavn{\grm, N \Ar N} \neq \emptyset \).
  If \( n = 0 \), then the claim trivially holds.
  Suppose that \( n > 0 \) and let \( U \in \inhavn{\grm, N \Ar N} \).
  Then by the argument of the proof of \eqref{item:period-grammar1-1}, \( N \Ar N \) has period \( |U| = n \).
  By Lemma~\ref{lem:period-lem1}\eqref{item:period-lem1-gcd}, \( N \Ar N \) has period \( \gcd(c_N, n) \), which is smaller than or equal to \( c_N \).
  Since \( c_N \) is basic, this implies \( c_N = \gcd(c_N, n) \), i.e.,~\( n \equiv 0 \mod c_N \).

  \eqref{item:period-grammar1-4}
  Let \( N_1, N_2 \in \nontinf \) and \( c_1 \) and \( c_2 \) be the basic periods of \( N_1 \Ar N_1 \) and of \( N_2 \Ar N_2 \), respectively.
  We show that \( c_1 \equiv 0 \mod c_2 \).
  By essential strong-connectivity, there exist \( S_{1,2} \in \inhav{\grm, N_1 \Ar N_2} \) and \( S_{2,1} \in \inhav{\grm, N_2 \Ar N_1} \).
  Since \( \inhav{\grm, N_1 \Ar N_1} \) is infinite (by \eqref{item:period-grammar1-1}) and \( N_1 \Ar N_1 \) has period \( c_1 \), there exist contexts \( U, U' \in \inhav{\grm, N_1 \Ar N_1} \) such that \( |U'| = |U| + c_1 \).
  Let \( n = |S_{1,2}[U[S_{2,1}]]| \) and \( n' = |S_{1,2}[U'[S_{2,1}]]| \); then \( n' = n + c_1 \).
  Then both \( \inhavnn{n}{\grm, N_2 \Ar N_2} \) and \( \inhavnn{n'}{\grm, N_2 \Ar N_2} \) are non-empty.
  Hence \( n \equiv 0 \mod c_2 \) and \( n' \equiv 0 \mod c_2 \) by \eqref{item:period-grammar1-3}.
  Therefore \( c_1 \equiv n' - n \equiv 0 \mod c_2 \).

  \eqref{item:period-grammar1-5}
  Let \( N, N' \in \nontinf \) and assume that \( U_1, U_2 \in \inhav{\grm, N \Ar N'} \).
  By essential strong-connectivity, there exists \( S \in \inhav{\grm, N' \Ar N} \).
  Then \( S[U_1], S[U_2] \in \inhav{\grm, N \Ar N} \).
  By \eqref{item:period-grammar1-3},
  \[
  |S| + |U_1| \equiv |S[U_1]| \equiv |S[U_2]| \equiv |S| + |U_2| \mod c.
  \]
  Hence \( |U_1| \equiv |U_2| \mod c \).
  
  \eqref{item:period-grammar1-6}
  Let \( N, N' \in \nontinf \).
  Recall that \( c \) is the basic period of \( N \Ar N \).
  Since \( c \) is a period of \( N \Ar N \) and by \eqref{item:period-grammar1-3}, there exists \( k_0 \ge 0 \) such that, for every \( k \ge k_0 \), we have \( \inhavnn{ck}{\grm, N \Ar N} \neq \emptyset \).
  Let \( (U_k)_{k \ge k_0} \) be a family such that \( U_k \in \inhavnn{ck}{\grm, N \Ar N} \).
  Now \( \inhavnn{n}{\grm, N \Ar N'} \neq \emptyset \) implies \( \inhavnn{n + ck}{\grm, N \Ar N'} \neq \emptyset \) for every \( k \ge k_0 \), since \( U' \in \inhavnn{n}{\grm, N \Ar N'} \) implies \( U'[U_{k}] \in \inhavnn{n+ck}{\grm, N \Ar N'} \).
  By Lemma~\ref{lem:period-lem1}\eqref{item:period-lem1-k}, \( N \Ar N' \) has period \( c \).

  Conversely assume that \( N \Ar N' \) has period \( c' \).
  Since \( \inhav{\grm, N \Ar N'} \) is an infinite set by \eqref{item:period-grammar1-1}, we have \( U_1, U_2 \in \inhav{\grm, N \Ar N'} \) such that \( |U_2| = |U_1| + c' \).
  By \eqref{item:period-grammar1-5}, we have \( |U_1| \equiv |U_2| \mod c \), which implies \( c' \equiv 0 \mod c \).
\end{proof}

 We are now ready to prove Lemma~\ref{lem:period-grammar2}.
\begin{proof}[Proof of Lemma~\ref{lem:period-grammar2}]
  Let \( c \) be the basic period of \( \grm \). 
  We first define \( d_{N,N'} \) for each \( N, N' \in \nontinf \).
  Since \( \inhav{\grm, N \Ar N'} \) is infinite (Lemma~\ref{lem:period-grammar1}\eqref{item:period-grammar1-1}), there exists \( U_{N,N'} \in \inhav{\grm, N \Ar N'} \).
  Let \( d_{N,N'} \) be the unique natural number \( 0 \le d_{N,N'} < c \) such that \( |U_{N,N'}| \equiv d_{N,N'} \mod c \).
  This definition of \(d_{N,N'}\) does not depend on the choice of \(U_{N,N'}\), by Lemma~\ref{lem:period-grammar1}\eqref{item:period-grammar1-5}.

  We define \( n_0 \).
  For every \( N, N' \in \nontinf \), since \( c \) is a period of \( \inhav{\grm, N \Ar N'} \), 
  there exists \(n_0^{N,N'}\) such that:
  \[
  \forall n \ge n_0^{N,N'}. \;\forall k \in \nat. \quad 
  \inhavn{\grm, N \Ar N'} \neq \emptyset \ \Leftrightarrow\  \inhavnn{n + ck}{\grm, N \Ar N'} \neq \emptyset.
  \]
  Let \( n_0 \defeq \max \{ n_0^{N,N'} \mid N, N' \in \nontinf \} \).

  Then:
  \begin{itemize}
  \item \eqref{item:period-grammar2-every} follows from Lemma~\ref{lem:period-grammar1}\eqref{item:period-grammar1-5}.
  \item \eqref{item:period-grammar2-suff} follows from the definition of \( n_0 \), infinity of \( \inhav{\grm, N \Ar N'} \)
	(Lemma~\ref{lem:period-grammar1}\eqref{item:period-grammar1-1}), and Lemma~\ref{lem:period-grammar1}\eqref{item:period-grammar1-5}
	(given \(n \ge n_0\), pick \(U \in \inhav{\grm,N \Ar N'}\) such that \(|U| \ge n\); then \(|U| = n+ck\) for
	some \(k \ge 0\)).
  \item \eqref{item:period-grammar2-id} follows from Lemma~\ref{lem:period-grammar1}\eqref{item:period-grammar1-3}.
  \item \eqref{item:period-grammar2-comp} follows from \( U_{N',N''}[U_{N,N'}] \in \inhav{\grm, N \Ar N''} \) and
	Lemma~\ref{lem:period-grammar1}\eqref{item:period-grammar1-5}.
\qedhere
  \end{itemize}
\end{proof}
 
\end{document}